\theoremstyle{plain}
\newtheorem{Th}{Theorem}[section]
\newtheorem{lemma}[Th]{Lemma}
\newtheorem{remark}[Th]{Remark}
\newtheorem{prop}[Th]{Proposition}
\newtheorem{cor}[Th]{Corollary}
\newtheorem{definition}[Th]{Definition}
\definecolor{lightblue}{rgb}{.90,.95,1}
\renewcommand{\theequation}{\arabic{section}.\arabic{equation}}
\newcommand{\bb}[1]{\ensuremath{{{\color{gray}\blacklozenge}}_{\raisebox{-1pt}{\tiny $\mathrm{#1}$}}}}
\newcommand{\ww}[1]{\lozenge_{\raisebox{-1pt}{\tiny $\mathrm{#1}$}}}
\newcommand{\V}[1]{\mathcal{V}_{\raisebox{-1pt}{\tiny#1}}}
\newcommand{\om}{\Omega}
\newcommand{\dV}{\partial\V{}}
\newcommand{\vl}{v_{\raisebox{-1pt}{\tiny$\lambda$}}}
\newcommand{\vlbr}{v_{\raisebox{-1pt}{\tiny $\overline\lambda$}}} 
\newcommand{\uR}{u_{\raisebox{-1pt}{\tiny R}}}
\newcommand{\uI}{u_{\raisebox{-1pt}{\tiny I}}}
\newcommand{\intV}{\mathrm {Int}\V{}}
\newcommand{\dintb}[1]{\partial_{\mathrm {int}}\bb{#1}}
\newcommand{\dintw}[1]{\partial_{\mathrm {int}}\ww{#1}}
\newcommand{\diom}{\partial_{\mathrm {int}}\Omega}
\newcommand{\dib}{\partial_{\mathrm {int}}\bb{}}
\newcommand{\diw}{\partial_{\mathrm {int}}\ww{}}
\newcommand{\clom}{\overline{\om}}
\newcommand{\clb}{\bar{\bb{}}}
\newcommand{\clw}{\bar{\ww{}}}
\newcommand{\clbb}[1]{\bar{\ensuremath{{\color{gray}\blacklozenge}}}_{\raisebox{-1pt}{\tiny $\mathrm{#1}$}}}
\newcommand{\clww}[1]{\bar{\ensuremath{{\lozenge}}}_{\raisebox{-1pt}{\tiny $\mathrm{#1}$}}}
\newcommand{\dom}{\partial\Omega}
\newcommand{\db}{\partial\ensuremath{{\color{gray}\blacklozenge}}}
\newcommand{\dw}{\partial\lozenge}
\newcommand{\dbb}[1]{\partial\bb{#1}}
\newcommand{\dww}[1]{\partial\ww{#1}}
\newcommand{\bbb}{u_0}
\newcommand{\www}{v_0}
\newcommand{\Aa}{u_{-}}
\newcommand{\cc}{u_{+}}
\newcommand{\pp}{v_{\sharp}}
\newcommand{\qq}{v_{\flat}}
\newcommand{\F}{F}
\newcommand{\G}{G}
\newcommand{\HH}{H}
\newcommand{\ff}[1]{f_{{#1}}}
\newcommand{\h}{h}
\newcommand{\Cmd}{C_{\Omega^\delta}}
\newcommand{\Cm}{C_{\Omega}}
\newcommand{\omp}{\Omega^\prime}
\newcommand{\Cmm}{C_{\Omega^\prime}}
\newcommand{\re}{\mathrm {Re}}
\newcommand{\im}{\mathrm {Im}}
\newcommand{\mdel}{M^\delta}
\newcommand{\Ftilda}{\widetilde{F}}
\newcommand{\bvpugl}[1]{\tilde{u}^{\delta}_{#1}}
\newcommand{\bvupugl}[1]{u^{\ast\delta}_{#1}}
\newcommand{\wvpugl}[1]{\tilde{v}^{\delta}_{#1}}
\newcommand{\wvupugl}[1]{v^{\ast\delta}_{#1}}
\newcommand{\nbvpugl}[1]{\tilde{u}_{#1}}
\newcommand{\nbvupugl}[1]{u^{\ast}_{#1}}
\newcommand{\nwvpugl}[1]{\tilde{v}_{#1}}
\newcommand{\nwvupugl}[1]{v^{\ast}_{#1}}
\begin{document}

\title{Dimers in piecewise Temperleyan domains}
\large

\author[Marianna Russkikh]{Marianna Russkikh$^\mathrm{\sharp, \flat}$
}

\thanks{\textsc{
${}^\mathrm{\sharp}$ 
Section de Math\'ematiques, Universit\'e de Gen\`eve.
2-4 rue du Li\`evre, Case postale~64, 1211 Gen\`eve 4, Suisse.
}}

\thanks{\textsc{
${}^\mathrm{\flat}$
Chebyshev Laboratory, Department of Mathematics and Mechanics, St. Petersburg
State University. 14th Line, 29b, 199178 St. Petersburg, Russia}}

\thanks{{\it E-mail addresses:} \texttt{Marianna.Russkikh@unige.ch}}

\begin{abstract} 
We study the large-scale behavior of the height function in the dimer model on the square lattice. Richard Kenyon has shown that the fluctuations of the height function on Temperleyan discretizations of a planar domain converge in the scaling limit (as the mesh size tends to zero) to the Gaussian Free Field with Dirichlet boundary conditions. We extend Kenyon's result to a more general class of discretizations.

Moreover, we introduce a new factorization of the coupling function of the double-dimer model into two discrete holomorphic functions, which are similar to discrete fermions defined in~\cite{Stas, Stas07}.For Temperleyan discretizations with appropriate boundary modifications, the results of Kenyon imply that the expectation of the double-dimer height function converges to a harmonic function in the scaling limit. We use the above factorization to extend this result to the class of all polygonal discretizations, that are not necessarily Temperleyan. Furthermore, we show that, quite surprisingly, the expectation of the double-dimer height function in the Temperleyan case is exactly discrete harmonic (for an appropriate choice of Laplacian) even before taking the scaling limit.
\end{abstract}

\maketitle

\tableofcontents

\section{Introduction}
\noindent{\bf Dimer model.}
The dimer model is one of the best known models of statistical physics, first introduced to model a gas of diatomic molecules~\cite{FT}. By modifying the underlying graph, it can be used to study the Ising model (see Fisher's approach~\cite{Ff}). Under the name ``perfect matchings'', it prominently appears in theoretical computer science and combinatorics.

A dimer covering (or perfect matching) of a graph is a subset of edges that covers every vertex exactly once. The dimer model is a random covering of a given graph by dimers. We will be interested in uniform random coverings, that is, those chosen from the distribution in which all dimer configurations are equally weighted.

In this paper, we work with dimers on finite subgraphs (also called domains) of the square lattice. Such a dimer covering  
may be viewed as a random tiling of a domain of the dual lattice by dominos $2\times 1$.
Thurston introduced the height function of a domino tiling which uniquely assigns integer values to all vertices of the dual lattice. 
Moreover, a domino tiling can be reconstructed from the values of the height function. Thus, one can think of a random domino tiling as a random height function on the vertex set of the domain. 

The key question in the dimer model concerns the large-scale behavior of the expectation of the height function and of its fluctuations. 
We are interested in studying the scaling limit of the dimer model on planar graphs as the mesh tends to zero.
One of the main interesting features lies in the conformal invariance of such scaling limit. The scaling limit is conformal invariant if its image under any conformal mapping has the same distribution as an analogous object in the image of the domain.

For planar graphs, Kasteleyn~\cite{Kast} showed that the partition function of the dimer model can be evaluated as the determinant of a signed adjacency matrix, the {\it Kasteleyn matrix}. The local statistics for the uniform measure on dimer configurations can be computed using the inverse Kasteleyn matrix, see~\cite{Klocstat}. The latter can be viewed as a two-point function, called the {\it coupling function}~\cite{Kdom}. 
The coupling function is a complex-valued discrete holomorphic function. As such, its real and imaginary parts are discrete harmonic, and the study of the local statistics of random tilings can be reduced to the study of the convergence of discrete harmonic functions.

A Temperleyan discretization (see Fig.~\ref{Temp}) is a discrete domain with special boundary conditions. It is defined in Section~\ref{hf_and_Td}. Temperleyan domains correspond to Dirichlet boundary conditions or Neumann boundary conditions for the discrete harmonic components of the coupling function. Kenyon~\cite{Kdom, KGff} used this approach to prove the conformal invariance of the limiting distribution  of the height function in the case of Temperleyan discretizations.

More precisely, if one considers Temperleyan discretizations of a given domain $\om$, Kenyon~\cite{Kdom} showed that the limit of the expected height function is a harmonic function with boundary values depending on the direction (the argument of the tangent vector) of the boundary.
 In~\cite{KGff} Kenyon proved that, in the case of Temperleyan discretizations, the fluctuations of the height function converge (as the mesh size tends to zero) to the Gaussian Free Field~\cite{Scott} on $\om$ with Dirichlet boundary conditions. 
One of the main results of the present paper is an extension of Kenyon's result to a class of {\it Piecewise Temperleyan} discretizations defined in Section~\ref{pwd}. 
Note that for more general discretizations, with domains that are not necessarily Temperleyan, the large-scale behavior of the expectation of the height function and its fluctuations is much more complicated, see~\cite{BG, CKP, K-O, K-O-Sh, P}. The exact nature of fluctuations is not established yet, but they expected to be given by a Gaussian free field in appropriate coordinates, obtained from solving the complex Burgers equation in~\cite{K-O}. In the particular case of a sequence of domains whose boundary height functions are bounded by some constant, the new coordinates coincide with the usual ones, and the fluctuations are expected to be given by the Gaussian free field on the limiting domain with Dirichlet boundary conditions. 


A different approach to showing the convergence of the fluctuations of the height function to the Gaussian Free Field was introduced in~\cite{BLR}. The main tool here is the Uniform Spanning Tree and the winding of its branches, which coincides with the dimer model height function. In particular, this approach covers the case of Temperleyan discretizations, but not  the case of Piecewise Temperleyan discretizations.

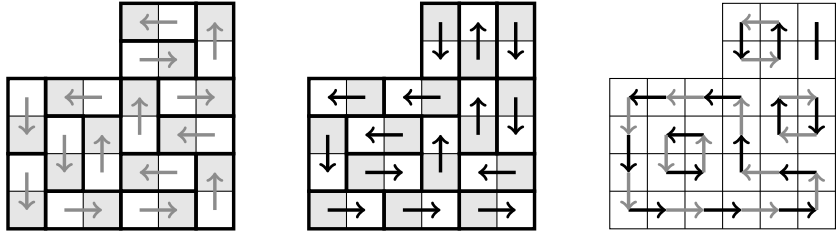
\begin{figure}
\begin{center}
\begin{tikzpicture}[x={(0.5cm,0cm)}, y={(0cm,0.5cm)}]
\begin{scope}
\draw (0,0) -- (0,4) -- (3,4) -- (3,6) -- (6,6) -- (6,0)-- cycle;

\draw  [draw, fill=gray!20](0,0) rectangle (1,1);
\draw  [draw, fill=gray!20](1,1) rectangle (2,2);
\draw  [draw, fill=gray!20](2,2) rectangle (3,3);
\draw  [draw, fill=gray!20](3,3) rectangle (4,4);
\draw  [draw, fill=gray!20](4,4) rectangle (5,5);
\draw  [draw, fill=gray!20](5,5) rectangle (6,6);

\draw  [draw, fill=gray!20](0,2) rectangle (1,3);
\draw  [draw, fill=gray!20](1,3) rectangle (2,4);
\draw  [draw, fill=gray!20](3,5) rectangle (4,6);

\draw  [draw, fill=gray!20](2,0) rectangle (3,1);
\draw  [draw, fill=gray!20](3,1) rectangle (4,2);
\draw  [draw, fill=gray!20](4,2) rectangle (5,3);
\draw  [draw, fill=gray!20](5,3) rectangle (6,4);

\draw  [draw, fill=gray!20](4,0) rectangle (5,1);
\draw  [draw, fill=gray!20](5,1) rectangle (6,2);

\draw[line width=1.3pt,gray!90][<-] (0.5,0.5) -- (0.5,1.5);
\draw[line width=1.3pt,gray!90][<-] (0.5,2.5) -- (0.5,3.5);
\draw[line width=1.3pt,gray!90][<-] (1.5,3.5) -- (2.5,3.5);
\draw[line width=1.3pt,gray!90][<-] (3.5,3.5) -- (3.5,2.5);
\draw[line width=1.3pt,gray!90][<-] (3.5,1.5) -- (4.5,1.5);
\draw[line width=1.3pt,gray!90][<-] (5.5,1.5) -- (5.5,0.5);
\draw[line width=1.3pt,gray!90][->] (3.5,0.5) -- (4.5,0.5);
\draw[line width=1.3pt,gray!90][->] (1.5,0.5) -- (2.5,0.5);

\draw[line width=1.3pt,gray!90][<-] (1.5,1.5)--(1.5,2.5);
\draw[line width=1.3pt,gray!90][->] (2.5,1.5) -- (2.5,2.5);

\draw[line width=1.3pt,gray!90][<-] (4.5,2.5) --(5.5,2.5);
\draw[line width=1.3pt,gray!90][->] (4.5,3.5) --(5.5,3.5);

\draw[line width=1.3pt,gray!90][->] (3.5,4.5) --(4.5,4.5);
\draw[line width=1.3pt,gray!90][<-] (3.5,5.5) --(4.5,5.5);
\draw[line width=1.3pt,gray!90][->] (5.5,4.5) --(5.5,5.5);

\draw[line width=1.3pt] (0,0) rectangle (1,2);
\draw[line width=1.3pt] (0,2) rectangle (1,4);
\draw[line width=1.3pt] (1,0) rectangle (3,1);
\draw[line width=1.3pt] (1,3) rectangle (3,4);
\draw[line width=1.3pt] (3,0) rectangle (5,1);
\draw[line width=1.3pt] (3,1) rectangle (5,2);
\draw[line width=1.3pt] (4,2) rectangle (6,3);
\draw[line width=1.3pt] (4,3) rectangle (6,4);
\draw[line width=1.3pt] (3,4) rectangle (5,5);
\draw[line width=1.3pt] (3,5) rectangle (5,6);
\draw[line width=1.3pt] (1,1) rectangle (2,3);
\draw[line width=1.3pt] (5,0) rectangle (6,2);
\draw[line width=1.3pt] (3,2) rectangle (4,4);
\draw[line width=1.3pt] (5,4) rectangle (6,6);
\end{scope}

\begin{scope}[xshift=4cm]

\draw  [draw, fill=gray!20](0,0) rectangle (1,1);
\draw  [draw, fill=gray!20](1,1) rectangle (2,2);
\draw  [draw, fill=gray!20](2,2) rectangle (3,3);
\draw  [draw, fill=gray!20](3,3) rectangle (4,4);
\draw  [draw, fill=gray!20](4,4) rectangle (5,5);
\draw  [draw, fill=gray!20](5,5) rectangle (6,6);

\draw  [draw, fill=gray!20](0,2) rectangle (1,3);
\draw  [draw, fill=gray!20](1,3) rectangle (2,4);
\draw  [draw, fill=gray!20](3,5) rectangle (4,6);

\draw  [draw, fill=gray!20](2,0) rectangle (3,1);
\draw  [draw, fill=gray!20](3,1) rectangle (4,2);
\draw  [draw, fill=gray!20](4,2) rectangle (5,3);
\draw  [draw, fill=gray!20](5,3) rectangle (6,4);

\draw  [draw, fill=gray!20](4,0) rectangle (5,1);
\draw  [draw, fill=gray!20](5,1) rectangle (6,2);

\draw[line width=1.3pt,black][->] (0.5,2.5) -- (0.5,1.5);
\draw[line width=1.3pt,black][->] (1.5,3.5) -- (0.5,3.5);
\draw[line width=1.3pt,black][->] (3.5,3.5) -- (2.5,3.5);
\draw[line width=1.3pt,black][->] (3.5,1.5) -- (3.5,2.5);
\draw[line width=1.3pt,black][->] (5.5,1.5) -- (4.5,1.5);
\draw[line width=1.3pt,black][->] (4.5,0.5) -- (5.5,0.5);
\draw[line width=1.3pt,black][<-] (3.5,0.5) -- (2.5,0.5);
\draw[line width=1.3pt,black][<-] (1.5,0.5) -- (0.5,0.5);

\draw[line width=1.3pt,black][->] (2.5,2.5)--(1.5,2.5);
\draw[line width=1.3pt,black][->] (1.5,1.5) -- (2.5,1.5);

\draw[line width=1.3pt,black][->] (4.5,2.5) --(4.5,3.5);
\draw[line width=1.3pt,black][->] (5.5,3.5) --(5.5,2.5);

\draw[line width=1.3pt][<-] (5.5,4.5) --(5.5,5.5);
\draw[line width=1.3pt,black][->] (4.5,4.5) --(4.5,5.5);
\draw[line width=1.3pt,black][<-] (3.5,4.5) --(3.5,5.5);

\draw (0,0) -- (0,4) -- (3,4) -- (3,6) -- (6,6) -- (6,0)-- cycle;

\draw[line width=1.3pt] (0,0) rectangle (2,1);
\draw[line width=1.3pt] (2,0) rectangle (4,1);
\draw[line width=1.3pt] (4,0) rectangle (6,1);
\draw[line width=1.3pt] (4,1) rectangle (6,2);
\draw[line width=1.3pt] (4,4) rectangle (5,6);
\draw[line width=1.3pt] (5,4) rectangle (6,6);
\draw[line width=1.3pt] (4,2) rectangle (5,4);
\draw[line width=1.3pt] (5,2) rectangle (6,4);
\draw[line width=1.3pt] (3,4) rectangle (4,6);
\draw[line width=1.3pt] (0,3) rectangle (2,4);
\draw[line width=1.3pt] (2,3) rectangle (4,4);
\draw[line width=1.3pt] (1,1) rectangle (3,2);
\draw[line width=1.3pt] (1,2) rectangle (3,3);
\draw[line width=1.3pt] (0,1) rectangle (1,3);

\end{scope}

\begin{scope}[xshift=8cm]

\draw (0,0) -- (0,4) -- (3,4) -- (3,6) -- (6,6) -- (6,0)-- cycle;
\draw (0,1) -- (6,1);
\draw (0,2) -- (6,2);
\draw (0,3) -- (6,3);
\draw (3,4) -- (6,4);
\draw (3,5) -- (6,5);

\draw (1,0) -- (1,4);
\draw (2,0) -- (2,4);
\draw (3,0) -- (3,4);
\draw (4,0) -- (4,6);
\draw (5,0) -- (5,6);

\draw[line width=1.3pt,black][->] (0.5,2.5) -- (0.5,1.5);
\draw[line width=1.3pt,black][->] (1.5,3.5) -- (0.5,3.5);
\draw[line width=1.3pt,black][->] (3.5,3.5) -- (2.5,3.5);
\draw[line width=1.3pt,black][->] (3.5,1.5) -- (3.5,2.5);
\draw[line width=1.3pt,black][->] (5.5,1.5) -- (4.5,1.5);
\draw[line width=1.3pt,black][->] (4.5,0.5) -- (5.5,0.5);
\draw[line width=1.3pt,black][<-] (3.5,0.5) -- (2.5,0.5);
\draw[line width=1.3pt,black][<-] (1.5,0.5) -- (0.5,0.5);

\draw[line width=1.3pt,black][->] (2.5,2.5)--(1.5,2.5);
\draw[line width=1.3pt,black][->] (1.5,1.5) -- (2.5,1.5);

\draw[line width=1.3pt,black][->] (4.5,2.5) --(4.5,3.5);
\draw[line width=1.3pt,black][->] (5.5,3.5) --(5.5,2.5);

\draw[line width=1.3pt] (5.5,4.5) --(5.5,5.5);
\draw[line width=1.3pt,black][->] (4.5,4.5) --(4.5,5.5);
\draw[line width=1.3pt,black][<-] (3.5,4.5) --(3.5,5.5);

\draw[line width=1.3pt,gray!90][<-] (0.5,0.5) -- (0.5,1.5);
\draw[line width=1.3pt,gray!90][<-] (0.5,2.5) -- (0.5,3.5);
\draw[line width=1.3pt,gray!90][<-] (1.5,3.5) -- (2.5,3.5);
\draw[line width=1.3pt,gray!90][<-] (3.5,3.5) -- (3.5,2.5);
\draw[line width=1.3pt,gray!90][<-] (3.5,1.5) -- (4.5,1.5);
\draw[line width=1.3pt,gray!90][<-] (5.5,1.5) -- (5.5,0.5);
\draw[line width=1.3pt,gray!90][->] (3.5,0.5) -- (4.5,0.5);
\draw[line width=1.3pt,gray!90][->] (1.5,0.5) -- (2.5,0.5);

\draw[line width=1.3pt,gray!90][<-] (1.5,1.5)--(1.5,2.5);
\draw[line width=1.3pt,gray!90][->] (2.5,1.5) -- (2.5,2.5);

\draw[line width=1.3pt,gray!90][<-] (4.5,2.5) --(5.5,2.5);
\draw[line width=1.3pt,gray!90][->] (4.5,3.5) --(5.5,3.5);

\draw[line width=1.3pt,gray!90][->] (3.5,4.5) --(4.5,4.5);
\draw[line width=1.3pt,gray!90][<-] (3.5,5.5) --(4.5,5.5);
\draw[line width=1.3pt] (5.5,4.5) --(5.5,5.5);

\end{scope}
\end{tikzpicture} \caption{Two different domino tilings of the same domain can be combined into a collection of loops and double edges. Orienting the edges of the first covering from white to black, and the edges of the second one from black to white, one gets an orientation of the resulting loops. 
}\label{petli}\end{center}
\end{figure}

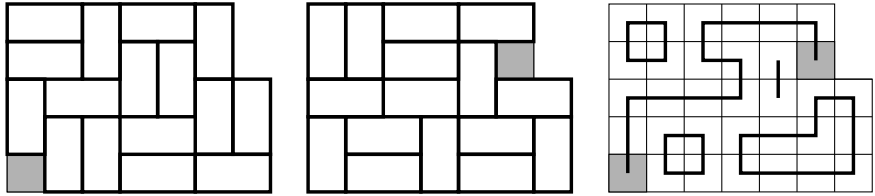
\begin{figure}
\begin{center}
\begin{tikzpicture}[x={(0.5cm,0cm)}, y={(0cm,0.5cm)}]
\begin{scope}
\draw (0,0) -- (0,5) -- (6,5) -- (6,3) -- (7,3) -- (7,0)-- cycle;

\draw  [draw, fill=gray!60](0,0) rectangle (1,1);

\draw[line width=1.3pt] (0,1) rectangle (1,3);
\draw[line width=1.3pt] (0,3) rectangle (2,4);
\draw[line width=1.3pt] (0,4) rectangle (2,5);
\draw[line width=1.3pt] (2,3) rectangle (3,5);
\draw[line width=1.3pt] (3,4) rectangle (5,5);
\draw[line width=1.3pt] (5,3) rectangle (6,5);
\draw[line width=1.3pt] (1,0) rectangle (2,2);
\draw[line width=1.3pt] (2,0) rectangle (3,2);
\draw[line width=1.3pt] (3,2) rectangle (4,4);
\draw[line width=1.3pt] (4,2) rectangle (5,4);
\draw[line width=1.3pt] (5,1) rectangle (6,3);
\draw[line width=1.3pt] (6,1) rectangle (7,3);
\draw[line width=1.3pt] (5,0) rectangle (7,1);
\draw[line width=1.3pt] (3,0) rectangle (5,1);
\end{scope}

\begin{scope}[xshift=4cm]
\draw (0,0) -- (0,5) -- (6,5) -- (6,3) -- (7,3) -- (7,0)-- cycle;

\draw  [draw, fill=gray!60](5,3) rectangle (6,4);

\draw[line width=1.3pt] (0,0) rectangle (1,2);
\draw[line width=1.3pt] (3,0) rectangle (4,2);
\draw[line width=1.3pt] (6,0) rectangle (7,2);
\draw[line width=1.3pt] (0,3) rectangle (1,5);
\draw[line width=1.3pt] (1,3) rectangle (2,5);
\draw[line width=1.3pt] (4,2) rectangle (5,4);
\draw[line width=1.3pt] (1,0) rectangle (3,1);
\draw[line width=1.3pt] (1,1) rectangle (3,2);
\draw[line width=1.3pt] (4,0) rectangle (6,1);
\draw[line width=1.3pt] (4,1) rectangle (6,2);
\draw[line width=1.3pt] (5,3) rectangle (7,2);
\draw[line width=1.3pt] (0,2) rectangle (2,3);
\draw[line width=1.3pt] (2,3) rectangle (4,4);
\draw[line width=1.3pt] (2,4) rectangle (4,5);
\draw[line width=1.3pt] (4,4) rectangle (6,5);
\end{scope}

\begin{scope}[xshift=8cm]
\draw (0,0) -- (0,5) -- (6,5) -- (6,3) -- (7,3) -- (7,0)-- cycle;
\draw (0,1) -- (7,1);
\draw (0,2) -- (7,2);
\draw (0,3) -- (7,3);
\draw (0,4) -- (6,4);

\draw (1,0) -- (1,5);
\draw (2,0) -- (2,5);
\draw (3,0) -- (3,5);
\draw (4,0) -- (4,5);
\draw (5,0) -- (5,5);
\draw (6,0) -- (6,3);

\draw  [draw, fill=gray!60](0,0) rectangle (1,1);
\draw  [draw, fill=gray!60](5,3) rectangle (6,4);
\draw[line width=1.3pt] (0.5,0.5) -- (0.5,2.5) -- (3.5,2.5) -- (3.5,3.5) -- (2.5,3.5) -- (2.5,4.5)--(5.5,4.5)--(5.5,3.5);
\draw[line width=1.3pt] (4.5,3.5) -- (4.5,2.5);
\draw[line width=1.3pt] (0.5,3.5) rectangle (1.5,4.5);
\draw[line width=1.3pt] (1.5,0.5) rectangle (2.5,1.5);
\draw[line width=1.3pt] (3.5,0.5) -- (3.5,1.5) -- (5.5,1.5) -- (5.5,2.5) -- (6.5,2.5) --(6.5,0.5)-- cycle;

\end{scope}
\end{tikzpicture} \caption{Left and center: the coverings of the domains that differ on two squares. Right: the interface between these two squares and the collection of loops and double edges is the result of the composition of the coverings.
}\label{petli2}\end{center}
\end{figure}  

\bigskip

\noindent{\bf Double dimers.}
Let us now come to the second series of results of our paper, which deal with the double-dimer model. Recall that a double-dimer configuration is a union of two dimer coverings, or equivalently a set of even-length simple loops and double edges with the property that every vertex is the endpoint of exactly two edges, see Fig. \ref{petli}.  Note that there are two ways to obtain a given loop (on the dual graph). This can be interpreted as a choice of orientation of the loop, see Fig.~\ref{petli}. Thus, the double-dimer model can be represented as a random covering of the dual graph by oriented loops and double edges~\cite{Percus}. The height function in the double-dimer model, which is the difference of height functions for two dimer configurations, has a simple geometric representation: if we cross a loop, then the height function changes by $+1$ or $-1$, depending on the orientation of the loop.

There is a prediction that the loop ensemble of the double-dimer model converges to the conformal loop ensemble $\mathit{CLE}(4)$, see~\cite{CLE1, CLE2}. In the case of discretizations by Temperleyan domains Kenyon~\cite{Kloop} and Dubedat~\cite{Ddim} obtained results confirming this prediction.
 The loop ensemble $\mathit{CLE}(4)$ is a conformally invariant object. 
It corresponds to level lines of the Gaussian Free Field. There is a gap of $\pm2\lambda=\pm \sqrt{\pi/2}$ between the values of the Gaussian Free Field on the interior and the exterior side of each $\mathit{CLE}(4)$ loop~\cite{ScSh, WW}. This is similar to loops in the double-dimer model outlining the discontinuities of the double-dimer height function, the gap being $\pm1$.

We will consider coverings of a pair of domains that differ by two squares, see~Fig.~\ref{petli2}. In this case, in addition to a collection of loops and double edges, the superposition of the coverings contains an ``interface'' (a simple path between these two squares).
It is expected that the interface converges to a conformally invariant random curve $\mathit{SLE}(4)$ as the mesh size tends to zero, see~\cite{ICM2006}.

The coupling function plays an important role in the proof of convergence of height functions. We define the double-dimer coupling function as a difference of single-dimer coupling functions of a pair of domains that differ by two squares. Similarly to the single-dimer coupling function the double-dimer coupling function can be used to compute the expectation of the double-dimer height function. However the single-dimer coupling function is also the kernel which allows to compute multi-edge correlations, see \cite{Klocstat}. Therefore it allows us to compute all moments of the single-dimer height function, see \cite{KGff}. This is not the case for the double-dimer model.

\bigskip

\noindent{\bf Main results.} Let us now summarise the main results. 
We will show that in the double-dimer model the coupling function $C(u,v)$ has a factorization into a product of two discrete holomorphic functions $\F(u)$ and $\G(v)$ described in Corollary~\ref{main-lemma}. 
Moreover, we will describe the construction of the discrete integral of this product of two discrete holomorphic functions. Then for any discrete domain the expectation of the height function of the double-dimer model can be interpreted as an integral of two discrete holomorphic functions. Due to Kenyon~\cite{Kdom}, for the single-dimer model, the expectation of the height function is harmonic in the limit for approximations by Temperleyan domains. Using the above-mentioned factorization of the double-dimer coupling function we will show that the expectation of the double-dimer height function is harmonic already at the discrete level, with respect to the leap-frog Laplacian, see~(\ref{lfH}). In other words, we have the following result. 

\begin{Th}\label{leap-frog harmonicity} The expectation of the double-dimer height function on an odd Temperleyan domain (see Section~\ref{hf_and_Td} for a precise definition) is exactly discrete leap-frog harmonic.

\end{Th}
\noindent Note that the exact discrete harmonicity does not hold for the single-dimer model.  

Also, we will prove the convergence of the expectation of the (integer-valued) height function in the double-dimer model to the harmonic measure under discretization by polygonal domains. More precisely, we have the following result.

\begin{Th}\label{main-th} 
Let $\om$ be a polygon with $n$ sides parallel to the axes and two marked points $\bbb$ and $\www$ on straight parts of the boundary of $\om$.  Suppose that a sequence of discrete $n$-gons $\om^\delta$ on a grid with mesh size $\delta$ approximates the polygon $\om$ in a proper way, and that each polygon $\om^\delta$ has at least one domino tiling. Assume that some black and white squares $\bbb^\delta$ and $\www^\delta$ of the domain $\om^\delta$ tend to the boundary points $\bbb$ and $\www$ of the domain $\om$. Let $h^\delta$ be the height function of a uniform double-dimer configuration on $\om$. Then $\mathbb{E}h^\delta$ converges to the harmonic measure $\operatorname{hm}_{\om}(\,\cdot\,, (\bbb\www))$ of the boundary arc~$(\bbb\www)$ on the domain $\om$.
 \end{Th}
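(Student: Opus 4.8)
The plan is to express $\mathbb{E}h^\delta$ through the (double-dimer) coupling function, pass to the scaling limit using discrete complex analysis, and identify the limit with the complex derivative of a conformal map onto a strip.

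\textbf{Step 1: reduction to a discrete primitive of $\F\G$.} Normalize the double-dimer height function so that $h^\delta\equiv 0$ at a fixed boundary vertex $z_0$ on the arc $(\www\bbb)$; this is legitimate because the two dimer coverings forming the double-dimer configuration live on graphs that coincide away from $\bbb^\delta,\www^\delta$, so their height functions may be made to agree at $z_0$. Then $\mathbb{E}h^\delta(z)=\mathbb{E}[h^\delta(z)-h^\delta(z_0)]$, and, writing this as a sum of expected height increments along a dual path $\gamma$ from $z_0$ to $z$, each increment is a local statistic of the pair of coverings. Applying Kenyon's local-statistics formula together with the Jacobi/Schur identity relating the inverse Kasteleyn matrix of $\om^\delta$ to that of $\om^\delta\setminus\{\bbb^\delta,\www^\delta\}$, the rank-one correction term produces exactly the factorized double-dimer coupling function $C(u,v)=\F^\delta(u)\G^\delta(v)$ of Section~3, where $\F^\delta$ and $\G^\delta$ are the single-dimer coupling functions of $\om^\delta$ with simple poles at $\www^\delta$ and $\bbb^\delta$ -- discrete holomorphic ``discrete fermions''. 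One obtains an exact identity expressing $\mathbb{E}h^\delta(z)$ as the discrete primitive of the product $\F^\delta\G^\delta$ along $\gamma$, divided by $C^{\om^\delta}(\www^\delta,\bbb^\delta)$ (with signs and with the real/imaginary part fixed by the Kasteleyn orientation conventions of Section~3). This uses the factorization but not Theorem~\ref{leap-frog harmonicity}, which is the exact-harmonicity refinement specific to Temperley domains.

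\textbf{Step 2: convergence of the discrete fermions (the main obstacle).} I would show that suitably normalized $\F^\delta,\G^\delta$ converge uniformly on compact subsets of $\om\setminus\{\www\}$, resp.\ $\om\setminus\{\bbb\}$, to holomorphic functions $f,g$ on $\om$ with simple poles at $\www$, resp.\ $\bbb$. The crucial new point, beyond Kenyon's Temperley result, is handling a non-Temperley polygonal boundary: although $\om^\delta$ need not be Temperley, each of $\F^\delta,\G^\delta$ is still discrete holomorphic and, along any straight boundary segment, its values lie on a fixed real line $\ell$ whose slope is dictated by the direction of that segment and the colour parity -- a discrete Riemann--Hilbert condition. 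Using a priori bounds that follow from the hypothesis that $\om^\delta$ admits a domino tiling (invertibility of the Kasteleyn matrix with a controlled inverse, via Kenyon's coupling-function asymptotics applied to the Temperley pieces glued along the straight parts), together with equicontinuity of discrete holomorphic functions and isolation of the pole, one extracts subsequential limits and identifies them through the continuum conditions $f\sqrt\tau\in\ell_1$, $g\sqrt\tau\in\ell_2$ on $\partial\om$ away from the corners, the lines $\ell_1,\ell_2$ jumping at each corner according to the change of boundary direction and parity. Near each of the finitely many corners $c$ the limits are controlled by $|z-c|^{-\alpha}$ with $\alpha<\tfrac12$, hence square-integrable there and with integrable product; this is exactly what is needed to pass to the limit in the discrete primitive of Step~1, after first deforming $\gamma$ to stay a fixed distance from $\partial\om$ except near its endpoint. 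The hard part is precisely this uniform control up to and including a polygonal, non-Temperley boundary, and the correct bookkeeping of $\ell_1,\ell_2$ through the corners.

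\textbf{Step 3: identification with harmonic measure.} Passing to the limit along $\gamma$ gives $\mathbb{E}h^\delta(z)\to\operatorname{Im}\int_{z_0}^{z}f(u)g(u)\,du$, the normalization by $C^{\om^\delta}(\www^\delta,\bbb^\delta)$ being absorbed in $f,g$. The product $P:=fg$ is holomorphic in $\om$ with simple poles at $\bbb$ and $\www$, is integrable at the corners, and satisfies $P(z)\tau(z)\in i\mathbb R$ on $\partial\om$ (since $\ell_1\ell_2=i\mathbb R$). These are exactly the properties characterizing $\Phi'$, where $\Phi$ is the conformal map of $\om$ onto the strip $\{0<\operatorname{Re}<1\}$ sending $(\bbb\www)$ to $\{\operatorname{Re}=1\}$, $(\www\bbb)$ to $\{\operatorname{Re}=0\}$, and $\bbb,\www$ to the two ends: indeed $\operatorname{Re}\Phi=\operatorname{hm}_\om(\,\cdot\,,(\bbb\www))$ is locally constant on $\partial\om$, so $\operatorname{Re}(\Phi'\tau)=0$ there, $\Phi'$ has simple poles at $\bbb,\www$, and $\Phi'\sim(z-c)^{\pi/\theta_c-1}$ is integrable at each corner of interior angle $\theta_c\le\tfrac{3\pi}{2}$. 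Two functions with these properties differ by a real multiplicative constant (their ratio is holomorphic, bounded across the corners, and real on $\partial\om$, hence constant by the reflection principle), so $P=c_0\Phi'$; the constant $c_0$ is pinned down by the residue of $P$ at $\www$, equivalently by the fact that $h^\delta$ winds by exactly $1$ around a marked square. This forces $\operatorname{Im}\int_{z_0}^{z}P=\operatorname{Re}\Phi(z)-\operatorname{Re}\Phi(z_0)$, and since $z_0\in(\www\bbb)$ gives $\operatorname{Re}\Phi(z_0)=0$, we conclude $\mathbb{E}h^\delta(z)\to\operatorname{Re}\Phi(z)=\operatorname{hm}_\om(z,(\bbb\www))$, as claimed.
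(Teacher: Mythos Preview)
Your overall architecture matches the paper's: Step~1 is Corollary~\ref{h=H} together with Proposition~\ref{C=FG}, and Step~3 is Proposition~\ref{f*g} (the paper computes $fg$ explicitly via Schwarz--Christoffel rather than characterizing it by your strip-map argument, but the two are equivalent).

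The gap is in Step~2. You correctly flag the a~priori bounds as ``the hard part'', but the mechanism you propose does not deliver them. Tileability of $\Omega^\delta$ gives invertibility of $K_{\Omega^\delta}$, not uniform control on $K_{\Omega^\delta}^{-1}$ as $\delta\to 0$; for generic non-Temperley domains the coupling function can and does blow up near the boundary. Nor can you ``glue Temperley pieces along the straight parts'' and import Kenyon's asymptotics piecewise: the coupling function is a global object, and Kenyon's estimates rely on the Dirichlet/Neumann structure holding on the \emph{entire} boundary. Your corner exponent $\alpha<\tfrac12$ is a statement about the continuum limit and does nothing to bound the discrete functions before you know they converge. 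What the paper actually does is observe that a polygonal $\Omega^\delta$ is \emph{black-piecewise Temperley}, so that along each boundary arc between consecutive white corners one of $\operatorname{Re} F^\delta$, $\operatorname{Im} F^\delta$ vanishes; these harmonic components are then automatically bounded near convex white corners but only semi-bounded near the concave ones (Lemma~\ref{bounded}). The key estimate (Lemma~\ref{ravnogr}) is a harmonic-measure argument: if $M^\delta(r)$ is the maximum of $|F^\delta|$ outside $r$-balls around the concave corners and the pole, then $M^\delta(r/2)\le C\cdot M^\delta(r)$, proved by running an increasing path of $\operatorname{Re} F^\delta$ from the maximizer at scale $r/2$ toward the corner and then bounding from below the harmonic measure of that path as seen from scale $r$ (Lemma~\ref{omega > c}). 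With this geometric growth in hand, Theorem~\ref{convF} runs a dichotomy: either $M^\delta(r)$ is bounded for each fixed $r$, and Arzel\`a--Ascoli plus the uniqueness in Proposition~\ref{1-4} gives convergence; or one normalizes by $M^\delta(r)$ and derives a contradiction from the homogeneous Riemann--Hilbert problem having only the trivial solution. This increasing-path/harmonic-measure argument is the ingredient your Step~2 is missing.
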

 
 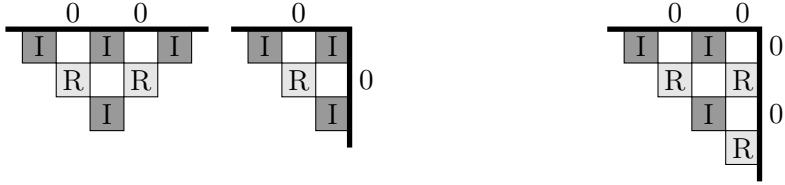
\begin{figure}
 \begin{center}
\begin{tikzpicture}[x={(0.45cm,0cm)}, y={(0cm,0.45cm)}]
\begin{scope}
\path (0,0) node[name=l1, shape=coordinate]{};
\path (1,0) node[name=l2, shape=coordinate]{};
\path (1,1) node[name=l3, shape=coordinate]{};
\path (0,1) node[name=l4, shape=coordinate]{};
\path[draw, fill=gray!80] (l1)--(l2)--(l3)--(l4)--cycle;
\path ($1/2*(l1)+1/2*(l3)$) node[]{I};
\path (1.5,1.5) node[]{$0$};
\path (-0.5,1.5) node[]{$0$};

\path (0,-2) node[name=l1, shape=coordinate]{};
\path (1,-2) node[name=l2, shape=coordinate]{};
\path (1,-1) node[name=l3, shape=coordinate]{};
\path (0,-1) node[name=l4, shape=coordinate]{};
\path[draw, fill=gray!80] (l1)--(l2)--(l3)--(l4)--cycle;
\path ($1/2*(l1)+1/2*(l3)$) node[]{I};

\path (-2,0) node[name=l1, shape=coordinate]{};
\path (-1,0) node[name=l2, shape=coordinate]{};
\path (-1,1) node[name=l3, shape=coordinate]{};
\path (-2,1) node[name=l4, shape=coordinate]{};
\path[draw, fill=gray!80] (l1)--(l2)--(l3)--(l4)--cycle;
\path ($1/2*(l1)+1/2*(l3)$) node[]{I};

\path[draw, fill=gray!20] (-1,0)--(-1,-1)--(0,-1)--(0,0)--cycle;
\path[draw, fill=gray!20] (1,0)--(1,-1)--(2,-1)--(2,0)--cycle;
\path (-0.5,-0.5) node[]{R};
\path (1.5,-0.5) node[]{R};

\path (2,0) node[name=l1, shape=coordinate]{};
\path (3,0) node[name=l2, shape=coordinate]{};
\path (3,1) node[name=l3, shape=coordinate]{};
\path (2,1) node[name=l4, shape=coordinate]{};
\path[draw, fill=gray!80] (l1)--(l2)--(l3)--(l4)--cycle;
\path ($1/2*(l1)+1/2*(l3)$) node[]{I};

\path[draw][line width=2pt] (-2.5,1)--(3.5,1);

\end{scope}

\begin{scope}[xshift=3cm]
\path (0,0) node[name=l1, shape=coordinate]{};
\path (1,0) node[name=l2, shape=coordinate]{};
\path (1,1) node[name=l3, shape=coordinate]{};
\path (0,1) node[name=l4, shape=coordinate]{};
\path[draw, fill=gray!80] (l1)--(l2)--(l3)--(l4)--cycle;
\path ($1/2*(l1)+1/2*(l3)$) node[]{I};
\path (1.5,-0.5) node[]{$0$};
\path (-0.5,1.5) node[]{$0$};

\path[draw, fill=gray!20] (0,0)--(-1,0)--(-1,-1)--(0,-1)--cycle;
\path (-0.5,-0.5) node[]{R};

\path (0,-2) node[name=l1, shape=coordinate]{};
\path (1,-2) node[name=l2, shape=coordinate]{};
\path (1,-1) node[name=l3, shape=coordinate]{};
\path (0,-1) node[name=l4, shape=coordinate]{};
\path[draw, fill=gray!80] (l1)--(l2)--(l3)--(l4)--cycle;
\path ($1/2*(l1)+1/2*(l3)$) node[]{I};

\path (-2,0) node[name=l1, shape=coordinate]{};
\path (-1,0) node[name=l2, shape=coordinate]{};
\path (-1,1) node[name=l3, shape=coordinate]{};
\path (-2,1) node[name=l4, shape=coordinate]{};
\path[draw, fill=gray!80] (l1)--(l2)--(l3)--(l4)--cycle;
\path ($1/2*(l1)+1/2*(l3)$) node[]{I};

\path[draw][line width=2pt] (-2.5,1)--(1,1)--(1,-2.5);

\end{scope}

\begin{scope}[xshift=8cm]

\path (0,0) node[name=l1, shape=coordinate]{};
\path (1,0) node[name=l2, shape=coordinate]{};
\path (1,1) node[name=l3, shape=coordinate]{};
\path (0,1) node[name=l4, shape=coordinate]{};
\path[draw, fill=gray!80] (l1)--(l2)--(l3)--(l4)--cycle;
\path ($1/2*(l1)+1/2*(l3)$) node[]{I};
\path (-0.5,1.5) node[]{0};
\path (1.5,1.5) node[]{0};
\path (2.5,0.5) node[]{0};
\path (2.5,-1.5) node[]{0};

\path (0,-2) node[name=l1, shape=coordinate]{};
\path (1,-2) node[name=l2, shape=coordinate]{};
\path (1,-1) node[name=l3, shape=coordinate]{};
\path (0,-1) node[name=l4, shape=coordinate]{};
\path[draw, fill=gray!80] (l1)--(l2)--(l3)--(l4)--cycle;
\path ($1/2*(l1)+1/2*(l3)$) node[]{I};

\path (-2,0) node[name=l1, shape=coordinate]{};
\path (-1,0) node[name=l2, shape=coordinate]{};
\path (-1,1) node[name=l3, shape=coordinate]{};
\path (-2,1) node[name=l4, shape=coordinate]{};
\path[draw, fill=gray!80] (l1)--(l2)--(l3)--(l4)--cycle;
\path ($1/2*(l1)+1/2*(l3)$) node[]{I};

\path[draw, fill=gray!20] (-1,-1)--(0,-1)--(0,0)--(-1,0)--cycle;
\path[draw, fill=gray!20] (1,-1)--(2,-1)--(2,0)--(1,0)--cycle;
\path[draw, fill=gray!20] (1,-3)--(2,-3)--(2,-2)--(1,-2)--cycle;
\path (-0.5,-0.5) node[]{R};
\path (1.5,-0.5) node[]{R};
\path (1.5,-2.5) node[]{R};

\path[draw][line width=2pt] (-2.5,1)--(2,1)--(2,-3.5);
\end{scope}
\end{tikzpicture}
\end{center}
\caption{The coupling function $C(u,\www)$ with fixed $\www$ is real on the set of light grey squares, and it is pure imaginary on the set of dark grey squares. On the left pictures the coupling function restricted to the light grey squares satisfies the Dirichlet boundary conditions, and the coupling function restricted to the dark grey squares obeys Neumann boundary conditions. 
The picture on the right corresponds to mixed Dirichlet and Neumann boundary conditions for the coupling function.}\label{b_c}
\end{figure}

Furthermore, we will show the convergence of the dimer coupling function in the case of approximations by {\it black-piecewise Temperleyan domains} (see Fig.~\ref{pwTemp}), domains which correspond to mixed Dirichlet and Neumann boundary conditions for the coupling function (see Fig.~\ref{b_c}). For a more precise statement, see Theorem~\ref{main-th2_1}.
Note that the coupling function $C(u,\www)$ with fixed $\www$ coincides with a discrete holomorphic function $\F(u)$. 
Let $\F^\delta$ be equal $\frac{1}{\delta}\F$ on a domain $\om^\delta$ of mesh size $\delta$, we have the following result (for a more precise statement, see Theorem~\ref{convF}).

\begin{Th}\label{main-convF}
Let $\om^\delta$ be a sequence of discrete $2k$-black-piecewise Temperleyan domains of mesh size $\delta$ approximating a continuous domain $\om$. Suppose that each $\om^\delta$ admits a domino tiling. Assume that white square $\www^\delta$ of the domain $\om^\delta$ tends to the boundary point $\www$ of the domain $\om$. Then $\F^\delta$ converges uniformly on compact subsets of  $\om$ to a continuous holomorphic function $f$ with a singularity at $\www$, as $\delta$ tends to~$0$. 
\end{Th}
Similarly, one can show the convergence of $\G^\delta=\frac{1}{\delta}\G$ for approximations by white-piecewise Temperleyan domains. Note that a polygonal domain $\om^{\delta}$ as in Theorem~\ref{main-th} is black-piecewise Temperleyan and also white-piecewise Temperleyan. Thus, we obtain the convergence of the double-dimer coupling function for any polygonal domain.

It is known that 
all moments of the scaling limit of the height function can be written in terms of the scaling limit of the coupling function, see~\cite{KGff}. Thus, adopting the proof of~\cite[Theorem 1.1]{KGff}, we obtain the convergence of the dimer height function to the Gaussian Free Field in the setup of Theorem~\ref{main-convF}. More precisely, we have the following result.

 \begin{cor}\label{main-cor2}
 Let $\om$ be a Jordan domain with smooth boundary in $\mathbb{R}^2$. Let $\om^\delta$ be a black-piecewise Temperleyan domain approximating $\om$. Let $h^\delta$ be the height function of $\om^\delta$. 
Then $h^\delta - \mathbb{E}{h^\delta}$ converges weakly in distribution to the Gaussian Free Field on $\om$ with Dirichlet boundary conditions, as $\delta$ tends to $0$.
 \end{cor}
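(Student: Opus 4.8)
The plan is to deduce this from Theorem~\ref{main-convF} and the moment method of Kenyon~\cite{KGff}: the convergence $\F^\delta\to f$ is precisely the input under which Kenyon's derivation of the Gaussian Free Field limit operates, so once it is available — now for black-piecewise Temperley domains rather than only Temperley ones — the corollary follows as in~\cite{KGff}, and I would only need to check that nothing in the piecewise setup spoils Kenyon's estimates. I would first record that a smooth Jordan domain $\om$ is approximated by black-piecewise Temperley domains $\om^\delta$ (with a number of pieces $2k(\delta)$ that may grow as $\delta\to0$), so Theorem~\ref{main-convF}, in the sharper form of Theorem~\ref{convF}, applies and gives $\F^\delta\to f$ locally uniformly on $\om$ with $f$ holomorphic, together with the companion convergence $\G^\delta\to g$; equivalently, the coupling function $C^\delta(b,w)$ converges locally uniformly away from the diagonal to an explicit holomorphic kernel built from $f$ and $g$.

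Next I would reduce the weak convergence of $h^\delta-\overline{h^\delta}$ to the GFF to the convergence of all joint moments of $\langle h^\delta-\overline{h^\delta},\varphi\rangle$ (for smooth, compactly supported $\varphi$, or equivalently of finite linear combinations of height increments $h^\delta(z)-h^\delta(z')$) to the corresponding Gaussian moments; this is the standard method-of-moments criterion for Gaussian limits. Because the exact mean is subtracted, the first moment of the centered field vanishes identically, so only the covariance and the higher moments need to be analysed.

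For the covariance I would write $h^\delta(z)-h^\delta(z')$ as a signed sum of edge-occupation variables along a lattice path from $z'$ to $z$, so that $\mathbb{E}\big[(h^\delta(z_1)-h^\delta(z_2))(h^\delta(z_3)-h^\delta(z_4))\big]$ becomes a double sum over pairs of edges of $2\times2$ determinants in the entries of $C^\delta$; using the local-uniform convergence of $C^\delta$ and the square-lattice a priori bound $|C^\delta(b,w)|\le \mathrm{const}\,|b-w|^{-1}$ (valid independently of the boundary), this sum converges to the integral in Kenyon's computation, and the same identification as in~\cite{KGff} — matching the diagonal singularity of the kernel and using conformal covariance — shows the limit equals $G_\om(z_1,z_3)-G_\om(z_1,z_4)-G_\om(z_2,z_3)+G_\om(z_2,z_4)$, i.e. the covariance of the Dirichlet GFF. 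For the higher moments I would use that the $2n$-point dimer correlation is a determinant in the entries of $C^\delta$: expanding it, the terms that pair the $n$ increments two by two reproduce Wick's formula in the limit, while all remaining contributions (longer cycles, or coupling-function factors at nearly coinciding vertices) are $o(1)$ by the same bounds. Hence every moment converges to a Gaussian moment, which is the assertion.

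The hard part will be the behaviour near $\partial\om$: Theorem~\ref{main-convF} controls $\F^\delta$ only on compact subsets of $\om$, while the path-sums above run all the way to the boundary. I would remove an $\varepsilon$-neighbourhood of $\dom$ and show that the contribution of edges within distance $\varepsilon$ of the boundary to each moment is $O(\varepsilon^{\alpha})$ uniformly in $\delta$, using the Dirichlet-type boundary conditions that the discrete harmonic components of $C^\delta$ satisfy on a black-piecewise Temperley domain together with standard a priori estimates, exactly as in~\cite{KGff}; sending $\delta\to0$ and then $\varepsilon\to0$ would finish the proof. A secondary point to monitor is that, since the number of boundary pieces grows with $\delta$, these boundary estimates must be uniform in the number of pieces — which is the kind of control the precise statement, Theorem~\ref{convF}, is meant to supply.
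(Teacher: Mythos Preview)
Your overall strategy --- reduce to moment convergence via Kenyon's determinantal formula for joint edge probabilities, then identify the limiting moments as those of the Dirichlet GFF --- is the right one and matches the paper. But you skip the step that actually does the work in the piecewise-Temperley setting. The limiting kernels here are \emph{not} the Temperley ones: in the paper's notation one has $f_{+}(z,w)=\tfrac{2}{z-w}\cdot\tfrac{s(w)}{s(z)}$ and $f_{-}(z,w)=\tfrac{2}{z-\bar w}\cdot\tfrac{\overline{s(w)}}{s(z)}$, where $s$ encodes the positions of the white corners. Your appeal to ``the same identification as in~\cite{KGff}'' is therefore not justified as stated, and the claim that longer-cycle terms are $o(1)$ is not how Kenyon's argument works. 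What makes the proof go through is an algebraic cancellation: when $\det\bigl(F_{\epsilon_i,\epsilon_j}(z_i,z_j)\bigr)$ is expanded over permutations, the product of the extra $s$-ratios along each cycle telescopes to~$1$ (this is the content of the lemma the paper states in its Appendix). Hence the limiting moment formula is \emph{identical} to the Temperley one, and Kenyon's Green's-function identification applies verbatim. Without isolating this cancellation you would have to redo the entire Kenyon computation with the modified kernels, which you do not attempt.

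Two smaller points. First, in this setup the number of white corners $2k$ is \emph{fixed}: the black-piecewise Temperley approximation of a smooth $\om$ is taken with respect to $2k$ prescribed boundary marked points, and Theorem~\ref{convF} assumes this. Your worry about uniformity in a growing $k(\delta)$ is misplaced (and Theorem~\ref{convF} would not provide such uniformity anyway). Second, Theorem~\ref{main-convF} gives $\F^\delta(u)=C_{\om^\delta}(u,\www^\delta)$ for a fixed \emph{boundary} white square $\www^\delta$; for the single-dimer moments you need the asymptotics of $C_{\om^\delta}(u,v^\delta)$ for \emph{interior} $v$, which the paper records separately as Theorem~\ref{main-th2_1}. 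The factorization into $\F\cdot\G$ that you invoke is a feature of the \emph{double}-dimer coupling function, not the single-dimer one needed here.
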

 
As it was shown in~\cite{KGff} it is enough to compute all the limit moments of the fluctuations $h^\delta - \mathbb{E}{h^\delta}$ of the height function to prove that their limit is the Gaussian Free Field. The main tool to compute these moments is the coupling function. The scaling limit of the coupling function is very sensible to the boundary conditions, in particular the limits of the coupling function in the Temperleyan case and the piecewise Temperleyan case are different. However all the limits of $h^\delta - \mathbb{E}{h^\delta}$ turns out to be the same. 
 
 \bigskip
 
\noindent{\bf Organization of the paper.} The rest of the paper is organized as follows. In Section~\ref{2} we recall some basic facts and definitions. Section~\ref{S3} contains the construction of the primitive of the product of two discrete holomorphic functions. Also, we show that for an appropriate choice of the boundary conditions of discrete holomorphic functions the primitive of their product coincides with the expectation of the height function in the double-dimer model and we prove Theorem~\ref{leap-frog harmonicity}. In Section~\ref{4} we show that the continuos analogue of the above-mentioned primitive is the harmonic measure $\operatorname{hm}_{\om}(\,\cdot\,, (\bbb\www))$ of the boundary arc~$(\bbb\www)$ on the domain $\om$ in the setup of Theorem~\ref{main-th}. In Section~\ref{shodimost} we prove Theorem~\ref{main-convF}. Finally, Section~\ref{6} contains results about the
single dimer model. 
 
\bigskip
\medskip

\noindent{\bf Acknowledgements.} 
The author thanks Stanislav Smirnov for valuable insights, and Dmitry Chelkak for numerous useful discussions. 
The author also thanks the reviewers for their helpful comments and suggestions. 
Research of Theorem~\ref{main-th}  and Proposition~\ref{C=FG}
is supported by the Russian Science Foundation grant 14-21-00035. 
The author also received partial support from the NCCR SwissMAP of the SNSF and ERC AG COMPASP.

\medskip

\setcounter{equation}{0}
\section{Definitions and basic facts}\label{2}

\subsection{Height function and Temperleyan domain}\label{hf_and_Td}
Consider a checkerboard tiling of a discrete domain $\om$ with unit squares. 
We will use grey color for the black squares in our figures.
Sometimes for convenience we will distinguish between two types of black squares, in this case in the figures black squares in even rows will be represented by a 
light grey and those in odd rows will be dark grey (see Fig.~\ref{Temp}). 
 A domain where all corner squares are dark grey is called an {\it odd Temperleyan domain}. To obtain the {\it Temperleyan domain} one removes one dark grey square adjacent to the boundary from an odd Temperleyan domain. 

W. P. Thurston \cite{Ter} defines the {\it height function} $h$ (which is a real-valued function on the vertices of $\om$) as follows. 
Fix a vertex $z_0$ and set $h(z_0)=0$.
For every other vertex $z$ in the tiling, take an edge-path $\gamma$ from $z_0$ to $z$. The height along $\gamma$ changes by $\pm \frac14$ if the traversed edge does not cross a domino from the tiling or by $\mp \frac34$ otherwise: if the traversed edge has a black square on its left then the height increases by $\frac14$ or decreases by $\frac34$; if it has a white
square on its left then it decreases by $\frac14$ or increases by $\frac34$, see Fig.~\ref{hf}. Note that for a simply connected domain, the height is independent of the choice of $\gamma$. 
The height function in the double-dimer model is defined as the difference of the height functions of the two corresponding dimer coverings.

\begin{figure}
\begin{center}
\begin{tikzpicture}
\begin{scope}
\draw (0,0) -- (2,0) -- (2,1) -- (3,1) -- (3,3) -- (0,3)-- cycle;

\draw  [draw, fill=gray!20](1,0) rectangle (2,1);
\draw  [draw, fill=gray!20](2,1) rectangle (3,2);
\draw  [draw, fill=gray!20](1,2) rectangle (2,3);
\draw  [draw, fill=gray!20](0,1) rectangle (1,2);

\draw[line width=1.3pt] (0,0) rectangle (2,1);
\draw[line width=1.3pt] (1,1) rectangle (3,2);
\draw[line width=1.3pt] (1,2) rectangle (3,3);
\draw[line width=1.3pt] (0,1) rectangle (1,3);

\fill[black] (0,0)  node[below,font=\tiny]{$z_0$};
\fill[black] (2.2,2) node[above,font=\tiny]{$z$};

\fill[black] (0,0) circle (1.5pt);
\fill[black] (2,2) circle (1.5pt);
\end{scope}

\begin{scope}[xshift=6cm]
\draw (0,0) -- (2,0) -- (2,1) -- (3,1) -- (3,3) -- (0,3)-- cycle;

\draw  [draw, fill=gray!20](1,0) rectangle (2,1);
\draw  [draw, fill=gray!20](2,1) rectangle (3,2);
\draw  [draw, fill=gray!20](1,2) rectangle (2,3);
\draw  [draw, fill=gray!20](0,1) rectangle (1,2);

\draw[line width=1.3pt] (0,0) rectangle (2,1);
\draw[line width=1.3pt] (1,1) rectangle (3,2);
\draw[line width=1.3pt] (1,2) rectangle (3,3);
\draw[line width=1.3pt] (0,1) rectangle (1,3);

\draw[gray!80, ultra thick, dashed] (0,0) -- (0,1) -- (2,1) -- (2,2);

\fill[black] (0,0)  node[below,font=\tiny]{$0$};
\fill[black] (0,1) node[left,font=\tiny]{$-\frac14$};
\fill[black] (0.7,1) node[below,font=\tiny]{$-\frac24$};
\fill[black] (1.7,1) node[above,font=\tiny]{$-\frac14$};
\fill[black] (2.2,2) node[above,font=\tiny]{$-\frac44$};

\fill[black] (0,0) circle (1.5pt);
\fill[black] (0,1) circle (1.5pt);
\fill[black] (1,1) circle (1.5pt);
\fill[black] (2,1) circle (1.5pt);
\fill[black] (2,2) circle (1.5pt);
\end{scope}

\end{tikzpicture} \caption{Left: a domino tiling of the domain, vertices $z_0$ and $z$. Right: an edge-path from $z_0$ to $z$ and the height along this path: 
$h^\delta(z_0)=0$, $h^\delta(z)=-1$.
}\label{hf}\end{center}
\end{figure}
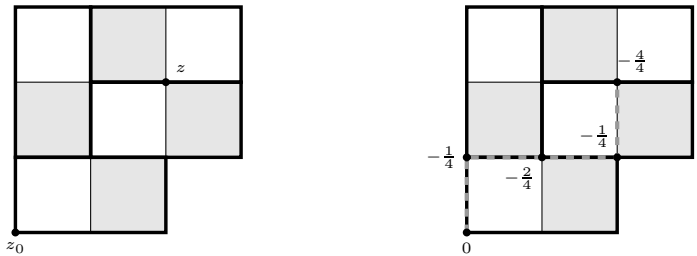

\subsection{Kasteleyn weights and discrete holomorphic functions}\label{coupl}
Let $G$ be a bipartite graph with $n$ black and $n$ white vertices. A {\it Kasteleyn matrix} $K_G$ is an $n\times n$ weighted adjacency matrix whose rows index the black vertices and columns index the white vertices. Let us denote by $\tau(u,v)$ an element of this matrix, where $u$ and $v$ are adjacent  black and white vertices. For finite planar bipartite graphs Kasteleyn~\cite{Kast} proved that if the edge-weights are Kasteleyn, i.e. the alternating product of the weights along any simple face of degree $p$ is equal to $(-1)^{(p+2)/2}$, then the absolute value of the determinant of the Kasteleyn matrix is equal to the number of perfect matchings of the graph.

Kenyon showed how to compute local statistics for the uniform measure on dimer configurations on a planar graph, using the inverse of the Kasteleyn matrix.
Let $E$ be a finite collection of disjoint edges of $\om$. Let $\mu$ be the uniform probability measure on  perfect matchings of $\om$. Let $b_1,\ldots,b_k$ and $w_1,\ldots,w_k$ be the black and white vertices of the edges belonging to $E$ correspondingly.

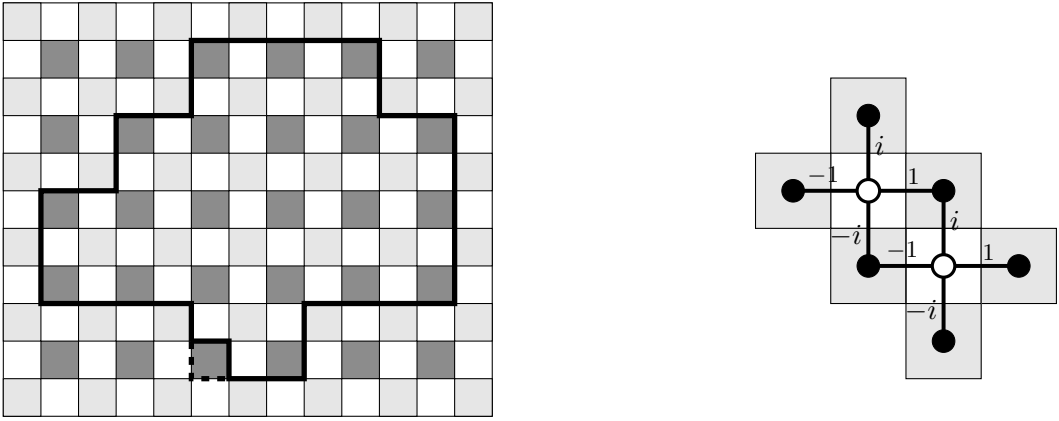
\begin{figure}
\begin{center}
\begin{tikzpicture}[x={(0.5cm,0cm)}, y={(0cm,0.5cm)}]

\begin{scope}
\draw (0,0) rectangle (13,11);

\draw  [draw, fill=gray!20](0,0) rectangle (1,1);
\draw  [draw, fill=gray!20](0,2) rectangle (1,3);
\draw  [draw, fill=gray!20](0,4) rectangle (1,5);
\draw  [draw, fill=gray!20](0,6) rectangle (1,7);
\draw  [draw, fill=gray!20](0,8) rectangle (1,9);
\draw  [draw, fill=gray!20](0,10) rectangle (1,11);

\draw  [draw, fill=gray!20](2,0) rectangle (3,1);
\draw  [draw, fill=gray!20](2,2) rectangle (3,3);
\draw  [draw, fill=gray!20](2,4) rectangle (3,5);
\draw  [draw, fill=gray!20](2,6) rectangle (3,7);
\draw  [draw, fill=gray!20](2,8) rectangle (3,9);
\draw  [draw, fill=gray!20](2,10) rectangle (3,11);

\draw  [draw, fill=gray!20](4,0) rectangle (5,1);
\draw  [draw, fill=gray!20](4,2) rectangle (5,3);
\draw  [draw, fill=gray!20](4,4) rectangle (5,5);
\draw  [draw, fill=gray!20](4,6) rectangle (5,7);
\draw  [draw, fill=gray!20](4,8) rectangle (5,9);
\draw  [draw, fill=gray!20](4,10) rectangle (5,11);

\draw  [draw, fill=gray!20](6,0) rectangle (7,1);
\draw  [draw, fill=gray!20](6,2) rectangle (7,3);
\draw  [draw, fill=gray!20](6,4) rectangle (7,5);
\draw  [draw, fill=gray!20](6,6) rectangle (7,7);
\draw  [draw, fill=gray!20](6,8) rectangle (7,9);
\draw  [draw, fill=gray!20](6,10) rectangle (7,11);

\draw  [draw, fill=gray!20](8,0) rectangle (9,1);
\draw  [draw, fill=gray!20](8,2) rectangle (9,3);
\draw  [draw, fill=gray!20](8,4) rectangle (9,5);
\draw  [draw, fill=gray!20](8,6) rectangle (9,7);
\draw  [draw, fill=gray!20](8,8) rectangle (9,9);
\draw  [draw, fill=gray!20](8,10) rectangle (9,11);

\draw  [draw, fill=gray!20](10,0) rectangle (11,1);
\draw  [draw, fill=gray!20](10,2) rectangle (11,3);
\draw  [draw, fill=gray!20](10,4) rectangle (11,5);
\draw  [draw, fill=gray!20](10,6) rectangle (11,7);
\draw  [draw, fill=gray!20](10,8) rectangle (11,9);
\draw  [draw, fill=gray!20](10,10) rectangle (11,11);

\draw  [draw, fill=gray!20](12,0) rectangle (13,1);
\draw  [draw, fill=gray!20](12,2) rectangle (13,3);
\draw  [draw, fill=gray!20](12,4) rectangle (13,5);
\draw  [draw, fill=gray!20](12,6) rectangle (13,7);
\draw  [draw, fill=gray!20](12,8) rectangle (13,9);
\draw  [draw, fill=gray!20](12,10) rectangle (13,11);

\draw  [draw, fill=gray!90](1,1) rectangle (2,2);
\draw  [draw, fill=gray!90](1,3) rectangle (2,4);
\draw  [draw, fill=gray!90](1,5) rectangle (2,6);
\draw  [draw, fill=gray!90](1,7) rectangle (2,8);
\draw  [draw, fill=gray!90](1,9) rectangle (2,10);

\draw  [draw, fill=gray!90](3,1) rectangle (4,2);
\draw  [draw, fill=gray!90](3,3) rectangle (4,4);
\draw  [draw, fill=gray!90](3,5) rectangle (4,6);
\draw  [draw, fill=gray!90](3,7) rectangle (4,8);
\draw  [draw, fill=gray!90](3,9) rectangle (4,10);

\draw  [draw, fill=gray!90](5,1) rectangle (6,2);
\draw  [draw, fill=gray!90](5,3) rectangle (6,4);
\draw  [draw, fill=gray!90](5,5) rectangle (6,6);
\draw  [draw, fill=gray!90](5,7) rectangle (6,8);
\draw  [draw, fill=gray!90](5,9) rectangle (6,10);

\draw  [draw, fill=gray!90](7,1) rectangle (8,2);
\draw  [draw, fill=gray!90](7,3) rectangle (8,4);
\draw  [draw, fill=gray!90](7,5) rectangle (8,6);
\draw  [draw, fill=gray!90](7,7) rectangle (8,8);
\draw  [draw, fill=gray!90](7,9) rectangle (8,10);

\draw  [draw, fill=gray!90](9,1) rectangle (10,2);
\draw  [draw, fill=gray!90](9,3) rectangle (10,4);
\draw  [draw, fill=gray!90](9,5) rectangle (10,6);
\draw  [draw, fill=gray!90](9,7) rectangle (10,8);
\draw  [draw, fill=gray!90](9,9) rectangle (10,10);

\draw  [draw, fill=gray!90](11,1) rectangle (12,2);
\draw  [draw, fill=gray!90](11,3) rectangle (12,4);
\draw  [draw, fill=gray!90](11,5) rectangle (12,6);
\draw  [draw, fill=gray!90](11,7) rectangle (12,8);
\draw  [draw, fill=gray!90](11,9) rectangle (12,10);

\draw[draw, line width=2pt,dashed] (5,2)--(5,1)--(6,1);
\draw[draw, line width=2pt]  
(5,2)--(6,2)--(6,1)--(8,1)--(8,3)--(12,3)--(12,8)--(10,8)--(10,10)--(5,10)--(5,8)--(3,8)--(3,6)--(1,6)--(1,3)--(5,3)--(5,2);

\end{scope}

\begin{scope}[xshift=12cm]
\draw  (-2,7) rectangle (0,5) node (v3) {};

\draw  [draw, fill=gray!20](-2,9) rectangle (0,7) node (v1) {};

\draw [draw, fill=gray!20] (v1) rectangle (2,5) node (v4) {};

\draw  [draw, fill=gray!20](-4,7) rectangle (-2,5) node (v2) {};

\draw  [draw, fill=gray!20](v2) rectangle (0,3) node (v5) {};

\draw  (v3) rectangle (2,3);
\draw  [draw, fill=gray!20](v4) rectangle (4,3);
\draw  [draw, fill=gray!20](v5) rectangle (2,1);

\draw[draw, line width=1.5pt]  (-1,8) edge (-1,4);
\draw (-1.1,7.2) node[anchor=west]{$i$};
\draw (-0.9,4.8) node[anchor=east]{$-i$};
\draw[draw, line width=1.5pt]  (-3,6) edge (1,6);
\draw (0.2,5.9) node[anchor=south,font=\small]{$1$};
\draw (-2.2,5.9) node[anchor=south,font=\small]{$-1$};

\draw[draw, line width=1.5pt]  (1,6) edge (1,2);
\draw (0.9,5.2) node[anchor=west]{$i$};
\draw (1.1,2.8) node[anchor=east]{$-i$};
\draw[draw, line width=1.5pt]  (-1,4) edge (3,4);
\draw (2.2,3.9) node[anchor=south,font=\small]{$1$};
\draw (-0.1,3.9) node[anchor=south,font=\small]{$-1$};

\path[draw, fill=black] (1,6) node (v9) {} circle[radius=0.15cm];
\path[draw, fill=black] (-3,6) node (v8) {} circle[radius=0.15cm];
\path[draw, fill=black] (-1,8) node (v6) {} circle[radius=0.15cm];
\path[draw, fill=black] (-1,4) node (v7) {} circle[radius=0.15cm];
\path[draw,line width=1.2pt, fill=white] (-1,6) node () {} circle[radius=0.15cm];

\path[draw,line width=1.2pt, fill=white] (1,4) node () {} circle[radius=0.15cm];
\path[draw, fill=black] (1,2) node (v6) {} circle[radius=0.15cm];
\path[draw, fill=black] (3,4) node (v7) {} circle[radius=0.15cm];
\end{scope}
\end{tikzpicture} \caption{Left: A Temperleyan domain. Right: Weights of the Kasteleyn matrix on the square lattice (proposed by Kenyon in~\cite{Kdom}):  at each white vertex the four edge weights
going counterclockwise from the right-going edge are $1$, $i$, $-1$, $-i$ respectively.}\label{Kasteleyn}\label{Temp}\end{center}
\end{figure}


\begin{Th}[\cite{Klocstat}]\label{locstat}
The $\mu$-probability that the set $E$ occurs in a perfect matching is given by $|\det(K_{E}^{-1})|$, where $K_{E}^{-1}$ is the submatrix of $K_\om^{-1}$ whose rows are indexed by $b_1,\ldots,b_k$ and columns are indexed by $w_1,\ldots,w_k$. More precisely, the probability is $c\cdot(-1)^{\sum p_i+ q_j}\cdot a_E\cdot\det(K_{E}^{-1})$, where $p_i, q_i$ is the index of $b_i$, resp. $w_i$, in a fixed ordering of the vertices, $c=\pm1$ is a constant depending only on that ordering, and $a_E$ is the product of the edge weights of the edges $E$.
\end{Th}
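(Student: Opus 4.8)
The plan is to obtain the formula by combining two classical determinantal facts: Kasteleyn's theorem (quoted in the excerpt), which identifies the modulus of $\det K_\Omega$ with the weighted count of perfect matchings, and Jacobi's identity, which expresses a minor of $K_\Omega^{-1}$ through the complementary minor of $K_\Omega$.

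First I would rewrite the probability as a ratio of partition functions. Write $Z=\sum_M \prod_{e\in M}a_e$ for the weighted sum over all perfect matchings $M$ of $\Omega$; by the Kasteleyn property of the weights, every non-zero term of the permutation expansion of $\det K_\Omega$ carries one and the same phase, so that $|\det K_\Omega|=Z$, and for the uniform measure ($|a_e|=1$) this is the number of matchings. A matching contains the disjoint edge set $E=\{(b_1,w_1),\dots,(b_k,w_k)\}$ if and only if its restriction to $\Omega':=\Omega\setminus\{b_1,w_1,\dots,b_k,w_k\}$ is a perfect matching of $\Omega'$. Hence
\[
\mu(E\subseteq M)=\frac{\sum_{M\supseteq E}\prod_{e\in M}a_e}{Z}=a_E\cdot\frac{Z(\Omega')}{Z},
\]
where $a_E=\prod_{e\in E}a_e$ and $Z(\Omega')$ is the partition function of $\Omega'$.

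Next I would identify $Z(\Omega')$ with a complementary minor of $K_\Omega$. Deleting from $K_\Omega$ the rows indexed by $b_1,\dots,b_k$ and the columns indexed by $w_1,\dots,w_k$ produces the weighted biadjacency matrix of $\Omega'$. Rather than re-checking that the induced signing is Kasteleyn on $\Omega'$, I would expand this minor over permutations $\sigma$ of the surviving indices and extend each to a permutation of all vertices by setting $b_i\mapsto w_i$: the resulting full permutations index exactly the matchings of $\Omega$ containing $E$, so the phase coherence of $\det K_\Omega$ descends (after factoring out the common contribution of the edges of $E$) to the minor, giving $|\det K_\Omega[\widehat B,\widehat W]|=Z(\Omega')$, where $\widehat B,\widehat W$ denote the deleted row and column sets. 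This same extension computes the sign relating $\sigma$ to its extension and produces the weight factor coming from $E$.

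Finally I would apply Jacobi's identity for the inverse matrix: for the submatrix $(K_\Omega^{-1})_E$ on rows $b_1,\dots,b_k$ and columns $w_1,\dots,w_k$,
\[
\det\big((K_\Omega^{-1})_E\big)=(-1)^{\sum_i p_i+\sum_j q_j}\,\frac{\det K_\Omega[\widehat B,\widehat W]}{\det K_\Omega},
\]
with $p_i,q_j$ the positions of $b_i,w_j$ in the fixed ordering. Substituting into the ratio of the first step gives $\mu(E\subseteq M)=c\cdot(-1)^{\sum p_i+q_j}\cdot a_E\cdot\det\big((K_\Omega^{-1})_E\big)$, where $c=\pm1$ is the global phase of $\det K_\Omega$; this is the signed statement, and taking absolute values together with $|a_E|=1$ yields $|\det(K_E^{-1})|$. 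The main obstacle is the phase and sign bookkeeping in the middle step: one must verify that the Kasteleyn coherence of $\det K_\Omega$ genuinely passes to the complementary minor, so that $Z(\Omega')$ equals its modulus (and is not merely dominated by it), and then match the sign $(-1)^{\sum p_i+q_j}$ and the constant $c$ delivered by Jacobi's identity against the weight phase $a_E$. Once these signs are reconciled, the theorem is a direct combination of the Kasteleyn and Jacobi formulas.
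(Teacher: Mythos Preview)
The paper does not prove this theorem; it is quoted from Kenyon~\cite{Klocstat} as background, so there is no in-paper argument to compare against. Your outline is the standard proof and is essentially what appears in~\cite{Klocstat}: express the probability as $a_E\,Z(\Omega')/Z$, identify each partition function with the modulus of the appropriate Kasteleyn determinant, and then invoke Jacobi's complementary-minor identity to pass from the minor of $K_\Omega$ on the deleted rows/columns to the minor of $K_\Omega^{-1}$ on the retained ones. Your handling of the one genuinely delicate point---that the Kasteleyn phase coherence of $\det K_\Omega$ descends to the complementary minor---is correct: extending each permutation of the surviving indices by $b_i\mapsto w_i$ embeds the nonzero terms of the minor bijectively into the nonzero terms of $\det K_\Omega$ that contain $E$, all of which share a common sign, and the sign of the extension differs from that of the original permutation by a fixed factor, so the minor's nonzero terms are coherent as well. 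With that established, the sign bookkeeping you flag is routine, and the formula follows.
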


For a given planar graph $G$, there are many ways to choose the edge-weights satisfying the Kasteleyn condition. Let us fix the following ones, which were proposed by Kenyon in~\cite{Kdom}: put $\tau(e)=\pm1$ for horizontal edges and $\tau(e)=\pm i$ if $e$ is a vertical edge, see Fig.~\ref{Kasteleyn}. It is easy to check that these weights are Kasteleyn weights.

Let $\om$ be a discrete domain on a square lattice that has at least one domino tiling. Let $K_\om$ be a Kasteleyn matrix of this domain. 
Let us denote by $C_\om(u,v)$ the elements of the inverse matrix $K_\om^{-1}$, where $u$ and $v$ are black and white squares of $\om$. The main advantage of choosing Kasteleyn weights as shown in Fig.~\ref{Kasteleyn} is the following: with this choice of weights the function $C_\om(u,v)$ is discrete holomorphic on the domain. Thus its limiting behavior can be studied using the methods of discrete complex analysis, see~\cite{Kdom}. Following~\cite{Kdom}, we call $C_\om(u,v)$ the {\it coupling function}.

Let $F$ be a function defined on the set of black squares of the domain $\om$. Recall that the function $F$ is called {\it discrete holomorphic} on $\om$ if for any white square $v\in\om$ it satisfies a discrete analogue of the Cauchy-Riemann equation (see Fig.~\ref{C-R}), and at the same time the values of the function $F$ on the set of light grey squares are real, while on the set of dark grey squares they are purely imaginary. 
Note that the real and imaginary parts of a holomorphic function are harmonic functions. 
It is also true on a discrete level: consider the discrete Cauchy-Riemann equations at four white neighbours of a black square $u$, then it is easy to show that $F(u)=\frac{1}{4}\sum_{i=1}^4\F(u_i)$. 
Therefore the {\it discrete leap-frog Laplacian} of $F$ at $u$ equals zero (see. Fig.~\ref{LL}). In other words, real and imaginary parts of discrete holomorphic functions are discrete harmonic functions.

\begin{figure}
\begin{center}
\begin{tikzpicture}[x={(0.5cm,0cm)}, y={(0cm,0.5cm)}]

\begin{scope}
\path (-1,0) node[name=l1, shape=coordinate]{};
\path (0,0) node[name=l2, shape=coordinate]{};
\path (0,1) node[name=l3, shape=coordinate]{};
\path (-1,1) node[name=l4, shape=coordinate]{};
\path[draw, fill=gray!20] (l1)--(l2)--(l3)--(l4)--cycle;
\path ($1/2*(l1)+1/2*(l3)$) node[]{$a$};

\path (0,1) node[name=u1, shape=coordinate]{};
\path (1,1) node[name=u2, shape=coordinate]{};
\path (1,2) node[name=u3, shape=coordinate]{};
\path (0,2) node[name=u4, shape=coordinate]{};
\path[draw,fill=gray!20] (u1)--(u2)--(u3)--(u4)--cycle;
\path ($1/2*(u1)+1/2*(u3)$) node[]{$d$};

\path (1,0) node[name=r1, shape=coordinate]{};
\path (2,0) node[name=r2, shape=coordinate]{};
\path (2,1) node[name=r3, shape=coordinate]{};
\path (1,1) node[name=r4, shape=coordinate]{};
\path[draw, fill=gray!20] (r1)--(r2)--(r3)--(r4)--cycle;
\path ($1/2*(r1)+1/2*(r3)$) node[]{$c$};

\path (0,-1) node[name=d1, shape=coordinate]{};
\path (1,-1) node[name=d2, shape=coordinate]{};
\path (1,0) node[name=d3, shape=coordinate]{};
\path (0,0) node[name=d4, shape=coordinate]{};
\path[draw, fill=gray!20] (d1)--(d2)--(d3)--(d4)--cycle;
\path ($1/2*(d1)+1/2*(d3)$) node[]{$b$};

\path ($1/2*(u2)+1/2*(d4)$) node[]{$v$};

\path (0,-3.5) node[]{$F(c)-F(a)=-i\cdot(F(d)-F(b))$};
\end{scope}

\begin{scope}[xshift=6cm]
\path (1,-1) node[name=dd, shape=coordinate]{};
\path (4,0) node[name=ddd, shape=coordinate]{};

\path (-1,0) node[name=l1, shape=coordinate]{};
\path (0,0) node[name=l2, shape=coordinate]{};
\path (0,1) node[name=l3, shape=coordinate]{};
\path (-1,1) node[name=l4, shape=coordinate]{};
\path[draw, fill=gray!20] (l1)--(l2)--(l3)--(l4)--cycle;
\path ($1/2*(l1)+1/2*(l3)$) node[]{$u_2$};
\path[draw,fill=gray!20] ($(l1)+(ddd)$)--($(l2)+(ddd)$)--($(l3)+(ddd)$)--($(l4)+(ddd)$)--cycle;
\path ($1/2*(l1)+1/2*(l3)+(ddd)$) node[]{$u_4$};

\path (0,1) node[name=u1, shape=coordinate]{};
\path (1,1) node[name=u2, shape=coordinate]{};
\path (1,2) node[name=u3, shape=coordinate]{};
\path (0,2) node[name=u4, shape=coordinate]{};
\path[draw,fill=gray!70] (u1)--(u2)--(u3)--(u4)--cycle;

\path[draw,fill=gray!20] ($(u1)+(u2)$)--($(u2)+(u2)$)--($(u3)+(u2)$)--($(u4)+(u2)$)--cycle;
\path ($1/2*(u1)+1/2*(u3)+(u2)$) node[]{$u_1$};

\path (1,0) node[name=r1, shape=coordinate]{};
\path (2,0) node[name=r2, shape=coordinate]{};
\path (2,1) node[name=r3, shape=coordinate]{};
\path (1,1) node[name=r4, shape=coordinate]{};
\path[draw, fill=gray!20] (r1)--(r2)--(r3)--(r4)--cycle;
\path ($1/2*(r1)+1/2*(r3)$) node[]{$u$};

\path[draw,fill=gray!70] ($(r1)+(u2)$)--($(r2)+(u2)$)--($(r3)+(u2)$)--($(r4)+(u2)$)--cycle;

\path[draw,fill=gray!70] ($(r1)+(dd)$)--($(r2)+(dd)$)--($(r3)+(dd)$)--($(r4)+(dd)$)--cycle;

\path (0,-1) node[name=d1, shape=coordinate]{};
\path (1,-1) node[name=d2, shape=coordinate]{};
\path (1,0) node[name=d3, shape=coordinate]{};
\path (0,0) node[name=d4, shape=coordinate]{};
\path[draw, fill=gray!70] (d1)--(d2)--(d3)--(d4)--cycle;
\path[draw,fill=gray!20] ($(d1)+(dd)$)--($(d2)+(dd)$)--($(d3)+(dd)$)--($(d4)+(dd)$)--cycle;
\path ($1/2*(d1)+1/2*(d3)+(dd)$) node[]{$u_3$};

\path (2,-3.5) node[]{$[\Delta\F](u)=\frac14\sum_{s=1}^4(\F(u_s)-\F(u))
$};

\end{scope}

\end{tikzpicture}\caption{Left: Discrete Cauchy-Riemann equation. Right: Discrete leap-frog Laplacian on the light grey lattice. The function $\F$ is called discrete harmonic at $u$ if $[\Delta \F](u)=0$.}\label{LL}\label{C-R}\end{center}
\end{figure}

The coupling function $C_\om(u,v)=K_\om^{-1}(u,v)$ can be extended to be zero on all boundary black squares, see Fig.~\ref{b_c}. We know that $K_\om^{-1}\cdot K_\om=I$, which is equivalent to the following relation:
\begin{equation}\label{wcr} 
\begin{split}
1\cdot C_\om (v+1, \www) - 1\cdot C_\om (v-1,\www)+
 i\cdot C_\om (v+i,\www)-i\cdot C_\om (v-i,\www)=\mathbb{1}_{\{v=\www\}}.
\end{split}
\end{equation} 
Note that this relation is the discrete Cauchy-Riemann equation for the coupling function $C_\om(\cdot,\www)$, so for any white square $\www\in\om$ the function $C_\om(u,\www)$ considered as a function of $u\in\om$ is discrete holomorphic on $\om\smallsetminus\{\www\}$, for more details see~\cite{Kdom}.
Therefore, the restriction of $C_{\om}(u,\www)$ to one type of black squares is a discrete harmonic function everywhere except the two squares adjacent to $\www$.

Moreover, the function $C_{\om}(u,v)$ satisfies the following property: 
 \begin{enumerate}
\item[$\rhd$] if $u$ and $v$ are adjacent squares, then $|C_\om(u,v)|$ is equal to the probability that the domino $[uv]$ is contained in a random domino tiling of~$\om$, see~\cite{Kdom}.
\end{enumerate}

For Temperleyan domains, each of the two discrete harmonic components of the function $C_{\om}(u,\www)$ has the following boundary conditions: the restriction of the coupling function to the light grey squares (see Fig.~\ref{b_c}), satisfies the Dirichlet boundary conditions, and coupling function restricted to the dark grey squares obeys Neumann boundary conditions.

\subsection{Even/Odd double dimers}A double-dimer configuration is the union of two dimer coverings.
We will consider coverings of a pair of domains $\om_1$, $\om_2$ that differ by two squares, i.e. $|\om_1 \bigtriangleup \om_2|=2$. Note that there are two different situations depending on whether $\om := \om_1 \cup\om_2$ contains an odd or an even number of squares. In the {\it odd} case, assume that $\om$ has one more black square than white squares. Then the domains $\om_1$ and $\om_2$ are obtained from $\om$ by removing black squares $u_1$ and $u_2$ adjacent to the boundary (see Fig.~\ref{I}).
In the {\it even} case, let $\om_1=\om$ and $\om_2$ be obtained from $\om$ by removing black and white squares $u_0$ and $v_0$, which are adjacent to the boundary. One can modify a domain in the odd case to reduce it to the even case, see Fig.~\ref{odd_even} and Remark~\ref{odd=even}.

Let us define the {\it double-dimer coupling function} on $\om=\om_1 \cup \om_2$ as the difference of the two dimer coupling functions on domains $\om_1$ and $\om_2$
\[C_{\operatorname{dbl-d}, \om}(u, v) := C_{\om_1}(u, v) - C_{\om_2}(u, v).\]

Recall that the absolute value of the coupling function is the probability that the corresponding domino is contained in a random tiling, and the determinant of the Kasteleyn matrix is equal to the number of domino tilings of our domain, so, $|C_{\om\smallsetminus\{\bbb, \www\}}(u, v)|=\left|\frac{\det(K_{\om\smallsetminus\{\bbb, \www, u, v\}})}{\det(K_{\om\smallsetminus\{\bbb, \www\}})} \right|.$  
Note that
\[
 \frac{\det(K_{\om\smallsetminus\{\bbb, \www\}})}{\det(K_\om)} = \pm K^{-1}_\om(\bbb, \www) \quad \text{and} \quad  \frac{\det(K_{\om\smallsetminus\{u, v\}})}{\det(K_\om)} = \pm K^{-1}_\om(u, v),
\]
and also
\[
 \frac{\det(K_{\om\smallsetminus\{\bbb, \www, u, v\}})}{\det(K_\om)} = \pm
 \det\begin{pmatrix} K^{-1}_\om(\bbb, \www) & K^{-1}_\om(\bbb, v) \\ K^{-1}_\om(u, \www) & K^{-1}_\om(u, v) \end{pmatrix}.
\]
Therefore, 
\[
C_{\operatorname{dbl-d}, \om}(u, v)=C_\om(u, v) - C_{\om\smallsetminus\{\bbb, \www\}}(u, v)= \pm\frac{K^{-1}_\om(\bbb, v)\cdot K^{-1}_\om(u, \www)}{K^{-1}_\om(\bbb, \www)}.
\]
Recall that for a fixed $\www$ the function $K_\om^{-1}(u, \www)$ is a discrete holomorphic function of $u$. Let us denote it by $F_{v_0}(u)$ and similarly let us define a function $G_{\bbb}(v) := K^{-1}_\om(\bbb, v)$. So, we obtain 
\[
C_{\operatorname{dbl-d}, \om}(u, v) = const_{\bbb,\www}\cdot F_{\www}(u) \cdot G_{\bbb}(v),
\]
where $const_{\bbb,\www}=\pm1/K^{-1}_\om(\bbb, \www)$. 

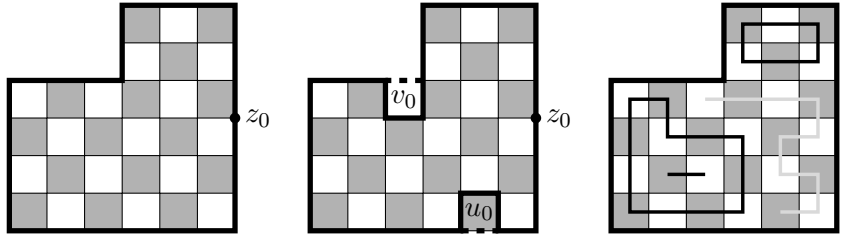
\begin{figure}
\begin{center}
\begin{tikzpicture}[x={(0.5cm,0cm)}, y={(0cm,0.5cm)}]
\begin{scope}
\draw  [draw, fill=gray!60](0,0) rectangle (1,1);
\draw  [draw, fill=gray!60](1,1) rectangle (2,2);
\draw  [draw, fill=gray!60](2,2) rectangle (3,3);
\draw  [draw, fill=gray!60](3,3) rectangle (4,4);
\draw  [draw, fill=gray!60](4,4) rectangle (5,5);
\draw  [draw, fill=gray!60](5,5) rectangle (6,6);

\draw  [draw, fill=gray!60](0,2) rectangle (1,3);
\draw  [draw, fill=gray!60](1,3) rectangle (2,4);
\draw  [draw, fill=gray!60](3,5) rectangle (4,6);

\draw  [draw, fill=gray!60](2,0) rectangle (3,1);
\draw  [draw, fill=gray!60](3,1) rectangle (4,2);
\draw  [draw, fill=gray!60](4,2) rectangle (5,3);
\draw  [draw, fill=gray!60](5,3) rectangle (6,4);

\draw  [draw, fill=gray!60](4,0) rectangle (5,1);
\draw  [draw, fill=gray!60](5,1) rectangle (6,2);

\draw[line width=2pt] (0,0) -- (0,4) -- (3,4) -- (3,6) -- (6,6) -- (6,0)-- cycle;

\fill[black] (6,3) circle (2pt) node[right]{$z_0$};
\end{scope}

\begin{scope}[xshift=4cm]

\draw  [draw, fill=gray!60](0,0) rectangle (1,1);
\draw  [draw, fill=gray!60](1,1) rectangle (2,2);
\draw  [draw, fill=gray!60](2,2) rectangle (3,3);
\draw  [draw, fill=gray!60](3,3) rectangle (4,4);
\draw  [draw, fill=gray!60](4,4) rectangle (5,5);
\draw  [draw, fill=gray!60](5,5) rectangle (6,6);

\draw  [draw, fill=gray!60](0,2) rectangle (1,3);
\draw  [draw, fill=gray!60](1,3) rectangle (2,4);
\draw  [draw, fill=gray!60](3,5) rectangle (4,6);

\draw  [draw, fill=gray!60](2,0) rectangle (3,1);
\draw  [draw, fill=gray!60](3,1) rectangle (4,2);
\draw  [draw, fill=gray!60](4,2) rectangle (5,3);
\draw  [draw, fill=gray!60](5,3) rectangle (6,4);

\draw  [draw, fill=gray!60](4,0) rectangle (5,1);
\draw  [draw, fill=gray!60](5,1) rectangle (6,2);

\draw[line width=2pt] (0,0) -- (0,4) --(2,4)--(2,3)--(3,3)-- (3,4)-- (3,6) -- (6,6) -- (6,0)--(5,0)--(5,1)--(4,1)--(4,0)-- cycle;

\draw[draw, line width=2pt,dashed] (3,4)--(2,4);
\draw[draw, line width=2pt,dashed] (4,0)--(5,0);

\fill[black] (6,3) circle (2pt) node[right]{$z_0$};

\fill[black] (4.5,0.5)  node[]{$u_0$};
\fill[black] (2.5,3.5)  node[]{$v_0$};
\end{scope}

\begin{scope}[xshift=8cm]
\draw  [draw, fill=gray!60](0,0) rectangle (1,1);
\draw  [draw, fill=gray!60](1,1) rectangle (2,2);
\draw  [draw, fill=gray!60](2,2) rectangle (3,3);
\draw  [draw, fill=gray!60](3,3) rectangle (4,4);
\draw  [draw, fill=gray!60](4,4) rectangle (5,5);
\draw  [draw, fill=gray!60](5,5) rectangle (6,6);

\draw  [draw, fill=gray!60](0,2) rectangle (1,3);
\draw  [draw, fill=gray!60](1,3) rectangle (2,4);
\draw  [draw, fill=gray!60](3,5) rectangle (4,6);

\draw  [draw, fill=gray!60](2,0) rectangle (3,1);
\draw  [draw, fill=gray!60](3,1) rectangle (4,2);
\draw  [draw, fill=gray!60](4,2) rectangle (5,3);
\draw  [draw, fill=gray!60](5,3) rectangle (6,4);

\draw  [draw, fill=gray!60](4,0) rectangle (5,1);
\draw  [draw, fill=gray!60](5,1) rectangle (6,2);

\draw[line width=2pt] (0,0) -- (0,4) -- (3,4) -- (3,6) -- (6,6) -- (6,0)-- cycle;

\draw[line width=1.3pt] (0.5,0.5) -- (0.5,3.5) -- (1.5,3.5) -- (1.5,2.5) -- (3.5,2.5) -- (3.5,0.5)-- cycle;

\draw[line width=1.3pt,gray!30] (2.5,3.5) -- (5.5,3.5) -- (5.5,2.5) -- (4.5,2.5) -- (4.5,1.5) -- (5.5,1.5)--(5.5,0.5)--(4.5,0.5);

\draw[line width=1.3pt] (1.5,1.5) -- (2.5,1.5);
\draw[line width=1.3pt] (3.5,4.5) rectangle (5.5,5.5);

\end{scope}
\end{tikzpicture} \caption{{\it Even case.} The vertex $z_0$ is the vertex on the boundary where  $h_1(z_0)=h_2(z_0)=h(z_0)=0$. The squares $u_0$ and $v_0$ are the difference between the  domains $\om_1=\om$ and $\om_2=\om\smallsetminus \{u_0, v_0\}$. On the right: an example of the interface (grey) from $u_0$ to $v_0$ and the set of loops and double edges in $\om$.
}\label{II}\end{center}
\end{figure}

\begin{figure}
\begin{center}
\begin{tikzpicture}[x={(0.5cm,0cm)}, y={(0cm,0.5cm)}]
\begin{scope}
\draw  [draw, fill=gray!60](0,0) rectangle (1,1);
\draw  [draw, fill=gray!60](1,1) rectangle (2,2);
\draw  [draw, fill=gray!60](2,2) rectangle (3,3);
\draw  [draw, fill=gray!60](3,3) rectangle (4,4);
\draw  [draw, fill=gray!60](4,4) rectangle (5,5);

\draw  [draw, fill=gray!60](0,2) rectangle (1,3);
\draw  [draw, fill=gray!60](1,3) rectangle (2,4);
\draw  [draw, fill=gray!60](2,4) rectangle (3,5);

\draw  [draw, fill=gray!60](0,4) rectangle (1,5);

\draw  [draw, fill=gray!60](2,0) rectangle (3,1);
\draw  [draw, fill=gray!60](3,1) rectangle (4,2);
\draw  [draw, fill=gray!60](4,2) rectangle (5,3);
\draw  [draw, fill=gray!60](5,3) rectangle (6,4);

\draw  [draw, fill=gray!60](4,0) rectangle (5,1);
\draw  [draw, fill=gray!60](5,1) rectangle (6,2);
\draw  [draw, fill=gray!60](6,2) rectangle (7,3);

\draw  [draw, fill=gray!60](6,0) rectangle (7,1);

\draw[line width=2pt] (1,0)--(1,1)--(0,1) -- (0,5) -- (6,5) -- (6,3) -- (7,3) -- (7,0)-- cycle;

\draw[draw, line width=2pt,dashed] (1,0)--(0,0)--(0,1);
\fill[black] (2,5) circle (2pt) node[above]{$z_0$};

\fill[black] (0.5,0.5)  node[]{$u_1$};

\end{scope}

\begin{scope}[xshift=4cm]
\draw  [draw, fill=gray!60](0,0) rectangle (1,1);
\draw  [draw, fill=gray!60](1,1) rectangle (2,2);
\draw  [draw, fill=gray!60](2,2) rectangle (3,3);
\draw  [draw, fill=gray!60](3,3) rectangle (4,4);
\draw  [draw, fill=gray!60](4,4) rectangle (5,5);

\draw  [draw, fill=gray!60](0,2) rectangle (1,3);
\draw  [draw, fill=gray!60](1,3) rectangle (2,4);
\draw  [draw, fill=gray!60](2,4) rectangle (3,5);

\draw  [draw, fill=gray!60](0,4) rectangle (1,5);

\draw  [draw, fill=gray!60](2,0) rectangle (3,1);
\draw  [draw, fill=gray!60](3,1) rectangle (4,2);
\draw  [draw, fill=gray!60](4,2) rectangle (5,3);
\draw  [draw, fill=gray!60](5,3) rectangle (6,4);

\draw  [draw, fill=gray!60](4,0) rectangle (5,1);
\draw  [draw, fill=gray!60](5,1) rectangle (6,2);
\draw  [draw, fill=gray!60](6,2) rectangle (7,3);

\draw  [draw, fill=gray!60](6,0) rectangle (7,1);

\draw[line width=2pt] (0,0) -- (0,5) -- (6,5) -- (6,4)--(5,4)--(5,3) -- (7,3) -- (7,0)-- cycle;

\draw[draw, line width=2pt,dashed] (6,4)--(6,3);

\fill[black] (2,5) circle (2pt) node[above]{$z_0$};
\fill[black] (5.5,3.5)  node[]{$u_2$};
\end{scope}

\begin{scope}[xshift=8cm]
\draw  [draw, fill=gray!60](0,0) rectangle (1,1);
\draw  [draw, fill=gray!60](1,1) rectangle (2,2);
\draw  [draw, fill=gray!60](2,2) rectangle (3,3);
\draw  [draw, fill=gray!60](3,3) rectangle (4,4);
\draw  [draw, fill=gray!60](4,4) rectangle (5,5);

\draw  [draw, fill=gray!60](0,2) rectangle (1,3);
\draw  [draw, fill=gray!60](1,3) rectangle (2,4);
\draw  [draw, fill=gray!60](2,4) rectangle (3,5);

\draw  [draw, fill=gray!60](0,4) rectangle (1,5);

\draw  [draw, fill=gray!60](2,0) rectangle (3,1);
\draw  [draw, fill=gray!60](3,1) rectangle (4,2);
\draw  [draw, fill=gray!60](4,2) rectangle (5,3);
\draw  [draw, fill=gray!60](5,3) rectangle (6,4);

\draw  [draw, fill=gray!60](4,0) rectangle (5,1);
\draw  [draw, fill=gray!60](5,1) rectangle (6,2);
\draw  [draw, fill=gray!60](6,2) rectangle (7,3);

\draw  [draw, fill=gray!60](6,0) rectangle (7,1);

\draw[line width=2pt] (0,0) -- (0,5) -- (6,5) -- (6,3) -- (7,3) -- (7,0)-- cycle;

\draw[line width=1.3pt, gray!30] (0.5,0.5) -- (0.5,2.5) -- (3.5,2.5) -- (3.5,3.5) -- (2.5,3.5) -- (2.5,4.5)--(5.5,4.5)--(5.5,3.5);
\draw[line width=1.3pt] (4.5,3.5) -- (4.5,2.5);
\draw[line width=1.3pt] (0.5,3.5) rectangle (1.5,4.5);
\draw[line width=1.3pt] (1.5,0.5) rectangle (2.5,1.5);
\draw[line width=1.3pt] (3.5,0.5) -- (3.5,1.5) -- (5.5,1.5) -- (5.5,2.5) -- (6.5,2.5) --(6.5,0.5)-- cycle;
\end{scope}
\end{tikzpicture} \caption{{\it Odd case.} The vertex $z_0$ is the vertex on the boundary where  \mbox{}$h_1(z_0)=h_2(z_0)=h(z_0)=0$. The squares $u_1$ and $u_2$ are the difference between the  domains $\om_1=\om\smallsetminus \{u_1\}$ and $\om_2=\om\smallsetminus \{u_2\}$. On the right: an example of the interface (grey) from $u_1$ to $u_2$ and the set of loops and double edges in $\om$.
}\label{I}\end{center}
\end{figure}
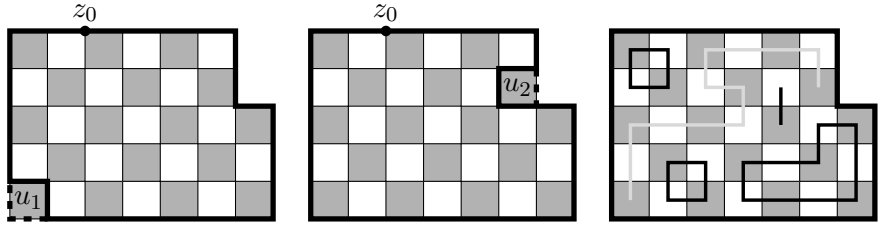

Below we discuss the above factorization of the double-dimer coupling function from the viewpoint of discrete holomorphic solutions to appropriate boundary value problems, the framework which we use later to prove the main convergence theorems.
%

\setcounter{equation}{0}
\section{Expectation of the height function in the double-dimer model and the proof of Theorem~\ref{leap-frog harmonicity}}\label{S3}

\subsection{Notation}
Put $\lambda = e^{i\frac{\pi}{4}}$ and $\bar{\lambda} = e^{-i\frac{\pi}{4}}$.

Consider a checkerboard tiling $\mathbb{C}^\delta$ of $\mathbb{R}^2$ with 
squares, each square has side $\delta$ and is centered at a lattice point of $$\{(\frac{\delta n}{\sqrt{2}},\frac{\delta m}{\sqrt{2}})\, | \,n, m \in \mathbb{Z}; n+m \in 2\mathbb{Z}\}$$
(see~Fig.~\ref{rotation}). The pair $(n,m)$ is called the coordinates of a point on this lattice. Let $\om^\delta$ be a simply connected discrete domain composed of a finite number of 
squares of $\mathbb{C}^\delta$ bounded by a disjoint simple closed lattice path.   Let $\V{}^\delta$ be the vertex set of $\om^\delta$. 
We will denote by $\bb{}^\delta$ the set of black squares and by $\ww{}^\delta$ the set of white squares of $\om^\delta$. So, $\om^\delta = \bb{}^\delta \sqcup \ww{}^\delta$. Let the coordinates of a square be the coordinates of its center. Then we can define the sets $\bb{0}^\delta$ and $\bb{1}^\delta$ of black squares of $\om^\delta$ and the sets $\ww{0}^\delta$ and $\ww{1}^\delta$  of white squares by the following properties:
\begin{enumerate}
\item[($\bb{0}^\delta$)] both coordinates are even and the sum of coordinates is divisible by~$4$;
\item[($\bb{1}^\delta$)] both coordinates are even and the sum of coordinates is not divisible by~$4$;
\item[($\ww{0}^\delta$)] both coordinates are odd and the sum of coordinates is not divisible by~$4$;
\item[($\ww{1}^\delta$)] both coordinates are odd and the sum of coordinates is divisible by~$4$.
\end{enumerate}

Define $\dV^\delta$ to be the set of vertices on the boundary. Let $\dom^\delta$ be the set of faces adjacent to $\om^\delta$ but not in $\om^\delta$. Let $\db ^\delta$ and $\dw^\delta$ be the sets of black and white faces of $\dom^\delta$ correspondingly. Let $\diom^\delta$ be the set of interior faces that have a common edge with boundary of $\om^\delta$. Similarly define sets $\dib^\delta$ and $\diw^\delta$ ($\diom^\delta=\dib^\delta\sqcup\diw^\delta$). Let us denote by $\clom^\delta$ the set $\om^\delta\sqcup\dom^\delta$, define also sets $\clb^\delta$ and $\clw^\delta$, to be exact: $\clb^\delta=\bb{}^\delta\sqcup\db^\delta$ and $\clw^\delta=\ww{}^\delta\sqcup\dw^\delta$.   
In the same way we define sets $\dbb{0,1}^\delta$, 
$\dww{0,1}^\delta$, 
$\dintb{0,1}^\delta$, 
$\dintw{0,1}^\delta$,
$\clbb{0,1}^\delta$ and $\clww{0,1}^\delta$.

\begin{figure}
\begin{center}

\begin{tikzpicture}[x={(0.5cm,0.5cm)}, y={(-0.5cm,0.5cm)}]

\path (11,0) node[name=l1, shape=coordinate]{};
\path (12,0) node[name=l2, shape=coordinate]{};
\path (12,1) node[name=l3, shape=coordinate]{};
\path (11,1) node[name=l4, shape=coordinate]{};
\path[draw,fill=gray!20] (l1)--(l2)--(l3)--(l4)--cycle;
\path ($1/2*(l1)+1/2*(l3)$) node[]{$u_3$};

\path (12,1) node[name=l1, shape=coordinate]{};
\path (13,1) node[name=l2, shape=coordinate]{};
\path (13,2) node[name=l3, shape=coordinate]{};
\path (12,2) node[name=l4, shape=coordinate]{};
\path[draw,fill=gray!20] (l1)--(l2)--(l3)--(l4)--cycle;
\path ($1/2*(l1)+1/2*(l3)$) node[]{$u_2$};

\path (13,0) node[name=l1, shape=coordinate]{};
\path (13,1) node[name=l2, shape=coordinate]{};
\path (14,1) node[name=l3, shape=coordinate]{};
\path (14,0) node[name=l4, shape=coordinate]{};
\path[draw,fill=gray!20] (l1)--(l2)--(l3)--(l4)--cycle;
\path ($1/2*(l1)+1/2*(l3)$) node[]{$u_1$};

\path (12,0) node[name=l1, shape=coordinate]{};
\path (13,0) node[name=l2, shape=coordinate]{};
\path (13,1) node[name=l3, shape=coordinate]{};
\path (12,1) node[name=l4, shape=coordinate]{};
\path[draw] (l1)--(l2)--(l3)--(l4)--cycle;
\path ($1/2*(l1)+1/2*(l3)$) node[]{$\bf\www$};

\path (12.9,-0.1) node[anchor= west]{$z_+$};
\path (12.1,-0.1) node[anchor= north]{$z_\flat$};
\path (12,1) node[anchor= east]{$z_-$};
\path (13,1) node[anchor= south]{$z_\sharp$};


\path (16,5) node[name=l1, shape=coordinate]{};
\path (17,5) node[name=l2, shape=coordinate]{};
\path (17,6) node[name=l3, shape=coordinate]{};
\path (16,6) node[name=l4, shape=coordinate]{};
\path[draw,fill=gray!20] (l1)--(l2)--(l3)--(l4)--cycle;
\path ($1/2*(l1)+1/2*(l3)$) node[]{$\bf\nbvupugl{s}$};

\path (4,13) node[name=l1, shape=coordinate]{};
\path (5,13) node[name=l2, shape=coordinate]{};
\path (5,12) node[name=l3, shape=coordinate]{};
\path (4,12) node[name=l4, shape=coordinate]{};
\path[draw] (l1)--(l2)--(l3)--(l4)--cycle;
\path ($1/2*(l1)+1/2*(l3)$) node[]{$\bf\nwvupugl{k}$};

\path (1,9) node[name=u1, shape=coordinate]{};
\path (2,9) node[name=u2, shape=coordinate]{};
\path (2,8) node[name=u3, shape=coordinate]{};
\path (1,8) node[name=u4, shape=coordinate]{};
\path[draw,fill=gray!20] (u1)--(u2)--(u3)--(u4)--cycle;
\path ($1/2*(u1)+1/2*(u3)$) node[]{$\bf\bbb$};

\path (12,6) node[name=l1, shape=coordinate]{};
\path (11,6) node[name=l2, shape=coordinate]{};
\path (11,5) node[name=l3, shape=coordinate]{};
\path (12,5) node[name=l4, shape=coordinate]{};
\path[draw] (l1)--(l2)--(l3)--(l4)--cycle;
\path ($1/2*(l1)+1/2*(l3)$) node[]{$\bf\nwvpugl{k}$};

\path (4,10) node[name=u1, shape=coordinate]{};
\path (5,10) node[name=u2, shape=coordinate]{};
\path (5,9) node[name=u3, shape=coordinate]{};
\path (4,9) node[name=u4, shape=coordinate]{};
\path[draw,fill=gray!20] (u1)--(u2)--(u3)--(u4)--cycle;
\path ($1/2*(u1)+1/2*(u3)$) node[]{$\bf\nbvpugl{s}$};


\path (15,2) node[name=u1, shape=coordinate]{};
\path (16,2) node[name=u2, shape=coordinate]{};
\path (16,3) node[name=u3, shape=coordinate]{};
\path (15,3) node[name=u4, shape=coordinate]{};
\path[draw,fill=gray!20] (u1)--(u2)--(u3)--(u4)--cycle;
\path (15.45, 2.45) node[]{$u$};

\path (15,3) node[name=u1, shape=coordinate]{};
\path (16,3) node[name=u2, shape=coordinate]{};
\path (16,4) node[name=u3, shape=coordinate]{};
\path (15,4) node[name=u4, shape=coordinate]{};
\path[draw] (u1)--(u2)--(u3)--(u4)--cycle;
\path ($1/2*(u1)+1/2*(u3)$) node[]{$v$};

\draw[line width=1pt][->] (u1) --(15.9,3);

\path[draw, fill=white] (u1) circle[radius=0.05cm];
\path[draw, fill=white] (u2) circle[radius=0.05cm];

\path (u1) node[anchor=north]{$z$};
\path (16.07,3) node[anchor=south]{$z'$};


\path (5,13) node[name=u1, shape=coordinate]{};
\path (6,13) node[name=u2, shape=coordinate]{};
\path (6,12) node[name=u3, shape=coordinate]{};
\path (5,12) node[name=u4, shape=coordinate]{};
\path[draw,fill=gray!20] (u1)--(u2)--(u3)--(u4)--cycle;
\path ($1/2*(u1)+1/2*(u3)$) node[]{$\bb{0}$};

\path (7,13) node[name=u1, shape=coordinate]{};
\path (8,13) node[name=u2, shape=coordinate]{};
\path (8,12) node[name=u3, shape=coordinate]{};
\path (7,12) node[name=u4, shape=coordinate]{};
\path[draw,fill=gray!20] (u1)--(u2)--(u3)--(u4)--cycle;
\path ($1/2*(u1)+1/2*(u3)$) node[]{$\bb{0}$};

\path[draw](6,12)--(7,12);

\path (8,14) node[name=u1, shape=coordinate]{};
\path (9,14) node[name=u2, shape=coordinate]{};
\path (9,13) node[name=u3, shape=coordinate]{};
\path (8,13) node[name=u4, shape=coordinate]{};
\path[draw,fill=gray!20] (u1)--(u2)--(u3)--(u4)--cycle;

\path (6,14) node[name=u1, shape=coordinate]{};
\path (7,14) node[name=u2, shape=coordinate]{};
\path (7,13) node[name=u3, shape=coordinate]{};
\path (6,13) node[name=u4, shape=coordinate]{};
\path[draw,fill=gray!20] (u1)--(u2)--(u3)--(u4)--cycle;

\path (4,14) node[name=u1, shape=coordinate]{};
\path (5,14) node[name=u2, shape=coordinate]{};
\path (5,13) node[name=u3, shape=coordinate]{};
\path (4,13) node[name=u4, shape=coordinate]{};
\path[draw,fill=gray!20] (u1)--(u2)--(u3)--(u4)--cycle;

\path (6.5, 12.5) node[]{$\ww{0}$};
\path (8.5, 12.5) node[]{$\ww{0}$};
\path[draw](8,12)--(9,12)--(9,13);

\path[draw](5,14)--(6,14);
\path[draw](7,14)--(8,14);

\path (4.5, 13.5) node[]{$0$};
\path (5.5, 13.5) node[]{$0$};
\path (6.5, 13.5) node[]{$0$};
\path (7.5, 13.5) node[]{$0$};
\path (8.5, 13.5) node[]{$0$};

\path (4,12) node[name=u1, shape=coordinate]{};
\path (5,12) node[name=u2, shape=coordinate]{};
\path (5,11) node[name=u3, shape=coordinate]{};
\path (4,11) node[name=u4, shape=coordinate]{};
\path[draw,fill=gray!20] (u1)--(u2)--(u3)--(u4)--cycle;
\path ($1/2*(u1)+1/2*(u3)$) node[]{$\bb{1}$};

\path[draw](5,10)--(5,11);

\path (3,13) node[name=u1, shape=coordinate]{};
\path (4,13) node[name=u2, shape=coordinate]{};
\path (4,12) node[name=u3, shape=coordinate]{};
\path (3,12) node[name=u4, shape=coordinate]{};
\path[draw,fill=gray!20] (u1)--(u2)--(u3)--(u4)--cycle;

\path (3,11) node[name=u1, shape=coordinate]{};
\path (4,11) node[name=u2, shape=coordinate]{};
\path (4,10) node[name=u3, shape=coordinate]{};
\path (3,10) node[name=u4, shape=coordinate]{};
\path[draw,fill=gray!20] (u1)--(u2)--(u3)--(u4)--cycle;

\path (4.5, 10.5) node[]{$\ww{0}$};

\path[draw](3,11)--(3,12);

\path (3.5, 12.5) node[]{$0$};
\path (3.5, 11.5) node[]{$0$};
\path (3.5, 10.5) node[]{$0$};

\path (2,10) node[name=u1, shape=coordinate]{};
\path (3,10) node[name=u2, shape=coordinate]{};
\path (3,9) node[name=u3, shape=coordinate]{};
\path (2,9) node[name=u4, shape=coordinate]{};
\path[draw,fill=gray!20] (u1)--(u2)--(u3)--(u4)--cycle;
\path ($1/2*(u1)+1/2*(u3)$) node[]{$\bb{1}$};

\path[draw](3,9)--(4,9);

\path (1,11) node[name=u1, shape=coordinate]{};
\path (2,11) node[name=u2, shape=coordinate]{};
\path (2,10) node[name=u3, shape=coordinate]{};
\path (1,10) node[name=u4, shape=coordinate]{};
\path[draw,fill=gray!20] (u1)--(u2)--(u3)--(u4)--cycle;

\path (3.5, 9.5) node[]{$\ww{1}$};
\path (1.5, 9.5) node[]{$\ww{1}$};
\path[draw](1,10)--(1,9)--(2,9);

\path[draw](2,11)--(3,11);

\path (1.5, 10.5) node[]{$0$};
\path (2.5, 10.5) node[]{$0$};


\path (7,5) node[name=u1, shape=coordinate]{};
\path (6,5) node[name=u2, shape=coordinate]{};
\path (6,6) node[name=u3, shape=coordinate]{};
\path (7,6) node[name=u4, shape=coordinate]{};
\path[draw,fill=gray!20] (u1)--(u2)--(u3)--(u4)--cycle;
\path ($1/2*(u1)+1/2*(u3)$) node[]{$\bb{1}$};

\path (8,5) node[name=u1, shape=coordinate]{};
\path (7,5) node[name=u2, shape=coordinate]{};
\path (7,6) node[name=u3, shape=coordinate]{};
\path (8,6) node[name=u4, shape=coordinate]{};
\path[draw] (u1)--(u2)--(u3)--(u4)--cycle;
\path (7.45, 5.45) node[]{$\ww{1}$};

\path (9,5) node[name=u1, shape=coordinate]{};
\path (8,5) node[name=u2, shape=coordinate]{};
\path (8,6) node[name=u3, shape=coordinate]{};
\path (9,6) node[name=u4, shape=coordinate]{};
\path[draw,fill=gray!20] (u1)--(u2)--(u3)--(u4)--cycle;
\path ($1/2*(u1)+1/2*(u3)$) node[]{$\bb{1}$};

\path (7,6) node[name=u1, shape=coordinate]{};
\path (6,6) node[name=u2, shape=coordinate]{};
\path (6,7) node[name=u3, shape=coordinate]{};
\path (7,7) node[name=u4, shape=coordinate]{};
\path[draw] (u1)--(u2)--(u3)--(u4)--cycle;
\path ($1/2*(u1)+1/2*(u3)$) node[]{$\ww{0}$};

\path (8,6) node[name=u1, shape=coordinate]{};
\path (7,6) node[name=u2, shape=coordinate]{};
\path (7,7) node[name=u3, shape=coordinate]{};
\path (8,7) node[name=u4, shape=coordinate]{};
\path[draw,fill=gray!20] (u1)--(u2)--(u3)--(u4)--cycle;
\path ($1/2*(u1)+1/2*(u3)$) node[]{$\bb{0}$};

\path (9,6) node[name=u1, shape=coordinate]{};
\path (8,6) node[name=u2, shape=coordinate]{};
\path (8,7) node[name=u3, shape=coordinate]{};
\path (9,7) node[name=u4, shape=coordinate]{};
\path[draw] (u1)--(u2)--(u3)--(u4)--cycle;
\path (8.4,6.5) node[]{$\ww{0}$};

\path (10,6) node[name=u1, shape=coordinate]{};
\path (9,6) node[name=u2, shape=coordinate]{};
\path (9,7) node[name=u3, shape=coordinate]{};
\path (10,7) node[name=u4, shape=coordinate]{};
\path[draw,fill=gray!20] (u1)--(u2)--(u3)--(u4)--cycle;
\path ($1/2*(u1)+1/2*(u3)$) node[]{$\bb{0}$};

\path (7,7) node[name=u1, shape=coordinate]{};
\path (6,7) node[name=u2, shape=coordinate]{};
\path (6,8) node[name=u3, shape=coordinate]{};
\path (7,8) node[name=u4, shape=coordinate]{};
\path[draw,fill=gray!20] (u1)--(u2)--(u3)--(u4)--cycle;
\path (6.4,7.5) node[]{$\bb{1}$};
\path (8,7) node[name=u1, shape=coordinate]{};
\path (7,7) node[name=u2, shape=coordinate]{};
\path (7,8) node[name=u3, shape=coordinate]{};
\path (8,8) node[name=u4, shape=coordinate]{};
\path[draw] (u1)--(u2)--(u3)--(u4)--cycle;
\path ($1/2*(u1)+1/2*(u3)$) node[]{$\ww{1}$};

\path (9,7) node[name=u1, shape=coordinate]{};
\path (8,7) node[name=u2, shape=coordinate]{};
\path (8,8) node[name=u3, shape=coordinate]{};
\path (9,8) node[name=u4, shape=coordinate]{};
\path[draw,fill=gray!20] (u1)--(u2)--(u3)--(u4)--cycle;
\path ($1/2*(u1)+1/2*(u3)$) node[]{$\bb{1}$};

\path (10,7) node[name=u1, shape=coordinate]{};
\path (9,7) node[name=u2, shape=coordinate]{};
\path (9,8) node[name=u3, shape=coordinate]{};
\path (10,8) node[name=u4, shape=coordinate]{};
\path[draw] (u1)--(u2)--(u3)--(u4)--cycle;
\path ($1/2*(u1)+1/2*(u3)$) node[]{$\ww{1}$};

\path (8,8) node[name=u1, shape=coordinate]{};
\path (7,8) node[name=u2, shape=coordinate]{};
\path (7,9) node[name=u3, shape=coordinate]{};
\path (8,9) node[name=u4, shape=coordinate]{};
\path[draw,fill=gray!20] (u1)--(u2)--(u3)--(u4)--cycle;
\path ($1/2*(u1)+1/2*(u3)$) node[]{$\bb{0}$};

\path (9,8) node[name=u1, shape=coordinate]{};
\path (8,8) node[name=u2, shape=coordinate]{};
\path (8,9) node[name=u3, shape=coordinate]{};
\path (9,9) node[name=u4, shape=coordinate]{};
\path[draw] (u1)--(u2)--(u3)--(u4)--cycle;
\path (8.45,8.45) node[]{$\ww{0}$};

\path (10,8) node[name=u1, shape=coordinate]{};
\path (9,8) node[name=u2, shape=coordinate]{};
\path (9,9) node[name=u3, shape=coordinate]{};
\path (10,9) node[name=u4, shape=coordinate]{};
\path[draw,fill=gray!20] (u1)--(u2)--(u3)--(u4)--cycle;
\path ($1/2*(u1)+1/2*(u3)$) node[]{$\bb{0}$};

\path (8,9) node[name=u1, shape=coordinate]{};
\path (7,9) node[name=u2, shape=coordinate]{};
\path (7,10) node[name=u3, shape=coordinate]{};
\path (8,10) node[name=u4, shape=coordinate]{};
\path[draw] (u1)--(u2)--(u3)--(u4)--cycle;
\path ($1/2*(u1)+1/2*(u3)$) node[]{$\ww{1}$};

\path (9,9) node[name=u1, shape=coordinate]{};
\path (8,9) node[name=u2, shape=coordinate]{};
\path (8,10) node[name=u3, shape=coordinate]{};
\path (9,10) node[name=u4, shape=coordinate]{};
\path[draw,fill=gray!20] (u1)--(u2)--(u3)--(u4)--cycle;
\path ($1/2*(u1)+1/2*(u3)$) node[]{$\bb{1}$};

\draw[line width=1pt][->] (8,9) --(8.9,9);
\draw[line width=1pt][->] (7,6) --(7.9,6);
\draw[line width=1pt][->] (7,8) --(7,7.1);
\draw[line width=1pt][->] (9,7) --(9,6.1);

\path[draw, fill=white] (7,5) circle[radius=0.05cm];
\path[draw, fill=white] (6,5) circle[radius=0.05cm];
\path[draw, fill=white] (6,6) circle[radius=0.05cm];
\path[draw, fill=white] (7,6) circle[radius=0.05cm];

\path[draw, fill=white] (9,5) circle[radius=0.05cm];
\path[draw, fill=white] (8,5) circle[radius=0.05cm];
\path[draw, fill=white] (8,6) circle[radius=0.05cm];
\path[draw, fill=white] (9,6) circle[radius=0.05cm];

\path[draw, fill=white] (8,6) circle[radius=0.05cm];
\path[draw, fill=white] (7,6) circle[radius=0.05cm];
\path[draw, fill=white] (7,7) circle[radius=0.05cm];
\path[draw, fill=white] (8,7) circle[radius=0.05cm];

\path[draw, fill=white] (10,6) circle[radius=0.05cm];
\path[draw, fill=white] (9,6) circle[radius=0.05cm];
\path[draw, fill=white] (9,7) circle[radius=0.05cm];
\path[draw, fill=white] (10,7) circle[radius=0.05cm];

\path[draw, fill=white] (7,7) circle[radius=0.05cm];
\path[draw, fill=white] (6,7) circle[radius=0.05cm];
\path[draw, fill=white] (6,8) circle[radius=0.05cm];
\path[draw, fill=white] (7,8) circle[radius=0.05cm];

\path[draw, fill=white] (9,7) circle[radius=0.05cm];
\path[draw, fill=white] (8,7) circle[radius=0.05cm];
\path[draw, fill=white] (8,8) circle[radius=0.05cm];
\path[draw, fill=white] (9,8) circle[radius=0.05cm];

\path[draw, fill=white] (8,8) circle[radius=0.05cm];
\path[draw, fill=white] (7,8) circle[radius=0.05cm];
\path[draw, fill=white] (7,9) circle[radius=0.05cm];
\path[draw, fill=white] (8,9) circle[radius=0.05cm];

\path[draw, fill=white] (10,8) circle[radius=0.05cm];
\path[draw, fill=white] (9,8) circle[radius=0.05cm];
\path[draw, fill=white] (9,9) circle[radius=0.05cm];
\path[draw, fill=white] (10,9) circle[radius=0.05cm];

\path[draw, fill=white] (9,9) circle[radius=0.05cm];
\path[draw, fill=white] (8,9) circle[radius=0.05cm];
\path[draw, fill=white] (8,10) circle[radius=0.05cm];
\path[draw, fill=white] (9,10) circle[radius=0.05cm];

\path[draw][line width=1.5pt](13,0)--(17,0)--(17,6)--(12,6)--(12,11)--(9,11)--(9,13)--(4,13)--(4,10)--(1,10)--(1,9);
\path[draw][line width=1.5pt][line width=3pt](1,8)--(1,6)--(4,6)--(4,3)--(7,3)--(7,0)--(12,0);
\path[draw][white](1,8)--(1,6)--(4,6)--(4,3)--(7,3)--(7,0)--(12,0);
\path[draw][line width=1.5pt,dashed] (12,0)--(13,0);
\path[draw][line width=1.5pt,dashed] (1,8)--(1,9);

\path[draw, fill=white] (12,0) circle[radius=0.05cm];
\path[draw, fill=white] (13,0) circle[radius=0.05cm];
\path[draw, fill=white] (12,1) circle[radius=0.05cm];
\path[draw, fill=white] (13,1) circle[radius=0.05cm];
\path[draw, fill=white] (1,8) circle[radius=0.05cm];
\path[draw, fill=white] (1,9) circle[radius=0.05cm];


\path[draw, fill=white] (7,10) circle[radius=0.05cm];

\path[draw, fill=white] (1,10) circle[radius=0.05cm];
\path[draw, fill=white] (2,10) circle[radius=0.05cm];
\path[draw, fill=white] (3,10) circle[radius=0.05cm];
\path[draw, fill=white] (4,10) circle[radius=0.05cm];
\path[draw, fill=white] (4,10) circle[radius=0.05cm];
\path[draw, fill=white] (4,11) circle[radius=0.05cm];
\path[draw, fill=white] (4,12) circle[radius=0.05cm];
\path[draw, fill=white] (4,13) circle[radius=0.05cm];
\path[draw, fill=white] (5,13) circle[radius=0.05cm];
\path[draw, fill=white] (6,13) circle[radius=0.05cm];
\path[draw, fill=white] (7,13) circle[radius=0.05cm];
\path[draw, fill=white] (8,13) circle[radius=0.05cm];
\path[draw, fill=white] (9,13) circle[radius=0.05cm];

\draw (8.8,8.67) node[font=\tiny]{$\bf\delta\lambda$};
\draw (7.8,5.67) node[font=\tiny]{$\bf\delta\lambda$};

\draw (15.8,2.67) node[font=\tiny]{$\bf\delta\lambda$};

\draw (6.75,7.5) node[font=\tiny]{$\delta\bar\lambda$};
\draw (8.75,6.5) node[font=\tiny]{$\delta\bar\lambda$};

\path (14,7) node[]{$\bf{(\www\bbb)}$};
\path (10,-1) node[]{$\bf{(\bbb\www)}$};

\end{tikzpicture}
\caption{
The domain $\om^\delta$, the sets ${\color{gray}\blacklozenge}^\delta$ and $\lozenge^\delta$ of this domain and the set $\mathcal{V}^\delta$ is the vertex set of 
$\om^\delta$, the squares 
$\bbb \in \protect\dintb{0}$ 
and $\www \in \protect\dintw{0}$, 
and the elements of the sets of square corners 
$\{\protect\nbvupugl{s}\}^{m+1}_{s=1}$, $\{\protect\nwvupugl{k}\}^{n+1}_{k=1}$, $\{\protect\nbvpugl{s}\}^{m-1}_{s=1}$ and  $\{\protect\nwvpugl{k}\}^{n-1}_{k=1}$. 
We call a corner of $\om^\delta$ a convex corner if the interior angle is $\pi/2$, and concave if the interior angle is $3\pi/2$. A corner is called white if there is a white square in the corner, and black if there is a black square in this corner.
The discrete holomorphic function $\F^\delta$ is real on~$\protect\clbb{0}^\delta$ and purely imaginary on~$\protect\clbb{1}^\delta$; the discrete holomorphic function $\G^\delta$ belongs to~$\lambda\mathbb{R}$ on $\protect\clww{0}^\delta$ and belongs to~$\bar{\lambda}\mathbb{R}$ on $\protect\clww{1}^\delta$. The discrete primitive $\HH^\delta$ is defined on vertices and is purely real: it is easy to check that in all possible positions (according to the types of the squares) the difference of values of function $\HH^\delta$ at two adjacent vertices is real.  For all $u\in {\db}^\delta$, either $\re[\F^\delta(u)]=0$ or $\im[\F^\delta(u)]=0$; for all $v\in \dw^\delta$, either $\re[\bar{\lambda}\G^\delta(v)]=0$ or $\re[\lambda\G^\delta(v)]=0$. The function $\F^\delta$ (resp., $\G^\delta$) changes boundary conditions only at white (resp., black) corners of $\om^\delta$.
Let $(\bbb\www)$ be a part of the boundary starting at the middle of the boundary side of the square $\bbb$ and going to the middle of the boundary side of square $\www$ in the positive direction. Note that two segments of the boundary $(\bbb\www)$ and $(\www\bbb)$ form the whole boundary.
The function $\HH^\delta$ is a constant on  $(\bbb\www)$ and $(\www\bbb)$. 
The difference of the values on these segments is nonzero: 
$H^\delta(z_\flat)-H^\delta(z_+)=4i\delta^2\G^\delta(\www)[\bar{\partial}^\delta\F^\delta](\www)$.
}
\label{colors}
\label{rotation}
\label{z1z2}
\label{d-b = c-a v_0}
\end{center}
\end{figure}

Let $\F^\delta~\colon\clb^\delta~\to~\mathbb{C}$ be a function.  Let us define discrete operators $\partial^\delta$ and $\bar{\partial}^\delta$ by the formulas: 
\[
[\partial^\delta \F^\delta](v)=\frac12\left(\frac{\F^\delta(v+\delta\lambda)-
 \F^\delta(v-\delta\lambda)}{2\delta\lambda}+\frac{\F^\delta(v+\delta\bar{\lambda})-
 \F^\delta(v-\delta\bar{\lambda})}{2\delta\bar{\lambda}}\right), \]

\[ [\bar{\partial}^\delta \F^\delta](v)=\frac12\left(\frac{\F^\delta(v+\delta\lambda)-
 \F^\delta(v-\delta\lambda)}{2\delta\bar{\lambda}}+\frac{\F^\delta(v+\delta\bar{\lambda})-
 \F^\delta(v-\delta\bar{\lambda})}{2\delta\lambda}\right),
\]
  where $v\in\ww{}^\delta$. Note that, If $[\bar{\partial}^\delta \F^\delta](v)=0$, then the two terms involved into the definition of $[\partial \F^\delta]$ are equal to each other.

We can similarly define these operators for a function $\G^\delta\colon\clw^\delta\to\mathbb{C}$.

\begin{definition}
A function $\F^\delta\colon\clb^\delta\to\mathbb{C}$ is called discrete holomorphic in $\om^\delta$ if 
$[\bar{\partial}^\delta\F^\delta](v) = 0$ for all $v\in\ww{}^\delta$. Also, we always assume that $\F^\delta$ is real on~$\clbb{0}^\delta$ and purely imaginary on~$\clbb{1}^\delta$. 

A function $\G^\delta\colon\clw^\delta\to\mathbb{C}$ is called discrete holomorphic in $\om^\delta$ if $[\bar{\partial}^\delta\G^\delta](u) = 0$ for all $u\in\bb{}^\delta$.
Also, we always assume that $\G^\delta$ belongs to~$\lambda\mathbb{R}$ (resp., $\bar{\lambda}\mathbb{R}$) on $\clww{0}^\delta$ (resp., on $\clww{1}^\delta$).
\end{definition}

\begin{remark}
If a function $\F^\delta\colon\clb^\delta\to\mathbb{C}$ is discrete holomorphic in~$\om^\delta$ then
\mbox{$[i\cdot\partial^\delta \F^\delta]\colon\ww{}^\delta\to\mathbb{C}$} is a discrete holomorphic function in~$\om^\delta\smallsetminus\diom^\delta$. 
Similarly, if $\G^\delta\colon\clw^\delta\to\mathbb{C}$ is discrete holomorphic in~$\om^\delta$ then $\partial^\delta \G^\delta\colon\bb{}^\delta\to\mathbb{C}$ is discrete holomorphic in~$\om^\delta\smallsetminus\diom^\delta$.
\end{remark}
 
 Define the discrete Laplacian of $\F^\delta$ by
$$\Delta^\delta\F^\delta(u)=\frac {\F^\delta(u+2\delta\lambda)+\F^\delta(u+2\delta\bar{\lambda})+\F^\delta(u-2\delta\lambda)+\F^\delta(u-2\delta\bar{\lambda})-4\F^\delta(u)}{4\delta^2},$$ where $u\in\bb{}^\delta$. Note that $\Delta^\delta \F^\delta(u)=4[\partial^\delta\bar{\partial}^\delta\F^\delta](u)=4[\bar{\partial}^\delta\partial^\delta\F^\delta](u).$

A function  $\F^\delta\colon\clb^\delta\to\mathbb{C}$ (resp., $\G^\delta\colon\clw^\delta\to\mathbb{C}$) is called a discrete harmonic function in $\om^\delta$ if it satisfies $\Delta^\delta \F^\delta(u)=0$ for all $u\in\bb{}^\delta$ (resp., $\Delta^\delta \G^\delta(v)=0$ for all $v\in\ww{}^\delta$).

It is easy to see that discrete harmonic functions satisfy the maximum principle:
$$\max_{u\in\om^\delta}\F^\delta(u)=\max_{u\in\dom^\delta}\F^\delta(u).$$

\subsection{The primitive of the product of two discrete holomorphic functions}\label{2.2}
 In this section we will define the discrete primitive of the product of two discrete holomorphic functions. This definition is close to the definition of the discrete primitive of the square of the s-holomorphic function~\cite{CS, Stas}. Also, there is a straightforward generalization of this construction on isoradial graphs, see Appendix~\ref{B}.

\begin{definition}\label{defH}
Let  $\F^\delta\colon\clb^\delta\to\mathbb{C}$ and $\G^\delta\colon\clw^\delta\to\mathbb{C}$ be discrete holomorphic functions.
Let us define a discrete primitive $\HH^\delta\colon \V{}^\delta \to \mathbb{R}$ by the equality
\begin{equation} \label{def H}
\HH^\delta(z'\,)-\HH^\delta(z)=(z' - z)\F^\delta(u)\G^\delta(v),
\end{equation}
where $u$, $v$ are adjacent black and white squares (correspondingly); and $z$, $z'$ are their common vertices, see Fig.~\ref{colors}.
\end{definition}

\begin{remark} It is easy to see that, if $\om^\delta$ is simply connected, then $\HH^\delta$ is well defined (see Fig.~\ref{lapH}). %
\end{remark}

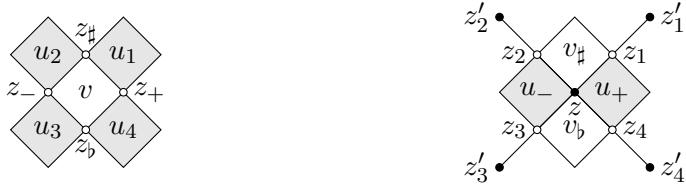
\begin{figure}\begin{center}
\begin{tikzpicture}[x={(0.5cm,0.5cm)}, y={(-0.5cm,0.5cm)}]
\begin{scope}
\path (-2,0) node[name=l1, shape=coordinate]{};
\path (-1,0) node[name=l2, shape=coordinate]{};
\path (-1,1) node[name=l3, shape=coordinate]{};
\path (-2,1) node[name=l4, shape=coordinate]{};
\path[draw, fill=gray!20] (l1)--(l2)--(l3)--(l4)--cycle;
\path ($1/2*(l1)+1/2*(l3)$) node[]{$u_3$};

\path (-1,1) node[name=u1, shape=coordinate]{};
\path (0,1) node[name=u2, shape=coordinate]{};
\path (0,2) node[name=u3, shape=coordinate]{};
\path (-1,2) node[name=u4, shape=coordinate]{};
\path[draw,fill=gray!20] (u1)--(u2)--(u3)--(u4)--cycle;
\path ($1/2*(u1)+1/2*(u3)$) node[]{$u_2$};

\path (0,0) node[name=r1, shape=coordinate]{};
\path (1,0) node[name=r2, shape=coordinate]{};
\path (1,1) node[name=r3, shape=coordinate]{};
\path (0,1) node[name=r4, shape=coordinate]{};
\path[draw, fill=gray!20] (r1)--(r2)--(r3)--(r4)--cycle;
\path ($1/2*(r1)+1/2*(r3)$) node[]{$u_1$};

\path (-1,-1) node[name=d1, shape=coordinate]{};
\path (0,-1) node[name=d2, shape=coordinate]{};
\path (0,0) node[name=d3, shape=coordinate]{};
\path (-1,0) node[name=d4, shape=coordinate]{};
\path[draw, fill=gray!20] (d1)--(d2)--(d3)--(d4)--cycle;
\path ($1/2*(d1)+1/2*(d3)$) node[]{$u_4$};

\path ($1/2*(u2)+1/2*(d4)$) node[]{$v$};

\path[draw, fill=white] (l3) circle[radius=0.05cm];
\path ($(l3)$) node[anchor= east]{$z_-$};
\path[draw, fill=white] (u2) circle[radius=0.05cm];
\path ($(u2)$) node[anchor= south]{$z_\sharp$};
\path[draw, fill=white] (r1) circle[radius=0.05cm];
\path ($(r1)$) node[anchor= west]{$z_+$};
\path[draw, fill=white] (d4) circle[radius=0.05cm];
\path ($(d4)$) node[anchor= north]{$z_\flat$};
\end{scope}

\begin{scope}[xshift=6cm]
\path (-1,0) node[name=l1, shape=coordinate]{};
\path (0,0) node[name=l2, shape=coordinate]{};
\path (0,1) node[name=l3, shape=coordinate]{};
\path (-1,1) node[name=l4, shape=coordinate]{};
\path[draw, fill=gray!20] (l1)--(l2)--(l3)--(l4)--cycle;
\path ($1/2*(l1)+1/2*(l3)$) node[]{$u_-$};

\path (0,0) node[name=u1, shape=coordinate]{};
\path (1,0) node[name=u2, shape=coordinate]{};
\path (1,1) node[name=u3, shape=coordinate]{};
\path (0,1) node[name=u4, shape=coordinate]{};
\path ($1/2*(u1)+1/2*(u3)$) node[]{$v_\sharp$};
\path[draw] (u1)--(u2)--(u3)--(u4)--cycle;

\path (0,-1) node[name=r1, shape=coordinate]{};
\path (1,-1) node[name=r2, shape=coordinate]{};
\path (1,0) node[name=r3, shape=coordinate]{};
\path (0,0) node[name=r4, shape=coordinate]{};
\path[draw, fill=gray!20] (r1)--(r2)--(r3)--(r4)--cycle;
\path ($1/2*(r1)+1/2*(r3)$) node[]{$u_+$};

\path (-1,-1) node[name=d1, shape=coordinate]{};
\path (0,-1) node[name=d2, shape=coordinate]{};
\path (0,0) node[name=d3, shape=coordinate]{};
\path (-1,0) node[name=d4, shape=coordinate]{};
\path[draw] (d1)--(d2)--(d3)--(d4)--cycle;
\path ($1/2*(d1)+1/2*(d3)$) node[]{$v_{\tiny\flat}$};


\path ($2*(l3)$) node[name=lu, shape=coordinate]{};
\path ($2*(u2)$) node[name=ru, shape=coordinate]{};
\path ($2*(r1)$) node[name=ld, shape=coordinate]{};
\path ($2*(d4)$) node[name=rd, shape=coordinate]{};
\path[draw] (l3)--(lu);
\path[draw] (u2)--(ru);
\path[draw] (r1)--(ld);
\path[draw] (d4)--(rd);

\path[draw, fill=black] (u1) circle[radius=0.05cm];
\path[draw, fill=black] (lu) circle[radius=0.05cm];
\path[draw, fill=black] (ru) circle[radius=0.05cm];
\path[draw, fill=black] (ld) circle[radius=0.05cm];
\path[draw, fill=black] (rd) circle[radius=0.05cm];
\path (l2) node[anchor=north]{$z$};
\path (ru) node[anchor=west]{${z'_{1}}$};
\path (lu) node[anchor=east]{$z'_{2}$};
\path (rd) node[anchor=east]{$z'_{3}$};
\path (ld) node[anchor=west]{$z'_{4}$};

\path[draw, fill=white] (l3) circle[radius=0.05cm];
\path[draw, fill=white] (r1) circle[radius=0.05cm];
\path[draw, fill=white] (u2) circle[radius=0.05cm];
\path[draw, fill=white] (d4) circle[radius=0.05cm];

\path (l3) node[anchor= east]{$z_2$};
\path (u2) node[anchor= west]{$z_1$};
\path (r1) node[anchor= west]{$z_4$};
\path (d4) node[anchor= east]{$z_3$};
\end{scope}
\end{tikzpicture}\end{center}\caption{Left: The discrete holomorphic condition $[\bar{\partial}^\delta\F^\delta](v)=0$ guarantees that $(\HH^\delta(z_\flat) - \HH^\delta(z_-)) + (\HH^\delta(z_+) - \HH^\delta(z_\flat)) + (\HH^\delta(z_\sharp) - \HH^\delta(z_+)) + (\HH^\delta(z_-) - \HH^\delta(z_\sharp)) = 0.$
Right: The discrete leap-frog Laplacian of $\HH^\delta$ is defined by ($\ref{lfH}$). 
The function $\HH^\delta$ has no saddle points: a value at an interior vertex cannot be strictly greater than values at two of its neighbouring vertices and strictly smaller than values at two other neighbouring vertices at the same time.}
\label{d-b = c-a}
\label{lapH}\label{min-max}
\end{figure}

Let us define the discrete leap-frog Laplacian of $\HH^\delta$ at $z\in\intV^\delta$ by 
\begin{equation} \label{lfH}
[\Delta^\delta\HH^\delta](z)=\frac{1}{4\delta^2}\sum_{z'_s\sim z}(\HH^\delta(z'_s)-\HH^\delta(z)),
\end{equation} 
where $s \in \lbrace 1, 2, 3, 4\rbrace$, and $z'_s$ are defined as shown in Fig.~\ref{lapH}.

\begin{prop}\label{formula lap}
Let $\Aa$, $\cc$, $\pp$, $\qq$, $z$ be as shown in Fig.~\ref{lapH}, then

\begin{equation} \label{Delta H}
\begin{split}
[\Delta^\delta\HH^\delta](z) =\delta\cdot(
&{\lambda}[\partial^\delta\F^\delta](\pp)[\partial^\delta\G^\delta](\Aa)-\bar{\lambda}[\partial^\delta\F^\delta](\pp)[\partial^\delta\G^\delta](\cc)\\
+&\bar{\lambda}[\partial^\delta\F^\delta](\qq)[\partial^\delta\G^\delta](\Aa)-
{\lambda}[\partial^\delta\F^\delta](\qq)[\partial^\delta\G^\delta](\cc)).
\end{split}
\end{equation}
\end{prop}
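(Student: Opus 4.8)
The plan is to compute the four increments $\HH^\delta(z'_s)-\HH^\delta(z)$ directly from Definition~\ref{defH} and then to eliminate the ``far'' values of $\F^\delta$ and $\G^\delta$ using discrete holomorphicity. Since $\F^\delta,\G^\delta$ are discrete holomorphic, $\HH^\delta$ is well defined (the remark after Definition~\ref{defH}), so $\HH^\delta(z'_s)-\HH^\delta(z)$ may be evaluated along the two-edge path $z\to z_s\to z'_s$ of Fig.~\ref{lapH}; here $z'_s=z+2(z_s-z)$, so the edge $z_sz'_s$ is a translate of $zz_s$. Writing $e_s:=z_s-z$ and applying \eqref{def H} to each of the two edges gives
\[
\HH^\delta(z'_s)-\HH^\delta(z)=e_s\,(\,\F^\delta(b^{\mathrm{out}}_s)\G^\delta(w^{\mathrm{out}}_s)+\F^\delta(b^{\mathrm{in}}_s)\G^\delta(w^{\mathrm{in}}_s)\,),
\]
where $b^{\mathrm{in}}_s,w^{\mathrm{in}}_s$ (resp.\ $b^{\mathrm{out}}_s,w^{\mathrm{out}}_s$) are the black and white squares flanking $zz_s$ (resp.\ $z_sz'_s$). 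Reading off Fig.~\ref{lapH}, the inner pairs for $s=1,2,3,4$ are $(\cc,\pp),(\Aa,\pp),(\Aa,\qq),(\cc,\qq)$; the outer pairs bring in eight further squares, namely the two black neighbours of $\pp$ other than $\Aa,\cc$, the two black neighbours of $\qq$ other than $\Aa,\cc$, the two white neighbours of $\Aa$ other than $\pp,\qq$, and the two white neighbours of $\cc$ other than $\pp,\qq$.

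The crucial step is to express these eight outer values through $\F^\delta(\Aa),\F^\delta(\cc),\G^\delta(\pp),\G^\delta(\qq)$ and the four discrete derivatives $[\partial^\delta\F^\delta](\pp),[\partial^\delta\F^\delta](\qq),[\partial^\delta\G^\delta](\Aa),[\partial^\delta\G^\delta](\cc)$. Here one invokes the observation recorded right after the definition of $\partial^\delta$: if $[\bar{\partial}^\delta\F^\delta](v)=0$, the two difference quotients defining $[\partial^\delta\F^\delta](v)$ coincide, hence for any black neighbour $v+\delta\eta$ of $v$ (with $\eta\in\{\pm\lambda,\pm\bar{\lambda}\}$) one has
\[
\F^\delta(v+\delta\eta)=\F^\delta(v-\delta\eta)+2\delta\eta\,[\partial^\delta\F^\delta](v).
\]
Applied at $v=\pp$ and $v=\qq$ this expresses the four outer black squares through $\F^\delta(\Aa),\F^\delta(\cc)$ and $[\partial^\delta\F^\delta](\pp),[\partial^\delta\F^\delta](\qq)$, since each of them is antipodal, across $\pp$ or across $\qq$, to one of $\Aa,\cc$; the mirror statement for $\G^\delta$, applied at $v=\Aa$ and $v=\cc$, expresses the four outer white squares through $\G^\delta(\pp),\G^\delta(\qq)$ and $[\partial^\delta\G^\delta](\Aa),[\partial^\delta\G^\delta](\cc)$.

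It then remains to substitute these expressions into $\sum_{s=1}^{4}(\HH^\delta(z'_s)-\HH^\delta(z))$ and divide by $4\delta^2$. Expanding each product $\F^\delta(b^{\mathrm{out}}_s)\G^\delta(w^{\mathrm{out}}_s)$ produces a term with no derivative, two terms of order $\delta$ carrying one derivative, and one term of order $\delta^2$ carrying two derivatives. Using $e_{s+2}=-e_s$ one checks that the no-derivative terms of the outer sum cancel the inner sum completely; using $\lambda^2+\bar{\lambda}^2=0$ one checks that the order-$\delta$ terms cancel as well. Only the order-$\delta^2$ part survives, and after dividing by $4\delta^2$ and simplifying with $\lambda^2=i$, $\bar{\lambda}^2=-i$, $i\lambda=-\bar{\lambda}$, $i\bar{\lambda}=\lambda$, the result is exactly the right-hand side of~\eqref{Delta H}.

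The one genuinely delicate point is the bookkeeping: one must track, for each of $\pp,\qq$ (resp.\ $\Aa,\cc$), which of its four black (resp.\ white) neighbours sits in direction $+\delta\lambda$, $-\delta\lambda$, $+\delta\bar{\lambda}$, $-\delta\bar{\lambda}$, and likewise that $e_1=\delta\lambda,\ e_2=-\delta\bar{\lambda},\ e_3=-\delta\lambda,\ e_4=\delta\bar{\lambda}$; all of this is read off Figures~\ref{colors} and~\ref{lapH}, and it is precisely this data that makes the two rounds of cancellation come out cleanly and the surviving coefficients land on~\eqref{Delta H}. (For an interior vertex whose local picture is a rotated or reflected copy of Fig.~\ref{lapH}, the identical computation applies after relabelling $\Aa,\cc,\pp,\qq$ accordingly.)
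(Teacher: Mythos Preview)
Your argument is correct and follows essentially the same route as the paper's proof: both compute $\HH^\delta(z'_s)-\HH^\delta(z)$ along the two-edge path $z\to z_s\to z'_s$, use discrete holomorphicity at $\pp,\qq,\Aa,\cc$ to replace the eight ``outer'' values by the inner ones plus $2\delta\eta\,[\partial^\delta\cdot]$ corrections, expand, and observe that all terms not carrying two derivatives cancel. The only cosmetic difference is in the bookkeeping of the cancellation: the paper groups the expanded sum by the factors $\F^\delta(\Aa),\F^\delta(\cc),\G^\delta(\pp),\G^\delta(\qq)$ and shows each bracket vanishes, whereas you separate the cancellation into ``no-derivative'' and ``one-derivative'' layers via $e_{s+2}=-e_s$ and $\lambda^2+\bar\lambda^2=0$; these are equivalent rearrangements of the same algebra.
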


\begin{proof}
Note that
\begin{align*}
4\delta^2[\Delta^\delta\HH^\delta](z) =
&{\delta\lambda}[\F^\delta(\Aa)+2\delta{\lambda}[\partial^\delta\F^\delta](\pp)]\cdot[\G^\delta(\qq)+2\delta{\lambda}[\partial^\delta\G^\delta](\cc)]+\delta{\lambda}\F^\delta(\cc)\G^\delta(\pp)\\
-&\delta\bar{\lambda}[\F^\delta(\cc)-2\delta\bar{\lambda}[\partial^\delta\F^\delta](\pp)]\cdot
[\G^\delta(\qq)-2\delta\bar{\lambda}[\partial^\delta\G^\delta](\Aa)]-\delta\bar{\lambda}\F^\delta(\Aa)\G^\delta(\pp)\\
-&\delta{\lambda}[\F^\delta(\cc)-2\delta{\lambda}[\partial^\delta\F^\delta](\qq)]\cdot[\G^\delta(\pp)-2\delta{\lambda}[\partial^\delta\G^\delta](\Aa)]-\delta{\lambda}\F^\delta(\Aa)\G^\delta(\qq)\\
+&\delta\bar{\lambda}[\F^\delta(\Aa)+2\delta\bar{\lambda}[\partial^\delta\F^\delta](\qq)]\cdot
[\G^\delta(\pp)+2\delta\bar{\lambda}[\partial^\delta\G^\delta](\cc)]+\delta\bar{\lambda}\F^\delta(\cc)\G^\delta(\qq).
\end{align*}
One can rewrite the above formula  in the following form
\begin{align*}
&\F^\delta(\Aa)\cdot\underbrace{[\delta{\lambda}\G^\delta(\qq)+2\delta^2\lambda^2[\partial^\delta\G^\delta](\cc)
-\delta\bar{\lambda}\G^\delta(\pp)-\delta{\lambda}\G^\delta(\qq)+\delta\bar{\lambda}\G^\delta(\pp)+2\delta^2\bar{\lambda}^2[\partial^\delta\G^\delta](\cc)]}_{=0}\\
+&\F^\delta(\cc)\cdot\underbrace{[\delta{\lambda}\G^\delta(\pp)+2\delta^2{\lambda}^2[\partial^\delta\G^\delta](\Aa)
-\delta\bar{\lambda}\G^\delta(\qq)-\delta{\lambda}\G^\delta(\pp)+\delta\bar{\lambda}\G^\delta(\qq)+2\delta^2\bar{\lambda}^2[\partial^\delta\G^\delta](\Aa)]}_{=0}\\
+&\G^\delta(\qq)\cdot\underbrace{[2\delta^2\bar\lambda^2[\partial^\delta\F^\delta](\pp)+2\delta^2{\lambda}^2[\partial^\delta\F^\delta](\pp)]}_{=0}+
\G^\delta(\pp)\cdot\underbrace{[2\delta^2\bar\lambda^2[\partial^\delta\F^\delta](\qq)+2\delta^2{\lambda}^2[\partial^\delta\F^\delta](\qq)]}_{=0}\\
+&4\cdot[\delta^3{\lambda}^3[\partial^\delta\F^\delta](\pp)[\partial^\delta\G^\delta](\cc)-\delta^3\bar{\lambda}^3[\partial^\delta\F^\delta](\pp)[\partial^\delta\G^\delta](\Aa)\\
- &\delta^3{\lambda}^3[\partial^\delta\F^\delta](\qq)[\partial^\delta\G^\delta](\Aa)
+\delta^3\bar{\lambda}^3[\partial^\delta\F^\delta](\qq)[\partial^\delta\G^\delta](\cc)].
\end{align*}

Finally, note that $\bar\lambda^3=-{\lambda}$ and ${\lambda}^3=-\bar\lambda.$
\end{proof}

\begin{prop}\label{sedla}
The function $\HH^\delta$ has no local maxima or minima. Moreover, a value at an interior vertex cannot be strictly greater than values at two of its neighbouring vertices and strictly smaller than values at two other neighbouring vertices at the same time.
\end{prop}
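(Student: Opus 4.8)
The plan is to derive the ``no saddle point'' statement as a direct consequence of the discrete Cauchy--Riemann equation for $\F^\delta$ and $\G^\delta$, used through the formula~(\ref{def H}) for the increments of $\HH^\delta$. Fix an interior vertex $z$ and look at its four neighbours $z'_1,z'_2,z'_3,z'_4$ as labelled in Fig.~\ref{lapH} (right), so that the four increments $\HH^\delta(z'_s)-\HH^\delta(z)$ are, by Definition~\ref{defH}, of the form $(z'_s-z)\F^\delta(u_s)\G^\delta(v_s)$ for the appropriate incident black/white squares. The four displacement vectors $z'_s-z$ are $\delta\lambda,\,\delta\bar\lambda,\,-\delta\lambda,\,-\delta\bar\lambda$ in some cyclic order, and the four squares around $z$ alternate black--white--black--white, so that two of the increments ``see'' one black square $\Aa$ and two ``see'' the other black square $\cc$ (similarly for the white squares $\pp,\qq$ between them).

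First I would write out the four increments explicitly in terms of $\F^\delta(\Aa),\F^\delta(\cc),\G^\delta(\pp),\G^\delta(\qq)$ and the phases $\lambda,\bar\lambda$. The key point is that a strict saddle at $z$ would mean: two of these increments are $>0$ and the other two are $<0$, with the positive ones and negative ones interleaved cyclically around $z$ (that is the geometric meaning of ``greater than two neighbours and smaller than two other neighbours'' in the cyclic order). I would then use the discrete holomorphicity to express things so that the four increments are governed by only two ``effective'' complex numbers. Concretely, since $[\bar\partial^\delta\F^\delta](v)=0$ at the white square between $\Aa$ and $\cc$, the quantity $\lambda\F^\delta(\cc)-\bar\lambda\F^\delta(\Aa)$ (or its analogue with the roles of $\lambda,\bar\lambda$ matched to the geometry, cf.\ the ``$F(c)-F(a)=-i(F(d)-F(b))$'' relation of Fig.~\ref{C-R}) is real, and similarly $\G^\delta$ on the two white squares is constrained; these reality constraints force the signs of the four increments to satisfy a relation incompatible with the interleaved $+,-,+,-$ pattern.

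The cleanest route is probably the following: after multiplying each increment by a suitable unimodular constant so that all four become real, observe that they split into two pairs, each pair being $\{a+b,\,a-b\}$ for real $a,b$ depending on $\F^\delta$ and $\G^\delta$ at the relevant squares; a pair of that form can never have one strictly positive and one strictly negative element on the ``$+$'' diagonal and the opposite on the ``$-$'' diagonal once one tracks which pair sits on which diagonal. Summing also recovers~(\ref{Delta H}), so I would cross-check consistency with Proposition~\ref{formula lap}. The absence of local maxima/minima then follows a fortiori: if $\HH^\delta(z)$ were, say, a strict local maximum, all four increments would be negative, which is the degenerate (non-interleaved) case and is excluded by the same sign analysis; the non-strict case is handled by the usual propagation/connectedness argument since $\om^\delta$ is connected, unless $\HH^\delta$ is globally constant.

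The main obstacle I anticipate is purely bookkeeping: keeping straight, for each of the finitely many combinatorial types of an interior vertex $z$ (depending on whether the incident black squares are in $\bb{0}^\delta$ or $\bb{1}^\delta$ and the white ones in $\ww{0}^\delta$ or $\ww{1}^\delta$), which displacement $z'_s-z$ is paired with which square and with which phase, so that the reality constraints from the definition of discrete holomorphicity (that $\F^\delta$ is real on $\clbb{0}^\delta$ and imaginary on $\clbb{1}^\delta$, and $\G^\delta\in\lambda\mathbb{R}$ on $\clww{0}^\delta$, $\in\bar\lambda\mathbb{R}$ on $\clww{1}^\delta$) line up correctly to make all four increments real after a common rotation. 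Once that normalization is fixed in one representative case, the other cases are identical by the rotational symmetry of the lattice, so I would do one case carefully and then indicate that the rest follow verbatim.
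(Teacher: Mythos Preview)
Your plan has a real gap: the mechanism you rely on is not the one that actually works here. First a minor point: in Fig.~\ref{lapH} the symbols $z'_1,\dots,z'_4$ denote the \emph{far} (leap--frog) neighbours, whereas the saddle statement concerns the four \emph{immediate} neighbours $z_1,\dots,z_4$; your displacement vectors $\pm\delta\lambda,\pm\delta\bar\lambda$ are those of $z_s$, not of $z'_s$. More importantly, the Cauchy--Riemann relation at $\pp$ or $\qq$ links $\F^\delta(\Aa),\F^\delta(\cc)$ to the two \emph{other} black neighbours of that white square, not to each other in a way you can feed back into these four increments; so your reduction to ``two effective complex numbers'' and a pairing of the form $\{a+b,\,a-b\}$ does not materialise. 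The four increments are (up to unimodular factors) the four \emph{products} $\F^\delta(\Aa)\G^\delta(\pp),\,\F^\delta(\Aa)\G^\delta(\qq),\,\F^\delta(\cc)\G^\delta(\pp),\,\F^\delta(\cc)\G^\delta(\qq)$, and a pair $\{a+b,a-b\}$ can perfectly well have opposite strict signs, so even if the pairing were available it would not give the conclusion. Note also that the proposition excludes \emph{any} $2$--$2$ split, not only the cyclically interleaved one.

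The paper's proof bypasses all of this and uses only the reality constraints (not $\bar\partial^\delta$): multiply the four real numbers $\HH^\delta(z)-\HH^\delta(z_s)$ together. By~(\ref{def H}) this product equals
\[
\delta^4\bigl(\F^\delta(\Aa)\F^\delta(\cc)\G^\delta(\pp)\G^\delta(\qq)\bigr)^2,
\]
and since $\F^\delta(\Aa)\F^\delta(\cc)\in i\mathbb{R}$ (one factor real, one purely imaginary) while $\G^\delta(\pp)\G^\delta(\qq)\in\mathbb{R}$ (one in $\lambda\mathbb{R}$, one in $\bar\lambda\mathbb{R}$), the quantity being squared lies in $i\mathbb{R}$, hence the product is $\le 0$. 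A non-positive product of four real numbers forbids the sign patterns $(+,+,+,+)$, $(-,-,-,-)$ and any two--and--two split, which is exactly the statement. No CR equation, no case analysis over sublattice types, and no connectedness argument are needed.
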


\begin{proof} 
It is enough to show that the product of all the differences is non-positive (see~Fig.~\ref{min-max}):
\begin{align*}
&(\HH^\delta(z)-\HH^\delta(z_1))\cdot
(\HH^\delta(z)-\HH^\delta(z_2))\cdot
(\HH^\delta(z)-\HH^\delta(z_3))\cdot
(\HH^\delta(z)-\HH^\delta(z_4))\\
=&(-\delta{\lambda})\F^\delta(u_+)\G^\delta(v_\sharp)
\cdot\delta\bar{\lambda}\G^\delta(v_\sharp)\F^\delta(u_-)
\cdot\delta{\lambda}\F^\delta(u_-)\G^\delta(v_\flat)
\cdot(-\delta\bar{\lambda})\G^\delta(v_\flat)\F^\delta(u_+)\\
=&\delta^4\cdot(\F^\delta(u_+)\cdot\G^\delta(v_\sharp)\cdot\F^\delta(u_-)\cdot\G^\delta(v_\flat))^2\leq 0,
\end{align*}  
since $\F^\delta(u_+)\cdot\F^\delta(u_-)\in i\mathbb{R}$ and $\G^\delta(v_\sharp)\cdot\G^\delta(v_\flat)\in \mathbb{R}$. %
\end{proof}

\begin{remark}
\begin{enumerate}
\item[1.] The function $\HH^\delta$ satisfies the maximum principle:
$$\max_{z\in\V{}^\delta}\HH^\delta(z)=\max_{z\in\dV{}^\delta}\HH^\delta(z).$$ 

\item[2.]Also, it is easy to see that $\HH^\delta$ satisfies the following non-linear equation:
\[(\HH^\delta(z)-\HH^\delta(z_1))\cdot
(\HH^\delta(z)-\HH^\delta(z_3)) +
(\HH^\delta(z)-\HH^\delta(z_2))\cdot
(\HH^\delta(z)-\HH^\delta(z_4))=0,\]
where $z$, $z_1$, $z_2$, $z_3$, $z_4$ are defined as shown in Fig.~\ref{min-max}.
\end{enumerate}
\end{remark}

It is worth noting that Definition~\ref{defH} coincides with the definition of a primitive of the product of two s-holomorphic functions used in ~\cite{Stas07}. To see this let us divide the vertex set $\mathcal{V}$ into two sets $\mathcal{V}_\circ$ and $\mathcal{V}_\bullet$ as it shown on Fig.~\ref{s-h}. On the set $\mathcal{V}_\bullet$ the function $H_{\operatorname{s-hol}}$ defined below as a discrete integral of the product of two discrete s-holomorphic functions coincides with the function $\HH$ defined above.

Let $\F\colon \bar{\bb{}}\to\mathbb{C}$ and $\G\colon\bar{\ww{}}\to\mathbb{C}$ be discrete holomorphic functions defined above. Let $F_{\operatorname{s-hol}}$ be a function defined as follows:
\[
\begin{cases} 
\begin{array}{llll}
F_{\operatorname{s-hol}}(u)=\F(u)  \quad & \operatorname{if}\, u\in \bb{}; 
& F_{\operatorname{s-hol}}(\vl)=\frac{\lambda}{\sqrt{2}}\cdot(\F(\uR)-i\F(\uI)) \quad & \operatorname{if}\, \vl\in\ww{0}; \\
F_{\operatorname{s-hol}}(z)=\F(\uR)+\F(\uI)  \quad & \operatorname{if}\, z\in \mathcal{V}_\circ; 
&  F_{\operatorname{s-hol}}(\vlbr)=\frac{\bar\lambda}{\sqrt{2}}\cdot(\F(\uR)+i\F(\uI)) \quad & \operatorname{if}\, \vlbr\in\ww{1},\\
\end{array}
\end{cases}
\] where $z\in \mathcal{V}_\circ$ and $\uI, \vlbr, \uR, \vl$ are adjacent to the vertex $z$ squares (see Fig.~\ref{s-h}).

Let us similarly define a function  $G_{\operatorname{s-hol}}$:
\[
\begin{cases} 
\begin{array}{llll}
G_{\operatorname{s-hol}}(v)=\G(v)  \quad & \operatorname{if}\, v\in \ww{}; & G_{\operatorname{s-hol}}(\uR)=\left(\frac{\bar\lambda\G(\vl)+\lambda\G(\vlbr)}{\sqrt{2}}\right) \quad & \operatorname{if}\, \uR\in\bb{0};\\
G_{\operatorname{s-hol}}(z)=\F(\vl)+\F(\vlbr)  \quad & \operatorname{if}\, z\in \mathcal{V}_\circ; & G_{\operatorname{s-hol}}(\uI)=i\cdot\left(\frac{\bar\lambda\G(\vl)-\lambda\G(\vlbr)}{\sqrt{2}}\right) \quad & \operatorname{if}\, \uI\in\bb{1}.\\
\end{array}
\end{cases}
\]

Note that functions $\F|_{\mathcal{V}_\circ}$ and $\G|_{\mathcal{V}_\circ}$ are s-holomorphic functions on $\mathcal{V}_\circ$, i.e. for each pair of white vertices $z_1^\circ$, $z_2^\circ$ of the same square $a$
\[\mathrm{Proj}_{\tau(a)} [F(z_1)] = \mathrm{Proj}_{\tau(a)} [F(z_2)],\]
where ${\mathrm{Proj}}_{\tau(a)}[z]=\tau(a)\cdot\re\left[z\cdot\overline{\tau(a)}\right]$ and $\tau(a)$ is $1$, $i$, $\lambda$ or $\bar{\lambda}$ if the square $a$ is a square of type $\bb{0}$, $\bb{1}$, $\ww{0}$ or $\ww{1}$ correspondingly.%

\begin{figure}
\begin{center}

\begin{tikzpicture}[x={(0.5cm,0.5cm)}, y={(-0.5cm,0.5cm)}]
\begin{scope}
\path (-1,0) node[name=l1, shape=coordinate]{};
\path (0,0) node[name=l2, shape=coordinate]{};
\path (0,1) node[name=l3, shape=coordinate]{};
\path (-1,1) node[name=l4, shape=coordinate]{};
\path[draw, fill=gray!20] (l1)--(l2)--(l3)--(l4)--cycle;
\path ($1/2*(l1)+1/2*(l3)$) node[]{$\bb{0}$};

\path (0,1) node[name=u1, shape=coordinate]{};
\path (1,1) node[name=u2, shape=coordinate]{};
\path (1,2) node[name=u3, shape=coordinate]{};
\path (0,2) node[name=u4, shape=coordinate]{};
\path[draw,fill=gray!20] (u1)--(u2)--(u3)--(u4)--cycle;
\path ($1/2*(u1)+1/2*(u3)$) node[]{$\bb{1}$};

\path (1,0) node[name=r1, shape=coordinate]{};
\path (2,0) node[name=r2, shape=coordinate]{};
\path (2,1) node[name=r3, shape=coordinate]{};
\path (1,1) node[name=r4, shape=coordinate]{};
\path[draw, fill=gray!20] (r1)--(r2)--(r3)--(r4)--cycle;
\path ($1/2*(r1)+1/2*(r3)$) node[]{$\bb{0}$};

\path (0,-1) node[name=d1, shape=coordinate]{};
\path (1,-1) node[name=d2, shape=coordinate]{};
\path (1,0) node[name=d3, shape=coordinate]{};
\path (0,0) node[name=d4, shape=coordinate]{};
\path[draw, fill=gray!20] (d1)--(d2)--(d3)--(d4)--cycle;
\path ($1/2*(d1)+1/2*(d3)$) node[]{$\bb{1}$};

\path ($1/2*(u2)+1/2*(d4)$) node[]{$\ww{0}$};

\path[draw] (-1,1)--(-1,2)--(0,2);
\path (-0.5,1.5) node[]{$\ww{1}$};

\path[draw] (-1,0)--(-1,-1)--(0,-1);
\path (-0.5,-0.5) node[]{$\ww{1}$};

\path[draw, fill=gray!20] (2,-1)--(2,0)--(3,0)--(3,-1)--cycle;
\path (2.5,-0.5) node[]{$\bb{1}$};

\path[draw] (2,1)--(3,1)--(3,0);
\path (2.5,0.5) node[]{$\ww{0}$};

\path[draw] (1,-1)--(2,-1);
\path (1.5,-0.5) node[]{$\ww{1}$};
\path[draw, fill=white] (l3) circle[radius=0.05cm];
\path[draw, fill=black] (u2) circle[radius=0.05cm];
\path[draw, fill=white] (r1) circle[radius=0.05cm];
\path[draw, fill=black] (d4) circle[radius=0.05cm];
\path[draw, fill=black] (-1,-1) circle[radius=0.05cm];
\path[draw, fill=white] (-1,0) circle[radius=0.05cm];
\path[draw, fill=black] (-1,1) circle[radius=0.05cm];
\path[draw, fill=white] (-1,2) circle[radius=0.05cm];
\path[draw, fill=black] (0,2) circle[radius=0.05cm];
\path[draw, fill=white] (0,-1) circle[radius=0.05cm];
\path[draw, fill=white] (1,2) circle[radius=0.05cm];
\path[draw, fill=black] (1,-1) circle[radius=0.05cm];

\path[draw, fill=white] (2,1) circle[radius=0.05cm];
\path[draw, fill=black] (2,0) circle[radius=0.05cm];
\path[draw, fill=white] (2,-1) circle[radius=0.05cm];
\path[draw, fill=black] (3,1) circle[radius=0.05cm];
\path[draw, fill=white] (3,0) circle[radius=0.05cm];
\path[draw, fill=black] (3,-1) circle[radius=0.05cm];
\end{scope}

\begin{scope}[xshift=6cm]
\path (-1.3,0) node[name=l1, shape=coordinate]{};
\path (0,0) node[name=l2, shape=coordinate]{};
\path (0,1.3) node[name=l3, shape=coordinate]{};
\path (-1.3,1.3) node[name=l4, shape=coordinate]{};
\path[draw] (l1)--(l2)--(l3)--(l4)--cycle;
\path ($1/2*(l1)+1/2*(l3)$) node[]{$\vlbr$};

\path (0,0) node[name=u1, shape=coordinate]{};
\path (1.3,0) node[name=u2, shape=coordinate]{};
\path (1.3,1.3) node[name=u3, shape=coordinate]{};
\path (0,1.3) node[name=u4, shape=coordinate]{};
\path[draw, fill=gray!20] (u1)--(u2)--(u3)--(u4)--cycle;
\path ($1/2*(u1)+1/2*(u3)$) node[]{$\uI$};

\path (0,-1.3) node[name=r1, shape=coordinate]{};
\path (1.3,-1.3) node[name=r2, shape=coordinate]{};
\path (1.3,0) node[name=r3, shape=coordinate]{};
\path (0,0) node[name=r4, shape=coordinate]{};
\path[draw] (r1)--(r2)--(r3)--(r4)--cycle;
\path ($1/2*(r1)+1/2*(r3)$) node[]{$\vl$};

\path (-1.3,-1.3) node[name=d1, shape=coordinate]{};
\path (0,-1.3) node[name=d2, shape=coordinate]{};
\path (0,0) node[name=d3, shape=coordinate]{};
\path (-1.3,0) node[name=d4, shape=coordinate]{};
\path[draw, fill=gray!20] (d1)--(d2)--(d3)--(d4)--cycle;
\path (-0.7,-0.8) node[]{$\uR$};

\path[draw, fill=white] (u1) circle[radius=0.05cm];
\path[draw, fill=black] (l3) circle[radius=0.05cm];
\path[draw, fill=black] (r1) circle[radius=0.05cm];
\path[draw, fill=black] (u2) circle[radius=0.05cm];
\path[draw, fill=black] (d4) circle[radius=0.05cm];

\path (l2) node[anchor=north]{$z$};
\end{scope}
\end{tikzpicture}\end{center}
\caption{
Left: the set $\mathcal{V}_\circ$ (white vertices), the set  
$\mathcal{V}_\bullet$ (black vertices). 
So, $\mathcal{V} = \mathcal{V}_\circ \sqcup \mathcal{V}_\bullet$. 
Right: adjacent to the vertex $z\in \mathcal{V}_\circ$ squares $\protect\uI, \protect\vlbr, \protect\uR, \protect\vl$. 
S-holomorphic functions defined on $\mathcal{V}_\circ$ and its projections defined 
on $\protect\ww{}\sqcup\protect\bb{}$.
} \label{s-h}
\end{figure}
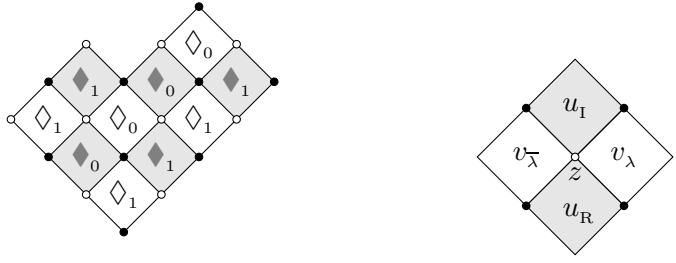

Let $H_{\operatorname{s-hol}}\colon \mathcal{V}_\bullet \to \mathbb{R}$ be a function defined by the equality
\begin{equation*} 
H_{\operatorname{s-hol}}(z_2^\bullet)-H_{\operatorname{s-hol}}(z_1^\bullet)=F_{\operatorname{s-hol}}(a)\cdot G_{\operatorname{s-hol}}(a)\cdot(z_2^\bullet - z_1^\bullet),
\end{equation*}
where $z_1^\bullet$, $z_2^\bullet$ are two black vertices of the same square $a$. It is easy to check that 
\[
H_{\operatorname{s-hol}}(z_2^\bullet)-H_{\operatorname{s-hol}}(z_1^\bullet)=
(H(z_2^\bullet)-H(z^\circ))+(H(z^\circ)-H(z_1^\bullet)),
\]
where $z^\circ$ is one of two white vertices of the square $a$.
Note that the function $H_{\operatorname{s-hol}}(\cdot)$ is defined up to an additive constant. One can choose the additive constant such that the function $H_{\operatorname{s-hol}}$ coincides 
with the function $\HH|_{\mathcal{V}_\bullet}.$

\subsection{The expectation of the double dimer height function}\label{3_3}
In the rest of section~$3$, we will use the square lattice with mesh size $1$ rather than $\delta$. For the simplicity of notations we will not write the index $\delta$. (Later, in section~$4$, we are going to use notations without index for continuous objects.)
We prove that the function $\HH$ defined by formula~$(\ref{def H})$ with an appropriate choice of functions $\F$ and $\G$ described above is the expectation of the height function for double dimers up to a multiplicative constant. 

\begin{lemma}\label{complan}
$1$. Let a domain $\om$ admit a domino tiling. Suppose that a discrete holomorphic function $\F\colon\clb\to\mathbb{C}$ vanishes on~$\db$. Then $\F$ is identically zero.

$2$. Let $\om$ be a domain which contains $m$ white squares and $m+1$ black squares. Let the domain have a domino tiling after removing one black square from $\dib$. Then there exists a nontrivial discrete holomorphic function $\F\colon\clb\to\mathbb{C}$, which is equal to zero on~$\db$. Such a function $\F$ is unique up to a multiplicative constant. Moreover $\F(u)\neq 0$ for all black squares $u \in \dib$ such that $\om\smallsetminus u$ admits a domino tiling.

\end{lemma}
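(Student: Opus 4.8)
The plan is to reduce both parts to linear algebra over the Kasteleyn matrix $K_\om$ of $\om$, with Kenyon's weights of Fig.~\ref{Kasteleyn}. The key point is local: for $v\in\ww{}$, clearing denominators in $[\bar{\partial}\F](v)=0$ and using $\lambda\bar{\lambda}=1$ rewrites it as the single relation $\lambda(\F(v+\lambda)-\F(v-\lambda))+\bar{\lambda}(\F(v+\bar{\lambda})-\F(v-\bar{\lambda}))=0$ among the (at most four) black faces around $v$, all of which lie in $\clb$; up to complex conjugation and a nonzero rescaling of each equation, the resulting linear system is $K_\om^{\top}\F=0$. If moreover $\F|_{\db}=0$, the terms attached to boundary faces vanish, so the equation at each $v$ is exactly the $v$-th row of $K_\om^{\top}\F=0$; hence every discrete holomorphic $\F$ with $\F|_{\db}=0$, viewed on $\bb{}$, lies in $\ker K_\om^{\top}$. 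For part~1 this finishes things: if $\om$ has a domino tiling then $|\bb{}|=|\ww{}|$, so $K_\om$ is square, and by Kasteleyn's theorem $|\det K_\om|$ equals the positive number of tilings, so $K_\om$ is invertible and $\ker K_\om^{\top}=\{0\}$; thus $\F\equiv 0$ on $\bb{}$, and $\F\equiv 0$ on $\db$ by hypothesis.

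For part~2 I would carry the reality constraints along. Let $W\subset\mathbb{C}^{\bb{}}$ be the real subspace of functions real on $\bb{0}$-squares and purely imaginary on $\bb{1}$-squares, and let $W'\subset\mathbb{C}^{\ww{}}$ be the real subspace of functions lying in $\lambda\mathbb{R}$ on $\ww{0}$-squares and in $\bar{\lambda}\mathbb{R}$ on $\ww{1}$-squares; then $\dim_{\mathbb{R}}W=|\bb{}|=m+1$, $\dim_{\mathbb{R}}W'=|\ww{}|=m$, and checking real and imaginary parts of the relation above shows that $K_\om^{\top}$ maps $W$ into $W'$ --- equivalently, for $\F\in W$ the discrete Cauchy--Riemann equation at each white square is one real equation. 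By the previous paragraph $V:=\ker(K_\om^{\top}|_W)$ is precisely the space of discrete holomorphic functions on $\om$ vanishing on $\db$, so $\dim_{\mathbb{R}}V\ge(m+1)-m=1$, which gives a nontrivial $\F$. Moreover $\dim_{\mathbb{R}}V=1$ iff $K_\om^{\top}|_W\colon W\to W'$ is onto; since $W,W'$ are real forms of $\mathbb{C}^{\bb{}},\mathbb{C}^{\ww{}}$ and $K_\om^{\top}$ is $\mathbb{C}$-linear and maps $W$ into $W'$, it is the complexification of $K_\om^{\top}|_W$, so the two maps have the same rank and one is onto iff the other is, i.e. iff $\operatorname{rank}K_\om=m$, i.e. iff $\det K_{\om\smallsetminus u}\ne 0$ for some black square $u$ (the $m\times m$ submatrix of $K_\om$ with the row of $u$ deleted being the Kasteleyn matrix of $\om\smallsetminus u$), i.e. iff $\om\smallsetminus u$ has a domino tiling for some $u$ --- which is assumed. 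Hence $V$ is one-dimensional and $\F$ is unique up to a (necessarily real) multiplicative constant.

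For the last assertion, suppose $\F(u_0)=0$ for some $u_0\in\dib$ with $\om\smallsetminus u_0$ tileable, and set $\om':=\om\smallsetminus u_0$. As $u_0$ is black it is adjacent only to white faces, so passing to $\om'$ leaves the white squares and all adjacencies among the remaining squares unchanged, makes $u_0$ a black boundary face of $\om'$, and turns each black boundary face of $\om$ into one of $\om'$; thus the restriction of $\F$ is discrete holomorphic on $\om'$ (it satisfies every discrete Cauchy--Riemann equation, the white-square set being unchanged) and vanishes on every black face of $\partial\om'$ (on those of $\partial\om$ by hypothesis, and at $u_0$ by assumption). Since $\om'$ admits a tiling, part~1 applied to $\om'$ forces $\F$ to vanish on all black squares of $\om'$, hence --- together with $\F(u_0)=0$ --- on all black squares of $\om$, contradicting nontriviality; so $\F(u_0)\ne 0$. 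The step I expect to need the most care is the reality bookkeeping of part~2: it is exactly the fact that the reality constraints collapse the a priori complex Cauchy--Riemann equation at each white square to a single real one --- equivalently, that $K_\om^{\top}$ carries the real form $W$ into the real form $W'$ --- that makes the solution space $1$-dimensional rather than merely nonzero; the remaining inputs are Kasteleyn's theorem and the bookkeeping of boundary faces above.
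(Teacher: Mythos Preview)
Your proposal is correct and follows essentially the same approach as the paper: both reduce the discrete holomorphicity conditions to the linear system $K_\om^\top\F=0$, invoke Kasteleyn's theorem for nonsingularity when a tiling exists, and prove the nonvanishing at $u_0$ by applying part~1 to $\om\smallsetminus u_0$. The paper's proof is terser --- it does not spell out the reality bookkeeping you carry through with the real forms $W,W'$, and for uniqueness it simply fixes $\F(u')=1$ and notes the remaining square system is $K_{\om\smallsetminus u'}$ (invertible by hypothesis), rather than your rank/complexification argument --- but these are cosmetic differences.
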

\begin{proof}
1. Consider a system of linear equations with variables that correspond to values of $\F$ in the black faces, and each equation means that the function $\F$ is holomorphic in some white face.
The number of variables is equal to the number of black faces, and the number of equations is equal to the number of white faces. So we have a linear system with a square matrix, the same linear system as~(\ref{wcr}) but with a vanishing right hand side.
To prove that this system has only a trivial solution it is enough to show that the determinant of the matrix is not equal to zero. Note that the absolute value of the determinant is equal to the number of the domino tilings of $\om$, since the matrix is the Kasteleyn matrix of $\om$. Hence, if the domain has a domino tiling then the determinant is not zero. Therefore  $\F\equiv0.$

2. We can consider a system of linear equations in the same way as described above. Note that in this case the number of variables is one more then the number of equations. Hence the system has a non-trivial solution. Let $\F$ have the values which correspond to this solution.
Let  $u'$ be a square in $\dib$ and let the domain  $\om\smallsetminus u'$ have a domino tiling. Let $\F$ be equal to zero at $u'$. 
Note that the function $\F$ satisfies the conditions of the first part of the lemma, therefore $\F\equiv0$ on $\om$. We obtain a contradiction with a non-triviality of the solution of our system.
Similarly to the proof of the first part of the lemma we can show that there is the unique discrete holomorphic function $\F$ such that $\F(u')=1$. 
\end{proof}

\begin{cor}\label{G(v0)} \label{main-lemma} 
Let a domain $\om$ contain the same number of black and white squares, and let $\om$ admit a domino tiling. Fix a black square $\bbb \in \dintb{0}$ and a white square $\www~\in~\dintw{0}$ such that the domain $\om\smallsetminus \{\bbb, \www\}$ admits a domino tiling. Then the following holds:

$1$. There exists the unique function $\F\colon\clb\to\mathbb{C}$ such that $\F|_{\db}=0$ and $\F$ is discrete holomorphic everywhere in $\ww{}$ except at the face $\www$ where one has $[\bar{\partial}\F](\www) = \lambda.$ Moreover, $\F(\bbb)\neq 0$.

$2$. Similarly, there exists the unique function $\G~\colon~\clw~\to~\mathbb{C}$ such that $\G|_{\dw}=0$ and $\G$ is discrete holomorphic everywhere in $\bb{}$ except at the face $\bbb$ where one has $[\bar{\partial}\G](\bbb) = i.$ Moreover, $\G(\www)\neq 0$.
\end{cor}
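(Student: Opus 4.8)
The plan is to realise the function $\F$ of part~1 as a scalar multiple of the coupling function $C_\om(\,\cdot\,,\www)$. Extended by $0$ to $\db$ (where it vanishes by definition, $K_\om^{-1}$ being indexed only by the squares of $\om$), the function $u\mapsto C_\om(u,\www)$ is discrete holomorphic on $\om\smallsetminus\{\www\}$ --- in particular it obeys the reality requirements built into the definition of discrete holomorphicity, which is exactly the point of Kenyon's choice of Kasteleyn weights --- and it is not identically zero, since by the determinant identities recorded above $C_\om(\bbb,\www)=\pm\det(K_{\om\smallsetminus\{\bbb,\www\}})/\det(K_\om)\neq0$ by Kasteleyn's theorem, both $\om$ and $\om\smallsetminus\{\bbb,\www\}$ being tileable. (Equivalently, one may produce a discrete holomorphic function on $\clb$ vanishing on $\db$ by applying Lemma~\ref{complan}(2) to $\om\smallsetminus\{\www\}$, which has one more black square than white square and is tileable after deleting the boundary black square $\bbb$; by the uniqueness in that lemma it agrees with $C_\om(\,\cdot\,,\www)$ up to a real scalar.)

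Next I would pin down the ``source'' at $\www$. It is nonzero: if $[\bar{\partial}C_\om(\,\cdot\,,\www)](\www)=0$, then $C_\om(\,\cdot\,,\www)$ would be discrete holomorphic at every white square of $\om$ and vanish on $\db$, hence identically zero by Lemma~\ref{complan}(1) (applicable since $\om$ is tileable), contradicting $C_\om(\bbb,\www)\neq0$. Moreover it lies on the line $\lambda\mathbb{R}$: this is a one-line computation from the definition of $\bar{\partial}$ at $\www$, using that $\www\in\ww{0}$ forces $\www\pm\lambda\in\clbb{0}$ and $\www\pm\bar{\lambda}\in\clbb{1}$ (the exact value, a nonzero real multiple of $\lambda$, namely $\pm\tfrac{\lambda}{4}$, also drops out of $K_\om^{-1}K_\om=I$ once one matches the coefficients of $\bar{\partial}$ with the Kasteleyn weights of Fig.~\ref{Kasteleyn}). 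Hence $\F:=\bigl(\lambda/[\bar{\partial}C_\om(\,\cdot\,,\www)](\www)\bigr)C_\om(\,\cdot\,,\www)$ is obtained from $C_\om(\,\cdot\,,\www)$ by a real rescaling; it is discrete holomorphic away from $\www$, satisfies $[\bar{\partial}\F](\www)=\lambda$, vanishes on $\db$, keeps the reality conditions, and has $\F(\bbb)\neq0$.

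Uniqueness in part~1 is immediate: if $\F$ and $\tilde\F$ both meet the requirements, their difference is discrete holomorphic at every white square of $\om$ (the sources at $\www$ cancel) and vanishes on $\db$, so it is identically zero by Lemma~\ref{complan}(1). Part~2 is the same argument with the colours interchanged, applied to $C_\om(\bbb,\,\cdot\,)$ and the black--white symmetric analogue of Lemma~\ref{complan}: here $\bbb\in\bb{0}$ makes $[\bar{\partial}\G](\bbb)$ a nonzero element of $i\mathbb{R}$ (equal to $\pm\tfrac{i}{4}$), so a real rescaling gives $\G$ with $[\bar{\partial}\G](\bbb)=i$, while $\G(\www)=\mathrm{const}\cdot C_\om(\bbb,\www)\neq0$, and uniqueness follows as before. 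The one place that needs care is the bookkeeping: the precise dictionary between the operator $\bar{\partial}$ and the Kasteleyn matrix --- which is what tells us that the source equals $\pm\tfrac{\lambda}{4}$, hence is nonzero and a real multiple of $\lambda$ --- together with the conventions for the boundary faces; granting this, the corollary is a short consequence of Lemma~\ref{complan} and the determinant identities already established.
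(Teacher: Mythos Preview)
Your argument is correct and close to the paper's. The paper's proof is the abstract version you mention parenthetically: it applies Lemma~\ref{complan}(2) to $\om\smallsetminus\{\www\}$ to get a nontrivial discrete holomorphic $\F$ vanishing on $\db$, unique up to a scalar, with $\F(\bbb)\neq0$ because $\om\smallsetminus\{\bbb,\www\}$ is tileable; then Lemma~\ref{complan}(1) forces $[\bar{\partial}\F](\www)\neq0$, and the normalisation $[\bar{\partial}\F](\www)=\lambda$ fixes the scalar. You instead lead with the concrete realisation $\F=\mathrm{const}\cdot C_\om(\,\cdot\,,\www)$ and use the determinant identity for $C_\om(\bbb,\www)\neq0$; this is the identification the paper only makes explicit later, in Proposition~\ref{C=FG}. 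Both routes rest on the same two ingredients (Lemma~\ref{complan} and tileability), so there is no genuine divergence. One point where your write-up is actually more careful than the paper's: you check that $[\bar{\partial}\F](\www)\in\lambda\mathbb{R}$, so that a \emph{real} rescaling suffices and the reality conditions on $\clbb{0}$, $\clbb{1}$ survive; the paper's proof leaves this implicit.
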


\begin{proof}
Due to Lemma~\ref{complan} the function $\F$ on $\om\smallsetminus\www$ is unique up to a multiplicative constant. Moreover, $\F(\bbb)\neq 0$ since the domain $\om\smallsetminus \{\bbb, \www\}$ admits a domino tiling. Therefore, $[\bar{\partial}\F](\www) \neq 0$ (otherwise $\F$ is identically zero due to Lemma~\ref{complan}). Finally, the condition $[\bar{\partial}\F](\www) = \lambda$ defined the function $\F$ uniquely.
\end{proof}

In the setup of Corollary~\ref{main-lemma}, we construct the function $\HH$ defined on the vertex set of the domain $\om~\smallsetminus~\{\bbb,\www \}$ as described in Section~\ref{2.2}. So, the formula $(\ref{def H})$ holds for all square edges of the domain $\om$ except boundary edges of the squares $\bbb$, $\www$.
Note that if $\bbb$ and $\www$ are not corner squares of the domain $\om$,
then the vertex set of the domain $\om\smallsetminus\{\bbb,\www \}$ and the vertex set of the domain $\om$ are the same. 
Define $\dom=(\bbb\www)\cup(\www\bbb)$, see Fig.~\ref{colors}.
Note that the product $\F\cdot\G$ along each boundary square edge equals zero, since $\F|_{\db}=0$ and $\G|_{\dw}=0$. Therefore $\HH$ is constant on each of boundary segments.
Recall that $\HH$ is defined up to an additive constant, which can be chosen so that $\HH|_{(\www\bbb)}\equiv 0.$

\begin{lemma}\label{C_H}
The value of the function $\HH$ on the boundary segment $(\bbb\www)$ equals 
\[
\HH|_{(\bbb\www)}=4i\G(\www)[\bar{\partial}\F](\www)=-4i\F(\bbb)[\bar{\partial}\G](\bbb)\neq 0.
\]
\end{lemma}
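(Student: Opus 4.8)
The plan is to compute the constant $\HH|_{(\bbb\www)}$ by walking along a specific path from a vertex of $\www$ to a vertex of $\bbb$ — or, more efficiently, by exploiting that $\HH$ is constant on each boundary arc and hence equal to its value at any convenient vertex. The natural choice is to evaluate the jump of $\HH$ across the square $\www$: the definition~(\ref{def H}) produces $\HH$ consistently on every square edge \emph{except} the boundary edges of $\bbb$ and $\www$, so the "defect" in the telescoping sum around $\www$ is exactly the difference $\HH|_{(\bbb\www)} - \HH|_{(\www\bbb)} = \HH|_{(\bbb\www)}$ (using the normalization $\HH|_{(\www\bbb)}\equiv 0$). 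Concretely, I would take the four vertices $z_\flat, z_+, z_-, z_\sharp$ of the square $\www$ as labelled in Fig.~\ref{colors}/Fig.~\ref{d-b = c-a}, where two of them lie on one boundary arc and two on the other, and sum the four increments $(z' - z)\F(u)\G(\www)$ around $\www$, where $u$ runs over the three black squares genuinely adjacent to $\www$ in $\om\smallsetminus\{\bbb,\www\}$ together with the "missing" edge toward $\bbb$.

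The key identity to invoke is the discrete holomorphicity of $\F$ at $\www$, but now in its \emph{inhomogeneous} form from Corollary~\ref{main-lemma}: $[\bar\partial \F](\www) = \lambda$ rather than $0$. Recall from the caption of Fig.~\ref{lapH} that the vanishing of $[\bar\partial^\delta\F^\delta](v)$ is precisely what guarantees the telescoping sum of the four increments around a white square $v$ is zero. Therefore, with $[\bar\partial\F](\www)=\lambda\neq 0$, the same four-term sum around $\www$ picks up a nonzero residue proportional to $\G(\www)\cdot[\bar\partial\F](\www)$; tracking the constant from the definition of $\bar\partial$ (the $1/(2\bar\lambda)$, $1/(2\lambda)$ weights and the factor $\tfrac12$, together with $\lambda\bar\lambda = 1$ and $\lambda^2 = i$) yields the factor $4i$. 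This gives $\HH|_{(\bbb\www)} = 4i\,\G(\www)\,[\bar\partial\F](\www)$. The symmetric expression $-4i\,\F(\bbb)\,[\bar\partial\G](\bbb)$ is obtained identically by instead telescoping around the square $\bbb$, where $\G$ fails to be holomorphic with $[\bar\partial\G](\bbb) = i$; the sign discrepancy is bookkeeping from the orientation of $(\bbb\www)$ versus $(\www\bbb)$ and from $\bbb$ being black rather than white.

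For the non-vanishing claim $\HH|_{(\bbb\www)}\neq 0$: by Corollary~\ref{main-lemma} we have $\G(\www)\neq 0$, and $[\bar\partial\F](\www) = \lambda \neq 0$ by construction, so the product is nonzero. (Alternatively one reads it off from $\F(\bbb)\neq 0$ and $[\bar\partial\G](\bbb) = i\neq 0$.) The main obstacle I anticipate is purely the sign/constant bookkeeping: one must be careful about which pair of vertices of $\www$ sits on $(\bbb\www)$ versus $(\www\bbb)$, about the direction in which the four increments are summed, and about consistently using the $\lambda/\bar\lambda$ conventions so that the two displayed expressions come out with exactly the stated signs and the common factor $4i$ (and not, say, $4$, $-4i$, or $2i$). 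Everything else is a short finite computation once the inhomogeneous $\bar\partial$-equation is plugged in; there is no analytic content beyond Corollary~\ref{main-lemma} and the local picture in Fig.~\ref{lapH}.
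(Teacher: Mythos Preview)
Your proposal is correct and follows essentially the same approach as the paper: compute the jump $\HH(z_\flat)-\HH(z_+)$ at the boundary square $\www$ by telescoping through the interior vertices $z_\sharp,z_-$ and recognizing the resulting three-term expression $\G(\www)\bigl(-\bar\lambda\F(u_1)-\lambda\F(u_2)+\bar\lambda\F(u_3)\bigr)$ as $4i\,\G(\www)[\bar\partial\F](\www)$, then argue symmetrically at $\bbb$ and invoke $\G(\www)\neq 0$ from Corollary~\ref{G(v0)}. One small wording slip: the ``missing'' fourth edge at $\www$ is toward the exterior square in $\db$ (where $\F$ vanishes), not toward $\bbb$; otherwise your plan matches the paper's computation exactly.
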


\begin{proof}
Consider the difference between the values of the function $\HH$ in boundary vertices of the square $\www$:
\begin{align*}
(\HH(z_\flat) - \HH(z_+)) &= (\HH(z_\sharp) - \HH(z_+)) +(\HH(z_-) - \HH(z_\sharp)) + (\HH(z_\flat) - 
\HH(z_-))\\
&=\G(\www)(-\bar{\lambda}\F(u_1)-{\lambda}\F(u_2)+\bar{\lambda}\F(u_3))\\
&=4i\G(\www)[\bar{\partial}\F](\www),
\end{align*}
where $u_1$, $u_2$, $u_3$, $z_+$, $z_-$, $z_\sharp$, $z_\flat$ and $\www$ are defined as shown in Fig.~\ref{d-b = c-a v_0}.

The second expression for $\HH|_{(\bbb\www)}$ can be obtained in a similar way. Finally, $\HH|_{(\bbb\www)}\neq 0$ since $\G(\www)\neq 0.$
\end{proof}

Recall that we can think about the inverse Kasteleyn matrix $C_\om(u,v)$ as a function of two variables $u \in \bb{}$ and $v \in \ww{}$. If $v\in\ww{0}$, then $C_\om(u,v)$ is a discrete holomorphic function of $u$, with a simple pole at $v$: 
\[4\bar{\lambda}\bar{\partial}[C_\om(u,v)](v) =C_{\om} (v+\lambda,v)-C_{\om} (v-\lambda,v)+
 iC_{\om} (v+\bar{\lambda},v)-
 iC_{\om} (v-\bar{\lambda},v)=1,\]
since the product of the Kasteleyn matrix and the inverse Kasteleyn matrix is equal to the identity matrix. 
 
Let functions $\F$ and $\G$ be constructed as in Corollary~\ref{main-lemma}. Let $\omp = \om \smallsetminus \lbrace\bbb,\www\rbrace$. Recall that $C_{\operatorname{dbl-d}, \om}(u, v)=\Cm(u,v) - \Cmm(u,v)$.
 
\begin{prop}[factorization of the double-dimer coupling function]\label{C=FG} 
Let $u\in\bb{}$ and $v \in \ww{}$, then the following identity holds 
\[ C_{\operatorname{dbl-d}, \om}(u, v) = \mathrm{const}\cdot\F(u)\G(v),\] 
where  $\mathrm{const}=\frac{1}{4\G(\www)}$. 
\end{prop}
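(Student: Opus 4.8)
The plan is to show that both $\Cm(u,v)-\Cmm(u,v)$ and $\F(u)\G(v)$ are, up to one multiplicative constant, the same function of $(u,v)$, and then to compute that constant. The natural starting point is the product formula for the double-dimer coupling function already obtained above,
\[
\Cm(u,v)-\Cmm(u,v)=\pm\frac{K^{-1}_\om(\bbb,v)\,K^{-1}_\om(u,\www)}{K^{-1}_\om(\bbb,\www)},
\]
which holds precisely because $\omp=\om\smallsetminus\{\bbb,\www\}$. Combined with the uniqueness statements of Corollary~\ref{main-lemma}, this reduces the whole proposition to identifying the two factors $K^{-1}_\om(\cdot,\www)$ and $K^{-1}_\om(\bbb,\cdot)$ with the normalised functions $\F$ and $\G$.

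First I would identify $K^{-1}_\om(u,\www)=\Cm(u,\www)$ as a function of $u$. Extending the inverse Kasteleyn matrix by zero outside $\om$, this function is defined on $\clb$, vanishes on $\db$, and satisfies the discrete Cauchy--Riemann equation at every white square except $\www$ (the equation at a boundary white square involves only black squares lying in $\om$, so the zero extension is consistent), while at $\www$ the displayed identity $4\bar\lambda\,\bar\partial[\Cm(\cdot,\www)](\www)=1$ gives $[\bar\partial \Cm(\cdot,\www)](\www)=\tfrac{1}{4\bar\lambda}=\tfrac{\lambda}{4}$. Hence $\Cm(\cdot,\www)-\tfrac14\F$ is discrete holomorphic on all of $\om$ and vanishes on $\db$, so it vanishes identically by Lemma~\ref{complan}(1); that is, $K^{-1}_\om(u,\www)=\tfrac14\F(u)$. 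In exactly the same way, $v\mapsto K^{-1}_\om(\bbb,v)=\Cm(\bbb,v)$ is a discrete holomorphic function of the $\G$-type (vanishing on $\dw$, holomorphic away from $\bbb$), hence by Corollary~\ref{main-lemma}(2) it equals $c\cdot\G(v)$ for some $c\neq 0$; in particular $K^{-1}_\om(\bbb,\www)=c\,\G(\www)$.

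Substituting these into the product formula, the unknown constant $c$ cancels and one obtains
\[
\Cm(u,v)-\Cmm(u,v)=\pm\frac{c\,\G(v)\cdot\tfrac14\F(u)}{c\,\G(\www)}=\pm\frac{1}{4\G(\www)}\,\F(u)\G(v),
\]
which already explains why the answer depends only on $\G(\www)$ and not on the a priori unknown normalisation of $\G$. To fix the sign, set $v=\www$: since $\www\notin\omp$ one has $\Cmm(u,\www)=0$, so the left-hand side equals $\Cm(u,\www)=\tfrac14\F(u)$ by the previous paragraph, which forces the sign to be $+$. This yields $\mathrm{const}=\tfrac{1}{4\G(\www)}$, as claimed.

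The one place that needs care is the identification in the second paragraph: one must check that zero-extending $K^{-1}_\om(\cdot,\www)$ really lands in the one-dimensional solution space of Corollary~\ref{main-lemma}(1) — i.e.\ that no spurious term appears at the boundary white squares — and one must track the factors $\lambda,\bar\lambda$ carefully so that the normalising identity $4\bar\lambda\,\bar\partial[\Cm(\cdot,\www)](\www)=1$ produces the factor $\tfrac14$ and not some other constant; everything else is formal. (If one prefers not to invoke the determinant identity at all, the same conclusion follows by separation of variables: for fixed $v$ the function $\Cm(\cdot,v)-\Cmm(\cdot,v)$ is $\F$-type — the simple poles of $\Cm(\cdot,v)$ and $\Cmm(\cdot,v)$ at $v$ are the same domain-independent constant and cancel — and symmetrically it is $\G$-type in $v$; hence it equals $\alpha\,\F(u)\G(v)$ for a single scalar $\alpha$, and evaluating at $v=\www$ as above gives $\alpha=\tfrac{1}{4\G(\www)}$.)
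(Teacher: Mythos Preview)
Your proof is correct. Your main route, however, differs from the paper's: the paper does \emph{not} invoke the cofactor identity
\[
\Cm(u,v)-\Cmm(u,v)=\pm\frac{K^{-1}_\om(\bbb,v)\,K^{-1}_\om(u,\www)}{K^{-1}_\om(\bbb,\www)}
\]
from Section~2.3. Instead it argues by separation of variables from scratch: for fixed $v$ the difference $\Cm(\cdot,v)-\Cmm(\cdot,v)$ is discrete holomorphic on $\ww{}\smallsetminus\{\www\}$ (the simple poles at $v$ cancel, while at $\www$ no equation for $\Cmm$ is available), hence lies in the one-dimensional space of Corollary~\ref{main-lemma}(1) and is proportional to $\F$; symmetrically it is proportional to $\G$ in $v$; finally one evaluates at $v=\www$ to get $\mathrm{const}=\tfrac{1}{4\G(\www)}$. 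This is exactly the argument you outline in your closing parenthetical.

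Your primary approach is arguably more economical, since the factorisation into a function of $u$ times a function of $v$ has in effect already been proved in Section~2.3, and you only need to identify the two factors with the normalised $\F$ and $\G$; the extra step of pinning down the $\pm$ sign (by evaluating at $v=\www$) is the price for relying on the unsigned cofactor identity. The paper's approach is more self-contained within Section~3 and sidesteps the sign issue entirely, but it re-derives the factorisation rather than reusing it.
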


\begin{proof}
For a fixed $\widetilde{v}\in \ww{}$, consider $\Cm(u,\widetilde{v}) - \Cmm(u,\widetilde{v})$ as a function of~$u$. This function is holomorphic at all faces in $\ww{}\smallsetminus\www$. Moreover $\bar{\partial}[(\Cm-\Cmm)(u,\widetilde{v})](\www) \neq 0$, since otherwise the function $\Cm(u,\widetilde{v}) - \Cmm(u,\widetilde{v})$ is discrete holomorphic everywhere in $\om$ and vanishes on the boundary and then $\Cm(u,\widetilde{v}) - \Cmm(u,\widetilde{v})\equiv 0$ from Lemma~\ref{complan}.
 Hence, for fixed $\widetilde{v}\in \ww{}$ this difference is  equal to $\F(u)$ up to a multiplicative constant. So, $$\Cm(u,\widetilde{v}) - \Cmm(u,\widetilde{v})=k_1\cdot\F(u),$$ where $k_1$ depends on~$\widetilde{v}$.
 
Similarly, for a fixed $\widetilde{u}\in \bb{}$, consider $\Cm(\widetilde{u},v) - \Cmm(\widetilde{u},v)$ as a function of $v$. We obtain that $\Cm(\widetilde{u},v) - \Cmm(\widetilde{u},v)=k_2\cdot{\lambda}\G(v)$, where $k_2$ depends on $\widetilde{u}$. 
 
Therefore 
\[
\Cm(u,v) - \Cmm(u,v) =\mathrm{const}\cdot \F(u)\G(v).
\] 
Consider $\Cm(u,\www) - \Cmm(u,\www)$ as a function of~$u$. Note that 
\[
\Cmm(u,\www)\equiv 0.
\] 
Hence 
\[
\Cm(u,\www) =\mathrm{const}\cdot \F(u)\G(\www).
\] 
Recall that 
\[4\bar{\partial}[\Cm(u,\www)](\www)=\lambda.\]  Thus, $\mathrm{const}=\frac{1}{4\G(\www)}$.
\end{proof}

\begin{cor}\label{h=H}
Let $h$ be the height function in the double-dimer model on the vertices of the domain~$\om$. Then for all $z \in \V{}$ the following equality holds 
\[
\mathbb{E}[h(z)]=\HH(z)\cdot\HH|_{(\bbb\www)}^{-1},
\]
 where the value $\HH|_{(\bbb\www)}$ is given in Lemma~\ref{C_H}.
\end{cor}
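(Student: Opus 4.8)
The plan is to compare the discrete gradients of $\mathbb{E}h$ and of $\HH$ edge by edge, and then to fix the two resulting constants using the base vertex $z_0$ and the deterministic boundary values of the double-dimer height function. Write $h=h_1-h_2$, where $h_1$ and $h_2$ are the height functions of uniform domino tilings of $\om_1=\om$ and of $\om_2=\omp=\om\smallsetminus\{\bbb,\www\}$, both normalised by $h_i(z_0)=0$ at the boundary vertex $z_0$ of Figure~\ref{II}; thus $\mathbb{E}[h(z_0)]=0$. Since (as in Corollary~\ref{main-lemma}) $\bbb$ and $\www$ are not corner squares, $\om$ and $\omp$ share the vertex set $\V{}$ on which $\HH$ is defined, and the graph obtained by deleting the sides of the squares $\bbb$ and $\www$ from the lattice is still connected. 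Hence it suffices to prove, for every lattice edge $[z,z']$ that is not a side of $\bbb$ or $\www$, the increment identity $\mathbb{E}[h(z')-h(z)]=\HH|_{(\bbb\www)}^{-1}\,(\HH(z')-\HH(z))$, and then to evaluate both sides at $z_0$, where $\HH(z_0)=\HH|_{(\www\bbb)}=0$ by the chosen normalisation of $\HH$.

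For the increment identity, let $u\in\bb{}$ and $v\in\ww{}$ be the black and white squares adjacent to $[z,z']$ (both lie in $\omp$ since the edge is not a side of $\bbb$ or $\www$). Thurston's rule gives $\mathbb{E}[h_i(z')-h_i(z)]=\varepsilon(z,z')\bigl(\tfrac14-\mathbb{P}_{\om_i}[uv]\bigr)$, where $\mathbb{P}_{\om_i}[uv]$ is the probability that $[uv]$ is a domino in a uniform tiling of $\om_i$ and $\varepsilon(z,z')=\pm1$ records whether $u$ lies to the left or to the right of $z\to z'$; the term $\tfrac14$ and the sign $\varepsilon(z,z')$ depend only on $[z,z']$, not on $i$, so the background cancels and $\mathbb{E}[h(z')-h(z)]=\varepsilon(z,z')\bigl(\mathbb{P}_{\omp}[uv]-\mathbb{P}_{\om}[uv]\bigr)$. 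Now $\mathbb{P}_{\om_i}[uv]=|C_{\om_i}(u,v)|$ (Theorem~\ref{locstat} with $k=1$, or the property of $C_\om$ recalled in Section~\ref{coupl}), and for fixed $v$ the numbers $C_{\om}(u,v)$ and $C_{\omp}(u,v)$ both lie on the same line through the origin — both real on $\bb{0}$, both purely imaginary on $\bb{1}$ — and, by the Kasteleyn-sign form of Theorem~\ref{locstat}, carry the same sign there; hence $\mathbb{P}_{\omp}[uv]-\mathbb{P}_{\om}[uv]=\pm(C_{\om}(u,v)-C_{\omp}(u,v))$ up to a unit factor depending only on the type of $u$. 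By Proposition~\ref{C=FG}, $C_{\om}(u,v)-C_{\omp}(u,v)=\tfrac{1}{4\G(\www)}\F(u)\G(v)$. Running through the finitely many local pictures (type of $u$, the value $z'-z\in\{\pm\lambda,\pm\bar\lambda\}$, and the side on which $u$ sits) one verifies that all of these signs assemble into a single universal constant $C_0$, so $\varepsilon(z,z')(\mathbb{P}_{\omp}[uv]-\mathbb{P}_{\om}[uv])=C_0\,(z'-z)\F(u)\G(v)=C_0\,(\HH(z')-\HH(z))$ by $(\ref{def H})$.

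Combining the two steps yields $\mathbb{E}[h(z)]=C_0\,\HH(z)$ for all $z\in\V{}$, and it remains to identify $C_0=\HH|_{(\bbb\www)}^{-1}$. I would argue that the double-dimer height function is in fact deterministic on $\partial\om$ — being the difference of the two deterministic boundary heights of $\om$ and of $\omp$ — and equals $0$ on $(\www\bbb)$ and $1$ on $(\bbb\www)$, which follows from a direct application of Thurston's rule around the two notches at $\bbb$ and $\www$; evaluating $\mathbb{E}[h]=C_0\HH$ on the arc $(\bbb\www)$ then gives $1=C_0\,\HH|_{(\bbb\www)}$, hence $C_0=\HH|_{(\bbb\www)}^{-1}$, which is finite and nonzero by Lemma~\ref{C_H}. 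Alternatively, one may carry the explicit constant from Kenyon's representation of the single-dimer expected height as a discrete integral of the coupling function through the computation and check directly that it equals $(4i\lambda\G(\www))^{-1}$, using $[\bar\partial\F](\www)=\lambda$ from Corollary~\ref{main-lemma} together with Lemma~\ref{C_H}.

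The main obstacle is the sign bookkeeping in the second step: one must match the absolute values $|C_{\om_i}(u,v)|$ produced by the probabilistic interpretation against the signed quantity $(z'-z)\F(u)\G(v)$ simultaneously for every edge orientation and every type of square, and verify that the Thurston signs $\varepsilon(z,z')$, the Kasteleyn signs hidden in Theorem~\ref{locstat}, and the reality and discrete-holomorphicity conventions imposed on $\F$ and $\G$ all conspire to give one constant. A secondary point requiring care is that the discrete $1$-form $(z'-z)\F(u)\G(v)$ is not closed around $\bbb$ and $\www$ — its monodromy there is exactly $\HH|_{(\bbb\www)}$ — which is why the argument is run on ordinary edges together with the connectivity statement above, rather than along arbitrary edge-paths.
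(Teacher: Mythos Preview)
Your proposal is correct and follows essentially the same route as the paper: compute the increment of $\mathbb{E}h$ across a single edge, express it as $|C_\om(u,v)|-|C_{\omp}(u,v)|$, remove the absolute values, and invoke Proposition~\ref{C=FG}.

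Two points of comparison are worth noting. First, where you handle the sign by case analysis and an appeal to Theorem~\ref{locstat}, the paper gives a slicker domain-independent argument: since $\sum_{j}|C_\om(u_j,v)|=\sum_j\mathbb{P}[u_jv]=1$ and the Kasteleyn identity $K_\om K_\om^{-1}=I$ reads $\sum_j\tau(u_jv)\,C_\om(u_j,v)=1$, and since each term $\tau(u_jv)\,C_\om(u_j,v)$ is real of modulus $|C_\om(u_j,v)|$, the two sums can agree only if every term is nonnegative; hence $|C_\om(u,v)|=\tau(uv)\,C_\om(u,v)$, with the same $\tau(uv)$ for both domains. This avoids the delicate point in your appeal to Theorem~\ref{locstat} that the global constant $c$ there depends on a vertex ordering which is different for $\om$ and for $\omp$. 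Second, your identification of the overall constant via the deterministic boundary value $h|_{(\bbb\www)}=1$ is a clean complement to the paper, which leaves that last matching implicit.
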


\begin{proof} 
Let $h_\om$ and $h_\om'$ be height functions in the dimer model on domains $\om$ and $\om'$, i.e. $h=h_\om-h_\om'$. Recall that the probability that there is a domino $[uv]$ in the domino tiling of $\om$ is equal to $|C_\om(u,v)|$. It is easy to see, that

\[\mathbb{E}[\h_{\om}(z_1)-h_{\om}(z_2)]=\tfrac34\cdot\mathbb{P}[uv]+
(-\tfrac{1}{4})\cdot(1-\mathbb{P}[uv]),\] where $u$, $v$ are adjacent squares; and $z_1$, $z_2$ are their common vertices. 
Therefore, 
\[\mathbb{E}[\h_{\om}(z_1)-h_{\om}(z_2)]=\mathbb{P}[uv]-\tfrac14=|C_{\om} (u,v)|-\tfrac14.\]

Similarly, $\mathbb{E}[\h_{\om'}(z_1)-h_{\om'}(z_2)]=|C_{\om'} (u,v)|-\frac14.$

So, $\mathbb{E}[\h(z_1)-h(z_2)]=|C_{\om} (u,v)|-|C_{\om'} (u,v)|.$

Note that for $u_1$, $u_2$, $u_3$, $u_4$ and $v$ defined as shown on Fig.~\ref{d-b = c-a} the following equality holds:
\begin{align*}
1 &=\mathbb{P}[u_1v]+\mathbb{P}[u_2v]+\mathbb{P}[u_3v]+\mathbb{P}[u_4v]\\
&=|C_{\om} (u_1,v)|+|C_{\om} (u_2,v)|+|C_{\om} (u_3,v)|+|C_{\om} (u_4,v)|.
\end{align*}

Moreover,  
\[C_{\om} (u_2,v)+iC_{\om} (u_3,v)-
 C_{\om} (u_4,v)-
 iC_{\om} (u_1,v)=1,\]
since the product of the Kasteleyn matrix and the inverse Kasteleyn matrix is equal to the identity matrix. Therefore 
\[|C_{\om} (u,v)|-|C_{\om'} (u,v)|={\tau(uv)}\cdot(C_{\om} (u,v)-C_{\om'} (u,v)),\] 
where $\tau(uv)$ is the Kasteleyn weight of the edge $(uv)$. To complete the proof it is enough to apply  Proposition~\ref{C=FG}. 
\end{proof}

\subsection{Proof of Theorem~\ref{leap-frog harmonicity}}
We call a discrete domain an {\it odd Temperleyan domain} if all its corner squares are of type $\bb{0}$. Recall that to obtain a Temperleyan domain one should remove a square of type $\bb{0}$ from the set $\dib$ from an odd Temperleyan domain, see Fig.~\ref{Temp}. A Temperleyan domain always admits a domino tiling.

We need to adjust the notation from the previous section to this setup.
Corollary~\ref{main-lemma} is stated for the case of the domain containing the same number of black and white squares. If we consider a discrete domain in which the number of black squares is greater by one than the number of white squares (see.~Fig.~\ref{I}), then we have some differences in definitions of functions $\F$ and $\G$.
Fix two black squares $u_1, u_2 \in \dib$ in such a way, that after removing one of them the resulting domain admits a domino tiling. Let $u_1\in \bb{0}$.
\begin{enumerate}
\item[1.] There exists the unique function $\F\colon\clb\to\mathbb{C}$ such that $\F|_{\db}=0$, $F(u_1) = 1$ and $\F$ is discrete holomorphic everywhere in $\ww{}$.

\item[2.] There exists the unique function $\G~\colon~\clw~\to~\mathbb{C}$ such that $\G|_{\dw}=0$ and $\G$ is discrete holomorphic everywhere in $\bb{}$ except at faces $u_1$, $u_2$ and one has $[\bar{\partial}\G](u_2) = i.$
\end{enumerate}
The existence and the uniqueness of functions $\F$ and $\G$ follow from Lemma~\ref{complan}.

\begin{proof}[ Proof of Theorem~\ref{leap-frog harmonicity}]
Let $\om$ be an odd Temperleyan domain. Note that Proposition~\ref{C=FG} and Corollary~\ref{h=H} are still true in odd case.
So, it is enough to show that $\HH$ is a discrete leap-frog harmonic function.
This follows directly from Proposition~\ref{formula lap}. In this case $\F$ is a discrete holomorphic function at all white squares of $\om$. So, its imaginary part is a discrete harmonic function with zero boundary conditions. Therefore $\im\F$ is identically zero, and thus the real part of $\F$ is a constant. 
Hence, $\partial\F$ is identically zero. 
\end{proof}

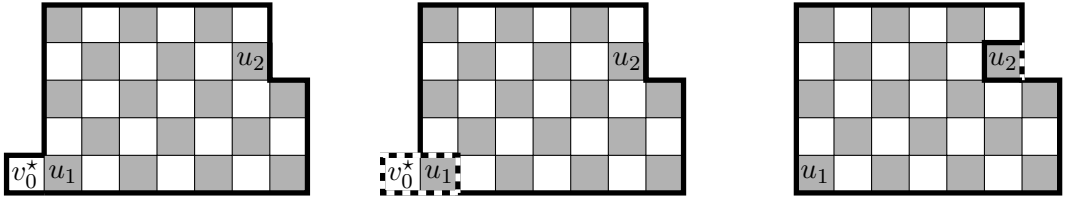
\begin{figure}
\begin{center}
\begin{tikzpicture}[x={(0.5cm,0cm)}, y={(0cm,0.5cm)}]
\begin{scope}
\draw  [draw, fill=gray!60](0,0) rectangle (1,1);
\draw  [draw, fill=gray!60](1,1) rectangle (2,2);
\draw  [draw, fill=gray!60](2,2) rectangle (3,3);
\draw  [draw, fill=gray!60](3,3) rectangle (4,4);
\draw  [draw, fill=gray!60](4,4) rectangle (5,5);

\draw  [draw, fill=gray!60](0,2) rectangle (1,3);
\draw  [draw, fill=gray!60](1,3) rectangle (2,4);
\draw  [draw, fill=gray!60](2,4) rectangle (3,5);

\draw  [draw, fill=gray!60](0,4) rectangle (1,5);

\draw  [draw, fill=gray!60](2,0) rectangle (3,1);
\draw  [draw, fill=gray!60](3,1) rectangle (4,2);
\draw  [draw, fill=gray!60](4,2) rectangle (5,3);
\draw  [draw, fill=gray!60](5,3) rectangle (6,4);

\draw  [draw, fill=gray!60](4,0) rectangle (5,1);
\draw  [draw, fill=gray!60](5,1) rectangle (6,2);
\draw  [draw, fill=gray!60](6,2) rectangle (7,3);

\draw  [draw, fill=gray!60](6,0) rectangle (7,1);

\draw[line width=2pt] (0,1) -- (0,5) -- (6,5) -- (6,3) -- (7,3) -- (7,0)-- (0,0);


\fill[black] (0.5,0.5)  node[]{$u_1$};

\draw[draw, line width=2pt] (0,0)--(-1,0)--(-1,1)--(0,1);
\fill[black] (-0.5,0.5)  node[]{$\www^\star$};
\draw[draw, line width=2pt] (6,4)--(6,3);

\fill[black] (5.5,3.5)  node[]{$u_2$};
\end{scope}

\begin{scope}[xshift=5cm]

\draw  [draw, fill=gray!60](0,0) rectangle (1,1);
\draw  [draw, fill=gray!60](1,1) rectangle (2,2);
\draw  [draw, fill=gray!60](2,2) rectangle (3,3);
\draw  [draw, fill=gray!60](3,3) rectangle (4,4);
\draw  [draw, fill=gray!60](4,4) rectangle (5,5);

\draw  [draw, fill=gray!60](0,2) rectangle (1,3);
\draw  [draw, fill=gray!60](1,3) rectangle (2,4);
\draw  [draw, fill=gray!60](2,4) rectangle (3,5);

\draw  [draw, fill=gray!60](0,4) rectangle (1,5);

\draw  [draw, fill=gray!60](2,0) rectangle (3,1);
\draw  [draw, fill=gray!60](3,1) rectangle (4,2);
\draw  [draw, fill=gray!60](4,2) rectangle (5,3);
\draw  [draw, fill=gray!60](5,3) rectangle (6,4);

\draw  [draw, fill=gray!60](4,0) rectangle (5,1);
\draw  [draw, fill=gray!60](5,1) rectangle (6,2);
\draw  [draw, fill=gray!60](6,2) rectangle (7,3);

\draw  [draw, fill=gray!60](6,0) rectangle (7,1);

\draw[line width=2pt] (1,0)--(1,1)--(0,1) -- (0,5) -- (6,5) -- (6,3) -- (7,3) -- (7,0)-- cycle;

\fill[black] (0.5,0.5)  node[]{$u_1$};

\draw[draw, line width=2pt] (1,0)--(-1,0)--(-1,1)--(1,1);
\draw[draw, line width=2pt,dashed,white] (1,0)--(-1,0)--(-1,1)--(1,1)--(1,0);
\fill[black] (-0.5,0.5)  node[]{$\www^\star$};
\draw[draw, line width=2pt] (6,4)--(6,3);

\fill[black] (5.5,3.5)  node[]{$u_2$};
\end{scope}

\begin{scope}[xshift=10cm]
\draw  [draw, fill=gray!60](0,0) rectangle (1,1);
\draw  [draw, fill=gray!60](1,1) rectangle (2,2);
\draw  [draw, fill=gray!60](2,2) rectangle (3,3);
\draw  [draw, fill=gray!60](3,3) rectangle (4,4);
\draw  [draw, fill=gray!60](4,4) rectangle (5,5);

\draw  [draw, fill=gray!60](0,2) rectangle (1,3);
\draw  [draw, fill=gray!60](1,3) rectangle (2,4);
\draw  [draw, fill=gray!60](2,4) rectangle (3,5);

\draw  [draw, fill=gray!60](0,4) rectangle (1,5);

\draw  [draw, fill=gray!60](2,0) rectangle (3,1);
\draw  [draw, fill=gray!60](3,1) rectangle (4,2);
\draw  [draw, fill=gray!60](4,2) rectangle (5,3);
\draw  [draw, fill=gray!60](5,3) rectangle (6,4);

\draw  [draw, fill=gray!60](4,0) rectangle (5,1);
\draw  [draw, fill=gray!60](5,1) rectangle (6,2);
\draw  [draw, fill=gray!60](6,2) rectangle (7,3);

\draw  [draw, fill=gray!60](6,0) rectangle (7,1);

\draw[line width=2pt] (0,0) -- (0,5) -- (6,5) -- (6,3) -- (7,3) -- (7,0)-- cycle;

\fill[black] (0.5,0.5)  node[]{$u_1$};

\draw[draw, line width=2pt,dashed,white] (6,4)--(6,3);
\draw[draw, line width=2pt] (6,4)--(5,4)--(5,3)--(6,3);

\fill[black] (5.5,3.5)  node[]{$u_2$};

\end{scope}
\end{tikzpicture}\end{center}
\caption{Left: $\om^{\star}=\om\cup\{\www^\star\}$. Center: $\om^{\star}_1=\om_1\cup\{\www^\star, u_1\}$. Note that each domino covering of the domain $\om^{\star}_1$ has domino $[\www^\star u_1]$. Therefore there is a bijection between the sets of domino coverings of domains $\om_1$ and $\om^{\star}_1$. Right: $\om^{\star}_2=\om_2$.
} \label{odd_even}
\end{figure}

Let $\www^\star$ be a white square on $\dom$ adjacent to $u_1$. Let us define domains $\om^{\star}$, $\om^{\star}_1$ and $\om^{\star}_2$ as it shown on Fig.~\ref{odd_even}. Let $\bbb^\star=u_2$. Then there are unique functions $\F^\star$ and $\G^\star$ satisfying Corollary~\ref{main-lemma} on the domain $\om^{\star}$ with marked squares $\www^\star$ and $\bbb^\star$. 

\begin{remark}\label{odd=even}
It is easy to check that the functions $\F$ (resp., $\G$) defined above equals $\F^\star$ (resp., $\G^\star$) on $\om$. 
Hence there is no difference between odd and even cases in terms of functions $\F$ and $\G$.
\end{remark}

\setcounter{equation}{0}

\section{Double-dimer height function in polygonal domains }\label{4}
From now onwards, we will use the square lattice with mesh size $\delta$ rather than $1$. Let $\om$ be a polygon in $\mathbb{C}$ with sides parallel to vectors $\lambda$ and $\bar{\lambda}$. For each sufficiently small $\delta > 0$, let $\om^\delta$ be a discrete polygon approximating $\om$ on the square lattice with mesh size $\delta$.

Let us define functions $\F^\delta$
 and $\G^\delta$ similarly to the previous section:
 \begin{enumerate}
\item[1.] The function $\F^\delta$ is discrete holomorphic everywhere in $\ww{}^\delta$ except at the face $\www^\delta$ where one has $[\bar{\partial}^\delta\F^\delta](\www^\delta) = \frac{\lambda}{\delta^2}.$
\item[2.] Similarly, the function $\G^\delta$ is discrete holomorphic everywhere in $\bb{}^\delta$ except at the face $\bbb^\delta$ where one has $[\bar{\partial}^\delta\G^\delta](\bbb^\delta) = \frac{i}{\delta^2}.$
 \end{enumerate}

Our goal is to prove the convergence of the functions $\HH^{\delta}$ defined by the formula~$(\ref{def H})$. Recall that this definition can be thought of as ``$\HH^{\delta} = \int^\delta\re[\F^\delta\G^\delta dz]$''. We will prove that the functions $\F^{\delta}$ and $\G^{\delta}$ converge individually.

To prove the convergence of the functions $\F^{\delta}$ we will consider approximations by domains $\om^{\delta}$ with fixed colour type of the corners. We will describe this classification below. The limits of the functions  $\F^{\delta}$ and $\G^{\delta}$ depend on the type of the corners. At the same time the limit of the functions $\HH^{\delta}$ does not depend on the type of the corners.

We will call a corner of $\om^\delta$ a \emph{convex} corner if the interior angle is $\pi/2$, and \emph{concave} if the interior angle is $3\pi/2$. A corner is called \emph{white} if there is a white square in the corner, and \emph{black} if there is a black square in this corner, see Fig.~\ref{colors}.

\begin{lemma}\label{combi-lemma}
If a simply connected domain $\om^\delta$ contains the same number of black and white squares then 
\[
\#\lbrace white\,\, convex\,\, corners\rbrace = \#\lbrace white\,\, concave\,\, corners\rbrace + 2,
\]
\[
\#\lbrace black\,\, convex\,\, corners\rbrace = \#\lbrace black\,\, concave\,\, corners\rbrace + 2.
\]
\end{lemma}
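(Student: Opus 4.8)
The statement is a purely combinatorial fact about simply connected polyomino-type regions on the checkerboard lattice, so the plan is to prove it by an Euler-characteristic / double-counting argument that tracks how the four square-types $\bb{0}, \bb{1}, \ww{0}, \ww{1}$ are arranged along the boundary. The key observation is that the two claimed identities are symmetric under swapping the roles of black and white and that the total count of corners is constrained: for any simply connected lattice polygon, $\#\{\text{convex corners}\} - \#\{\text{concave corners}\} = 4$. So once I establish a single balancing identity linking white-corner-counts to black-corner-counts (plus the global ``$=4$'' relation), both displayed equations follow by solving a $2\times 2$ linear system together with the hypothesis that $\#\bb{} = \#\ww{}$.

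First I would set up the bookkeeping. Walk along $\dom^\delta$ in the positive direction; at each lattice step the boundary turns left (at a convex corner), turns right (at a concave corner), or goes straight. Because the square lattice is $2$-colored in a checkerboard fashion, the color of the square sitting in a boundary corner is determined by the parity data of the corner vertex, and consecutive corners along the boundary alternate in a controlled way between ``black in the corner'' and ``white in the corner'' as one moves along a straight stretch of even length versus odd length. I would make this precise by assigning to each boundary edge the pair (color of the adjacent interior square, parity of its position), so that each corner contributes a local change to a running ``color-balance'' counter; summing the contributions around the closed boundary gives zero, which is exactly the relation coupling white convex/concave counts to black convex/concave counts.

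Second, I would bring in the area constraint $\#\bb{}^\delta = \#\ww{}^\delta$. This is where the content lies: the difference $\#\bb{}^\delta - \#\ww{}^\delta$ can be computed as a boundary sum by a discrete Green/Stokes-type identity — assigning weight $\pm1$ to columns of squares and telescoping — and the boundary sum is in turn expressed through the convex/concave corner counts of each color (a convex white corner adds one ``excess white square'' locally, a concave white corner removes one, and symmetrically for black). Setting this boundary expression equal to zero yields a second independent linear relation among the four corner-counts. Combining it with the turning-number relation $\#\{\text{conv}\} - \#\{\text{conc}\} = 4$ and the color-balance relation from the previous paragraph, I solve to get $\#\{\text{white conv}\} - \#\{\text{white conc}\} = 2$ and $\#\{\text{black conv}\} - \#\{\text{black conc}\} = 2$, which is the assertion.

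The main obstacle I anticipate is purely organizational rather than deep: correctly pinning down, with signs, how each of the four corner types (black/white $\times$ convex/concave) contributes to (a) the turning number, (b) the color-balance counter, and (c) the area difference $\#\bb{}-\#\ww{}$ — and checking the sign conventions are consistent with the orientation of $\dom^\delta$ and with the specific placement of $\bb{0},\bb{1},\ww{0},\ww{1}$ fixed in Section~3. A clean way to avoid case-chasing is to verify the three local contribution rules on a few small model domains (a single square, an $L$-shaped hexomino, a rectangle) and then argue that since all three quantities are additive over boundary corners, matching them on enough cases forces the general formula; alternatively one can induct on the number of squares, removing a boundary square or strip and checking the four-term alternating sums change consistently. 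Either route reduces the proof to a finite verification plus linear algebra.
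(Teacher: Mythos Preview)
Your approach is structurally the same as the paper's: combine the turning-number relation $\#\{\text{convex}\}-\#\{\text{concave}\}=4$ with one balance relation between white and black corner counts, then solve the resulting linear system. However, you over-engineer the second half. You only need \emph{one} further linear relation, not two: together with the turning relation, a single identity such as
\[
\#\{\text{white convex}\}+\#\{\text{black concave}\}=\#\{\text{white concave}\}+\#\{\text{black convex}\}
\]
already forces both displayed equations. The paper obtains this identity in one line from the height function: walking along $\partial\Omega^\delta$, the height increments are $+\tfrac14$ or $-\tfrac14$ according to whether the adjacent interior square is black or white, and the hypothesis $\#\bb{}^\delta=\#\ww{}^\delta$ is exactly what makes the total increment around the closed boundary vanish (this equivalence is checked by induction on the number of squares). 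Reading off which corner types contribute which sign yields the balance identity directly. Your ``color-balance counter'' in step two and your Green/Stokes area computation in step three are really two descriptions of this same constraint, not independent relations; collapsing them into the height-function argument both shortens the proof and eliminates the sign-bookkeeping you flag as the main obstacle.
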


\begin{proof}
Note that 
$\pi \cdot( \#\lbrace corners\rbrace - 2) = \frac{\pi}{2}\cdot\#\lbrace convex\,\, corners\rbrace + \frac{3\pi}{2}\cdot \#\lbrace concave\,\, corners\rbrace,$
hence 
\[
\#\lbrace convex\,\, corners\rbrace = \#\lbrace concave\,\, corners\rbrace + 4.
\]

Recall that the height along the boundary changes by $\pm \frac14$: if an edge has a black square on its left then the height increases by $\frac14$; if it has a white square on its left then the height decreases by~$\frac14$. Along each straight segment of the boundary of the domain the height function varies between two values. This pair increases (resp., decreases) by $\frac14$ if the boundary turns left along black (resp., white) convex square, and decreases (resp., increases) by $\frac14$ if it turns right along black (resp., white) concave square.
Then
\begin{align*}
\#\lbrace white\,\, convex\,\, corners\rbrace \, + \, &\#\lbrace black\,\, concave\,\, corners\rbrace = \\
&\#\lbrace white\,\, concave\,\, corners\rbrace \, + \, \#\lbrace black\,\, convex\,\, corners\rbrace,
\end{align*}
since the height function on the boundary is well defined if the domain contains the same number of black and white squares (this is easily proved by induction on the number of black squares, starting from the case of a $2\times1$ rectangle).  
\end{proof}

Let $\om^{\delta}$ admit a domino tiling. Let $\bbb^{\delta}$ and $\www^{\delta}$ be black and white squares in $\diom^\delta$ placed away from the corners of $\om^\delta$ in such a way that the domain $\om^\delta\smallsetminus\{\bbb^{\delta}, \www^{\delta}\}$  admits a domino tiling.
Let $\{\wvpugl{k}\}^{n-1}_{k=1}$ be the set of white squares located in the concave white corners of the domain  $\om^{\delta}$, and let $\{\wvupugl{k}\}^{n+1}_{k=1}$ be the set of white squares located in the convex white corners of the domain  $\om^{\delta}$, see Fig.~\ref{colors}.
 Recall that the cardinality of the latter set is greater by two than the cardinality of the former due to Lemma~\ref{combi-lemma}.
Similarly,  let $\{\bvpugl{s}\}^{m-1}_{s=1}$ be the set of black squares located in the concave black corners of the domain  $\om^{\delta}$, and let  $\{\bvupugl{s}\}^{m+1}_{s=1}$ be the set of black squares located in the convex black corners of the domain  $\om^{\delta}$ (see~Fig.~\ref{colors}).

\subsection{
Discrete boundary value problem for the functions $\F$~and~$\G$}

Note that for all $u^\delta\in {\db}^\delta$ one has $\F^\delta(u^\delta)=0$, which can be thought of as a zero Dirichlet boundary conditions either for $\re[\F^\delta]$ or for $\im[\F^\delta]$. Similarly, for all $v^\delta\in \dw^\delta$, either $\re[\bar{\lambda}\G^\delta]$ or $\re[\lambda\G^\delta]$ has zero Dirichlet boundary conditions.

\begin{remark}\label{change}
The function $\F^\delta$ (resp., $\G^\delta$) changes boundary conditions only at white (resp., black) corners of $\om^\delta$. 
\end{remark}

A function on a discrete domain $\om^\delta$ is called {\it semibounded by its boundary values} in a subdomain $U^\delta\subset\om^\delta$ if either the maximum or the minimum of this function in $U^\delta$ is attained on the boundary of $U^\delta$.
A function on a discrete domain $\om^\delta$ is called {\it bounded by its boundary values} in a subdomain $U^\delta\subset\om^\delta$ if both, the maximum and the minimum of this function in $U^\delta$, are attained on $\partial U^\delta$.

\begin{figure}
\begin{center}
\begin{tikzpicture}[x={(0.37cm,0.37cm)}, y={(-0.37cm,0.37cm)}]
\begin{scope}
\path (0,0) node[name=l1, shape=coordinate]{};
\path (1,0) node[name=l2, shape=coordinate]{};
\path (1,1) node[name=l3, shape=coordinate]{};
\path (0,1) node[name=l4, shape=coordinate]{};
\path[draw, fill=gray!20] (l1)--(l2)--(l3)--(l4)--cycle;
\path ($1/2*(l1)+1/2*(l3)$) node[]{$u$};
\path (1.5,1.5) node[]{$0$};
\path (-0.5,1.5) node[]{$0$};

\path (0,-2) node[name=l1, shape=coordinate]{};
\path (1,-2) node[name=l2, shape=coordinate]{};
\path (1,-1) node[name=l3, shape=coordinate]{};
\path (0,-1) node[name=l4, shape=coordinate]{};
\path[draw, fill=gray!20] (l1)--(l2)--(l3)--(l4)--cycle;
\path ($1/2*(l1)+1/2*(l3)$) node[]{$u_2$};

\path (-2,0) node[name=l1, shape=coordinate]{};
\path (-1,0) node[name=l2, shape=coordinate]{};
\path (-1,1) node[name=l3, shape=coordinate]{};
\path (-2,1) node[name=l4, shape=coordinate]{};
\path[draw, fill=gray!20] (l1)--(l2)--(l3)--(l4)--cycle;
\path ($1/2*(l1)+1/2*(l3)$) node[]{$u_{\tiny{1}}$};

\path (2,0) node[name=l1, shape=coordinate]{};
\path (3,0) node[name=l2, shape=coordinate]{};
\path (3,1) node[name=l3, shape=coordinate]{};
\path (2,1) node[name=l4, shape=coordinate]{};
\path[draw, fill=gray!20] (l1)--(l2)--(l3)--(l4)--cycle;
\path ($1/2*(l1)+1/2*(l3)$) node[]{$u_3$};

\path[draw][line width=2pt] (-3,1)--(4,1);

\path (-2.5,-2.5) node[]{$\F^{\delta}(u)=\frac{\F^{\delta}(u_1)+\F^{\delta}(u_2)+\F^{\delta}(u_3)}{3}$};

\end{scope}

\begin{scope}[xshift=5cm]

\path (0,0) node[name=l1, shape=coordinate]{};
\path (1,0) node[name=l2, shape=coordinate]{};
\path (1,1) node[name=l3, shape=coordinate]{};
\path (0,1) node[name=l4, shape=coordinate]{};
\path[draw, fill=gray!20] (l1)--(l2)--(l3)--(l4)--cycle;
\path ($1/2*(l1)+1/2*(l3)$) node[]{$u$};
\path (1.5,-0.5) node[]{$0$};
\path (-0.5,1.5) node[]{$0$};

\path (0,-2) node[name=l1, shape=coordinate]{};
\path (1,-2) node[name=l2, shape=coordinate]{};
\path (1,-1) node[name=l3, shape=coordinate]{};
\path (0,-1) node[name=l4, shape=coordinate]{};
\path[draw, fill=gray!20] (l1)--(l2)--(l3)--(l4)--cycle;
\path ($1/2*(l1)+1/2*(l3)$) node[]{$u_2$};

\path (-2,0) node[name=l1, shape=coordinate]{};
\path (-1,0) node[name=l2, shape=coordinate]{};
\path (-1,1) node[name=l3, shape=coordinate]{};
\path (-2,1) node[name=l4, shape=coordinate]{};
\path[draw, fill=gray!20] (l1)--(l2)--(l3)--(l4)--cycle;
\path ($1/2*(l1)+1/2*(l3)$) node[]{$u_1$};

\path[draw][line width=2pt] (-3,1)--(1,1)--(1,-3);

\path (-2.5,-2.5) node[]{$\F^{\delta}(u)=\frac{\F^{\delta}(u_1)+\F^{\delta}(u_2)}{2}$};

\end{scope}

\begin{scope}[xshift=10cm]

\path (0,0) node[name=l1, shape=coordinate]{};
\path (1,0) node[name=l2, shape=coordinate]{};
\path (1,1) node[name=l3, shape=coordinate]{};
\path (0,1) node[name=l4, shape=coordinate]{};
\path[draw, fill=gray!20] (l1)--(l2)--(l3)--(l4)--cycle;
\path ($1/2*(l1)+1/2*(l3)$) node[]{$u$};
\path (-0.5,1.5) node[]{$0$};
\path (1.5,1.5) node[]{$0$};
\path (2.5,0.5) node[]{$0$};
\path (2.5,-1.5) node[]{$0$};

\path (0,-2) node[name=l1, shape=coordinate]{};
\path (1,-2) node[name=l2, shape=coordinate]{};
\path (1,-1) node[name=l3, shape=coordinate]{};
\path (0,-1) node[name=l4, shape=coordinate]{};
\path[draw, fill=gray!20] (l1)--(l2)--(l3)--(l4)--cycle;
\path ($1/2*(l1)+1/2*(l3)$) node[]{$u_2$};

\path (-2,0) node[name=l1, shape=coordinate]{};
\path (-1,0) node[name=l2, shape=coordinate]{};
\path (-1,1) node[name=l3, shape=coordinate]{};
\path (-2,1) node[name=l4, shape=coordinate]{};
\path[draw, fill=gray!20] (l1)--(l2)--(l3)--(l4)--cycle;
\path ($1/2*(l1)+1/2*(l3)$) node[]{$u_1$};

\path[draw][line width=2pt] (-3,1)--(2,1)--(2,-3);

\path (-2,-3) node[]{$\F^{\delta}(u)=\frac{\F^{\delta}(u_1)+\F^{\delta}(u_2)}{3}$};
\end{scope}
\end{tikzpicture}\caption{Discrete harmonicity of the function $\F^\delta$ together with the boundary conditions implies the following equations for~$u\in\diom^{\delta}$, see also Fig.~\ref{b_c}.
}
\label{3bound}\end{center}
\end{figure}
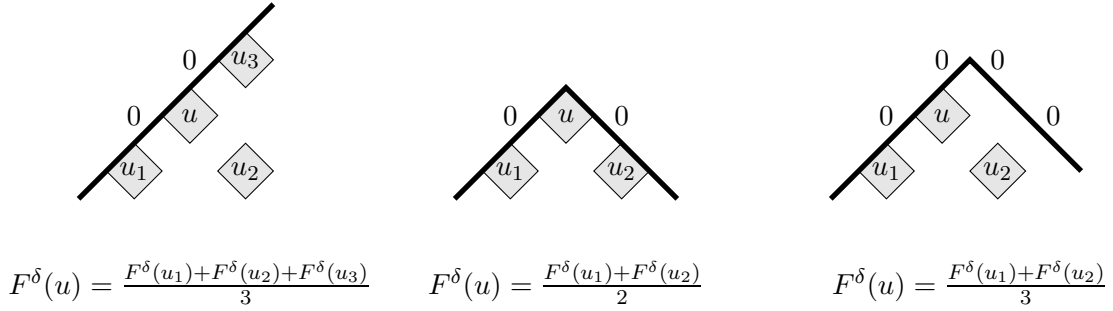

\begin{lemma}\label{bounded}
The function $\F^\delta|_{\bb{0}^\delta}$ is bounded by its boundary values in neighbourhoods of white convex corners and semibounded by its boundary values in neighbourhoods of white concave corners.
\end{lemma}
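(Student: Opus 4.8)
The plan is to regard $\F^\delta|_{\bb{0}^\delta}$ as a real-valued function on the sublattice of $\bb{0}$-squares and to show that, on a neighbourhood $U^\delta$ of the white corner that contains no other corner, it satisfies a maximum principle (in the convex case) or a one-sided maximum principle (in the concave case); the asserted boundedness and semiboundedness then follow directly from the definitions. The first step is to collect the relevant local identities. At every $\bb{0}$-square $u$ lying in the interior of $\om^\delta$ one has $\Delta^\delta\F^\delta(u)=4[\partial^\delta\bar\partial^\delta\F^\delta](u)=0$, so $\F^\delta(u)$ is the arithmetic mean of the values of $\F^\delta$ at its four leap-frog $\bb{0}$-neighbours. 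At $\bb{0}$-squares $u$ adjacent to $\dom^\delta$ one uses instead the identities of Fig.~\ref{3bound}, which come from $[\bar\partial^\delta\F^\delta](v)=0$ at the white faces $v$ touching $\dom^\delta$ together with $\F^\delta|_{\db^\delta}=0$: along a straight part of $\dom^\delta$ the value $\F^\delta(u)$ is a convex combination of the values at its interior $\bb{0}$-neighbours with coefficients summing to $1$; at a $\bb{0}$-square at a white convex corner one gets $\F^\delta(u)=\tfrac12(\F^\delta(u_1)+\F^\delta(u_2))$; and at a $\bb{0}$-square at a white concave corner one gets $\F^\delta(u)=\tfrac13(\F^\delta(u_1)+\F^\delta(u_2))$. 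In all cases the coefficients are non-negative; they sum to $1$ except in the last, where they sum to $\tfrac23$.

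For a white \emph{convex} corner, it follows from the classification above that every $\bb{0}$-square of $U^\delta$ not on $\partial U^\delta$ satisfies an identity expressing its value as a convex combination of neighbouring $\bb{0}$-values with coefficients summing to $1$ (the $\tfrac12+\tfrac12$ identity at the corner square included). Hence if the maximum of $\F^\delta|_{\bb{0}^\delta}$ over $U^\delta$ were attained at such a square, all of its $\bb{0}$-neighbours would attain it as well, and, propagating along the connected $\bb{0}$-lattice, the maximum would be attained on $\partial U^\delta$; the same argument applies to the minimum. Therefore both extrema are attained on $\partial U^\delta$, i.e.\ $\F^\delta|_{\bb{0}^\delta}$ is bounded in $U^\delta$.

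For a white \emph{concave} corner, let $u_\ast$ be the $\bb{0}$-square(s) at the corner, where $3\F^\delta(u_\ast)=\F^\delta(u_1)+\F^\delta(u_2)$, while every other $\bb{0}$-square of $U^\delta$ not on $\partial U^\delta$ still has coefficient sum $1$. Put $M=\max_{U^\delta}\F^\delta|_{\bb{0}^\delta}$. If $M>0$ then $M$ cannot be attained at $u_\ast$, since $3M=\F^\delta(u_1)+\F^\delta(u_2)\le 2M$ would force $M\le 0$; hence, were $M$ attained at an interior square, it would be attained at some square $\neq u_\ast$, and the propagation argument above — which then never visits $u_\ast$ as long as the value equals $M>0$ — would carry $M$ to $\partial U^\delta$. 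So $M>0$ implies that $M$ is attained on $\partial U^\delta$; symmetrically, if $m=\min_{U^\delta}\F^\delta|_{\bb{0}^\delta}<0$ then $m$ is attained on $\partial U^\delta$. If $\F^\delta|_{\bb{0}^\delta}\equiv 0$ on $U^\delta$ there is nothing to prove; otherwise at least one of $M>0$, $m<0$ holds, so the maximum or the minimum is attained on $\partial U^\delta$ and $\F^\delta|_{\bb{0}^\delta}$ is semibounded in $U^\delta$.

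The step I expect to be the main obstacle is the combinatorial bookkeeping underlying Fig.~\ref{3bound}: for each type and orientation of a white corner and each adjacent $\bb{0}$-square one must determine, from the types of the surrounding squares and the discrete holomorphicity relations at the boundary white faces, exactly which of the three local identities holds — and, crucially, verify that the defective identity $3\F^\delta(u)=\F^\delta(u_1)+\F^\delta(u_2)$ does occur at a $\bb{0}$-square near a white concave corner but never at a $\bb{0}$-square near a white convex corner (one should also record, via Remark~\ref{change}, that $\F^\delta|_{\bb{0}^\delta}$ does not change its boundary behaviour at black corners, so no defective identity arises there either). Once this dictionary is established, the maximum-principle arguments above are routine.
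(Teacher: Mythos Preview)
Your proposal is correct and follows essentially the same approach as the paper: both argue that $\F^\delta|_{\bb{0}^\delta}$ satisfies a discrete mean-value property (with weights summing to $1$) everywhere near a white convex corner, giving the full maximum principle and hence boundedness, while near a white concave corner there is a single defective $\bb{0}$-square where this fails, forcing at least one of the extrema to the boundary of the neighbourhood. The only notable difference is in the concave case: you exploit the explicit subconvexity $3\F^\delta(u_\ast)=\F^\delta(u_1)+\F^\delta(u_2)$ to show via a sign argument that a positive maximum (resp.\ negative minimum) cannot sit at $u_\ast$, whereas the paper simply observes that a \emph{single} defective square cannot be both the strict maximum and the strict minimum, so one of them must propagate to $\partial U^\delta$ --- a slightly more robust formulation that does not depend on the precise coefficients of the defective relation.
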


\begin{proof}
Note that the function $\F^\delta|_{\bb{0}^\delta}$ is discrete harmonic in ${\bb{0}^\delta}$, except at the squares of type $\bb{0}^\delta$ adjacent to $\{\nwvpugl{k}^\delta\}$ and $\www^\delta$, where $\{\nwvpugl{k}^\delta\}$ is the set of white squares in the white concave corners. In particular, the function 
$\F^\delta$ is bounded by its boundary values in vicinities of white convex 
corners $\{\wvupugl{k}\}$, 
see Fig.~\ref{3bound}.

Let us consider a neighbourhood of a corner $\nwvpugl{k}^\delta$. Note that in this neighbourhood the function $\F^\delta|_{\bb{0}^\delta}$ is discrete harmonic everywhere except at the unique black square of type ${\bb{0}^\delta}$ adjacent to $\nwvpugl{k}^\delta$. Note that at this square either the maximum or the minimum of $\F^\delta$ can be reached, thus $\F^\delta|_{\bb{0}^\delta}$ is semi-bounded near $\nwvpugl{k}^\delta$.
\end{proof}

\subsection{The continuous analogue of the functions $\F^\delta$~and~$\G^\delta$} 
In this section we will describe the continuous analogue of the functions $\F^\delta$~and~$\G^\delta$.
Also, we will show that the primitive of their product is the harmonic measure.


\begin{prop}\label{1-4}
Let $\om$ be a simply connected Jordan domain.  Let $\www$ be a boundary point which lies on a straight segment of the boundary of $\om$, and this segment goes to the direction~$\lambda$. Let $\{\nwvupugl{k} \}^{n+1}_{k=1}\cup\{\nwvpugl{k} \}^{n-1}_{k=1}$ be a set of marked points on $\dom\smallsetminus\{\www\}$. Then there exists the unique holomorphic function $f_\om$ on $\om$ such that:

\begin{enumerate}
\item[$\rhd$] $f_\om(z)=\frac{\lambda}{z-\www}+O(1)$ in a vicinity of the point $\www$; 

\item[$\rhd$] $f_\om$ is bounded in vicinities of the points $\nwvupugl{k}$;

\item[$\rhd$] $f_\om$ is semi-bounded (either from above or from below) in vicinities of the points $\nwvpugl{k}$;

\item[$\rhd$] along each boundary arc between marked points $\{\nwvupugl{k} \}^{n+1}_{k=1}\cup\{\nwvpugl{k} \}^{n-1}_{k=1}$, one has either $\re[f_\om]=0$ or $\im[f_\om]=0$;

\item[$\rhd$] aforementioned boundary conditions change at all marked points $\nwvpugl{k}$ and $\, \nwvupugl{s}$.  
\end{enumerate}
\end{prop}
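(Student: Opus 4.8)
The plan is to reduce the statement to a purely complex-analytic fact about a meromorphic function on the \emph{double} of $\om$, by passing to the square $f_\om^{2}$, and then to recover $f_\om$ itself by extracting a holomorphic square root on the simply connected domain $\intom$. Since $\om$ is a Jordan domain with smooth boundary, the Riemann map $\om\to\mathbb D$ extends to a homeomorphism of the closures by Carath\'eodory's theorem, and it carries the tangent direction of $\dom$ at $\www$ --- which is $\lambda$ --- to the boundary tangent direction of the image; so the transformed problem has the same shape, and I may as well regard $\widehat\om$ as the Riemann sphere $\widehat{\mathbb C}$ obtained by gluing $\om$ to a mirror copy along $\dom$, equipped with the antiholomorphic involution whose fixed circle is $\dom$.

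\emph{Uniqueness.} Suppose $f_\om$ satisfies the five conditions. On each boundary arc between consecutive marked points one has $\re f_\om=0$ or $\im f_\om=0$, so $f_\om^{2}\le 0$ on the first kind of arc and $f_\om^{2}\ge 0$ on the second; in either case $f_\om^{2}$ is \emph{real} on $\dom$ away from the marked points and from $\www$. Hence $\Phi:=f_\om^{2}$ extends, by Schwarz reflection across $\dom$, to a meromorphic function on the sphere $\widehat\om$. I then pin down its divisor. Near $\www$, the normalisation $f_\om(z)=\tfrac{\lambda}{z-\www}+O(1)$ gives $\Phi(z)=\tfrac{\lambda^{2}}{(z-\www)^{2}}+O((z-\www)^{-1})=\tfrac{i}{(z-\www)^{2}}+\cdots$, so $\Phi$ has a pole of order exactly $2$ at $\www$ with leading coefficient $i$. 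At each marked point the boundary condition switches type, so $\Phi$ changes sign and hence has a zero or pole of \emph{odd} order there; "bounded near $\nwvupugl{k}$" forces $\Phi$ bounded, hence a zero of order $\ge 1$ at each $\nwvupugl{k}$, while "semi-bounded near $\nwvpugl{k}$" --- which precisely forbids $\re f_\om$ (or $\im f_\om$) from attaining both $+\infty$ and $-\infty$ near $\nwvpugl{k}$ --- rules out a pole of order $\ge 3$ there. A short balancing of total zero- and pole-orders on the sphere, using that $\#\{\nwvupugl{k}\}=\#\{\nwvpugl{k}\}+2=n+1$ (Lemma~\ref{combi-lemma}), then forces each $\nwvupugl{k}$ to be a simple zero, each $\nwvpugl{k}$ a simple pole, and $\Phi$ to have no other zero or pole, so
\[
\operatorname{div}\Phi=-2\,[\www]+\sum_{k=1}^{n+1}[\nwvupugl{k}]-\sum_{k=1}^{n-1}[\nwvpugl{k}],
\]
a divisor of degree $0$. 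A meromorphic function on the sphere is determined by its divisor up to a multiplicative constant, which is fixed by the leading coefficient $i$ at $\www$; so $\Phi$ is unique, hence $f_\om=\pm\sqrt{\Phi}$, with the sign fixed by $f_\om\sim\lambda/(z-\www)$ rather than $-\lambda/(z-\www)$.

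\emph{Existence.} Running this in reverse, I define $\Phi$ to be the rational function on $\widehat\om\cong\widehat{\mathbb C}$ with exactly the divisor above and with $(z-\www)^{-2}$-coefficient equal to $i$; this exists and is unique up to the one remaining real scaling. I check that $\Phi$ is invariant under the antiholomorphic involution of $\widehat\om$, so that $\Phi$ is real on $\dom$: the divisor is invariant because $\www$ and all marked points lie on the fixed circle, and the residual real scaling is killed by the normalisation at $\www$ --- here I use that $\lambda$ is the tangent direction of $\dom$ at $\www$, which makes the normalisation compatible with reality (on the segment $z=\www+t\lambda$ one gets $\Phi\sim i/(t\lambda)^{2}=t^{-2}>0$). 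The sign of the real function $\Phi|_{\dom}$ flips exactly at its simple zeros and poles, i.e.\ at all $2n$ marked points, hence it alternates around $\dom$; anchoring it to be positive on the arc through $\www$ makes it agree with the prescribed alternating pattern of boundary conditions (which also switches at every marked point and reads $\im f_\om=0$ on the arc through $\www$). Finally $\Phi$ has no zero or pole in $\intom$, which is simply connected, so $\Phi$ admits a holomorphic square root $f_\om$ there; fixing the branch by $f_\om\sim+\lambda/(z-\www)$, all five properties are read off from the divisor and sign pattern of $\Phi$: a simple zero of $\Phi$ at $\nwvupugl{k}$ gives $f_\om=O(|z-\nwvupugl{k}|^{1/2})$ (bounded), a simple pole at $\nwvpugl{k}$ gives $|f_\om|\sim|z-\nwvpugl{k}|^{-1/2}$ (semi-bounded), and $\Phi>0$ resp.\ $\Phi<0$ on an arc gives $\im f_\om=0$ resp.\ $\re f_\om=0$ there.

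\emph{Main obstacle.} The crux is the consistency check inside the existence step: that the divisor prescription, the reflection symmetry, the normalisation at $\www$, and the required alternating boundary pattern can all be met \emph{simultaneously}. This is precisely where the parity input of Lemma~\ref{combi-lemma} (the surplus of two convex over concave white corners, which makes $\operatorname{div}\Phi$ have degree $0$ and the sign pattern close up) and the alignment of the pole direction $\lambda$ with the boundary tangent at $\www$ are indispensable. A secondary technical point is to justify, from the hypotheses, that $f_\om$ (equivalently $f_\om^{2}$) extends to $\dom$ away from the marked points regularly enough for the Schwarz reflection and the divisor count to be legitimate, and to make the quantitative content of "bounded"/"semi-bounded" at the marked points precise enough that they translate exactly into "$\Phi$ has a simple zero" / "$\Phi$ has a simple pole".
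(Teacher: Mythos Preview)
Your argument is correct and is a repackaging of the paper's proof rather than a genuinely different route. The paper maps $\om$ conformally to the upper half-plane, then near each $\phi(\nwvpugl{k})$ performs the local substitution $w\mapsto w^2$ so that Schwarz reflection applies, uses Picard plus semi-boundedness to bound $f_{\mathbb H}$ by $(w-\phi(\nwvpugl{k}))^{-1/2}$ (and by $(w-\phi(\nwvupugl{k}))^{1/2}$ at the convex marks), and then multiplies $f_{\mathbb H}$ by the compensating product $(w-\phi(\www))\prod(w-\phi(\nwvpugl{k}))^{1/2}\prod(w-\phi(\nwvupugl{k}))^{-1/2}$ to obtain a bounded entire function, hence a constant by Liouville; this yields the explicit formula~(\ref{f=}). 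Your version squares first and reflects globally: the paper's compensating product is exactly your divisor statement for $\Phi=f_\om^2$, and ``Liouville on the product'' is your ``a rational function on the sphere is determined by its divisor up to a constant''. The trade-off is that you avoid carrying branch cuts through the computation, at the price of extracting a holomorphic square root of $\Phi$ on the simply connected $\intom$ at the end; the paper's version gives the explicit formula~(\ref{f=}) in one line, which is then reused in Proposition~\ref{f*g}.

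One point you left implicit in the uniqueness balancing deserves a sentence: you need that $\Phi=f_\om^2$ has no poles in $\intom$ (clear, since $f_\om$ is holomorphic there) and that any interior zeros have even order, and the same on the mirror copy by reflection. With this, writing $a_k\ge 1$ odd for the order of the zero of $\Phi$ at $\nwvupugl{k}$ and $b_k\ge -1$ odd for the order at $\nwvpugl{k}$, the degree-zero condition reads $\sum_{k=1}^{n+1} a_k+\sum_{k=1}^{n-1} b_k+(\text{nonnegative even interior contribution})=2$, and since $\sum a_k+\sum b_k\ge (n+1)-(n-1)=2$ already, equality is forced, giving all $a_k=1$, all $b_k=-1$, and no interior zeros, as you assert.
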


\begin{proof}
Let $\phi$ be a conformal mapping of the domain $\om$ onto the upper half plane $\mathbb{H}$ such that none of the marked points and $\www$ is mapped onto infinity. Then $f_\mathbb{H}:= f_\om\circ \phi^{-1}$ is a holomorphic function on $\mathbb{H}$, which satisfies the following conditions:

\begin{enumerate}
\item \label{pol} $f_\mathbb{H}(w)=\frac{\lambda\cdot\phi'(\www)}{w-\phi(\www)}+O(1)$ in a vicinity of the point $\phi(\www)$;

\item \label{ogr} $f_\mathbb{H}$ is bounded in vicinities of the points $\phi(\nwvupugl{k})$;

\item \label{pologr}$f_\mathbb{H}$ is semi-bounded (either from above or from below) in vicinities of the points $\phi(\nwvpugl{k})$;

\item \label{inf} $f_\mathbb{H}$ is bounded at infinity;

\item \label{RI}  on each segment of the real line between the points of the set $\{\phi(\nwvpugl{k})\}^{n-1}_{k=1}\cup\{\phi(\nwvupugl{k})\}^{n+1}_{k=1}$ one has either $\re[f_\mathbb{H}]=0$ or $\im[f_\mathbb{H}]=0$;

\item \label{smena} the function $f_\mathbb{H}$ changes the boundary conditions at all points $\phi(\nwvpugl{k})$ and $\, \phi(\nwvupugl{k})$, and only at these points.
\end{enumerate}

For a given $k$ let us add a constant to $\phi$ so that $\phi(\nwvpugl{k})=0$.
Let us consider a function $f_\mathbb{H}(w^2)$ in a vicinity of zero. The boundary conditions $(5)$, $(6)$ of the function $f_\mathbb{H}(w^2)$ allow one to extend this function to a punctured vicinity of $0$ by the Schwarz reflection principle. 

Let us show that $f_\mathbb{H}(w^2)= O(1/w)$ as $w \to 0$. Great Picard's Theorem together with the semi-boundedness condition $(3)$ implies that $f_\mathbb{H}(w^2)$  cannot have an essential singularity at zero. So, the function $f_\mathbb{H}(w^2)$ either is regular or has a pole at zero. This pole must be simple due to $(3)$, and hence  $f_\mathbb{H}(w)= O\big((w-\phi(\nwvpugl{k}))^{-\frac12}\big)$ in a vicinity of~$\nwvpugl{k}$.

Similarly, conditions $(2)$, $(5)$ and $(6)$ imply that  $f_\mathbb{H}(w)= O\big((w-\phi(\nwvupugl{k}))^{\frac12}\big)$ in a vicinity of each of the points~$\nwvupugl{k}$

Consider a function 
\[f_\mathbb{H}(w)\cdot(w-\phi(\www))\cdot\prod_{k=1}^{n-1}(w-\phi(\nwvpugl{k}))^{\frac12}\cdot\prod_{k=1}^{n+1}(w-\phi(\nwvupugl{k}))^{-\frac12},
\]
which can be extended to a bounded function in the whole plane by the Schwarz reflection principle. 
Hence it is a constant, and
\[f_\mathbb{H}(w)=\frac{c_\phi}{w-\phi(\www)}\cdot\prod_{k=1}^{n+1}(w-\phi(\nwvupugl{k}))^{\frac12}\cdot\prod_{k=1}^{n-1}(w-\phi(\nwvpugl{k}))^{-\frac12},\]
where the real constant $c_\phi$ can be determined from the condition $(\ref{pol})$.

Since $f_\mathbb{H}= f\circ \phi^{-1}$, we obtain 
\begin{equation}\label{f=}
f_\om(z)=\frac{c_\phi}{(\phi(z)-\phi(\www))} \cdot\prod_{k=1}^{n+1}(\phi(z)-\phi(\nwvupugl{k}))^{\frac12}\cdot\prod_{k=1}^{n-1}(\phi(z)-\phi(\nwvpugl{k}))^{-\frac12},
\end{equation}
where $c_{\phi}$ is a real constant that depends on $\phi$.
\end{proof}

\begin{remark}\label{v_0_inner}
The previous proposition also holds if $\www$ is an inner point of $\om$. In this case 
\[f_\om(z)=c_\phi\cdot\left(\frac{1}{\phi(z)-\phi(\www)}-\frac{1}{\phi(z)-\overline{\phi(\www)}}\right)\cdot\prod_{k=1}^{n+1}(\phi(z)-\phi(\nwvupugl{k}))^{\frac12}\cdot\prod_{k=1}^{n-1}(\phi(z)-\phi(\nwvpugl{k}))^{-\frac12}.\]
\end{remark}

Similarly, for the set of boundary points  $\{\nbvpugl{k}\}^{m-1}_{s=1}\cup\{\nbvupugl{k}\}^{m+1}_{s=1}$ and the point $\bbb$ on a straight segment of the boundary of $\om$ parallel to vector $\bar{\lambda}$, there exists the unique holomorphic function $g$, which satisfies conditions analogous to conditions from Proposition~\ref{1-4}:

\begin{enumerate}
\item[$\rhd$] $g_\om(z)=\frac{i}{z-\bbb}+O(1)$ in a vicinity of the point $\bbb$; 

\item[$\rhd$] $g_\om$ is bounded in vicinities of the points $\nbvupugl{k}$;

\item[$\rhd$] $g_\om$ is semi-bounded in vicinities of the points $\nbvpugl{k}$;

\item[$\rhd$] along each boundary segment between boundary points of the set $\{\nbvupugl{k} \}^{m+1}_{k=1}\cup\{\nbvpugl{k} \}^{m-1}_{k=1}$, one has either $\re[\bar{\lambda}g_\om]=0$ or $\re[\lambda g_\om]=0$;

\item[$\rhd$] aforementioned boundary conditions of the function $g_\om$ change at all points $\nbvpugl{k}$ and $\, \nbvupugl{s}$. 
\end{enumerate}

This function is written as follows
\begin{equation}\label{g=}
g_\om(z)=\frac{\lambda\widetilde{c_{\phi}}}{(\phi(z)-\phi(\bbb))} \cdot \prod_{k=1}^{m+1}(\phi(z)-\phi(\nbvupugl{k}))^{\frac12}\cdot\prod_{k=1}^{m-1}(\phi(z)-\phi(\nbvpugl{k}))^{-\frac12},
\end{equation} 
where $\widetilde{c_{\phi}}$ is a real constant that depends on $\phi$.

It is worth noting that the product of the functions $f_\om(z)$ and $g_\om(z)$ defined by $(\ref{f=})$ and $(\ref{g=})$ , respectively, does not depend on the colours of corners of $\om$ (while each of $f_\om(z)$, $g_\om(z)$ does depend on these colours).

\begin{prop}\label{f*g}
Let $\om$ be a polygon in $\mathbb{C}$ with sides parallel to vectors $\lambda$ and $\bar{\lambda}$. Let $\www$~and~$\bbb$ be the points on the straight part of the boundary of the polygon~$\om$. Let  $\{\nwvupugl{k}\}^{n+1}_{k=1}\cup\{\nbvupugl{s}\}^{m+1}_{s=1}$ be the set of vertices of the convex corners of the polygon~$\om$, and $\{\nwvpugl{k}\}^{n-1}_{k=1}\cup\{\nbvpugl{s}\}^{m-1}_{s=1}$ be the set of vertices of the concave corners of the polygon~$\om$.  Assume that the boundary arc $(\bbb\www)$ contains~$0$.

Let functions $f_\om$ and $g_\om$ be defined as in Proposition~$\ref{1-4}$, then the function 
\[\int^w_0 \re[f_\om(z)g_\om(z)dz]\] 
is proportional to the harmonic measure $\operatorname{hm}_\om(w, (\www\bbb))$ in the domain~$\om$.
\end{prop}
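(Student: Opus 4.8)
The plan is to transport the picture to the upper half-plane by the conformal map $\phi$ of Proposition~\ref{1-4} and to combine the explicit formulas (\ref{f=}), (\ref{g=}) with the Schwarz--Christoffel representation of $\psi:=\phi^{-1}$. Writing $t=\phi(z)$, one has $f_\om(z)g_\om(z)\,dz=f_\mathbb{H}(t)g_\mathbb{H}(t)\,\psi'(t)\,dt$, where $f_\mathbb{H}=f_\om\circ\psi$, $g_\mathbb{H}=g_\om\circ\psi$. By (\ref{f=}) and (\ref{g=}), the product $f_\mathbb{H}g_\mathbb{H}$ carries a factor $(t-\phi(p))^{1/2}$ at every convex corner $p$ of $\om$ and $(t-\phi(p))^{-1/2}$ at every concave corner, besides the two simple poles at $\phi(\www)$, $\phi(\bbb)$ and an overall constant $C=c_\phi\lambda\widetilde{c_\phi}$. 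On the other hand the Schwarz--Christoffel factor $\psi'(t)=c\prod_p(t-\phi(p))^{\alpha_p/\pi-1}$ contributes $(t-\phi(p))^{-1/2}$ at the convex corners (interior angle $\pi/2$) and $(t-\phi(p))^{1/2}$ at the concave corners (interior angle $3\pi/2$). All corner factors therefore cancel, leaving
\[
\Phi(w):=\int_0^w f_\om(z)g_\om(z)\,dz = Cc\int_{\phi(0)}^{\phi(w)}\frac{dt}{(t-\phi(\www))(t-\phi(\bbb))} = \kappa\,\log\frac{\phi(w)-\phi(\www)}{\phi(w)-\phi(\bbb)}+\mathrm{const},
\]
where $\kappa=Cc/(\phi(\www)-\phi(\bbb))$ and the additive constant is chosen so that $\Phi(0)=0$. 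Since $\om$ is simply connected and $f_\om g_\om$ is holomorphic in its interior (its singularities --- poles at $\www,\bbb$ and branch points at the corners --- all lie on $\dom$), $\Phi$ is a well-defined holomorphic function and $H:=\re\Phi$ is harmonic in $\om$.

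The remaining point is that $\kappa$ is a nonzero purely imaginary number. Reading off the coefficient of the pole of $f_\om g_\om$ at $\www$ in two ways: from $f_\om(z)=\frac{\lambda}{z-\www}+O(1)$ and $g_\om(\www)\neq 0$ one gets $\kappa=\lambda\,g_\om(\www)\neq 0$. Now $\www$ lies on a boundary segment in direction $\lambda$, on which $g_\om$ satisfies $\re[\lambda g_\om]=0$ --- this is the boundary condition attached to that segment, consistently with the discrete model, where $\www\in\dintw{0}$ forces every white square along a $\lambda$-segment to be of type $\ww{0}$, so $\G^\delta\in\lambda\mathbb{R}$ there. Thus $g_\om(\www)\in\lambda\mathbb{R}$ and $\kappa=\lambda g_\om(\www)\in\lambda^2\mathbb{R}=i\mathbb{R}$. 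Writing $\kappa=i\mu$ with $\mu\in\mathbb{R}\setminus\{0\}$, we obtain $H(w)=-\mu\big(\arg(\phi(w)-\phi(\www))-\arg(\phi(w)-\phi(\bbb))\big)+\mathrm{const}$, and the bracket equals, up to sign, $\pi$ times the harmonic measure in $\mathbb{H}$ of the interval on $\mathbb{R}$ with endpoints $\phi(\www),\phi(\bbb)$, i.e.\ of the $\phi$-image of the arc $(\www\bbb)$; by conformal invariance of harmonic measure this is $\mp\pi\operatorname{hm}_\om(w,(\www\bbb))$. The additive constant vanishes, since $H(0)=\re\Phi(0)=0$ and $\operatorname{hm}_\om(0,(\www\bbb))=0$ because $0\in(\bbb\www)$ lies off the arc $(\www\bbb)$. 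Hence $\int_0^w\re[f_\om g_\om\,dz]=\pm\mu\pi\,\operatorname{hm}_\om(w,(\www\bbb))$, which is the assertion.

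An equivalent and perhaps cleaner route to $\kappa\in i\mathbb{R}$, which I would also record, is to verify directly that $\frac{d}{ds}H=\re[f_\om g_\om\,\tau]$ vanishes along each boundary segment: there the unit tangent $\tau\in\{\pm\lambda,\pm\bar{\lambda}\}$, and the boundary conditions of $f_\om$ and $g_\om$ are matched (on a $\lambda$-segment the pair $(f_\om,g_\om)$ lies in $(\mathbb{R},\lambda\mathbb{R})$ or in $(i\mathbb{R},\bar{\lambda}\mathbb{R})$, symmetrically on $\bar{\lambda}$-segments), so $\lambda f_\om g_\om\in i\mathbb{R}$ in every case; hence $H$ is constant on each of the two arcs and, with the reduced formula for $\Phi$, this forces $\re[Cc]=0$. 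This also displays $H$ as a bounded harmonic function with exactly two distinct boundary values --- the defining property of a multiple of $\operatorname{hm}_\om(\cdot,(\www\bbb))$, up to an additive constant removed by the normalisation $\Phi(0)=0$.

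The main obstacle is exactly this matching of the boundary conditions of $f_\om$ and $g_\om$ segment by segment (equivalently, that $Cc$ is purely imaginary, not real). Everything else is routine: the Schwarz--Christoffel cancellation of the corner factors, simple connectivity giving well-definedness of $\Phi$, convergence of the integral at the basepoint $0$, and the identification of the $\arg$-difference with harmonic measure. The matching itself is a lattice-combinatorial fact carried over from the discrete construction, where along any straight boundary segment all adjacent black squares have one type and all adjacent white squares the linked type.
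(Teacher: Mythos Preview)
Your proof follows essentially the same route as the paper's: pull back by $\phi$, use (\ref{f=}), (\ref{g=}), let the Schwarz--Christoffel derivative cancel all corner factors, integrate the resulting rational function, and identify the imaginary part of the logarithm with harmonic measure.

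The one place you diverge is in showing $\kappa\in i\mathbb{R}$, which you flag as ``the main obstacle'' and attack via residues and a boundary-condition matching argument. In the paper this is a one-line observation, not an obstacle: the constants $c_\phi$, $\widetilde{c_\phi}$ in (\ref{f=}), (\ref{g=}) are real by construction, and because the sides of $\om$ are parallel to $\lambda,\bar\lambda$, the Schwarz--Christoffel derivative is $\psi'(w)=\lambda c_\psi\prod(\cdots)$ with $c_\psi$ real as well. Hence your $Cc$ equals $c_\phi\cdot(\lambda\widetilde{c_\phi})\cdot(\lambda c_\psi)=\lambda^2 c_\phi\widetilde{c_\phi}c_\psi = i\cdot(\text{real})$, and since $\phi(\www)-\phi(\bbb)\in\mathbb{R}$, $\kappa\in i\mathbb{R}$ follows immediately. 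Your residue computation $\kappa=\lambda g_\om(\www)$ is correct, and so is the conclusion $g_\om(\www)\in\lambda\mathbb{R}$, but the justification you give (via the discrete $\G^\delta$ at $\www^\delta$) is shaky because convergence $\G^\delta\to g_\om$ is only established on compacta in the interior; you would need a separate boundary argument. Likewise, your second route---checking $\re[f_\om g_\om\,\tau]=0$ side by side---is correct once the pairing $(\mathbb{R},\lambda\mathbb{R})$ vs.\ $(i\mathbb{R},\bar\lambda\mathbb{R})$ is established, but that pairing is precisely what the explicit real constants encode, so you are proving a consequence of the formulas rather than using the formulas directly.

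In short: the architecture is right, but you are working harder than necessary on the constant; record that $c_\psi$ is real and the ``obstacle'' disappears.
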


\begin{proof} 
Let us consider the product of functions $f_\om(z)$ and $g_\om(z)$. It equals
\begin{align*}
f_\om(z)\cdot g_\om(z)&=\frac{\lambda c_{\phi}\widetilde{c_{\phi}}}{(\phi(z)-\phi(\www))\cdot(\phi(z)-\phi(\bbb))}\times \\
&\prod_{k=1}^{n+1}(\phi(z)-\phi(\nwvupugl{k}))^{\frac12}
\cdot\prod_{k=1}^{n-1}(\phi(z)-\phi(\nwvpugl{k}))^{-\frac12}
\cdot \prod_{k=1}^{m+1}(\phi(z)-\phi(\nbvupugl{k}))^{\frac12}
\cdot\prod_{k=1}^{m-1}(\phi(z)-\phi(\nbvpugl{k}))^{-\frac12}.
\end{align*}

Let $\psi(w)$ be a conformal transformation of the upper half-plane onto the interior of a simple polygon $\om$, the inverse mapping to $\phi$. The Schwarz–Christoffel mapping theorem implies that

\begin{align*}
\psi'(w)={\lambda}c_{\psi}\cdot 
\prod_{k=1}^{n-1}(w-\phi(\nwvpugl{k}))^{\frac12}
\cdot \prod_{k=1}^{m-1}(w-\phi(\nbvpugl{k}))^{\frac12}
\cdot \prod_{k=1}^{n+1}(w-\phi(\nwvupugl{k}))^{-\frac12}
\cdot \prod_{k=1}^{m+1}(w-\phi(\nbvupugl{k}))^{-\frac12},
\end{align*}
where $c_{\psi}$ is a real constant.

Note that $\phi$ is the inverse mapping to $\psi$, so $\frac{1}{\psi'(\phi(z))}=\phi'(z).$

Therefore
\[
f(z)\cdot g(z)=\frac{\lambda c_{\phi}\widetilde{c_{\phi}}{\lambda}c_{\psi}\cdot\phi'(z)}{(\phi(z)-\phi(\www))(\phi(z)-\phi(\bbb))}=
\frac{i c_{\phi}\widetilde{c_{\phi}}c_{\psi}}{\phi(\www)-\phi(\bbb)}\cdot\left(\log\frac{(\phi(z)-\phi(\www))}{(\phi(z)-\phi(\bbb))}\right)',\]
hence $\int \re[fgdz]$ is proportional to $\frac{1}{\pi}\im \log \left(\frac{(\phi(z)-\phi(\www))}{(\phi(z)-\phi(\bbb))}\right)$ which is the harmonic measure of~$(\www\bbb)$.
\end{proof}

Now to complete the proof of Theorem~\ref{main-th} it is enough to prove convergence of functions $\F^\delta$ and~$\G^\delta$. In Section~\ref{shodimost} we will prove a more general result: the convergence of $\F^\delta$ for approximations by black-piecewise Temperleyan domains. This special type of discrete domains is defined below in Section~\ref{pwd}. Similarly, one can show the convergence of $\G^\delta$ for approximations by white-piecewise Temperleyan domains. In the setup of Proposition~\ref{f*g} 
the polygonal approximations $\om^\delta$ are $2n$-black-piecewise Temperleyan and $2m$-white-piecewise Temperleyan domains at the same time.

\begin{figure}
\begin{center}
\begin{tikzpicture}[x={(0.4cm,0cm)}, y={(-0cm,0.4cm)}]

\draw  [draw, fill=gray!20](2,4) rectangle (3,5);
\draw  [draw, fill=gray!20](2,6) rectangle (3,7);

\draw  [draw, fill=gray!20](4,4) rectangle (5,5);
\draw  [draw, fill=gray!20](4,6) rectangle (5,7);

\draw  [draw, fill=gray!20](6,4) rectangle (7,5);
\draw  [draw, fill=gray!20](6,6) rectangle (7,7);
\draw  [draw, fill=gray!20](6,8) rectangle (7,9);
\draw  [draw, fill=gray!20](6,10) rectangle (7,11);

\draw  [draw, fill=gray!20](8,2) rectangle (9,3);
\draw  [draw, fill=gray!20](8,4) rectangle (9,5);
\draw  [draw, fill=gray!20](8,6) rectangle (9,7);
\draw  [draw, fill=gray!20](8,8) rectangle (9,9);
\draw  [draw, fill=gray!20](8,10) rectangle (9,11);

\draw  [draw, fill=gray!20](10,2) rectangle (11,3);
\draw  [draw, fill=gray!20](10,4) rectangle (11,5);
\draw  [draw, fill=gray!20](10,6) rectangle (11,7);
\draw  [draw, fill=gray!20](10,8) rectangle (11,9);
\draw  [draw, fill=gray!20](10,10) rectangle (11,11);

\draw  [draw, fill=gray!20](12,2) rectangle (13,3);
\draw  [draw, fill=gray!20](12,4) rectangle (13,5);
\draw  [draw, fill=gray!20](12,6) rectangle (13,7);
\draw  [draw, fill=gray!20](12,8) rectangle (13,9);

\draw  [draw, fill=gray!20](14,2) rectangle (15,3);
\draw  [draw, fill=gray!20](14,4) rectangle (15,5);
\draw  [draw, fill=gray!20](14,6) rectangle (15,7);

\draw  [draw, fill=gray!20](16,2) rectangle (17,3);
\draw  [draw, fill=gray!20](16,4) rectangle (17,5);

\draw  [draw, fill=gray!20](18,2) rectangle (19,3);

\draw  [draw, fill=gray!20](20,2) rectangle (21,3);

\draw  [draw, fill=gray!90](1,5) rectangle (2,6);
\draw  [draw, fill=gray!90](1,7) rectangle (2,8);

\draw  [draw, fill=gray!90](3,5) rectangle (4,6);
\draw  [draw, fill=gray!90](3,7) rectangle (4,8);

\draw  [draw, fill=gray!90](5,5) rectangle (6,6);
\draw  [draw, fill=gray!90](5,7) rectangle (6,8);
\draw  [draw, fill=gray!90](5,9) rectangle (6,10);

\draw  [draw, fill=gray!90](7,1) rectangle (8,2);
\draw  [draw, fill=gray!90](7,3) rectangle (8,4);
\draw  [draw, fill=gray!90](7,5) rectangle (8,6);
\draw  [draw, fill=gray!90](7,7) rectangle (8,8);
\draw  [draw, fill=gray!90](7,9) rectangle (8,10);

\draw  [draw, fill=gray!90](9,1) rectangle (10,2);
\draw  [draw, fill=gray!90](9,3) rectangle (10,4);
\draw  [draw, fill=gray!90](9,5) rectangle (10,6);
\draw  [draw, fill=gray!90](9,7) rectangle (10,8);
\draw  [draw, fill=gray!90](9,9) rectangle (10,10);

\draw  [draw, fill=gray!90](11,1) rectangle (12,2);
\draw  [draw, fill=gray!90](11,3) rectangle (12,4);
\draw  [draw, fill=gray!90](11,5) rectangle (12,6);
\draw  [draw, fill=gray!90](11,7) rectangle (12,8);

\draw  [draw, fill=gray!90](13,1) rectangle (14,2);
\draw  [draw, fill=gray!90](13,3) rectangle (14,4);
\draw  [draw, fill=gray!90](13,5) rectangle (14,6);

\draw  [draw, fill=gray!90](15,3) rectangle (16,4);
\draw  [draw, fill=gray!90](15,1) rectangle (16,2);

\draw  [draw, fill=gray!90](17,1) rectangle (18,2);

\draw  [draw, fill=gray!90](19,1) rectangle (20,2);

\draw[draw, line width=2pt,gray!20] (1,4)--(7,4);
\draw[draw, line width=2pt,gray!20] (5,11)--(11,11)--(11,9)--(13,9)--
(13,7)--(15,7)--(15,5)--(17,5)--(17,3)--(21,3)--(21,1);
\draw[draw, line width=2pt, gray!90] (1,4)--(1,8)--(5,8)--(5,11);
\draw[draw, line width=2pt, gray!90] (21,1)--(7,1)--(7,4);

\path[draw, fill=black] (1,4) circle[radius=0.05cm];
\path[draw, fill=black] (7,4) circle[radius=0.05cm];
\path[draw, fill=black] (5,11) circle[radius=0.05cm];
\path[draw, fill=black] (21,1) circle[radius=0.05cm];
\end{tikzpicture} \caption{A $4$-black-piecewise Temperleyan domain.
}\label{pwTemp}\end{center}
\end{figure}
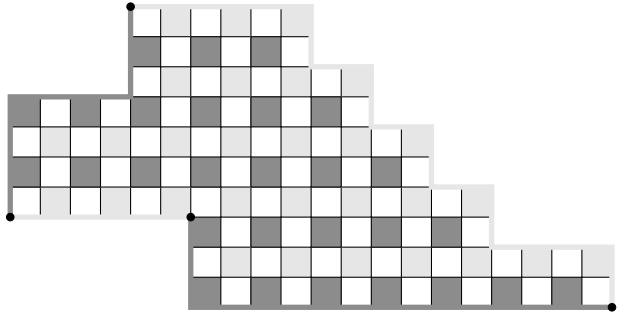

\setcounter{equation}{0}

\section{Convergence of $\F^\delta$ in black-piecewise Temperleyan domains}\label{shodimost}

\subsection{Black-piecewise Temperleyan domains}\label{pwd} 
Let us fix a natural number $n$.
A discrete domain is called a {\it $2n$-black-piecewise Temperleyan domain} if it is a domain with $n+1$ convex white corners and $n-1$ concave white corners. Consider a segment of the boundary between two neighbouring white corners; we will call such a segment a  {\it black Temperleyan segment}. Note that all black squares on this part of the boundary are of the same type: either they all are of type $\bb{0}^\delta$ or all of type $\bb{1}^\delta$ (see Fig.~\ref{pwTemp}).

Let $\om$ be a bounded, simply connected Jordan domain with a piecewise-smooth boundary and $2n$ boundary marked points $\nwvupugl{1}, \ldots, \nwvupugl{n+1}$, $\nwvpugl{1}, \ldots, \nwvpugl{n-1}$. 
For sufficiently small $\delta$,we say that a $2n$-black-piecewise Temperleyan domain $\om^\delta$ approximates $\om$ if the boundaries of the $2n$-black-piecewise Temperleyan domain are within $O(\delta)$ of the boundaries of $\om$, and if furthermore, all convex white corners  $\wvupugl{k}$ are within $O(\delta)$ of the set of marked points $\nwvupugl{k}$ and all concave white corners  $\wvpugl{j}$ are within $O(\delta)$ of the set of marked points~$\nwvpugl{j}$.

\subsection{Proof of the convergence
}\label{convergence}
Let $u^\delta$ be a square on the square lattice with mesh size $\delta$. By $B_r^\delta(u^\delta)$ we denote the set of squares on this lattice such that the distance from them to $u^\delta$ is less then or equal to $r$. Let $\partial B_r^\delta(u^\delta)$ be the set of boundary squares of the set $B_r^\delta(u^\delta)$.

Consider a discrete domain $\om^\delta$. Let $E^\delta$ be a subset of the set $\dom^\delta$. Let $\operatorname{hm}_{\om^\delta}(x^\delta,E^\delta)$ be a discrete harmonic function in $\om^\delta$ such that it is equal to $\chi_{E^\delta}$ on the boundary of $\om^\delta$, where $\chi_{E^\delta}$ is the characteristic function of the set $E^\delta$. The function $\operatorname{hm}_{\om^\delta}(x^\delta,E^\delta)$ is called the harmonic measure. Note that the harmonic measure is a probabilistic measure for any fixed $x^\delta\in\om^\delta$. Note also that the value of $\operatorname{hm}_{\om^\delta}(x^\delta,E^\delta)$ equals to the probability that a simple random walk starting at $x$ first hits the boundary of the domain $\om^\delta$ on the set $E^\delta$.

Let $F_{\mathrm{harm}}^\delta$ be a discrete harmonic function in $\om^\delta$ defined on the set $\om^\delta\cup\dom^\delta$. Then it is easy to see that 
\[
F_{\mathrm{harm}}^\delta(x^\delta)=\sum\limits_{y^\delta\in\dom^\delta}F_{\mathrm{harm}}^\delta(y^\delta)\cdot\operatorname{hm}_{\om^\delta}(x^\delta,\{y^\delta\}).
\]

\begin{remark}
From now on we assume that $\delta > 0$ and $r > 0$ are chosen so that 
the discrete punctured vicinity 
$B_{r}^{\delta}(\wvpugl{k})\smallsetminus\{\wvpugl{k}\}$ contains neither $\www^\delta$ nor white corner squares of $\om^\delta$ for all $k\in\{1,\ldots,n-1\}$.  
\end{remark}

\begin{figure}
\begin{center}
\begin{tikzpicture}[x={(0.33cm,-0.33cm)}, y={(0.33cm,0.33cm)}]
\path (6,8) node[name=l1, shape=coordinate]{};
\path (7,8) node[name=l2, shape=coordinate]{};
\path (7,9) node[name=l3, shape=coordinate]{};
\path (6,9) node[name=l4, shape=coordinate]{};
\path[draw, fill=gray!20] (l1)--(l2)--(l3)--(l4)--cycle;

\path (8,8) node[name=l1, shape=coordinate]{};
\path (9,8) node[name=l2, shape=coordinate]{};
\path (9,9) node[name=l3, shape=coordinate]{};
\path (8,9) node[name=l4, shape=coordinate]{};
\path[draw, fill=gray!20] (l1)--(l2)--(l3)--(l4)--cycle;

\path (8,6) node[name=l1, shape=coordinate]{};
\path (9,6) node[name=l2, shape=coordinate]{};
\path (9,7) node[name=l3, shape=coordinate]{};
\path (8,7) node[name=l4, shape=coordinate]{};
\path[draw, fill=gray!20] (l1)--(l2)--(l3)--(l4)--cycle;

\path (8,4) node[name=l1, shape=coordinate]{};
\path (9,4) node[name=l2, shape=coordinate]{};
\path (9,5) node[name=l3, shape=coordinate]{};
\path (8,5) node[name=l4, shape=coordinate]{};
\path[draw, fill=gray!20] (l1)--(l2)--(l3)--(l4)--cycle;

\path (6,4) node[name=l1, shape=coordinate]{};
\path (7,4) node[name=l2, shape=coordinate]{};
\path (7,5) node[name=l3, shape=coordinate]{};
\path (6,5) node[name=l4, shape=coordinate]{};
\path[draw, fill=gray!20] (l1)--(l2)--(l3)--(l4)--cycle;

\path (6,2) node[name=l1, shape=coordinate]{};
\path (7,2) node[name=l2, shape=coordinate]{};
\path (7,3) node[name=l3, shape=coordinate]{};
\path (6,3) node[name=l4, shape=coordinate]{};
\path[draw, fill=gray!20] (l1)--(l2)--(l3)--(l4)--cycle;
 
\path (4,2) node[name=l1, shape=coordinate]{};
\path (5,2) node[name=l2, shape=coordinate]{};
\path (5,3) node[name=l3, shape=coordinate]{};
\path (4,3) node[name=l4, shape=coordinate]{};
\path[draw, fill=gray!20] (l1)--(l2)--(l3)--(l4)--cycle;

\path (2,2) node[name=l1, shape=coordinate]{};
\path (3,2) node[name=l2, shape=coordinate]{};
\path (3,3) node[name=l3, shape=coordinate]{};
\path (2,3) node[name=l4, shape=coordinate]{};
\path[draw, fill=gray!20] (l1)--(l2)--(l3)--(l4)--cycle;

\path (0,0) node[name=l1, shape=coordinate]{};
\path (1,0) node[name=l2, shape=coordinate]{};
\path (1,1) node[name=l3, shape=coordinate]{};
\path (0,1) node[name=l4, shape=coordinate]{};
\path[draw, fill=gray!20] (l1)--(l2)--(l3)--(l4)--cycle;
\path ($1/2*(l1)+1/2*(l3)$) node[]{$\tiny y^\delta$};

\path (2,0) node[name=l1, shape=coordinate]{};
\path (3,0) node[name=l2, shape=coordinate]{};
\path (3,1) node[name=l3, shape=coordinate]{};
\path (2,1) node[name=l4, shape=coordinate]{};
\path[draw, fill=gray!20] (l1)--(l2)--(l3)--(l4)--cycle;

\path[draw][line width=2pt] (3,8)--(6,8)--(6,12);
 
\end{tikzpicture}\caption{A path on the set ${\color{gray}\blacklozenge}_0^\delta$ from the square $y^\delta$ to the square adjacent to $\wvpugl{k}$.}
\label{gamma}\end{center}
\end{figure}

\begin{lemma}\label{omega > c}
Let $x^\delta$ be a black square in the middle of one of the arcs of the set $\partial B_{r}^{\delta}(\wvpugl{k})\cap\bb{0}^\delta$. Let $y^\delta\in \bb{0}^\delta$ be a black square on the boundary of $B_{r/2}^\delta(\wvpugl{k})$. Let $\gamma^\delta$ be a path on the set $\bb{0}^\delta$ starting in $y^\delta$ and ending at the black square of $\bb{0}^\delta$ adjacent to $\wvpugl{k}$ (see Fig. \ref{gamma}). Let $\operatorname{hm}^\delta(x^\delta,\gamma^\delta)$ be the harmonic measure on $B_{2r}^{\delta}(\wvpugl{k})\cap\bb{0}^\delta\smallsetminus\gamma^\delta$. Then there exists a constant $\widetilde{c} > 0$ that does not depend on $\delta$ such that for all $y^\delta\in\bb{0}^\delta\cap\partial B_{r/2}^{\delta}$, one has 
\[
\operatorname{hm}^\delta(x^\delta,\gamma^\delta)\geq \widetilde{c} = \widetilde{c}\,(\om) > 0.
\]
\end{lemma}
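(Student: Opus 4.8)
The plan is to reduce the statement to a Harnack-type lower bound for the discrete harmonic measure on a fixed (mesh-independent, up to rescaling) configuration, and then invoke the maximum principle plus a chain-of-balls argument to propagate the bound. First I would normalize: by the scaling invariance of the discrete Laplacian on $\bb{0}^\delta$, it suffices to treat $r = 1$ (or any fixed value), since both $B_{2r}^\delta(\wvpugl{k})\cap\bb{0}^\delta\smallsetminus\gamma^\delta$ and the positions of $x^\delta$ and $y^\delta$ scale linearly with $\delta$, and the harmonic measure is scale invariant. Thus the geometry near the concave corner $\wvpugl{k}$ is, for all small $\delta$, a discrete approximation of a fixed continuous picture: the intersection of a disk with the complement of a $3\pi/2$ wedge, with a slit $\gamma$ running from roughly the mid-radius out to the corner tip. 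The constant $\widetilde c$ will then be extracted from this fixed continuous geometry, which is why it depends only on $\om$ (through the corner's angle and the local boundary structure) and not on $\delta$.

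Next I would set up the estimate itself. Since $\operatorname{hm}^\delta(x^\delta,\gamma^\delta)$ is, for fixed $x^\delta$, the probability that a simple random walk on $\bb{0}^\delta$ started at $x^\delta$ hits $\gamma^\delta$ before exiting $B_{2r}^\delta(\wvpugl{k})\cap\bb{0}^\delta$, the task is to bound this hitting probability from below uniformly in $\delta$. I would do this by exhibiting, for each starting square $y^\delta \in \bb{0}^\delta\cap\partial B_{r/2}^\delta$, a ``corridor'' of width of order $r$ inside $B_{2r}^\delta(\wvpugl{k})\cap\bb{0}^\delta$ that connects a neighborhood of $x^\delta$ to $\gamma^\delta$ and stays away from the rest of $\partial(B_{2r}^\delta(\wvpugl{k})\cap\bb{0}^\delta)$. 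Inside such a corridor, a standard chain-of-overlapping-disks / Harnack argument for the discrete random walk (each disk of radius comparable to its distance to the corridor boundary, a bounded number of them since the corridor has bounded aspect ratio) gives a positive lower bound on the probability of traversing it, hence of reaching $\gamma^\delta$, with a constant depending only on the number of links in the chain and on the uniform ellipticity of the walk — both of which are controlled purely by the fixed rescaled geometry. The finitely many combinatorial positions of $y^\delta$ on the arc (again finite after rescaling) are handled by taking the minimum of the finitely many resulting constants.

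The main obstacle, and the point requiring the most care, is the behavior right at the concave corner tip $\wvpugl{k}$: the set $\bb{0}^\delta$ near a $3\pi/2$ corner is only ``locally connected'' in a somewhat delicate way (the black squares of type $\bb{0}$ form a lattice whose connectivity near the reentrant corner must be checked, cf. the path $\gamma^\delta$ drawn in Fig.~\ref{gamma}), and one must make sure the corridor can actually be routed within $\bb{0}^\delta$ all the way up to the square adjacent to $\wvpugl{k}$ without being pinched off by the slit $\gamma^\delta$ or by the outer boundary. I would address this by choosing the corridor to run parallel to $\gamma^\delta$ on one side, staying at lattice distance of order $r$ from it, and invoking the remark preceding the lemma (which guarantees $B_r^\delta(\wvpugl{k})\smallsetminus\{\wvpugl{k}\}$ contains no other white corners and not $\www^\delta$) to ensure the local picture is exactly the clean wedge-with-slit configuration, so no other boundary features interfere. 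Once the corridor is in place, the rest is the routine discrete Harnack chaining, and the final bound $\operatorname{hm}^\delta(x^\delta,\gamma^\delta)\ge \widetilde c(\om) > 0$ follows.
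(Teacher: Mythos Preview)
Your overall strategy---build a tube inside $B^\delta_{2r}(\wvpugl{k})\cap\bb{0}^\delta$ in which the simple random walk stays with uniformly positive probability, and arrange for that tube to meet $\gamma^\delta$---is exactly what the paper does. The gap is in how you guarantee the tube meets $\gamma^\delta$ with a constant that is uniform in the (arbitrary) path $\gamma^\delta$ and in $y^\delta$. ``Running parallel to $\gamma^\delta$ on one side'' is not well-defined for a general lattice path and does not obviously produce a tube of bounded aspect ratio; and your claim that there are only finitely many combinatorial positions for $y^\delta$ after rescaling is false: the number of $\bb{0}^\delta$-squares on $\partial B^\delta_{r/2}$ is of order $r/\delta\to\infty$, so taking a minimum over positions does not yield a $\delta$-independent constant. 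If instead you route the corridor from $x^\delta$ directly to $y^\delta$ (the one point of $\gamma^\delta$ you control), the Harnack chain degenerates whenever $y^\delta$ lies within $O(\delta)$ of one of the boundary edges of the wedge.

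The paper sidesteps all of this with a topological trick: it fixes \emph{two} narrow tubes of width $r/l$, both containing $x^\delta$ and each terminating on one of the two boundary edges of the $3\pi/2$ wedge, together sweeping out a full arc at some radius strictly below $r/2$ (Fig.~\ref{tunnel}). Since $\gamma^\delta$ runs from $\partial B^\delta_{r/2}$ down to the square adjacent to the corner tip, it must cross this arc and hence at least one of the two tubes. The probability that the walk from $x^\delta$ stays inside a fixed tube of width $\sim r/l$ until reaching $\partial\om^\delta$ is bounded below by a standard estimate depending only on the aspect ratio (hence only on $l$), and on that event the walk has already hit $\gamma^\delta$. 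This delivers the uniform lower bound with no reference to $y^\delta$ or to the shape of $\gamma^\delta$ at all.
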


For a more general statement see~\cite[Lemma $3.14$]{C+S}.

\begin{proof}
Let us consider two gray discrete domains of width $\frac{r}{l}$, where $l$ is a large enough positive number, see Fig.~\ref{tunnel}.  Let these domains contain $x^\delta$ and cross the boundary of~$\om^\delta$. The probability that a random walk on a square lattice with mesh size $2\delta$ travels all the way from $x^\delta$ to the boundary of $\om^\delta$ inside the gray domain is uniformly bounded away from zero~\cite[Fig. 3B]{C+S}, for each of the two domains. 

Note that the path $\gamma^\delta$ necessarily intersects at least one of the gray domains. 
The probability of the event that a random walk travels all the way from $x^\delta$ to the boundary of $\om^\delta$ inside this gray domain is less then $\operatorname{hm}^\delta(x^\delta,\gamma^\delta)$.  
\end{proof}

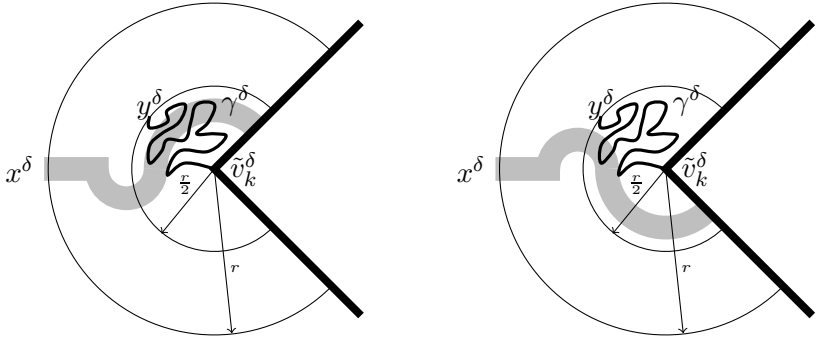
\begin{figure}
\begin{center}
\begin{tikzpicture}[x={(0.78cm,0cm)}, y={(0cm,0.78cm)}]
\begin{scope}
\draw[line width=9pt, gray!50] (-1,0) arc (180:45:1);
\draw[line width=9pt, gray!50] (-2.9,0)--(-1.8,0);
\draw[line width=9pt, gray!50] (-1,0) arc (0:-180:0.5);

\draw[line width=3pt] (2.5,-2.5) -- (0,0) node (v1) {} -- (2.5,2.5);
\draw (1,1) arc (45:315:1.41);
\draw (2,2) arc (45:315:2.83);
\draw[line width=1.2pt]  plot[smooth, tension=.7] coordinates {(-1.1,0.9) (-1.1,0.7) (-0.8,0.8) (-0.8,1) (-0.5,1.1) (-0.6,0.8) (-0.9,0.6) (-1.1,0.4) (-1.1,0.1) (-0.9,0.3) (-0.8,0.5) (-0.5,0.5) (-0.4,0.8) (-0.3,1.1) (0,1.1) (-0.1,0.8) (-0.3,0.6) (0,0.6)  (0.2,0.55) (-0.1,0.3) (-0.6,0.3) (-0.8,-0.1)(-0.5,0.1) (v1)};

\draw[->] (0,0) -- (-0.9,-1.1);
\draw[->] (0,0) -- (0.3,-2.8);
\path (-0.5,-0.6) node[above,font=\tiny]{$\frac{r}{2}$};
\path (0.1,-1.7) node[right,font=\tiny]{$r$};
\path (-0.05,1.17) node[right,]{$\gamma^\delta$};
\path (-1.5,1.1) node[right,]{$y^\delta$};
\path (-2.9,0) node[left,]{$x^\delta$};
\path (0.1,0) node[right,]{$\wvpugl{k}$};
\end{scope}

\begin{scope}[xshift=6cm]
\draw[line width=9pt, gray!50] (-1,0) arc (180:315:1);
\draw[line width=9pt, gray!50] (-2.9,0)--(-1.8,0);
\draw[line width=9pt, gray!50] (-1,0) arc (0:180:0.5);

\draw[line width=3pt] (2.5,-2.5) -- (0,0) node (v1) {} -- (2.5,2.5);
\draw (1,1) arc (45:315:1.41);
\draw (2,2) arc (45:315:2.83);
\draw[line width=1.2pt]  plot[smooth, tension=.7] coordinates {(-1.1,0.9) (-1.1,0.7) (-0.8,0.8) (-0.8,1) (-0.5,1.1) (-0.6,0.8) (-0.9,0.6) (-1.1,0.4) (-1.1,0.1) (-0.9,0.3) (-0.8,0.5) (-0.5,0.5) (-0.4,0.8) (-0.3,1.1) (0,1.1) (-0.1,0.8) (-0.3,0.6) (0,0.6)  (0.2,0.55) (-0.1,0.3) (-0.6,0.3) (-0.8,-0.1)(-0.5,0.1) (v1)};

\draw[->] (0,0) -- (-0.9,-1.1);
\draw[->] (0,0) -- (0.3,-2.8);
\path (-0.5,-0.6) node[above,font=\tiny]{$\frac{r}{2}$};
\path (0.1,-1.7) node[right,font=\tiny]{$r$};
\path (-0.05,1.17) node[right,]{$\gamma^\delta$};
\path (-1.5,1.1) node[right,]{$y^\delta$};
\path (-2.9,0) node[left,]{$x^\delta$};
\path (0.1,0) node[right,]{$\wvpugl{k}$};
\end{scope}

\end{tikzpicture}\caption{The probability that a random walk on a square lattice with mesh size $2\delta$ travels all the way from $x^\delta$ to the boundary of $\om^\delta$ inside the gray domain is bounded away from zero uniformly in $\delta$.
}\label{tunnel}\end{center}
\end{figure}

\begin{lemma}\label{ravnogr}
Let $\mdel_t(r)=\max\limits_{u^\delta\in\om_{r,t}^\delta}|\F^\delta(u^\delta)|$, where 
\[
\om_{r,t}^\delta=\om^\delta\smallsetminus \left(\bigcup\limits_{k=1}^{n-1} B_r^\delta(\wvpugl{k})\cup B_t^\delta(\www^\delta)\right).
\] 
Then for some fixed $t > 0$ small enough and for any sufficiently small fixed $r > 0$, as $\delta\to 0$ we have 
\[
\mdel_t\left(\frac{r}{2}\right)\leq \frac{4}{\widetilde{c}}\cdot\mdel_t(r),
\] 
where $\widetilde{c}$ is the absolute constant from Lemma~$\ref{omega > c}$.
\end{lemma}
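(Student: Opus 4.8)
The plan is to run the argument separately for the two real-valued discrete harmonic functions $\F^\delta|_{\bb{0}^\delta}$ and $\F^\delta|_{\bb{1}^\delta}$ (both discrete harmonic in $\om^\delta$ away from $\www^\delta$ and the concave white corners $\wvpugl{1},\dots,\wvpugl{n-1}$); the two are entirely parallel, so I describe only $\bb{0}^\delta$, and I write $M:=\mdel_t(r)$. Since $\om_{r/2,t}^\delta\smallsetminus\om_{r,t}^\delta=\bigsqcup_{k}\big(B_r^\delta(\wvpugl{k})\smallsetminus B_{r/2}^\delta(\wvpugl{k})\big)$ one has $\mdel_t(r/2)=\max\big(M,\max_k M_k\big)$ with $M_k:=\max_{B_r^\delta(\wvpugl{k})\smallsetminus B_{r/2}^\delta(\wvpugl{k})}|\F^\delta|$, and since $\widetilde c\le 1\le 4/\widetilde c$ it is enough to bound each $M_k$ by $\tfrac4{\widetilde c}M$. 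Fix $k$ and drop it from the notation; by the standing assumptions on $r$ and on the (small, fixed) $t$, the ball $B_{2r}^\delta(\wvpugl{k})$ meets no other marked point and $\partial B_{2r}^\delta(\wvpugl{k})\subset\om_{r,t}^\delta$, so $|\F^\delta|\le M$ on $\partial B_{2r}^\delta(\wvpugl{k})$. By Lemma~\ref{bounded}, $\F^\delta|_{\bb{0}^\delta}$ is semi-bounded in $B_{2r}^\delta(\wvpugl{k})\cap\bb{0}^\delta$; replacing $\F^\delta$ by $-\F^\delta$ if needed, I may assume its minimum over $B_{2r}^\delta(\wvpugl{k})\cap\bb{0}^\delta$ is attained on $\partial B_{2r}^\delta(\wvpugl{k})$, so $\F^\delta|_{\bb{0}^\delta}\ge -M$ there. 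Combined with the symmetric statement for $-\F^\delta$, it then suffices to prove the \emph{one-sided} bound $\F^\delta|_{\bb{0}^\delta}\le\tfrac4{\widetilde c}M$ on $\big(B_r^\delta(\wvpugl{k})\smallsetminus B_{r/2}^\delta(\wvpugl{k})\big)\cap\bb{0}^\delta$.

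The core of the proof is a contradiction argument. Suppose this one-sided bound fails. Since $\F^\delta|_{\bb{0}^\delta}$ is discrete harmonic in the annulus $\big(B_r^\delta\smallsetminus B_{r/2}^\delta\big)(\wvpugl{k})\cap\bb{0}^\delta$, with values $\le M$ on $\partial B_r^\delta(\wvpugl{k})$, equal to $0$ on the Dirichlet portion of $\partial\om^\delta$ and reflecting across the Neumann portion, the violation must already occur at some $y^\delta\in\partial B_{r/2}^\delta(\wvpugl{k})\cap\bb{0}^\delta$, i.e. $N:=\F^\delta(y^\delta)>\tfrac4{\widetilde c}M$. Let $\gamma^\delta=\gamma^\delta(y^\delta)$ be the path on $\bb{0}^\delta$ from $y^\delta$ to the square $\hat u^\delta$ adjacent to $\wvpugl{k}$ given by Lemma~\ref{omega > c}, and let $x^\delta$ be a black square in the middle of an arc of $\partial B_r^\delta(\wvpugl{k})\cap\bb{0}^\delta$, so that $\operatorname{hm}^\delta(x^\delta,\gamma^\delta)\ge\widetilde c$ in $\mathcal D^\delta:=B_{2r}^\delta(\wvpugl{k})\cap\bb{0}^\delta\smallsetminus\gamma^\delta$, in which $\F^\delta|_{\bb{0}^\delta}$ is discrete harmonic (its only non-harmonic square $\hat u^\delta$ lies on the excised path $\gamma^\delta$).

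Now I would estimate $\F^\delta(x^\delta)$ from both sides. From below: applying the discrete Harnack inequality to the nonnegative discrete harmonic function $\F^\delta|_{\bb{0}^\delta}+M$ on a neighbourhood of $y^\delta$ of radius of order $r$ — which meets neither $\hat u^\delta$ nor $\www^\delta$, the boundary case being handled by reflection on the Neumann arc and by the boundary Harnack principle on the Dirichlet arc — upgrades the pointwise value $\F^\delta(y^\delta)=N$ to $\F^\delta\ge c_H N-M$ on a macroscopic sub-path of $\gamma^\delta$ issuing from $y^\delta$, with $c_H$ an absolute constant; together with $\F^\delta\ge -M$ on the rest of $\partial\mathcal D^\delta$ and the bound $\operatorname{hm}^\delta(x^\delta,\gamma^\delta)\ge\widetilde c$ (the walk started at $x^\delta$, at distance $r$ from $\wvpugl{k}$, reaches the slit $\gamma^\delta$ near its outer end $y^\delta$), the Poisson representation of $\F^\delta(x^\delta)$ in $\mathcal D^\delta$ yields $\F^\delta(x^\delta)\ge \widetilde c\,c_H N-C_1 M$. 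From above: $\F^\delta|_{\bb{0}^\delta}$ is discrete harmonic in the annulus between $\partial B_{r/2}^\delta(\wvpugl{k})$ and $\partial B_{2r}^\delta(\wvpugl{k})$, and the Poisson representation there — with $|\F^\delta|\le M$ on the outer circle, $|\F^\delta|\le N$ on the inner one, and the harmonic measure of the inner circle seen from $x^\delta$ bounded away from $1$ — gives $\F^\delta(x^\delta)\le M+\theta N$ with $\theta<1$ depending only on the geometry. Comparing the two bounds forces $N\le\tfrac4{\widetilde c}M$ once the free constants are chosen suitably, contradicting $N>\tfrac4{\widetilde c}M$. (The point $\www^\delta$ enters only through the requirement that the fixed $t$ be small enough that $B_t^\delta(\www^\delta)$ be disjoint from all the balls $B_{2r}^\delta(\wvpugl{k})$, which is what makes $\partial B_{2r}^\delta(\wvpugl{k})\subset\om_{r,t}^\delta$ hold.)

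The main obstacle is exactly this core estimate: transferring the pointwise lower bound at $y^\delta$, near the concave corner, to the controllable point $x^\delta$. It requires, at the same time, that $\F^\delta|_{\bb{0}^\delta}+M$ be nonnegative on $B_{2r}^\delta(\wvpugl{k})$ — which is what semi-boundedness (Lemma~\ref{bounded}) buys — so that Harnack can spread the largeness of $\F^\delta$ from the single site $y^\delta$ onto a macroscopic set, and that this macroscopic set be charged, with probability bounded below uniformly in $\delta$, by the random walk from $x^\delta$ — which is the content of Lemma~\ref{omega > c}, and the reason the path $\gamma^\delta$ is taken to end at the corner square and $x^\delta$ is placed in the middle of an arc, so that the harmonic-measure lower bound survives the passage to the limit near the Dirichlet–Neumann corner. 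Tracking the numerical constants so that the final bound is the stated $\tfrac4{\widetilde c}$, rather than merely a constant multiple of $M$, is routine but somewhat bookkeeping-heavy.
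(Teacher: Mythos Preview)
Your overall strategy is in the right spirit, but you are working much harder than necessary and the core estimate has a genuine gap. The paper's proof avoids the Harnack step entirely by \emph{choosing} the path $\gamma^\delta$ cleverly: since $\re\F^\delta$ is discrete harmonic on $\bb{0}^\delta$ inside $B_r^\delta(\wvpugl{k})$ except at the single square $\hat u^\delta$ adjacent to $\wvpugl{k}$, from the point $y^\delta$ realising the maximum of $|\re\F^\delta|$ on $\partial B_{r/2}^\delta$ one can follow a path along which $\re\F^\delta$ is \emph{non-decreasing}. This path either exits through $\partial B_r^\delta$ (in which case the bound is immediate) or terminates at $\hat u^\delta$; in the latter case $\re\F^\delta\ge\re\F^\delta(y^\delta)$ along the \emph{entire} path, so the Poisson representation at $x^\delta$ gives directly $M\ge\re\F^\delta(x^\delta)\ge\re\F^\delta(y^\delta)\cdot\operatorname{hm}^\delta(x^\delta,\gamma^\delta)-M(1-\operatorname{hm}^\delta(x^\delta,\gamma^\delta))$, hence $2M\ge\widetilde c\,\re\F^\delta(y^\delta)$. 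Adding the identical bound for $\im\F^\delta$ yields the constant $4/\widetilde c$. No appeal to Lemma~\ref{bounded}, no Harnack, no separate upper bound.

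The gap in your argument is exactly where you invoke Harnack and then write ``the walk started at $x^\delta$ \dots\ reaches the slit $\gamma^\delta$ near its outer end $y^\delta$''. Lemma~\ref{omega > c} only lower-bounds the harmonic measure of the \emph{full} path $\gamma^\delta$, not of a sub-path near $y^\delta$; the latter would require a separate (and non-trivial) estimate. Moreover, your Harnack ball around $y^\delta$ must have radius strictly less than $r/2$ to avoid $\hat u^\delta$, so the ``macroscopic sub-path'' on which you control $\F^\delta$ from below is short, and if $y^\delta$ happens to lie near the Dirichlet arc of $\partial\om^\delta$ the boundary Harnack principle does not spread largeness over a set of uniformly positive harmonic measure. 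Finally, in your ``from above'' step you assert $|\F^\delta|\le N$ on all of $\partial B_{r/2}^\delta$, but $N$ was introduced as a single value $\F^\delta(y^\delta)$, not as the maximum over the inner circle. All of these issues disappear once you take $\gamma^\delta$ to be the monotone path: then the lower bound $\re\F^\delta\ge\re\F^\delta(y^\delta)$ holds on the whole slit, and a single Poisson inequality finishes the proof with the exact constant.
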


\begin{proof} Note that it is enough to prove that 
\[
\max\limits_{u^\delta\in\om^\delta\cap\partial B_{r/2}^\delta(\wvpugl{k})}|\re\F^\delta(u^\delta)|\leq \frac{2}{\widetilde{c}}\cdot\mdel_t(r),
\] 
for all $k\in\{1,\ldots,n-1\}$, since similarly the same inequality holds for $\im\F^\delta$.

Let $y^\delta$ be the square in $\om^\delta\cap\partial B_{r/2}^\delta(\wvpugl{k})$ such that 
\[
|\re\F^\delta(y^\delta)|=\max\limits_{u^\delta\in\om^\delta\cap\partial B_{r/2}^\delta(\wvpugl{k})}|\re\F^\delta(u^\delta)|.
\] 
Without loss of generality we may assume that $\re\F^\delta(y^\delta)>0$. Note that $\re\F^\delta$ is a discrete harmonic function, and hence there exists a path $\gamma^\delta$ on the set  $\bb{0}^\delta$ from $y^\delta$ to the boundary of the domain $\om^\delta\cap B_r^\delta(\wvpugl{k})$ or to the square adjacent to $\wvpugl{k}$ along which the absolute value of the function $\re\F^\delta$ increases, since discrete harmonic functions satisfy the maximum principle. If the path $\gamma^\delta$ ends on the boundary of the domain $\om^\delta\cap B_r^\delta(\wvpugl{k})$, then
\[
\max\limits_{u^\delta\in\om^\delta\cap\partial B_{r/2}^\delta(\wvpugl{k})}|\re\F^\delta(u^\delta)|\leq \max\limits_{u^\delta\in\om^\delta\cap\partial B_{r}^\delta(\wvpugl{k})}|\re\F^\delta(u^\delta)|.
\]
Assume that $\gamma^\delta$ ends at the square adjacent to $\wvpugl{k}$. Let $\operatorname{hm}^\delta(\cdot,\gamma^\delta)$ be the harmonic measure in the domain $B_{2r}^{\delta}(\wvpugl{k})\cap\bb{0}^\delta\smallsetminus\gamma^\delta$. Due to Lemma \ref{omega > c} there exists a black square $x^\delta\in \bb{0}^\delta$ on the boundary of $B_{r}^{\delta}(\wvpugl{k})$ such that $\operatorname{hm}^\delta(x^\delta,\gamma^\delta)\geq \widetilde{c} > 0$. Note that
\[
\mdel_t(r)\geq\re\F^\delta(x^\delta)\geq\re\F^\delta (y^\delta)\cdot\operatorname{hm}^\delta(x^\delta,\gamma^\delta)-\mdel_t(r)\cdot(1-\operatorname{hm}^\delta(x^\delta,\gamma^\delta)).
\]
Hence,
\[
2\mdel_t(r)\geq\operatorname{hm}^\delta(x^\delta,\gamma^\delta)\cdot\re\F^\delta(y)\geq
\widetilde{c}\cdot\re\F^\delta(y^\delta).
\]
To complete the proof, recall that we assumed $\re\F^\delta(y^\delta)=\max\nolimits_{u^\delta\in\om^\delta\cap\partial B_{r/2}^\delta(\wvpugl{k})}|\re\F^\delta(u^\delta)|.$
\end{proof}

Let $\F^\delta_{\mathbb{C}, \www^\delta}$ be the unique discrete holomorphic function on the whole plane $\mathbb{C}^\delta \smallsetminus \{\www^\delta\}$ tending to zero at infinity and such that $[\bar{\partial}^\delta\F^\delta_{\mathbb{C}, \www^\delta}](\www^\delta) = \frac{\lambda}{\delta^2}$, see~\cite[Theorem 2.21]{C+S}. 
Note that $\re\F^\delta_{\mathbb{C}, \www^\delta}$ and $\im\F^\delta_{\mathbb{C}, \www^\delta}$ are discrete harmonic everywhere except two squares adjacent to $\www^\delta$. It is well known that $\F^\delta_{\mathbb{C}(z), \www^\delta}$ is asymptotically equal $\frac{1}{2\pi}\cdot\frac{\lambda}{z-\www}$ as~$\delta\downarrow 0$.  We need to introduce a similar function $\F^\delta_{\mathbb{H}, \www^\delta}$ on a half-plane $\mathbb{H}^\delta$, where $\partial\mathbb{H}^\delta$ goes to the direction $\lambda$ and $\www^\delta\in \partial_{\mathrm {int}}\mathbb{H}^\delta\cap\ww{0}^\delta$.  
The imaginary part of $\F^\delta_{\mathbb{H},\www^\delta}$ equals zero on the boundary, and $[\bar{\partial}^\delta\F^\delta_{\mathbb{H}, \www^\delta}](\www^\delta) = \frac{\lambda}{\delta^2}.$ There is the unique discrete holomorphic function with these two properties that tends to zero at infinity. 

Let us consider the sum $\F^\delta_{\mathbb{C}, \www^\delta} + \F^\delta_{\mathbb{C}, \www^\delta+2\bar\lambda\delta},$ where by $\www^\delta+2\bar\lambda\delta$ we denote a white square at distance $\delta$ from the square $\www^\delta$ that does not belong to $\mathbb{H}^\delta.$ This sum tends to zero at infinity, since both $\F^\delta_{\mathbb{C}, \www^\delta}$ and $\F^\delta_{\mathbb{C}, \www^\delta+2\bar\lambda\delta}$ tend to zero at the infinity. Note that $\F^\delta_{\mathbb{C}, \www^\delta+2\bar\lambda\delta}$ is discrete holomorphic on $\mathbb{H}^\delta$, therefore $\F^\delta_{\mathbb{C}, \www^\delta} + \F^\delta_{\mathbb{C}, \www^\delta+2\bar\lambda\delta}$ is holomorphic on $\mathbb{H}^\delta \smallsetminus \{\www^\delta\}$ and $[\bar{\partial}^\delta(\F^\delta_{\mathbb{C}, \www^\delta} + \F^\delta_{\mathbb{C}, \www^\delta+2\bar\lambda\delta})](\www^\delta) = \frac{\lambda}{\delta^2}.$ Finally, note that 
\[
\im\F^\delta_{\mathbb{C}, \www^\delta}(u)=G^\delta(u,\www^\delta+\bar\lambda\delta) - G^\delta(u,\www^\delta-\bar\lambda\delta),
\] 
where 
$G^\delta(u,u')$ is the classical Green's function on $\mathbb{C}^\delta\cap\bb{1}^\delta$ satisfies $\Delta^\delta G^\delta(u,u')=\mathbb{1}_{u=u'}\cdot\frac{1}{2\delta^3}$. The Green's function is symmetric, therefore $\im[\F^\delta_{\mathbb{C}, \www^\delta} + \F^\delta_{\mathbb{C}, \www^\delta+2\bar\lambda\delta}]$ vanishes on $\partial\mathbb{H}^\delta.$
As a consequence we have
$\F^\delta_\mathbb{H}(u) = \F^\delta_{\mathbb{C}, \www^\delta} + \F^\delta_{\mathbb{C}, \www^\delta+2\bar\lambda\delta}.$

\begin{cor}\label{cor2}
Let 
\[
\mdel_*(r)=\max\limits_{u\in\om_{r,0}^\delta}|\F^\delta(u)-\F^\delta_\mathbb{H}(u)|.
\]
Then, for all sufficiently small $\delta$, one has
\[
\mdel_*\left(\frac{r}{2}\right)\leq \frac{4}{\widetilde{c}}\cdot\mdel_*(r) + C_*,
\]
where $C_*$ is an absolute constant and $\widetilde{c}$ is the constant from Lemma~$\ref{omega > c}$.
\end{cor}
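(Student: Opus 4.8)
The plan is to rerun the argument of Lemma~\ref{ravnogr} with $\F^\delta$ replaced by the difference $D^\delta:=\F^\delta-\F^\delta_\mathbb{H}$; the only genuinely new feature is that $D^\delta$ need not vanish on $\dom^\delta$, and this is precisely what produces the extra additive term $C_*$.

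First I would establish two structural properties of $D^\delta$. Since $[\bar\partial^\delta\F^\delta](\www^\delta)=\tfrac{\lambda}{\delta^2}=[\bar\partial^\delta\F^\delta_\mathbb{H}](\www^\delta)$ and $\F^\delta_\mathbb{H}$ has no other pole in $\clom^\delta$ (the auxiliary pole of $\F^\delta_{\mathbb{C},\www^\delta+2\bar\lambda\delta}$ sits on the exterior side of the boundary next to $\www^\delta$, hence outside $\ww{}^\delta$), the function $D^\delta$ is discrete holomorphic \emph{everywhere} in $\om^\delta$. Consequently $D^\delta$ is real on $\bb{0}^\delta$, purely imaginary on $\bb{1}^\delta$, and $D^\delta|_{\bb{0}^\delta}$ (resp.\ $D^\delta|_{\bb{1}^\delta}$) is discrete harmonic on $\om^\delta$ except at the squares adjacent to the concave white corners $\wvpugl{k}$; in particular there is no non-harmonic square near $\www^\delta$, which is why one may work with $\om_{r,0}^\delta$ (i.e.\ take $t=0$). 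Secondly, on $\dom^\delta$ one has $D^\delta=-\F^\delta_\mathbb{H}$ because $\F^\delta|_{\dom^\delta}=0$: in a fixed-size neighbourhood of $\www^\delta$ the domain $\om^\delta$ coincides with $\mathbb{H}^\delta$ and $\F^\delta_\mathbb{H}$ satisfies there the same boundary condition as $\F^\delta$, so $D^\delta$ vanishes on that portion of $\dom^\delta$, while away from $\www^\delta$ the estimate $\F^\delta_\mathbb{H}(z)=\tfrac1\pi\tfrac{\lambda}{z-\www}+o(1)$ gives $|\F^\delta_\mathbb{H}|\le C(\om)$. Hence $|D^\delta|\le C_*$ on $\dom^\delta$ for an absolute constant $C_*$.

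With these facts the three–case argument of Lemma~\ref{ravnogr} goes through. It is enough to show, for each $k$, that $\max_{u^\delta\in\om^\delta\cap\partial B_{r/2}^\delta(\wvpugl{k})}|\re D^\delta(u^\delta)|\le\tfrac2{\widetilde c}\mdel_*(r)+C_*$, and likewise for $\im D^\delta$. Pick $y^\delta$ realising this maximum and assume without loss of generality $\re D^\delta(y^\delta)>0$; if $\re D^\delta(y^\delta)\le C_*$ there is nothing to prove, so suppose $\re D^\delta(y^\delta)>C_*$. By discrete harmonicity of $\re D^\delta$ on $\bb{0}^\delta$ there is a path $\gamma^\delta\subset\bb{0}^\delta$ issuing from $y^\delta$ along which $\re D^\delta$ does not decrease; since $|\re D^\delta|\le C_*<\re D^\delta(y^\delta)$ on $\dom^\delta$, this path cannot end on $\dom^\delta$, so it either leaves $B_r^\delta(\wvpugl{k})$ through $\partial B_r^\delta(\wvpugl{k})\cap\bb{0}^\delta$ — whence $\re D^\delta(y^\delta)\le\mdel_*(r)$ directly — or it reaches the $\bb{0}^\delta$–square adjacent to $\wvpugl{k}$. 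In the latter case Lemma~\ref{omega > c} produces $x^\delta\in\bb{0}^\delta\cap\partial B_r^\delta(\wvpugl{k})$ with $\operatorname{hm}^\delta(x^\delta,\gamma^\delta)\ge\widetilde c$, and applying the maximum principle to $\re D^\delta$ in $B_{2r}^\delta(\wvpugl{k})\cap\bb{0}^\delta\smallsetminus\gamma^\delta$ (for $r$ small enough that this region meets no other corner and no non-harmonic square other than the one next to $\wvpugl{k}$, which lies on $\gamma^\delta$) yields
\[
\mdel_*(r)\ \ge\ \re D^\delta(x^\delta)\ \ge\ \widetilde c\,\re D^\delta(y^\delta)-\bigl(\mdel_*(r)+C_*\bigr)(1-\widetilde c),
\]
so $\re D^\delta(y^\delta)\le\tfrac2{\widetilde c}\mdel_*(r)+\tfrac{C_*}{\widetilde c}$. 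Combining the cases, repeating the argument for $\im D^\delta$, and finally using the maximum principle on the annuli $B_r^\delta(\wvpugl{k})\smallsetminus B_{r/2}^\delta(\wvpugl{k})$ — whose boundary splits into a piece in $\partial B_r^\delta$ controlled by $\mdel_*(r)$, a piece in $\partial B_{r/2}^\delta$ controlled by the bound just obtained, and a piece in $\dom^\delta$ controlled by $C_*$ — to pass from these boundary estimates to $\mdel_*(r/2)$, one arrives at $\mdel_*(r/2)\le\tfrac4{\widetilde c}\mdel_*(r)+C_*$ after enlarging the absolute constant.

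The step I expect to require the most care is the boundary analysis of $D^\delta$ carried out in the second paragraph: one must verify that the a priori singular contribution of $\F^\delta_\mathbb{H}$ cancels the vanishing of $\F^\delta$ \emph{near} $\www^\delta$ (so that $|D^\delta|$ stays $O(1)$ on \emph{all} of $\dom^\delta$, not merely away from $\www^\delta$) and that $D^\delta$ is genuinely discrete holomorphic at the two squares adjacent to $\www^\delta$. Both rest on the explicit representation $\F^\delta_\mathbb{H}=\F^\delta_{\mathbb{C},\www^\delta}+\F^\delta_{\mathbb{C},\www^\delta+2\bar\lambda\delta}$, on $[\bar\partial^\delta\F^\delta_\mathbb{H}](\www^\delta)=\tfrac{\lambda}{\delta^2}$, and on the fact that $\om^\delta$ agrees with $\mathbb{H}^\delta$ in a neighbourhood of $\www^\delta$; once they are in place, the rest is the verbatim repetition of Lemma~\ref{ravnogr} described above.
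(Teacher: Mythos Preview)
Your proof is correct and follows the same approach as the paper: subtract $\F^\delta_\mathbb{H}$ to remove the singularity at $\www^\delta$, observe that the difference $D^\delta$ is discrete holomorphic throughout $\om^\delta$ and uniformly bounded on $\dom^\delta$, then rerun the proof of Lemma~\ref{ravnogr} with $t=0$, the bounded (rather than zero) boundary values producing the additive $C_*$. The paper's own proof is much terser---essentially one paragraph stating the two structural facts and invoking Lemma~\ref{ravnogr}---while you have spelled out explicitly where the additive constant enters and why the cancellation near $\www^\delta$ works; this is a faithful expansion rather than a different route.
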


\begin{proof}
Note that $\F^\delta_\mathbb{H}$ is uniformly bounded away from $\www^\delta$ and vanishes on $\partial\mathbb{H}^\delta$, hence $\F^\delta-\F^\delta_\mathbb{H}$ is uniformly bounded on $\dom^\delta.$ Moreover, function  $\F^\delta-\F^\delta_\mathbb{H}$ is discrete holomorphic on $\om^\delta$, in particular it is discrete holomorphic on $B_t^\delta(\www^\delta)\cap\om^\delta$ and vanishes on $\dom^\delta\cap B_t^\delta(\www^\delta)$. Therefore the statement of Lemma~\ref{ravnogr} is valid for $\F^\delta-\F^\delta_\mathbb{H}$ and $t=0$.
\end{proof}

We are now in the position to prove the convergence of $\F^\delta$. Note that $\F^\delta$ can be thought of as defined in polygonal
representation of $\om^\delta$ by some standard continuation procedure, linear on edges
and multilinear inside faces. Then we have the following

\begin{Th}\label{convF}
Let $\om^\delta$ be a sequence of discrete $2k$-black-piecewise Temperleyan domains of mesh size $\delta$ approximating a continuous domain $\om$. Suppose that each $\om^\delta$ admits a domino tiling. Let the sets of white corner squares 
$\{\wvupugl{k}\}^{n+1}_{k=1}$ and 
$\{\wvpugl{k}\}^{n-1}_{k=1}$ 
approximate the sets of boundary points $\{\nwvupugl{k}\}^{n+1}_{k=1}$ 
and $\{\nwvpugl{k}\}^{n-1}_{k=1}$ 
correspondingly, and let $\www^{\delta}$ approximate a boundary point $\www$, which lies on a straight segment of the boundary of $\om$. Then $\F^\delta$ converges uniformly on compact subsets of  $\om$ to a continuous holomorphic function $f_\om$, where $f_\om$ is defined as in Proposition~$\ref{1-4}$.
\end{Th}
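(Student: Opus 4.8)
\textbf{Proof strategy for Theorem~\ref{convF}.}

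The plan is to combine the a priori estimates of Lemma~\ref{ravnogr} and Corollary~\ref{cor2} with a standard normal families argument, and then identify the subsequential limits with the function $f_\om$ of Proposition~\ref{1-4} by verifying its characterizing properties one by one. First I would establish uniform boundedness of $\F^\delta$ on compact subsets of $\om$ away from the marked points. Fix a compact $\mathcal{K}\subset\om$ and a radius $r>0$ so small that the balls $B_r^\delta(\wvpugl{k})$ and $B_t^\delta(\www^\delta)$ are disjoint and do not meet $\mathcal{K}$. Iterating the inequality $\mdel_*(r/2)\le \tfrac{4}{\widetilde c}\mdel_*(r)+C_*$ from Corollary~\ref{cor2} downward from a macroscopic scale $r_0$ (comparable to the diameter of $\om$) to the scale $r$ gives a bound $\mdel_*(r)\le C(r)\cdot\big(\mdel_*(r_0)+1\big)$. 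It remains to bound $\mdel_*(r_0)$ at the macroscopic scale: here $\F^\delta-\F^\delta_{\mathbb H}$ is discrete harmonic (componentwise) in a fixed neighbourhood of $\mathcal{K}$ with uniformly bounded boundary values on $\dom^\delta$ (since $\F^\delta=0$ on $\db^\delta$ and $\F^\delta_{\mathbb H}$ is bounded away from $\www^\delta$), so the discrete maximum principle — stated right after the definition of $\Delta^\delta$ — gives the required macroscopic bound. Since $\F^\delta_{\mathbb H}$ is itself uniformly bounded on $\mathcal{K}$ and converges to $\tfrac{1}{2\pi}\cdot\tfrac{\lambda}{z-\www}$ (as recorded before Corollary~\ref{cor2}), this yields uniform boundedness of $\F^\delta$ on $\mathcal{K}$.

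Next I would upgrade boundedness to precompactness. Discrete holomorphic (equivalently, componentwise discrete harmonic) functions that are uniformly bounded on a slightly larger compact set are equicontinuous there: this follows from the standard discrete gradient estimate for discrete harmonic functions (the discrete derivative at a point is controlled by $\mathrm{osc}$ over a ball of fixed size divided by that size), applied to $\re\F^\delta$ on $\bb{0}^\delta$ and $\im\F^\delta$ on $\bb{1}^\delta$. By Arzelà–Ascoli, every sequence $\delta_j\downarrow 0$ has a subsequence along which $\F^{\delta_j}$ converges uniformly on compact subsets of $\om\smallsetminus\{\www,\nwvpugl{1},\dots,\nwvpugl{n-1}\}$ to some function $f$; a diagonal argument over an exhaustion of $\om$ by compacts makes this limit globally defined on $\om\smallsetminus\{\www,\nwvpugl{k}\}$. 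The limit $f$ is holomorphic: passing to the limit in the discrete Cauchy--Riemann equation $[\bar\partial^\delta\F^\delta](v)=0$ (valid away from $\www^\delta$) shows $\bar\partial f=0$, by the usual consistency of the discrete operators $\partial^\delta,\bar\partial^\delta$ with their continuous counterparts.

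Then I would check that $f$ satisfies each of the five defining properties in Proposition~\ref{1-4}, which by uniqueness forces $f=f_\om$ and, since the limit does not depend on the subsequence, gives convergence of the full family. (i) The pole at $\www$: near $\www$ we have $\F^\delta=\F^\delta_{\mathbb H}+(\F^\delta-\F^\delta_{\mathbb H})$ with the second term discrete holomorphic in $B_t^\delta(\www^\delta)\cap\om^\delta$ and vanishing on the boundary, hence (by the estimate of Corollary~\ref{cor2} with $t=0$) bounded there, so $f(z)=\tfrac{1}{2\pi}\tfrac{\lambda}{z-\www}+O(1)$; absorbing the harmless constant $\tfrac1{2\pi}$ into the normalization of $[\bar\partial^\delta\F^\delta](\www^\delta)$ (which we took to be $\lambda/\delta^2$ precisely so that this works out) gives $f(z)=\tfrac{\lambda}{z-\www}+O(1)$. (ii)-(iii) Boundedness near convex white corners $\nwvupugl{k}$ and semiboundedness near concave white corners $\nwvpugl{k}$: these pass to the limit from Lemma~\ref{bounded}, since ``bounded'' and ``semibounded'' are defined via attainment of max/min on the boundary of a subdomain, a property preserved under uniform limits (a strict interior extremum of the limit would, by uniform convergence, produce one for $\F^\delta$ at large $\delta$). (iv) The alternating Dirichlet condition $\re f=0$ or $\im f=0$ on each boundary arc between marked points: on a black Temperley segment all boundary black squares are of a single type, so $\F^\delta$ lies entirely in $\mathbb R$ or entirely in $i\mathbb R$ along that segment and vanishes on $\db^\delta$; in the limit one of $\re f,\im f$ vanishes on that arc (one needs a boundary regularity input — a discrete Beurling-type estimate near the flat boundary, or a reflection argument as in the companion papers — to justify the limit up to the boundary, and I would cite \cite{C+S} for this). (v) The boundary conditions genuinely change at every marked point: combinatorially the two black squares flanking a white corner are of different types (this is exactly the parity bookkeeping of Lemma~\ref{combi-lemma}), so the condition ``$\re=0$'' on one side becomes ``$\im=0$'' on the other; non-degeneracy of $f$ near the corner (it is not identically zero, since $\F^\delta(\bbb^\delta)\ne0$ by Corollary~\ref{main-lemma}) guarantees the change does not disappear in the limit.

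\textbf{Main obstacle.} The delicate point is item (iv)--(v): controlling $\F^\delta$ \emph{up to the boundary} near the marked points and along the flat Temperley segments, so as to transfer the exact discrete boundary conditions to the limit and to pin down the behaviour ($O((z-\nwvpugl{k})^{-1/2})$ near concave, $O((z-\nwvupugl{k})^{1/2})$ near convex corners) that makes the Schwarz reflection / Picard argument of Proposition~\ref{1-4} applicable. Interior precompactness via the maximum-principle iteration is the easy part; the boundary analysis — a discrete analogue of the half-disk estimate near each corner, plus a uniform-integrability/weak-convergence argument to rule out mass escaping to the marked points — is where the real work lies, and is the step I would expect to occupy most of the detailed proof.
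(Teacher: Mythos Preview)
Your overall architecture --- precompactness via Arzel\`a--Ascoli, identification of subsequential limits through the boundary value problem of Proposition~\ref{1-4}, and uniqueness to conclude --- matches the paper's first case exactly. The genuine gap is in your claimed \emph{a priori} bound on $\mdel_*(r_0)$ at the macroscopic scale. You invoke the discrete maximum principle for $\re(\F^\delta-\F^\delta_{\mathbb H})$ on $\bb{0}^\delta$, but this function is \emph{not} discrete harmonic at the single $\bb{0}^\delta$ square adjacent to each concave white corner $\wvpugl{k}$ (this is precisely the content of Lemma~\ref{bounded} and Fig.~\ref{3bound}, right panel: the boundary relation there reads $\F^\delta(u)=\tfrac13(\F^\delta(u_1)+\F^\delta(u_2))$, which is not an averaging relation). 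So the maximum of $|\re(\F^\delta-\F^\delta_{\mathbb H})|$ may well be attained at those exceptional squares, whose values are exactly what you are trying to control. Your iteration of Corollary~\ref{cor2} from $r_0$ down to $r$ is fine, but it only reduces the problem to bounding $\mdel_*(r_0)$, and there is no $r_0$ for which that bound comes for free: the concave-corner singularities sit inside $\om^\delta$ at every scale.

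The paper closes this gap not by proving boundedness directly but by a dichotomy argument borrowed from~\cite{CHI}. If $\mdel_*(r)$ stays bounded as $\delta\to0$, proceed exactly as you outlined. If instead $\mdel_*(r)\to\infty$ along a subsequence, normalize $\Ftilda^\delta_*:=(\F^\delta-\F^\delta_{\mathbb H})/\mdel_*(r)$; the same precompactness machinery yields a subsequential limit $f_*$ which is bounded near $\www$ (the singularity has been divided out) and satisfies the homogeneous version of the boundary value problem in Proposition~\ref{1-4}, forcing $f_*\equiv0$. The contradiction is then obtained by manufacturing an interior point where $\Ftilda^\delta_*$ stays bounded away from zero: take $u^\delta_{\max}$ realizing the maximum, follow a monotone path $\gamma^\delta$ for $\re\Ftilda^\delta_*$ into the concave corner, and use the harmonic-measure lower bound of Lemma~\ref{omega > c} to transfer a definite fraction of that value to a fixed interior point $u^\delta_{\mathrm{inner}}$. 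This last step --- producing a uniformly positive interior value from a maximum that escapes to a corner --- is the substantive work you are missing; it replaces the unavailable maximum-principle bound. Your ``main obstacle'' paragraph correctly flags boundary control as delicate, but the actual obstacle is one level earlier: ruling out blow-up at the concave corners.
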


In the following proof we use the idea described in ~\cite{CHI} (proof of Theorem $2.16$).
\begin{proof}

{\it First case}: suppose that for each fixed positive $r$ the function $\mdel_*(r)$ remains bounded, as $\delta\to~0.$ 
Corollary~\ref{cor2} implies that discrete holomorphic functions $\F^\delta-\F^\delta_\mathbb{H}$ are uniformly bounded, and therefore equicontinuous due to Harnack principle on compact subsets of $\om$. Thus, due to the Arzelà–Ascoli theorem, the family $\F^\delta~-~\F^\delta_\mathbb{H}$ is precompact and hence converges along a subsequence to some holomorphic function~$\widetilde{f}$ uniformly on compact subsets of $\om$. 
Note that $\F^\delta_\mathbb{H}\rightrightarrows f_\mathbb{H}=\frac{1}{\pi}\cdot\frac{\lambda}{z-\www}$ as $\delta\to 0$, uniformly on compacts. 
Let $f_\om:=\widetilde{f} - f_\mathbb{H}$, then
$\F^\delta\rightrightarrows f_\om$, i.e. $\re\F^\delta\rightrightarrows \re f_\om$ and $\im\F^\delta\rightrightarrows \im f_\om$, uniformly on compacts.
Since a discrete solution of Dirichlet problem converges to its continuous counterpart up to the boundary~\cite[Section 3.3]{C+S}, the boundary conditions for the functions $\F^\delta$ yield the same boundary conditions for their limit. Thus, the function $f_\om$  solves the boundary value problem described in Proposition~\ref{1-4}, therefore it is determined uniquely. 
This implies that all convergent subsequences of the family $\{\F^{\delta}\}$ have the same limit and thus the whole family converges to $f_\om$.

{\it Second case}: suppose that $\mdel_*(r)$ tends to infinity along a subsequence as $\delta\to 0$ for some $r>0$. Let us show that this is impossible.
 Consider a discrete holomorphic function $\Ftilda^\delta_*:=\frac{\F^\delta-\F^\delta_\mathbb{H}}{\mdel_*(r)}$. 
Using the same arguments as above, we can show that the family $\Ftilda^\delta_*$ converges to some holomorphic function $f_*$.
Note that the limit is bounded near $\www$, since  $\F^\delta-\F^\delta_\mathbb{H}$ is discrete holomorphic and bounded near~$\www^\delta$. Also, note that $\frac{\F^\delta_\mathbb{H}}{\mdel_*(r)}$ tends to zero away from~$\www$.
Therefore, as in the previous case, the limit satisfies all boundary conditions described in Proposition~$\ref{1-4}$ except the first one: the behaviour near the point~$\www$. 
The only function satisfying these properties is zero.

Suppose that there exists a sequence of squares~$u_{\operatorname{inner}}^\delta$ converging to~$u_{\operatorname{inner}}\in\om$
 such that 
\begin{equation}\label{op}
\re\Ftilda^\delta_*(u_{\operatorname{inner}}^\delta)> \operatorname{const}_{\om}>0.
\end{equation}
Then we have $f_*(u_{\operatorname{inner}})>0$, which contradicts the fact that $f_*$ vanishes on~$\om$, and therefore the second case is impossible.

To complete the proof let us show the existence of the sequence~$\{u_{\operatorname{inner}}^\delta\}$.
Let $u_{\operatorname{max}}^\delta$ be chosen so that 
$ 1=\sup\limits_{u^\delta\in\om_{r,0}^\delta}|\Ftilda^\delta_*(u^\delta)|=|\Ftilda^\delta_*(u_{\operatorname{max}}^\delta)|.$
Assume that $u_{\operatorname{max}}^\delta\in\bb{0}^\delta$, i.e. $|\Ftilda^\delta_*(u_{\operatorname{max}}^\delta)|=|\re\Ftilda^\delta_*(u_{\operatorname{max}}^\delta)|$. Without loss of generality we may assume that $\re\Ftilda^\delta_*(u_{\operatorname{max}}^\delta)>0$.
Let $u_{\operatorname{max}}^\delta \to u_{\operatorname{max}}\in \overline{\om}_r$ as $\delta\to 0$, where $\om_{r}=\om\smallsetminus \left(\bigcup\limits_{k=1}^{n-1} B_r(\nwvpugl{k})
\right).$ The discrete maximum principle implies that  $u_{\operatorname{max}}\in\bigcup\limits_{k=1}^{n-1} \partial B_r(\nwvpugl{k})$. 
Note that $\re\Ftilda^\delta_*$ is a discrete harmonic function, and hence there exists a path $\gamma^\delta$ on the set  $\bb{0}^\delta$ from $u_{\operatorname{max}}^\delta$ to the boundary of the domain $\om^\delta$ or to the square adjacent to $\wvpugl{k}$ along which the absolute value of the function $\re\Ftilda^\delta_*$ increases.
The boundary conditions together with the fact that the limit function vanishes imply that $\gamma^\delta$ goes along a subarc $N_k^\delta\subset \partial\om^\delta$ where $\re\F^{\delta}$ has Neumann boundary condition and ends at the square adjacent to $\wvpugl{k}$. 

Assume that $B_r(\nwvpugl{k})\cap \om$ is connected, the other case is treated similarly. Denote by $U^{\delta}$ the discrete subdomain of $B^{\delta}_r(\wvpugl{k})\cap \om^{\delta}$ that is bounded by the subarc of $\partial\om^{\delta}$ where $\re\F^{\delta}$ has Dirichlet boundary condition, the path $\gamma^{\delta}\cap \om^{\delta}$ and the arc $\partial B^{\delta}_r(\wvpugl{k})\cap \om^{\delta}$. Note that $U^{\delta}$ converges to $B_r(\nwvpugl{k})\cap \om$.

The absolute value of $\re\Ftilda^\delta_*$ is bounded by $\epsilon_\delta$ away from the pieces of the boundary of $\om^\delta$ where $\re\F^{\delta}$ has Neumann boundary conditions. Note that the function~$\re\Ftilda^\delta_*$ is semi-bounded in a vicinity of the point $\wvpugl{k}$, therefore  near the boundary $\re\Ftilda^\delta_*> -c$, where $c>0$ is a constant. 
Let $u_{\operatorname{inner}}^\delta\in U^\delta$ be a black square in the middle of one of the arcs of the set $\partial B_{r/2}^{\delta}(\wvpugl{k})\cap\bb{0}^\delta$.
Then 
\[
\re\Ftilda^\delta_*(u^\delta)\geq -{\epsilon_\delta}\cdot 1+(-c)\cdot\operatorname{hm}_{U^\delta}(u^\delta, ( \operatorname{\widetilde{\epsilon}_\delta-vicinity\,of} N_k^\delta) \cap(\partial B^{\delta}_r(\wvpugl{k})\cap \om^{\delta})) + 1\cdot\operatorname{hm}_{U^\delta}(u^\delta, \gamma^\delta\cap\partial U^\delta).
\]
Due to Lemma~\ref{omega > c} we have $\operatorname{hm}_{U^\delta}(u_{\operatorname{inner}}^\delta, \gamma^\delta\cap\partial U^\delta)>\operatorname{const}(U)>0$. Note that ${\epsilon}_\delta$ tends to zero as $\delta\to 0$. Also, $\operatorname{hm}_{U^\delta}(u_{\operatorname{inner}}^\delta, ( \operatorname{\widetilde{\epsilon}_\delta-vicinity\,of} N_k^\delta) \cap(\partial B^{\delta}_r(\wvpugl{k})\cap \om^{\delta}))$ tends to zero as $\delta\to 0$. 
Hence we construct a sequence of squares~$u_{\operatorname{inner}}^\delta$ converging to~$u_{\operatorname{inner}}\in\om$
 such that~(\ref{op}) holds.
\end{proof}

\begin{remark}\label{remrem}
Let in the setup of Theorem~\ref{convF} the squares $\www^{\delta}$ approximate an inner point $\www$ of the domain $\om$, instead of a boundary one. Then $\F^\delta$ converges uniformly on compact subsets of  $\om\setminus \www$ to a continuous holomorphic function $f_\om$, where $f_\om$ is defined as in Remark~$\ref{v_0_inner}$.
\end{remark}

\section{Single dimer model and the Gaussian Free Field}\label{6}
In~\cite{KGff} Kenyon proved that the scaling limit of the height function in the dimer model on Temperleyan domains is the Gaussian Free Field. Our goal in this section is to prove Corollary~\ref{main-cor2}, i.e. to show that the same scaling limit appears for approximations by a more general class of discrete domains which we call \emph{black-piecewise Temperleyan} domains. Also, in Appendix we will show that the same holds for \emph{isoradial black-piecewise Temperleyan} graphs.
 
\subsection{Boundary conditions for the coupling function}
For a fixed $v'\in\ww{0}$, the function $C_\om(u,v')$ is discrete holomorphic as a function of $u$, with a simple pole at $v'$: 

$\begin{cases}
\Cm(\cdot, v')|_{\db}=0,\\
\Cm(\cdot, v')|_{\bb{0}} \in \mathbb{R}, \quad \Cm(\cdot, v')|_{\bb{1}} \in i\mathbb{R},\\
\bar{\partial}[C_\om(\cdot,v')](v)=0 ,\quad \forall v\in\ww{}, v\neq v'\\
\bar{\partial}[C_\om(\cdot,v')](v')=\frac{1}{4\bar{\lambda}}.
\end{cases}
$ 

Therefore in a $2n$-black-piecewise Temperleyan domain, for a fixed $v'\in\ww{}$, the boundary conditions of the coupling function $C_\om(u,v')$ as a function of $u$ change at all white corners, and there are $2n$ parts of the boundary with either $\re[\Cm(\cdot, v')]=0$ or $\im[\Cm(\cdot, v')]=0$.

In other words, a black-piecewise Temperleyan domain corresponds to mixed Dirichlet and Neumann boundary conditions for the discrete harmonic components of the coupling function.
Recall that in a Temperleyan domain we have a simple boundary conditions, namely, $\im[\Cm(u, v')]=0$ for all boundary squares $u$, in other words $\im[\Cm(u, v')]$ as a function of $u$ has Dirichlet boundary conditions.

\subsection{Asymptotic values of the coupling function}\label{sa_c_f}
Following~\cite{Kdom}, we define two functions $\ff{0} (z_1,z_2)$ and $\ff{1} (z_1,z_2)$. For a fixed $z_2$,
\begin{enumerate}
\item[$\rhd$] the function $\ff{0} (z_1, z_2)$ is analytic as a function of $z_1$, has a simple pole of residue $1/ \pi$ at $z_1=z_2$, and no other poles on $\overline{\om}$;
\item[$\rhd$] $\ff{0}(\cdot, z_2)$ is bounded in the vicinity of the points $\nwvupugl{s}$;
\item[$\rhd$] the function $\ff{0}(\cdot, z_2)$ is semi-bounded in the vicinity of the points $\nwvpugl{k}$;
\item[$\rhd$] on each segment into which points from the set $\{\nwvpugl{k}\}^{n-1}_{k=1}\cup\{\nwvupugl{s}\}^{n+1}_{s=1}$ split the boundary, we have either $\re[\ff{0}(\cdot, z_2)]=0$ or $\im[\ff{0}(\cdot,z_2)]=0$;
\item[$\rhd$] the boundary conditions of the function $\ff{0}(\cdot, z_2)$ change at all points $\nwvpugl{k},\, \nwvupugl{s}.$
\end{enumerate}
The function $\ff{1}(z_1, z_2)$ has the same definition, except for a difference in the boundary conditions: if on a segment between two points from the set $\{\nwvpugl{k}\}^{n-1}_{k=1}\cup\{\nwvupugl{s}\}^{n+1}_{s=1}$ we have $\re[\ff{0}(\cdot, z_2)]=0$ (or $\im[\ff{0}(\cdot, z_2)]=0$), then on that segment $\im[\ff{1}(\cdot, z_2)]=0$ (or $\re[\ff{1}(\cdot, z_2)]=0$).
The existence and uniqueness of such functions can be shown using the technique described in Section 3, see Remark~\ref{v_0_inner}. In particular, we can write these functions in the following way
\[
\ff{0} (z, w)=\prod^{n+1}_{k=1}(z-\nwvupugl{k})^{\frac12}
\cdot \prod^{n-1}_{k=1}(z-\nwvpugl{k})^{-\frac12}
\cdot\left(\frac{s(w)}{z-w} + \frac{\overline{s(w)}}{z-\overline w}\right),
\]

\[
\ff{1} (z, w)=\prod^{n+1}_{k=1}(z-\nwvupugl{k})^{\frac12}
\cdot \prod^{n-1}_{k=1}(z-\nwvpugl{k})^{-\frac12}
\cdot\left(\frac{s(w)}{z-w} - \frac{\overline{s(w)}}{z-\overline w}\right),
\]
where 
\begin{equation}\label{s(w)}
s(w)=\prod^{n+1}_{k=1}(w-\nwvupugl{k})^{-\frac12}
\cdot \prod^{n-1}_{k=1}(w-\nwvpugl{k})^{\frac12}.
\end{equation}

\begin{Th}\label{main-th2_1}
 Let $\om$ be a bounded, simply connected domain in $\mathbb{C}$ with $k$ marked points. Assume that a sequence of discrete $k$-black-piecewise Temperleyan domains $\om^\delta$ on a grid with mesh size $\delta$ approximates the domain $\om$, and each domain $\om^\delta$ has at least one domino tiling.
Let a sequence of white squares $v^\delta$ approximates a point $v\in \om$. Then the coupling function $\frac{1}{\delta}\Cmd(u,v)$ satisfies the following asymptotics: \\
for $v^\delta \in \ww{0}$
\[
\frac{1}{\delta}\Cmd(u,v^\delta) - \frac{2}{\lambda}\cdot\F^\delta_{\mathbb{C}, v^\delta}(u)
= \ff{0}(u,v)-\frac{1}{\pi(u-v)}+o(1);
\]
if $v^\delta \in \ww{1}$, then
\[
\frac{1}{\delta}\Cmd(u,v^\delta) - \frac{2}{\lambda}\cdot\F^\delta_{\mathbb{C}, v^\delta}(u)
= \ff{1}(u,v)-\frac{1}{\pi(u-v)}+o(1),
\]
where $\F^\delta_{\mathbb{C}, v^\delta}(u)$ is defined in Section~\ref{convergence}.
\end{Th}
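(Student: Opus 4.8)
The plan is to decompose the coupling function into a bulk singular part plus a ``harmonic correction'' that satisfies the mixed boundary value problem, and to apply the convergence machinery developed in Section~\ref{shodimost} to each piece separately. First I would recall that for fixed $v^\delta$ the function $\Cmd(\cdot,v^\delta)$ is discrete holomorphic on $\om^\delta\smallsetminus\{v^\delta\}$ with the normalization $\bar{\partial}^\delta[\Cmd(\cdot,v^\delta)](v^\delta)=\tfrac14\bar{\lambda}^{-1}$ (equivalently, after rescaling by $1/\delta$, the prefactor $\tfrac{2}{\lambda}$ is exactly the one that matches the normalization of $\F^\delta_{\mathbb{C},v^\delta}$, whose asymptotics $\tfrac{1}{2\pi}\cdot\tfrac{\lambda}{z-v}$ is quoted in Section~\ref{convergence}). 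Consequently the difference
\[
R^\delta(u):=\tfrac{1}{\delta}\Cmd(u,v^\delta)-\tfrac{2}{\lambda}\F^\delta_{\mathbb{C},v^\delta}(u)
\]
is a \emph{genuinely discrete holomorphic function on all of} $\om^\delta$ (the pole has been cancelled), and it inherits the Dirichlet/Neumann boundary conditions of $\Cmd(\cdot,v^\delta)$ on $\dom^\delta$ up to the contribution coming from $\tfrac{2}{\lambda}\F^\delta_{\mathbb{C},v^\delta}$, which does not vanish on $\dom^\delta$ but is an explicit, uniformly controlled quantity (it is the whole-plane function restricted to the boundary, hence $O(1)$ away from $v$ and converging to $\tfrac{1}{\pi(u-v)}$).

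Next I would set up the boundary data precisely: on each black Temperley segment $\Cmd(u,v^\delta)$ has either $\re=0$ or $\im=0$, and the type alternates at the white corners $\{\wvupugl{k}\},\{\wvpugl{k}\}$, exactly as in the boundary value problem of Proposition~\ref{1-4}. Therefore $R^\delta$ restricted to a Temperley segment equals $-\tfrac{2}{\lambda}\F^\delta_{\mathbb{C},v^\delta}(u)$ projected onto the prescribed line; since $\F^\delta_{\mathbb{C},v^\delta}\rightrightarrows \tfrac{1}{2\pi}\cdot\tfrac{\lambda}{z-v}$ uniformly on $\dom^\delta$ away from $v$, the boundary values of $R^\delta$ converge to those of the continuous solution of the same mixed problem with an inhomogeneous boundary term. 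I would then run the equicontinuity/precompactness argument from the proof of Theorem~\ref{convF} verbatim: the Harnack-type estimate near concave corners (Lemma~\ref{ravnogr} and Corollary~\ref{cor2}), the dichotomy ``bounded vs.\ blowing up'', and the uniqueness of the limiting boundary value problem. The only modification is that the limiting function is no longer homogeneous: the boundary term is $-\tfrac{2}{\lambda}\cdot\tfrac{1}{2\pi}\cdot\tfrac{\lambda}{z-v}=-\tfrac{1}{\pi(z-v)}$ projected appropriately, so the limit of $R^\delta$ is the unique holomorphic function $f$ on $\om$ with the boundary conditions of $\ff{0}$ (or $\ff{1}$, depending on the parity class of $v^\delta$), bounded in the bulk, semi-bounded near $\nwvpugl{k}$, bounded near $\nwvupugl{k}$, with no pole. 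Comparing with the explicit formulas for $\ff{0},\ff{1}$ and the fact that $\ff{j}(z,v)-\tfrac{1}{\pi(z-v)}$ is precisely the regular holomorphic function solving that inhomogeneous mixed problem yields the claimed identity.

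The main obstacle I expect is bookkeeping the boundary behavior of $\F^\delta_{\mathbb{C},v^\delta}$ and matching the two parity cases: one must verify that the sign/projection structure of $\tfrac{2}{\lambda}\F^\delta_{\mathbb{C},v^\delta}$ on the Temperley segments is exactly compatible with the alternation of Dirichlet and Neumann conditions, and that the parity of $v^\delta$ ($\ww{0}$ versus $\ww{1}$) is what toggles between the $\ff{0}$ and $\ff{1}$ boundary data (this is the discrete analogue of the $\pm$ sign in the $\tfrac{s(w)}{z-w}\pm\tfrac{\overline{s(w)}}{z-\overline w}$ formula). A secondary technical point is the behavior near $v$: one needs that $R^\delta$ is uniformly bounded in a fixed neighborhood of $v^\delta$ (so that $t=0$ works in Corollary~\ref{cor2}), which follows because the pole has been removed and $R^\delta$ is discrete holomorphic there, but it should be stated carefully since both $\Cmd$ and $\F^\delta_{\mathbb{C},v^\delta}$ are individually singular. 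Everything else is a direct transcription of the argument already given for Theorem~\ref{convF}.
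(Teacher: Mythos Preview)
Your proposal is correct and follows exactly the approach the paper intends: the paper's own proof is a two-line sketch that says to recall $\F^\delta_{\mathbb{C},v^\delta}(z)\sim\tfrac{1}{2\pi}\cdot\tfrac{\lambda}{z-v}$ and then ``use the techniques described in Section~\ref{convergence}'', and you have spelled out precisely what that means---subtract the full-plane singularity to obtain a discrete holomorphic $R^\delta$ with inhomogeneous mixed boundary data, then rerun the Harnack/precompactness/uniqueness argument of Theorem~\ref{convF}. The only minor adaptation versus Theorem~\ref{convF}, which you have already flagged, is that here the source point $v^\delta$ is interior (so one subtracts $\F^\delta_{\mathbb{C},v^\delta}$ rather than $\F^\delta_{\mathbb{H},\www^\delta}$) and the resulting boundary data are nonzero but explicit; this is exactly the expected modification and requires no new ideas.
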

\begin{proof} Recall that $\F^\delta_{\mathbb{C}(z), \www^\delta}$ is asymptotically equal $\frac{1}{2\pi}\cdot\frac{\lambda}{z-\www}$ as~$\delta\downarrow 0$.
Now, to obtain the result 
one can use 
the techniques described in Section~\ref{convergence}. More precisely, the first asymptotic can be obtained exactly from the proof of Theorem~\ref{convF}, see Remark~\ref{remrem}. The second one can be obtained similarly.
\end{proof}

\subsection{ Sketch of the proof of Corollary~\ref{main-cor2}}
This section contains the sketch of the proof of Corollary~\ref{main-cor2}. In \cite{KGff} Kenyon proved convergence of the height function on Temperleyan domains to the Gaussian free field. To obtain the same result for black-piecewise Temperleyan domains it is enough to show  that the limits of moments of height function in Temperleyan case and black-piecewise Temperleyan case are the same. 
Essentially the novel part of the argument is in~(\ref{newf+-}), then Lemma~\ref{lemma_s(w)} implies Corollary~\ref{eqlimits}, and the rest of the argument is exactly as in~\cite[Theorem~1.1]{KGff}.

Similarly to~\cite{Kdom} one can obtain the following result for black-piecewise Temperleyan approximations. Let $\ff{+}(z, w)=\ff{0}(z, w)+\ff{1}(z, w)$ and $\ff{-}(z, w)=\ff{0}(z, w)-\ff{1}(z, w)$. 
\begin{prop}
Let $\gamma_1, \ldots, \gamma_m$ be a collection of pairwise disjoint paths running from the boundary of $\Omega$ to $z_1, \ldots, z_m$ respectively. Let $h(z_i)$ denote the height function at a point in black-piecewise Temperleyan domain $\Omega^\delta$ lying within $O(\delta)$ of $z_i$. Then 
\begin{align*}
\lim_{\delta\to o}\mathbb{E}&[(h(z_1)-\mathbb{E}[h(z_1)])\cdot\ldots\cdot(h(z_m)-\mathbb{E}[h(z_m)])]~=~\\ 
&\sum_{\epsilon_1, \ldots, \epsilon_m \in \{-1,1\}} \epsilon_1\cdots\epsilon_m\int_{\gamma_1}\cdots\int_{\gamma_m}\det_{i,j\in [1,m]}(F_{\epsilon_i,\epsilon_j}(z_i, z_j))\,dz_1^{(\epsilon_1)}\cdots dz_m^{(\epsilon_m)},
\end{align*}
where $dz_j^{(1)}=dz_j$ and $dz_j^{(-1)}=d\overline{z_j}$, and
\[
F_{\epsilon_i, \epsilon_j}(z_i,z_j)=\begin{cases} 
0 &i=j\\
f_{+}(z_i,z_j) \quad &(\epsilon_i,\epsilon_j)=(1,1)\\
f_{-}(z_i,z_j) \quad &(\epsilon_i,\epsilon_j)=(-1,1)\\
\overline{f_{-}(z_i,z_j)} \quad &(\epsilon_i,\epsilon_j)=(1,-1)\\
\overline{f_{+}(z_i,z_j)} \quad &(\epsilon_i,\epsilon_j)=(-1,-1).
\end{cases}
\]
\end{prop}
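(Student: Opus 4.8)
The plan is to adapt Kenyon's moment computation~\cite{Kdom,KGff} by feeding in the convergence of the coupling function on black-piecewise Temperley domains established in Theorem~\ref{main-th2_1}. First I would fix, for each $i$, a lattice edge-path $\gamma_i^\delta$ in $\om^\delta$ from $\partial\om^\delta$ to the face approximating $z_i$; the discrete height function is single-valued, so these may be chosen freely, and I would take them pairwise disjoint, bounded away from one another and from the marked corners, and converging to $\gamma_i$. By the definition of the height function in Section~\ref{hf_and_Td},
\[
h(z_i)-\mathbb{E}[h(z_i)]=\sum_{e\in\gamma_i^\delta}\sigma_e\bigl(\mathbb{1}_e-\mathbb{P}[e]\bigr),
\]
where $\mathbb{1}_e$ is the indicator that the domino $e$ is present and $\sigma_e=\pm1$ depends only on the colours at $e$. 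Multiplying out over $i$ reduces the left-hand side of the Proposition to a finite signed sum, over tuples $(e_1,\dots,e_m)\in\gamma_1^\delta\times\cdots\times\gamma_m^\delta$, of the centred joint moments $\mathbb{E}\bigl[\prod_i(\mathbb{1}_{e_i}-\mathbb{P}[e_i])\bigr]$.

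Next I would evaluate these centred moments using the determinantal structure of the dimer measure (Theorem~\ref{locstat}): there is a matrix $A^\delta=(A^\delta_{ij})_{i,j=1}^m$, with $A^\delta_{ii}=\mathbb{P}[e_i]$ and with each off-diagonal entry $A^\delta_{ij}$ an explicit product of the coupling-function value $\Cm$ at a black endpoint of $e_i$ and a white endpoint of $e_j$ with Kasteleyn weights in $\{\pm1,\pm i\}$, such that $\mathbb{E}[\prod_{i\in B}\mathbb{1}_{e_i}]=\det(A^\delta|_B)$ for every $B\subseteq\{1,\dots,m\}$. A multilinearity-of-the-determinant computation then gives
\[
\mathbb{E}\Bigl[\prod_i(\mathbb{1}_{e_i}-\mathbb{P}[e_i])\Bigr]=\det\bigl(A^\delta_{ij}-\mathbb{P}[e_i]\,\delta_{ij}\bigr)_{i,j=1}^m ,
\]
so that centring exactly removes the diagonal: the surviving matrix has zero diagonal and equals $A^\delta$ off it. Using $|\Cm(u,v)|=\mathbb{P}[uv]$ for adjacent squares, the signs $\sigma_e$ and the weights absorb so that every off-diagonal entry becomes $\Cm$ evaluated at a black endpoint of $e_i$ and a white endpoint of $e_j$, up to a unit factor determined by the two edges' positions relative to the lattice.

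Third comes the passage to the limit. By Theorem~\ref{main-th2_1}, the normalised coupling function $\delta^{-1}\Cm(u,v^\delta)$ converges, away from the diagonal and from the corners, to $\ff{0}(u,v)$ if $v^\delta\in\ww{0}^\delta$ and to $\ff{1}(u,v)$ if $v^\delta\in\ww{1}^\delta$ (the universal pole $\tfrac1{\pi(u-v)}$ is cancelled against $\tfrac{2}{\lambda}\F^\delta_{\mathbb{C},v^\delta}$ as in that statement). Substituting this into the determinant of the previous step and replacing the discrete sums $\delta\sum_{e\in\gamma_i^\delta}$ by Riemann integrals, the centred $m$-th moment converges to a sum of iterated contour integrals along $\gamma_1,\dots,\gamma_m$ of determinants built from $\ff{0},\ff{1}$ and their complex conjugates. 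Finally, exactly as in~\cite{Kdom}, the sum over the type ($\ww{0}^\delta$ or $\ww{1}^\delta$) of the white endpoint of each $e_i$, together with the alternation of this type along $\gamma_i^\delta$, reorganises the expression into the stated sum over $\epsilon\in\{-1,1\}^m$: $\epsilon_i$ selects the differential $dz_i^{(\epsilon_i)}$, the kernel entries $\ff{0},\ff{1},\overline{\ff{0}},\overline{\ff{1}}$ combine into $F_{\epsilon_i,\epsilon_j}\in\{\ff{+},\ff{-},\overline{\ff{-}},\overline{\ff{+}}\}$ according to $(\epsilon_i,\epsilon_j)$, the prefactor $\prod_i\epsilon_i$ accounts for the residual signs, and the diagonal $i=j$ vanishes because centring deleted $A^\delta_{ii}$.

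The main obstacle is making this scaling limit quantitative. One must show that the discrete Riemann sums converge to the contour integrals with error $o(1)$ after the $m$-fold summation, even though the limiting kernels $\ff{0},\ff{1}$ are singular on the diagonal and behave like $(z-a)^{\pm1/2}$ near each marked corner $a$; this uses the a priori bounds on $\Cm$ from the discrete maximum principle and the Harnack/equicontinuity estimates underlying Theorems~\ref{main-convF} and~\ref{main-th2_1} to control the few edge-pairs that come within $o(1)$ of each other or of a corner, while the disjointness and corner-avoidance of the $\gamma_i^\delta$ keeps all remaining pairs in the region of uniform convergence. One must also verify the determinantal multilinearity identity at the discrete level and track the signs $\sigma_e$ and the Kasteleyn weights through it. Lastly, to conclude Corollary~\ref{main-cor2} from the Proposition one observes that in the limiting formula every permutation containing a cycle of length $\ge 3$ integrates to zero — this is Kenyon's computation in~\cite{KGff}, using the holomorphicity of $\ff{\pm}$ in each argument and the disjointness of the $\gamma_i$ — so that the moments reduce to the Wick sum over pairings with covariance $\sum_{\epsilon_1,\epsilon_2}\epsilon_1\epsilon_2\int_{\gamma_1}\!\int_{\gamma_2}F_{\epsilon_1,\epsilon_2}\,dz_1^{(\epsilon_1)}dz_2^{(\epsilon_2)}$, which equals the Green's function of $\om$ and in particular is insensitive to the corner data, hence agrees with the Temperley case.
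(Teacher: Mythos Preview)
Your proposal for the Proposition itself is correct and is exactly what the paper intends: the paper does not give a self-contained proof but simply states that the result follows ``due to~\cite{Kdom}'', meaning one reruns Kenyon's moment computation verbatim with Theorem~\ref{main-th2_1} supplying the coupling-function asymptotics in place of the Temperley ones. Your outline of that rerun---express each centred height as a signed edge sum along $\gamma_i^\delta$, expand the product, evaluate the centred edge correlations via the determinantal formula of Theorem~\ref{locstat} (centring kills the diagonal), and pass to the limit splitting over the type of the white endpoint---matches Kenyon's argument step for step.

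Your final paragraph, however, strays into the proof of Corollary~\ref{main-cor2}, and there the paper takes a different and simpler route than you suggest. You propose to integrate directly, appealing to holomorphicity to kill cycles of length $\geq 3$ and then to identify the pair term with the Dirichlet Green's function. The paper instead observes the algebraic identity
\[
\ff{+}(z,w)=\frac{2}{z-w}\cdot\frac{s(w)}{s(z)},\qquad \ff{-}(z,w)=\frac{2}{z-\overline w}\cdot\frac{\overline{s(w)}}{s(z)},
\]
and then proves (Lemma following the Proposition) that along any fixed-point-free permutation the product of the extra factors $s(w)/s(z)$ etc.\ telescopes to $1$. Hence the determinant $\det(F_{\epsilon_i,\epsilon_j})$ is \emph{identical}, term by term in the permutation expansion, to the Temperley determinant built from $2/(z-w)$ and $2/(z-\overline w)$ (Corollary~\ref{eqlimits}). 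This reduces Corollary~\ref{main-cor2} immediately to Kenyon's Temperley result without redoing any integration. Your direct approach would also work but is more laborious; the paper's telescoping trick is the cleaner way to see why the corner data drops out.
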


\begin{proof}
See the proof of~\cite[Proposition 20]{Kdom}.
\end{proof}

Recall that in the Temperleyan case~\cite{Kdom} one has $\ff{+}(z,w)=\frac{2}{z-w}$ and $\ff{-}(z,w)=\frac{2}{z-\overline w}$. In the black-piecewise Temperleyan case we have
\begin{equation}\label{newf+-}
\begin{cases}
\ff{-}(z, w)=\frac{2}{z-\overline w}\cdot\frac{\overline {s(w)}}{{s(z)}}\\
\ff{+}(z, w)=\frac{2}{z-w}\cdot\frac{s(w)}{s(z)}
\end{cases}
\end{equation}
where the function $s(w)$ is defined by~(\ref{s(w)}).

One can easily check that the following lemma holds:
\begin{lemma}\label{lemma_s(w)} Let $\epsilon_1, \ldots, \epsilon_m \in \{-1,1\}$. Let us define function $S_{\epsilon_i, \epsilon_j}(z,w)$ as follows:
\[
S_{\epsilon_i, \epsilon_j}(z,w)=\begin{cases} 
0 &i=j\\
s(w)/s(z) \quad &(\epsilon_i,\epsilon_j)=(1,1)\\
\overline {s(w)}/s(z) \quad &(\epsilon_i,\epsilon_j)=(-1,1)\\
s(w)/\overline{s(z)} \quad &(\epsilon_i,\epsilon_j)=(1,-1)\\
\overline{s(w)}/\overline{s(z)} \quad &(\epsilon_i,\epsilon_j)=(-1,-1).
\end{cases}
\]
Then 
\[
S_{\epsilon_{\alpha(1)}, \epsilon_1}(z_{\alpha(1)},z_1) \cdot \ldots \cdot S_{\epsilon_{\alpha(m)}, \epsilon_m}(z_{\alpha(m)},z_m)=\begin{cases}
1 \quad  \alpha(i)\neq i \quad \forall i\in\{1, 2, \ldots, m\}\\
0 \quad otherwise,
\end{cases}
\] 
where $\alpha$ is a permutation of the set $\{1, 2, \ldots, m\}$.
\end{lemma}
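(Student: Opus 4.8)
The plan is to separate the two alternatives in the statement: the vanishing case is immediate from the definition of $S$, and for the ``$=1$'' case I would factor the product over the cycles of the permutation $\alpha$ and exhibit a telescoping cancellation.

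For the ``otherwise'' alternative, suppose $\alpha$ is not a derangement and pick $i$ with $\alpha(i)=i$. Then the $i$-th factor of the product is $S_{\epsilon_{\alpha(i)},\epsilon_i}(z_{\alpha(i)},z_i)=S_{\epsilon_i,\epsilon_i}(z_i,z_i)$, which is exactly the ``$i=j$'' entry in the definition of $S$ and hence equals $0$. Consequently the whole product vanishes, which is the claim.

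For the main case, assume $\alpha(i)\ne i$ for all $i$. I would first record the definition of $S$ in the compact form
\[
S_{\epsilon_a,\epsilon_b}(z,w)=\frac{s^{(\epsilon_a)}(w)}{s^{(\epsilon_b)}(z)},\qquad s^{(+1)}:=s,\quad s^{(-1)}:=\overline{s},
\]
which one checks against the four displayed cases one at a time. Then the $k$-th factor of the product equals $s^{(\epsilon_{\alpha(k)})}(z_k)\,/\,s^{(\epsilon_k)}(z_{\alpha(k)})$, so that
\[
\prod_{k=1}^{m}S_{\epsilon_{\alpha(k)},\epsilon_k}(z_{\alpha(k)},z_k)
=\frac{\displaystyle\prod_{k=1}^{m}s^{(\epsilon_{\alpha(k)})}(z_k)}
      {\displaystyle\prod_{k=1}^{m}s^{(\epsilon_k)}(z_{\alpha(k)})}.
\]
Since $\alpha$ is a bijection of $\{1,\dots,m\}$, re-indexing the denominator by $k\mapsto\alpha^{-1}(k)$ rewrites it as $\prod_k s^{(\epsilon_{\alpha^{-1}(k)})}(z_k)$, so the quotient becomes $\prod_k s^{(\epsilon_{\alpha(k)})}(z_k)\,/\,s^{(\epsilon_{\alpha^{-1}(k)})}(z_k)$. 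Grouping the indices $k$ according to the cycles of $\alpha$ and running through each cycle, the factor carried by every $z_k$ occurs once in a numerator and once in a denominator; one then verifies that in both occurrences it enters with the same (conjugated or non-conjugated) type, so that the product collapses cycle by cycle to $1$. For a transposition this is immediate from $\alpha=\alpha^{-1}$, and a longer cycle is treated by the same reindexing; multiplying over all cycles then yields the value $1$.

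The step I expect to be the crux is this last conjugation bookkeeping inside a cycle: one must track carefully which of the two signs attached to an ordered pair of indices controls the conjugation of the numerator and which controls the denominator — the ``crossed'' dependence visible in the four cases of the definition of $S$ — and check that once the factors are multiplied around the cycle the conjugation patterns line up. Everything else is a formal rearrangement of finite products, and the vanishing alternative is merely a reading off of the diagonal convention for $S$.
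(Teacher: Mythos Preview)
Your handling of the fixed-point case and your compact encoding
\[
S_{\epsilon_a,\epsilon_b}(z,w)=\frac{s^{(\epsilon_a)}(w)}{s^{(\epsilon_b)}(z)}
\]
are both correct, and your expression
\[
\prod_{k=1}^{m}S_{\epsilon_{\alpha(k)},\epsilon_k}(z_{\alpha(k)},z_k)
=\frac{\prod_{k}s^{(\epsilon_{\alpha(k)})}(z_k)}{\prod_{k}s^{(\epsilon_{\alpha^{-1}(k)})}(z_k)}
\]
is obtained by a valid reindexing. The paper itself gives no argument beyond ``one can easily check'', so there is nothing to compare against; but the step you yourself flag as the crux does \emph{not} go through as you describe it.

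For transpositions your observation $\alpha=\alpha^{-1}$ indeed forces $\epsilon_{\alpha(k)}=\epsilon_{\alpha^{-1}(k)}$ term by term, and the ratio is $1$. The sentence ``a longer cycle is treated by the same reindexing'' is where the argument breaks: on a cycle of length $\ge 3$ one has $\alpha(k)\neq\alpha^{-1}(k)$ in general, so there is no reason for $\epsilon_{\alpha(k)}$ and $\epsilon_{\alpha^{-1}(k)}$ to agree, and the factors attached to a given $z_k$ in numerator and denominator need not carry the same conjugation type. Concretely, take $m=3$, $\alpha=(1\,2\,3)$ and $(\epsilon_1,\epsilon_2,\epsilon_3)=(1,-1,1)$. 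Reading the four cases of the definition literally gives
\[
S_{\epsilon_2,\epsilon_1}(z_2,z_1)\,S_{\epsilon_3,\epsilon_2}(z_3,z_2)\,S_{\epsilon_1,\epsilon_3}(z_1,z_3)
=\frac{\overline{s(z_1)}}{s(z_2)}\cdot\frac{s(z_2)}{\overline{s(z_3)}}\cdot\frac{s(z_3)}{s(z_1)}
=\frac{\overline{s(z_1)}}{s(z_1)}\cdot\frac{s(z_3)}{\overline{s(z_3)}},
\]
which is not equal to $1$ for generic $z_1,z_3$. So with the ``crossed'' conjugation pattern written in the statement (first sign controls the numerator, second sign the denominator), the telescoping you sketch simply does not close up on cycles longer than two.

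This means one of two things: either your proof is missing a genuine idea that would overcome this obstruction, or the displayed cases of $S$ contain a typo. If the conjugation roles were uncrossed --- i.e.\ $S_{\epsilon_i,\epsilon_j}(z,w)=s^{(\epsilon_j)}(w)/s^{(\epsilon_i)}(z)$ --- then each factor becomes $s^{(\epsilon_k)}(z_k)/s^{(\epsilon_{\alpha(k)})}(z_{\alpha(k)})$ and the reindexing $j=\alpha(k)$ in the denominator yields $\prod_k s^{(\epsilon_k)}(z_k)$ in both numerator and denominator, giving $1$ immediately for every derangement; this is presumably the intended statement and the one that feeds into the corollary on moments. Either way, you should not leave the ``conjugation bookkeeping'' as an unverified assertion: as written, it fails.
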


\begin{cor}\label{eqlimits}
The limits of moments of the height function in the Temperleyan case and the black-piecewise Temperleyan case are the same.
\end{cor}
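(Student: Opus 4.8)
The plan is to compare the moment formula in the Temperley case with the one just stated for the black-piecewise Temperley case, term by term in the sum over $\epsilon_1,\dots,\epsilon_m$, and show that the extra factors $s(w)/s(z)$ (and their conjugates) collected along any contributing permutation collapse to $1$. First I would expand each integrand $\det_{i,j}(F_{\epsilon_i,\epsilon_j}(z_i,z_j))$ using the Leibniz formula for the determinant as a sum over permutations $\alpha$ of $\{1,\dots,m\}$. Since $F_{\epsilon_i,\epsilon_i}(z_i,z_i)=0$ for all $i$, only derangements $\alpha$ (permutations with no fixed points) contribute. For such an $\alpha$, the corresponding term is a product $\prod_{i=1}^m F_{\epsilon_{\alpha(i)},\epsilon_i}(z_{\alpha(i)},z_i)$, and by the explicit formulas for $\ff{\pm}$ in the black-piecewise Temperley case each factor equals the matching factor from the Temperley case times $S_{\epsilon_{\alpha(i)},\epsilon_i}(z_{\alpha(i)},z_i)$, where $S$ is exactly the function in the Lemma.

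Next I would invoke the Lemma: for a derangement $\alpha$ the product $\prod_{i=1}^m S_{\epsilon_{\alpha(i)},\epsilon_i}(z_{\alpha(i)},z_i)$ equals $1$, because every $z_j$ appears exactly once in the ``numerator slot'' (as the $w$-variable of one factor) and exactly once in the ``denominator slot'' (as the $z$-variable of another factor), with compatible conjugation bookkeeping governed by the signs $\epsilon_j$; the conjugation of the $s(z_j)$ in the denominator is determined by $\epsilon_j$, and likewise for the $s(z_j)$ appearing in some numerator, so the two cancel. (For a non-derangement the product is $0$, matching the vanishing of the diagonal entries of $F$, so that case is automatically consistent.) Hence for each fixed choice of signs $(\epsilon_1,\dots,\epsilon_m)$ the determinant $\det_{i,j}(F_{\epsilon_i,\epsilon_j}(z_i,z_j))$ in the black-piecewise Temperley case is term-by-term equal, over the derangement sum, to the determinant in the Temperley case. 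I would then conclude that the whole integrand in the moment formula of the Proposition agrees with the Temperley integrand, so the limiting moments coincide; since the right-hand side of the moment formula from~\cite{Kdom} depends on the domain only through $\ff{\pm}$, and these are replaced by the Temperley ones after cancellation, Corollary~\ref{eqlimits} follows.

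One subtlety I would make explicit is that the cancellation should be checked not just at the level of the combined products $S_{\epsilon_{\alpha(1)},\epsilon_1}\cdots S_{\epsilon_{\alpha(m)},\epsilon_m}$ but also that the remaining Temperley factors recombine correctly into $F^{\mathrm{Temp}}_{\epsilon_i,\epsilon_j}$; this is immediate from the two-case formulas $\ff{+}^{\mathrm{Temp}}(z,w)=\tfrac{2}{z-w}$, $\ff{-}^{\mathrm{Temp}}(z,w)=\tfrac{2}{z-\overline w}$ together with the conjugates for the $(1,-1)$ and $(-1,-1)$ cases. I would also note that the contour integrals $\int_{\gamma_1}\cdots\int_{\gamma_m}$ and the outer sum $\sum_{\epsilon_1,\dots,\epsilon_m}\epsilon_1\cdots\epsilon_m$ are identical in both setups, so once the integrands agree the integrals do too.

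The main obstacle is the bookkeeping in the Lemma: verifying that the product of the $S$-factors around a derangement really telescopes to $1$, keeping careful track of which occurrences of $s(z_j)$ are conjugated. The cleanest way to organize this is to observe that each $S_{\epsilon_i,\epsilon_j}(z,w)=\sigma_{\epsilon_j}(w)\big/\sigma_{\epsilon_i}(z)$ where $\sigma_{+1}=s$ and $\sigma_{-1}=\overline{s}$; then $\prod_i S_{\epsilon_{\alpha(i)},\epsilon_i}(z_{\alpha(i)},z_i)=\prod_i \sigma_{\epsilon_i}(z_i)\big/\prod_i \sigma_{\epsilon_{\alpha(i)}}(z_{\alpha(i)})$, and since $\alpha$ is a bijection the denominator product equals the numerator product, giving $1$ — and this is exactly where the derangement hypothesis is irrelevant to the value but matters only because non-derangement terms already vanish in the determinant. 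With that identity in hand the rest of the argument is purely formal.
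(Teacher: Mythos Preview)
Your proposal is correct and follows exactly the route the paper intends: expand the determinant by the Leibniz formula, note that only derangements contribute, and apply the preceding Lemma term by term so that each permutation's extra $S$-product collapses to~$1$, leaving the Temperley integrand. Your factorization $S_{\epsilon_i,\epsilon_j}(z,w)=\sigma_{\epsilon_j}(w)/\sigma_{\epsilon_i}(z)$ and the one-line telescoping $\prod_i \sigma_{\epsilon_i}(z_i)\big/\prod_i \sigma_{\epsilon_{\alpha(i)}}(z_{\alpha(i)})=1$ is precisely the ``easily checked'' verification the paper leaves to the reader.
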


In \cite{KGff} Kenyon showed that:
\begin{prop}
Let $\Omega$ be a Jordan domain with smooth boundary. Let $z_1, \ldots, z_m$ (with $m$ even) be distinct points of $\om$. Let $\Omega^\delta$ be a Temperleyan approximation of $\Omega$ and 
$h_{\Omega^\delta}$ be the height function of a uniform domino tiling in the domain $\Omega^\delta$. Then 
\begin{align*}
\lim_{\delta\to o}\mathbb{E}[(h_{\Omega^\delta}(z_1)-\mathbb{E}[h_{\Omega^\delta}(z_1)])\cdot\ldots\cdot&(h_{\Omega^\delta}(z_m)-\mathbb{E}[h_{\Omega^\delta}(z_m)])]~=~\\ 
\left(-\frac{16}{\pi}\right)^{m/2}\sum_{pairings \,\alpha} &g_D(z_{\alpha(1)},z_{\alpha(2)})\cdot\ldots\cdot g_D(z_{\alpha(m-1)},z_{\alpha(m)}),
\end{align*}
where $g_D$ is the Green function with Dirichlet boundary conditions on $\Omega$.
\end{prop}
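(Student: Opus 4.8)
The plan is to obtain this statement as the Temperley specialisation of the integral formula in the Proposition immediately above, and then to evaluate that integral explicitly. Since a Temperley domain is in particular black-piecewise Temperley, that Proposition applies directly; alternatively one re-derives the integral formula from scratch by fixing a boundary vertex $z_0$ with $h\equiv0$ and pairwise disjoint edge-paths $\gamma_i$ from $z_0$ to $z_i$, writing $h(z_i)-\mathbb{E}h(z_i)=\sum_{e\in\gamma_i}\pm(\mathbb{1}[e]-\mathbb{P}[e])$ as in Corollary~\ref{h=H}, applying Theorem~\ref{locstat} and multilinearity of the determinant to express $\mathbb{E}\big[\prod_i(\mathbb{1}[e_i]-\mathbb{P}[e_i])\big]$ as $\prod_i\tau(e_i)$ times the off-diagonal coupling-function determinant, and letting $\delta\to0$ using Kenyon's asymptotics of $C_{\om^\delta}$ on Temperley domains from~\cite{Kdom} (in our normalisation $\tfrac1\delta C_{\om^\delta}(u,v)\to f_0(u,v)$ or $f_1(u,v)$ according to the colour of $v$), the sums along $\gamma_i$ turning into contour integrals $\int_{\gamma_i}\cdots\,dz_i^{(\epsilon_i)}$. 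Either way one reaches
\[
\lim_{\delta\to0}\mathbb{E}\Big[\prod_{i=1}^m\big(h(z_i)-\mathbb{E}h(z_i)\big)\Big]=\sum_{\epsilon\in\{-1,1\}^m}\epsilon_1\cdots\epsilon_m\int_{\gamma_1}\!\cdots\!\int_{\gamma_m}\det_{i,j}\big(F_{\epsilon_i,\epsilon_j}(z_i,z_j)\big)\,dz_1^{(\epsilon_1)}\cdots dz_m^{(\epsilon_m)},
\]
now with the Temperley values $f_+(z,w)=\tfrac2{z-w}$, $f_-(z,w)=\tfrac2{z-\bar w}$ (taking $\om=\mathbb H$ after a conformal change of variables, or the conformally covariant analogue in general), and $F_{\epsilon_i,\epsilon_j}=0$ on the diagonal.

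The main work is to evaluate this integral. I would expand $\det_{i,j}(F_{\epsilon_i,\epsilon_j})=\sum_\sigma\operatorname{sgn}(\sigma)\prod_iF_{\epsilon_i,\epsilon_{\sigma(i)}}(z_i,z_{\sigma(i)})$ and use that the diagonal entries vanish, so only fixed-point-free $\sigma$ contribute; in particular the moment is $0$ for odd $m$. Writing $g_D(z,w)=-\tfrac1{2\pi}\log\big|\tfrac{z-w}{z-\bar w}\big|$, one checks that each of $\tfrac2{z-w}\,dz$, $\tfrac2{z-\bar w}\,dz$ and their conjugates times $d\bar z$ is, up to the factor $-4\pi$, a $z$- or $\bar z$-derivative of a Green's-function-type potential vanishing on $\partial\om$; summing over $\epsilon_i=\pm1$ then assembles the $\gamma_i$-integrand into an exact differential, so integrating $\gamma_i$ from $z_0\in\partial\om$ to $z_i$ leaves only the interior endpoint. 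For a cycle of $\sigma$ of length $\ge3$ the remaining iterated integral collapses to a closed-contour integral of a function holomorphic in the last variable, hence vanishes, so only fixed-point-free involutions (pairings) survive — which incidentally shows the answer, and hence the limit, is independent of the choice of paths. For a pairing $\alpha$ one has $\operatorname{sgn}\sigma=(-1)^{m/2}$, the product factorises over pairs, and summing over $\epsilon$ and integrating both paths, each pair contributes $\big(-\tfrac{16}{\pi}\big)g_D(z_{i_k},z_{j_k})$; collecting signs yields
\[
\lim_{\delta\to0}\mathbb{E}\Big[\prod_{i=1}^m\big(h(z_i)-\mathbb{E}h(z_i)\big)\Big]=\Big(-\tfrac{16}{\pi}\Big)^{m/2}\sum_{\text{pairings }\alpha}\prod_{k=1}^{m/2}g_D\big(z_{\alpha(2k-1)},z_{\alpha(2k)}\big),
\]
which is precisely Wick's formula for a centred Gaussian field with covariance $-\tfrac{16}{\pi}g_D$; since such a field is determined by its moments, $h-\mathbb{E}h$ converges to the Gaussian Free Field with Dirichlet boundary conditions.

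The hard part will be this middle step: showing that every non-involution permutation integrates to zero while simultaneously pinning down the constant $-16/\pi$ for the pairing terms. This demands careful bookkeeping of the holomorphic and antiholomorphic structure of the four boundary functions $f_+$, $f_-$ and their conjugates, recognising them as derivatives of $g_D$, and controlling the iterated contour integrals — in particular keeping the paths $\gamma_i$ mutually disjoint and away from the diagonal so that the potentials have no unwanted poles on the contours, and arguing that the final answer does not depend on their precise shapes. A secondary technical point is the interchange of $\lim_{\delta\to0}$ with the edge summations, which requires uniform control of the discrete coupling function (and of the discrete height covariances) near the diagonal, where Kenyon's pointwise asymptotics degenerate; this is handled by a priori estimates of the type used in~\cite{Kdom, KGff}.
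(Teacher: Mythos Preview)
The paper does not actually prove this proposition: it is stated with the attribution ``In \cite{KGff} Kenyon showed that:'' and no argument is given beyond the citation. Your proposal is therefore not to be compared against anything in the present paper, but rather against Kenyon's original proof in \cite{KGff}, and what you have written is a faithful sketch of that argument --- deriving the integral formula from the coupling-function determinant and its asymptotics, expanding over permutations, eliminating fixed points and long cycles, and reducing the pairing terms to Green's functions. The outline is correct in its broad strokes; the delicate parts you flag (vanishing of long-cycle contributions, the constant $-16/\pi$, and the uniform estimates near the diagonal needed to justify the limit) are indeed the substance of Kenyon's paper, and you would have to reproduce those computations in full rather than merely indicate them.
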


\begin{proof}
See the proof of ~\cite[Proposition 3.2]{KGff}.
\end{proof}

By Corollary~\ref{eqlimits} this proposition holds for black-piecewise Temperleyan domains as well. And the following lemma completes the proof of Corollary~\ref{main-cor2}.
 
 \begin{lemma}[\cite{proba}]
A sequence of multidimensional random variables whose moments converge to the moments of a Gaussian, converges itself to a Gaussian.
\end{lemma}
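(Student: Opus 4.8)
The plan is to prove this by the classical method of moments, reducing the multidimensional assertion to the one-dimensional one via the Cram\'er--Wold device. Let $X_n=(X_n^{(1)},\dots,X_n^{(d)})$ be the given sequence of $\mathbb{R}^d$-valued random vectors, and assume that every mixed moment $\mathbb{E}\bigl[(X_n^{(1)})^{k_1}\cdots(X_n^{(d)})^{k_d}\bigr]$ converges, as $n\to\infty$, to the corresponding moment of a Gaussian vector $X$ on $\mathbb{R}^d$. First I would fix an arbitrary $t=(t_1,\dots,t_d)\in\mathbb{R}^d$ and look at the real random variables $Y_n:=\langle t,X_n\rangle=\sum_j t_jX_n^{(j)}$. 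Expanding $Y_n^k$ by the multinomial theorem expresses $\mathbb{E}[Y_n^k]$ as a fixed polynomial in $t$ whose coefficients are mixed moments of $X_n$ of total degree $k$; hence the hypothesis gives $\mathbb{E}[Y_n^k]\to\mathbb{E}[Y^k]$ for all $k$, where $Y:=\langle t,X\rangle$ is a one-dimensional Gaussian, being a linear image of $X$.

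Next I would invoke the fact that a one-dimensional Gaussian law is determined by its moments: its even moments $m_{2k}=\sigma^{2k}(2k-1)!!$ satisfy Carleman's condition $\sum_k m_{2k}^{-1/(2k)}=\infty$, so no other probability measure on $\mathbb{R}$ shares all its moments. By the Fr\'echet--Shohat theorem (method of moments), convergence of all moments of $Y_n$ to those of the moment-determinate law of $Y$ forces $Y_n$ to converge to $Y$ in distribution. The routine ingredients here are: tightness of $\{Y_n\}$ from a uniform bound on second moments via Chebyshev's inequality; extraction of a weakly convergent subsequence; uniform integrability of $\{Y_n^k\}$ (ensured by boundedness of, e.g., the $(2k+2)$-nd moments) to transfer each moment to the subsequential limit, which must then be the law of $Y$; and finally the observation that, the subsequential limit being always the same, the whole sequence converges.

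Since the conclusion $\langle t,X_n\rangle\to\langle t,X\rangle$ in distribution holds for every $t\in\mathbb{R}^d$, the Cram\'er--Wold device yields $X_n\to X$ in distribution, which is the claim. I do not anticipate a genuine obstacle: this is a standard fact, quoted here from~\cite{proba}, and the only points requiring a little attention are the multinomial expansion relating the moments of the linear combinations $Y_n$ to the mixed moments of $X_n$, and the uniform-integrability argument needed to pass moment convergence through to weak subsequential limits.
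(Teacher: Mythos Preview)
Your argument is correct and is the standard proof via the method of moments combined with the Cram\'er--Wold device; the key points (moment-determinacy of the Gaussian via Carleman's condition, uniform integrability to pass moments to subsequential limits, and the multinomial expansion to reduce to one dimension) are all in order. Note, however, that the paper does not supply its own proof of this lemma: it is simply quoted from~\cite{proba} (Billingsley), so there is no alternative approach in the paper to compare against.
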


\appendix
\setcounter{secnumdepth}{0}
\section{Appendix. Generalization to isoradial graphs}\label{B}

\renewcommand{\thesection}{A}
\setcounter{equation}{0}
\renewcommand{\theequation}{A.\arabic{equation}}

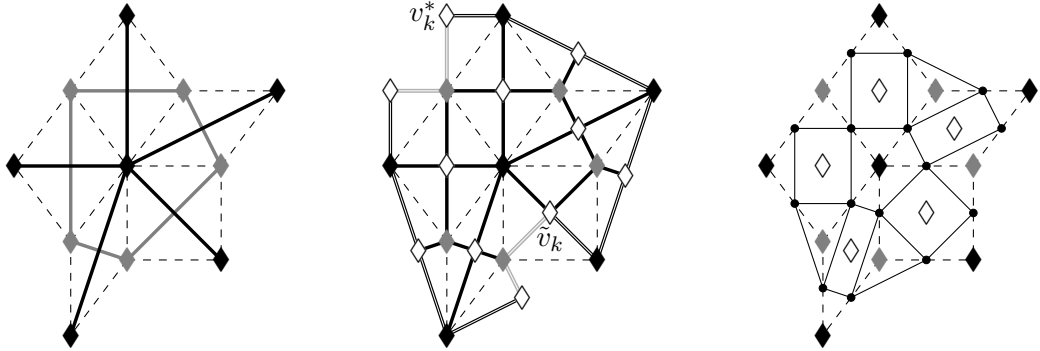
\begin{figure}
\begin{center}
\begin{tikzpicture}[x={(0.5cm,0cm)}, y={(0cm,0.5cm)}]
\begin{scope}
\path (0,0) node[name=A, shape=coordinate]{};
\path (0,-4) node[name=B', shape=coordinate]{};
\path (0,4) node[name=B, shape=coordinate]{};
\path (3,0) node[name=C', shape=coordinate]{};
\path (-3,0) node[name=C, shape=coordinate]{};
\path ($1/2*(B)+1/2*(C)$) node[name=x, shape=coordinate]{};
\path (2.5,0) node[name=y, shape=coordinate]{};
\path (0,-2.5) node[name=z, shape=coordinate]{};
\path ($1/2*(B)+1/2*(C')$) node[name=t, shape=coordinate]{};
\path ($(t)+(y)$) node[name=D, shape=coordinate]{};
\path ($(z)+(y)$) node[name=E, shape=coordinate]{};
\path ($1/2*(B')+1/2*(C)$) node[name=p, shape=coordinate]{};
\path ($(z)+(p)$) node[name=F, shape=coordinate]{};

\draw[gray, line width=1.3pt](x)--(t)--(y)--(z)--(p)--(x);

\draw[line width=1.3pt](A)--(B);
\draw[line width=1.3pt](A)--(C);
\draw[line width=1.3pt](A)--(D);
\draw[line width=1.3pt](A)--(E);
\draw[line width=1.3pt](A)--(F);

\draw[draw, dashed](x)--(A);
\draw[draw, dashed](x)--(B);
\draw[draw, dashed](x)--(C);

\draw[draw, dashed](t)--(A);
\draw[draw, dashed](t)--(B);
\draw[draw, dashed](t)--(D);

\draw[draw, dashed](y)--(A);
\draw[draw, dashed](y)--(E);
\draw[draw, dashed](y)--(D);

\draw[draw, dashed](z)--(A);
\draw[draw, dashed](z)--(E);
\draw[draw, dashed](z)--(F);

\draw[draw, dashed](p)--(A);
\draw[draw, dashed](p)--(C);
\draw[draw, dashed](p)--(F);

\path (A) node[]{${\blacklozenge}$};
\path (B) node[]{${\blacklozenge}$};
\path (C) node[]{${\blacklozenge}$};
\path (D) node[]{${\blacklozenge}$};
\path (E) node[]{${\blacklozenge}$};
\path (F) node[]{${\blacklozenge}$};

\path (x) node[]{${\color{gray}\blacklozenge}$};
\path (y) node[]{${\color{gray}\blacklozenge}$};
\path (z) node[]{${\color{gray}\blacklozenge}$};
\path (t) node[]{${\color{gray}\blacklozenge}$};
\path (p) node[]{${\color{gray}\blacklozenge}$};
\end{scope}

\begin{scope}[xshift=5cm]
\path (0,0) node[name=A, shape=coordinate]{};
\path (0,-4) node[name=B', shape=coordinate]{};
\path (0,4) node[name=B, shape=coordinate]{};
\path (3,0) node[name=C', shape=coordinate]{};
\path (-3,0) node[name=C, shape=coordinate]{};
\path ($1/2*(B)+1/2*(C)$) node[name=x, shape=coordinate]{};
\path (2.5,0) node[name=y, shape=coordinate]{};
\path (0,-2.5) node[name=z, shape=coordinate]{};
\path ($1/2*(B)+1/2*(C')$) node[name=t, shape=coordinate]{};
\path ($(t)+(y)$) node[name=D, shape=coordinate]{};
\path ($(z)+(y)$) node[name=E, shape=coordinate]{};
\path ($1/2*(B')+1/2*(C)$) node[name=p, shape=coordinate]{};
\path ($(z)+(p)$) node[name=F, shape=coordinate]{};

\draw[line width=1.3pt](A)--(B);
\draw[line width=1.3pt](A)--(C);
\draw[line width=1.3pt](A)--(D);
\draw[line width=1.3pt](A)--(E);
\draw[line width=1.3pt](A)--(F);

\draw[line width=1.3pt](B)--(D);
\draw[white,line width=0.3pt](B)--(D);
\draw[line width=1.3pt](E)--(D);
\draw[white,line width=0.3pt](E)--(D);
\draw[line width=1.3pt](C)--(F);
\draw[white,line width=0.3pt](C)--(F);
\draw[line width=1.3pt]($1/2*(C)+1/2*(F)$)--(p);
\draw[line width=1.3pt]($1/2*(B)+1/2*(D)$)--(t);
\draw[line width=1.3pt]($1/2*(D)+1/2*(E)$)--(y);
\path ($1/2*(C)+1/2*(F)$) node[]{${\color{white}\blacklozenge}$};
\path ($1/2*(C)+1/2*(F)$) node[]{${\lozenge}$};
\path ($1/2*(B)+1/2*(D)$) node[]{${\color{white}\blacklozenge}$};
\path ($1/2*(B)+1/2*(D)$) node[]{${\lozenge}$};
\path ($1/2*(E)+1/2*(D)$) node[]{${\color{white}\blacklozenge}$};
\path ($1/2*(E)+1/2*(D)$) node[]{${\lozenge}$};
\draw[gray!60, line width=1.5pt]($1/2*(E)+1/2*(F)$)--(z);
\draw[white, line width=0.3pt]($1/2*(E)+1/2*(F)$)--(z);
\draw[line width=1.3pt]($1/2*(E)+1/2*(F)$)--(F);
\draw[white,line width=0.3pt]($1/2*(E)+1/2*(F)$)--(F);
\path ($1/2*(E)+1/2*(F)$) node[]{${\color{white}\blacklozenge}$};
\path ($1/2*(E)+1/2*(F)$) node[]{${\lozenge}$};
\draw[line width=1.3pt]($1/2*(C)+1/2*(-3,4)$)--(C);
\draw[white,line width=0.3pt]($1/2*(C)+1/2*(-3,4)$)--(C);
\draw[gray!60, line width=1.5pt]($1/2*(C)+1/2*(-3,4)$)--(x);
\draw[white, line width=0.3pt]($1/2*(C)+1/2*(-3,4)$)--(x);
\path ($1/2*(C)+1/2*(-3,4)$) node[]{${\color{white}\blacklozenge}$};
\path ($1/2*(C)+1/2*(-3,4)$) node[]{${\lozenge}$};
\draw[line width=1.3pt]($1/2*(B)+1/2*(-3,4)$)--(B);
\draw[white,line width=0.3pt]($1/2*(B)+1/2*(-3,4)$)--(B);
\draw[gray!60, line width=1.5pt]($1/2*(B)+1/2*(-3,4)$)--(x);
\draw[white, line width=0.3pt]($1/2*(B)+1/2*(-3,4)$)--(x);
\path ($1/2*(B)+1/2*(-3,4)$) node[]{${\color{white}\blacklozenge}$};
\path ($1/2*(B)+1/2*(-3,4)$) node[]{${\lozenge}$};
\path ($1/2*(B)+1/2*(-3,4)$) node[anchor=east]{$\nwvupugl{k}$};
\path ($1/2*(A)+1/2*(E)+(0,-0.1)$) node[anchor=north]{$\nwvpugl{k}$};
\draw[white,line width=0.3pt]($1/2*(E)+1/2*(A)$)--(E);

\draw[line width=1.3pt](x)--(t)--(y)--(z)--(p)--(x);

\draw[white, line width=1.5pt]($1/2*(E)+1/2*(A)$)--(z);
\draw[gray!60, line width=1.5pt]($1/2*(E)+1/2*(A)$)--(z);
\draw[white, line width=0.3pt]($1/2*(E)+1/2*(A)$)--(z);

\draw[draw, dashed](x)--(A);
\draw[draw, dashed](x)--(B);
\draw[draw, dashed](x)--(C);

\draw[draw, dashed](t)--(A);
\draw[draw, dashed](t)--(B);
\draw[draw, dashed](t)--(D);

\draw[draw, dashed](y)--(A);
\draw[draw, dashed](y)--(E);
\draw[draw, dashed](y)--(D);

\draw[draw, dashed](z)--(A);
\draw[draw, dashed](z)--(E);
\draw[draw, dashed](z)--(F);

\draw[draw, dashed](p)--(A);
\draw[draw, dashed](p)--(C);
\draw[draw, dashed](p)--(F);

\path ($1/2*(x)+1/2*(t)$) node[]{${\color{white}\blacklozenge}$};
\path ($1/2*(x)+1/2*(t)$) node[]{${\lozenge}$};
\path ($1/2*(y)+1/2*(t)$) node[]{${\color{white}\blacklozenge}$};
\path ($1/2*(y)+1/2*(t)$) node[]{${\lozenge}$};
\path ($1/2*(y)+1/2*(z)$) node[]{${\color{white}\blacklozenge}$};
\path ($1/2*(y)+1/2*(z)$) node[]{${\lozenge}$};
\path ($1/2*(p)+1/2*(z)$) node[]{${\color{white}\blacklozenge}$};
\path ($1/2*(p)+1/2*(z)$) node[]{${\lozenge}$};
\path ($1/2*(x)+1/2*(p)$) node[]{${\color{white}\blacklozenge}$};
\path ($1/2*(x)+1/2*(p)$) node[]{${\lozenge}$};

\path (A) node[]{${\blacklozenge}$};
\path (B) node[]{${\blacklozenge}$};
\path (C) node[]{${\blacklozenge}$};
\path (D) node[]{${\blacklozenge}$};
\path (E) node[]{${\blacklozenge}$};
\path (F) node[]{${\blacklozenge}$};

\path (x) node[]{${\color{gray}\blacklozenge}$};
\path (y) node[]{${\color{gray}\blacklozenge}$};
\path (z) node[]{${\color{gray}\blacklozenge}$};
\path (t) node[]{${\color{gray}\blacklozenge}$};
\path (p) node[]{${\color{gray}\blacklozenge}$};

\end{scope}

\begin{scope}[xshift=10cm]
\path (0,0) node[name=A, shape=coordinate]{};
\path (0,-4) node[name=B', shape=coordinate]{};
\path (0,4) node[name=B, shape=coordinate]{};
\path (3,0) node[name=C', shape=coordinate]{};
\path (-3,0) node[name=C, shape=coordinate]{};
\path ($1/2*(B)+1/2*(C)$) node[name=x, shape=coordinate]{};
\path (2.5,0) node[name=y, shape=coordinate]{};
\path (0,-2.5) node[name=z, shape=coordinate]{};
\path ($1/2*(B)+1/2*(C')$) node[name=t, shape=coordinate]{};
\path ($(t)+(y)$) node[name=D, shape=coordinate]{};
\path ($(z)+(y)$) node[name=E, shape=coordinate]{};
\path ($1/2*(B')+1/2*(C)$) node[name=p, shape=coordinate]{};
\path ($(z)+(p)$) node[name=F, shape=coordinate]{};


\draw[draw, dashed](x)--(A);
\draw[draw, dashed](x)--(B);
\draw[draw, dashed](x)--(C);

\draw[draw, dashed](t)--(A);
\draw[draw, dashed](t)--(B);
\draw[draw, dashed](t)--(D);

\draw[draw, dashed](y)--(A);
\draw[draw, dashed](y)--(E);
\draw[draw, dashed](y)--(D);

\draw[draw, dashed](z)--(A);
\draw[draw, dashed](z)--(E);
\draw[draw, dashed](z)--(F);

\draw[draw, dashed](p)--(A);
\draw[draw, dashed](p)--(C);
\draw[draw, dashed](p)--(F);

\path ($1/2*(x)+1/2*(t)$) node[name=n1, shape=coordinate]{};
\path ($1/2*(y)+1/2*(t)$) node[name=n2, shape=coordinate]{};
\path ($1/2*(y)+1/2*(z)$) node[name=n3, shape=coordinate]{};
\path ($1/2*(p)+1/2*(z)$) node[name=n4, shape=coordinate]{};
\path ($1/2*(x)+1/2*(p)$) node[name=n5, shape=coordinate]{};

\path ($1/2*(x)+1/2*(t)$) node[]{${\color{white}\blacklozenge}$};
\path ($1/2*(x)+1/2*(t)$) node[]{${\lozenge}$};
\path ($1/2*(y)+1/2*(t)$) node[]{${\color{white}\blacklozenge}$};
\path ($1/2*(y)+1/2*(t)$) node[]{${\lozenge}$};
\path ($1/2*(y)+1/2*(z)$) node[]{${\color{white}\blacklozenge}$};
\path ($1/2*(y)+1/2*(z)$) node[]{${\lozenge}$};
\path ($1/2*(p)+1/2*(z)$) node[]{${\color{white}\blacklozenge}$};
\path ($1/2*(p)+1/2*(z)$) node[]{${\lozenge}$};
\path ($1/2*(x)+1/2*(p)$) node[]{${\color{white}\blacklozenge}$};
\path ($1/2*(x)+1/2*(p)$) node[]{${\lozenge}$};

\path (A) node[]{${\blacklozenge}$};
\path (B) node[]{${\blacklozenge}$};
\path (C) node[]{${\blacklozenge}$};
\path (D) node[]{${\blacklozenge}$};
\path (E) node[]{${\blacklozenge}$};
\path (F) node[]{${\blacklozenge}$};

\path (x) node[]{${\color{gray}\blacklozenge}$};
\path (y) node[]{${\color{gray}\blacklozenge}$};
\path (z) node[]{${\color{gray}\blacklozenge}$};
\path (t) node[]{${\color{gray}\blacklozenge}$};
\path (p) node[]{${\color{gray}\blacklozenge}$};

\path[draw] ($1/2*(A)+1/2*(x)$)--($1/2*(A)+1/2*(t)$) --($1/2*(A)+1/2*(y)$) --($1/2*(A)+1/2*(z)$) --($1/2*(A)+1/2*(p)$)--($1/2*(A)+1/2*(x)$); 

\path[draw] ($1/2*(A)+1/2*(x)$)--($1/2*(B)+1/2*(x)$)--($1/2*(B)+1/2*(t)$)--
($1/2*(A)+1/2*(t)$)--($1/2*(D)+1/2*(t)$)--($1/2*(D)+1/2*(y)$)--($1/2*(A)+1/2*(y)$)--
($1/2*(E)+1/2*(y)$)--($1/2*(E)+1/2*(z)$)--($1/2*(A)+1/2*(z)$)--
($1/2*(F)+1/2*(z)$)--($1/2*(F)+1/2*(p)$)--($1/2*(A)+1/2*(p)$)--
($1/2*(C)+1/2*(p)$)--($1/2*(C)+1/2*(x)$)--
($1/2*(A)+1/2*(x)$);

\path[draw] ($1/2*(B)+1/2*(t)$)--($1/2*(D)+1/2*(t)$);
\path[draw] ($1/2*(E)+1/2*(z)$)--($1/2*(F)+1/2*(z)$);
\path[draw] ($1/2*(F)+1/2*(p)$)--($1/2*(C)+1/2*(p)$);

\path[draw, fill=black] ($1/2*(A)+1/2*(x)$) circle[radius=0.05cm];
\path[draw, fill=black] ($1/2*(B)+1/2*(x)$) circle[radius=0.05cm];
\path[draw, fill=black] ($1/2*(C)+1/2*(x)$) circle[radius=0.05cm];

\path[draw, fill=black] ($1/2*(A)+1/2*(t)$) circle[radius=0.05cm];
\path[draw, fill=black] ($1/2*(B)+1/2*(t)$) circle[radius=0.05cm];
\path[draw, fill=black] ($1/2*(D)+1/2*(t)$) circle[radius=0.05cm];

\path[draw, fill=black] ($1/2*(A)+1/2*(y)$) circle[radius=0.05cm];
\path[draw, fill=black] ($1/2*(D)+1/2*(y)$) circle[radius=0.05cm];
\path[draw, fill=black] ($1/2*(E)+1/2*(y)$) circle[radius=0.05cm];

\path[draw, fill=black] ($1/2*(A)+1/2*(z)$) circle[radius=0.05cm];
\path[draw, fill=black] ($1/2*(F)+1/2*(z)$) circle[radius=0.05cm];
\path[draw, fill=black] ($1/2*(E)+1/2*(z)$) circle[radius=0.05cm];

\path[draw, fill=black] ($1/2*(A)+1/2*(p)$) circle[radius=0.05cm];
\path[draw, fill=black] ($1/2*(F)+1/2*(p)$) circle[radius=0.05cm];
\path[draw, fill=black] ($1/2*(C)+1/2*(p)$) circle[radius=0.05cm];
\end{scope}

\end{tikzpicture}\end{center}
\caption{Left: isoradial graph $\Gamma$ (black), its dual graph $\Gamma^{*}$ (gray) and the corresponding rhombic lattice (dashed). Center: isoradial $2n$-black-piecewise Temperleyan graph with $n=2$; the set of rhombic centers (white), the bipartite graph (vertices: white, gray and black; edges: solid lines), the elements of the sets of white corners 
 $\{\nwvupugl{k}\}^{n+1}_{k=1}$ and  $\{\nwvpugl{k}\}^{n-1}_{k=1}$. Right: the set of midedges of the rhombic lattice, the set $\mathcal{V}$ (circles).}\label{isorad}
\end{figure}
In this section we will discuss the result of Theorem~\ref{convF} for the dimer model on isoradial graphs. 
The notion of a rhombic lattice (or isoradial graph) was introduced by Duffin~\cite{Duffin} as a large family of graphs, where discretizations of Laplace and Cauchy-Riemann operators can be defined  similarly to the case of the square lattice. The class of isoradial graphs forms a large class of graphs where classical complex analysis results have discrete analogs, see~\cite{C+S}. A lot of planar graphs admit isoradial embeddings~\cite{ks04}.
Discrete complex analysis allows to obtain results for two-dimensional lattice models on isoradial graphs, notably the Ising~\cite{mer, CS} and dimer~\cite{KLD, BdT, Dubiso, Zli} models.

 Let $\Gamma$ be an isoradial graph, i.e. a planar graph 
in which each face is inscribed into a circle of a common radius~$\delta$. Also one can thing about $\delta$ as a mesh size of the isoradial graph. Suppose that all circle centers are inside the corresponding faces, then the dual graph $\Gamma^*$ is also isoradial with the same radius. The rhombic lattice is the graph on the union $\Lambda$ of the two vertex sets $\Gamma$ and $\Gamma^{*}$ (see Fig.~\ref{isorad}). We will use the following assumption (see~\cite{C+S})
\smallskip
\begin{enumerate}
\item[($\spadesuit$)] the rhombi angles are uniformly bounded from $0$ and $\pi$.
\end{enumerate}
\smallskip
The dimer model on isoradial graphs was introduced by Kenyon~\cite{KLD}.  
A dimer configuration in this setup is a perfect matching of the bipartite graph $\om^\delta$ defined as follows. The vertex set of $\om^\delta$ consists of a union of $\Lambda$ (two types of black vertices) and rhombi centers (white vertices), and there is an edge between black and white vertices if the black vertex and corresponding rhombi are adjacent (see Fig.~\ref{isorad}). Note that $\om^\delta$ is an isoradial graph, where each face is inscribed into a circle of radius~$\frac\delta2$, for more details see~\cite{Dubiso}.

 We will call a white vertex on the boundary of $\om^\delta$ a \emph{corner} if it is adjacent to two boundary black vertices of different types. We can define as before the notion of \emph{convex} and \emph{concave} white corners, see Fig.~\ref{isorad}. Note that Lemma~\ref{combi-lemma} holds also in the isoradial case. An isoradial graph $\om^\delta$ is called a {\it $2n$-black-piecewise Temperleyan graph} if it 
has $n+1$ convex white corners and $n-1$ concave white corners, see Fig.~\ref{isorad}.
Now we can formulate the similar result for isoradial graphs analogous to Theorem~\ref{convF}. 
 
\begin{Th} Let $\om^\delta$ be a sequence of isoradial $2k$-black-piecewise Temperleyan  graphs approximating a continuous domain $\om$. Assume that each $\om^\delta$ admits a perfect matching. Let the sets of white boundary vertex $\{\wvupugl{k}\}^{n+1}_{k=1}$ and $\{\wvpugl{k}\}^{n-1}_{k=1}$ approximate the sets of boundary points $\{\nwvupugl{k}\}^{n+1}_{k=1}$ and $\{\nwvpugl{k}\}^{n-1}_{k=1}$ correspondingly, and let $\www^{\delta}$ approximate a point boundary point $\www$ which lies on a straight segment of the boundary of $\om$. Then $\F_{\mathrm{iso}}^\delta$ converges uniformly on compact subsets of  $\om$ to a continuous holomorphic function $f_\om$, where $f_\om$ is defined as in Proposition~$\ref{1-4}$.\end{Th}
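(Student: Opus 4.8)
The plan is to mirror the proof of Theorem~\ref{convF} (the square-lattice case), since the whole argument of Section~\ref{shodimost} was written in a way that only uses two ingredients: (i) discrete complex analysis on the underlying graph — a Harnack-type estimate, the discrete maximum principle, convergence of the solution of the discrete Dirichlet problem to its continuous counterpart, and the existence of a full-plane (and half-plane) discrete holomorphic function with a prescribed singularity at $\www^\delta$ behaving like $\tfrac{1}{2\pi}\cdot\tfrac{\lambda}{z-\www}$; and (ii) a random-walk ``tunnelling'' estimate near a concave white corner. All of these are available on isoradial graphs under assumption~($\spadesuit$): the discrete holomorphicity, the discrete Laplacian with its maximum principle, the convergence of harmonic measure and of solutions of Dirichlet problems, and the asymptotics of the isoradial Green's function and of $\F_{\mathrm{iso},\mathbb{C},\www^\delta}^\delta$ are all in~\cite{C+S, Dubiso, CS}. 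Lemma~\ref{combi-lemma} is explicitly noted to hold in the isoradial setting, so the combinatorial bookkeeping of corners — $n+1$ convex versus $n-1$ concave white corners — goes through unchanged, and the boundary value problem in Proposition~\ref{1-4} that characterises the limit $f_\om$ is a purely continuous statement, hence identical.

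Concretely, I would proceed as follows. First, define $\F_{\mathrm{iso}}^\delta\colon\clb^\delta\to\mathbb{C}$ exactly as in the square-lattice case: discrete holomorphic (in the isoradial sense) at every white vertex except $\www^\delta$, real on one colour class of black vertices and purely imaginary on the other, vanishing on $\db^\delta$, with $[\bar\partial^\delta\F_{\mathrm{iso}}^\delta](\www^\delta)=\lambda/\delta^2$; existence and uniqueness follow from the Kasteleyn-matrix counting argument of Lemma~\ref{complan}, which only uses that $|\det K_{\om^\delta}|$ equals the number of perfect matchings. Second, state and prove the isoradial analogue of Lemma~\ref{omega > c}: a simple random walk started at a well-chosen black vertex $x^\delta$ on $\partial B_r^\delta(\wvpugl{k})$ hits the path $\gamma^\delta$ with probability bounded below by a constant $\widetilde c(\om)>0$; this is the ``two grey tunnels of width $r/l$'' argument of Fig.~\ref{tunnel}, and it survives verbatim because crossing estimates for random walk on isoradial graphs with bounded rhombi angles are standard (again~\cite{C+S}). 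Third, repeat Lemma~\ref{ravnogr} and Corollary~\ref{cor2} to get $M_*^\delta(r/2)\le \tfrac{4}{\widetilde c}M_*^\delta(r)+C_*$ for $M_*^\delta(r)=\max_{\om_{r,0}^\delta}|\F_{\mathrm{iso}}^\delta-\F_{\mathrm{iso},\mathbb{H}}^\delta|$, where $\F_{\mathrm{iso},\mathbb{H}}^\delta$ is the isoradial half-plane analogue built as the appropriate combination of full-plane functions (using symmetry of the isoradial Green's function to kill the imaginary part on $\partial\mathbb{H}^\delta$, exactly as done before Corollary~\ref{cor2}). Fourth, run the Arzelà–Ascoli / precompactness dichotomy of the proof of Theorem~\ref{convF}: either $M_*^\delta(r)$ stays bounded — in which case equicontinuity on compacts (Harnack) gives subsequential limits, and the discrete-to-continuous convergence of Dirichlet data pins every limit down to the unique $f_\om$ of Proposition~\ref{1-4} — or $M_*^\delta(r)\to\infty$ along a subsequence, and then normalising by $M_*^\delta(r)$ produces a nonzero limit $f_*$ satisfying all of the boundary conditions of Proposition~\ref{1-4} except the pole, forcing $f_*\equiv0$, contradicted by constructing an interior sequence $u_{\mathrm{inner}}^\delta$ with $\re\widetilde F_*^\delta(u_{\mathrm{inner}}^\delta)>\mathrm{const}_\om>0$ via the harmonic-measure lower bound of the isoradial Lemma~\ref{omega > c}. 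This rules out the second case and yields uniform convergence on compacts to $f_\om$.

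The main obstacle — the one place where ``isoradial'' is genuinely more than a cosmetic relabelling — is the boundary analysis near the concave white corners $\wvpugl{k}$: on the square lattice the geometry of a $3\pi/2$ reentrant corner, the black squares of a fixed type adjacent to it, and the Neumann subarc $N_k^\delta$ are completely explicit (Fig.~\ref{3bound}, Fig.~\ref{gamma}), whereas on a general isoradial graph one must check that (a) the notion of a white ``corner'' adjacent to two boundary black vertices of different types is well-posed and that the Temperley boundary segments between consecutive white corners indeed carry black vertices of a single type, so that $\re\F_{\mathrm{iso}}^\delta$ has a clean Dirichlet/Neumann alternation, and (b) the one-vertex local analysis at $\wvpugl{k}$ (``either the max or the min of $\F^\delta$ can be reached at the single black vertex adjacent to the concave corner'') still gives the semi-boundedness used in Lemma~\ref{bounded} and hence the $w\mapsto w^2$ / Great Picard argument behind the square-root exponents in~\eqref{f=}. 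I expect this to be routine but fiddly, and I would handle it by an explicit but short discussion of the combinatorics of isoradial boundary corners (Fig.~\ref{isorad}, centre), after which the equicontinuity estimate and the dichotomy argument go through word for word, giving the stated convergence $\F_{\mathrm{iso}}^\delta\rightrightarrows f_\om$ on compact subsets of $\om$.
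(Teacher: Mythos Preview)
Your proposal is correct and follows exactly the paper's approach: the paper's proof is the single sentence ``The proof mimics the proof of Theorem~\ref{convF} using the approach described in~\cite{C+S},'' and your writeup is a faithful (and considerably more detailed) expansion of precisely that. The obstacle you flag about the local combinatorics at concave isoradial corners is real but is not addressed in the paper either; your plan to handle it via a short explicit discussion of isoradial boundary corners is appropriate.
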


\begin{proof}
The proof mimics the proof of Theorem~\ref{convF} using the toolbox described in~\cite[Definition 2.1, Proposition 2.7, Definition 2.12, Proposition 2.14, Theorem 2.21]{C+S}.
\end{proof}

\end{document}